\newcommand\skipi{{\vskip 10pt}}
\newcommand{\E}{\mathop{\mathbb{E}}}
\newcommand{\ind}{\mathbbm{1}}
\newcommand{\eps}{\varepsilon}
\newcommand{\mc}{\mathcal}
\newcommand{\poly}{{\sf poly}}
\newcommand{\Dec}{{\sf Dec}}
\renewcommand\leq{\leqslant}
\renewcommand\geq{\geqslant}
\renewcommand\le{\leqslant}
\renewcommand\ge{\geqslant}
\DeclarePairedDelimiter{\ceil}{\lceil}{\rceil}
\DeclarePairedDelimiter{\floor}{\lfloor}{\rfloor}
\theoremstyle{plain} 
\newtheorem{theorem}{Theorem}[section]
   \newtheorem{fact}[theorem]{Fact}
   \newtheorem{lemma}[theorem]{Lemma}
   \newtheorem{corollary}[theorem]{Corollary}
   \newtheorem{claim}[theorem]{Claim}
   \newtheorem{remark}[theorem]{Remark}
   \newtheorem{definition}[theorem]{Definition}
\DeclareMathOperator\supp{supp}
\title{
$3$-Query RLDCs are Strictly Stronger than $3$-Query LDCs}
\author{
Tom Gur\thanks{Department of Computer Science and Technology, University of Cambridge, UK. Supported by ERC Starting Grant 101163189 and UKRI Future Leaders Fellowship MR/X023583/1.}
\and
Dor Minzer
\thanks{Department of Mathematics, Massachusetts Institute of Technology, Cambridge, USA. Supported by NSF CCF award 2227876 and 
 NSF CAREER award 2239160.}
\and
Guy Weissenberg
\thanks{EPFL, Lausanne, Switzerland. Supported in part by the Ethereum Foundation.}
\and
Kai Zhe Zheng\thanks{Department of Mathematics, Massachusetts Institute of Technology, Cambridge, USA. Supported by the NSF GRFP DGE-2141064.}
}
\date{\vspace{-5ex}}
\newcommand{\List}{{\sf List}}
\newcommand{\comp}{{\sf comp}}
\newcommand{\dec}{{\sf dec}}
\newcommand{\sat}{{\sf SAT}}
\newcommand{\polylogn}{\poly(\log(N))}
\DeclareMathOperator{\ZZ}{ZZ}
\DeclareMathOperator{\Rep}{R}
\newcommand{\maj}{{\sf Maj}}
\DeclareMathOperator{\inmap}{IN}
\DeclareMathOperator{\outmap}{OUT}
\newcommand{\hdx}{{\sf hdx}}
\newcommand{\dip}{{\sf dp}}
\newcommand{\dist}{{\sf dist}}
\newcommand{\out}{{\sf out}}
\newcommand{\inner}{{\sf in}}
\newcommand{\Out}{\mathsf{out}}
\newcommand{\OutPCPDecoder}{\PCPDecoder_{\Out}}
\newcommand{\OutListSize}{\ListSize_{\Out}}
\newcommand{\OutPCPError}{\delta_{\Out}}
\newcommand{\OutLeftSide}{\LeftSide_{\Out}}
\newcommand{\OutRightSide}{\RightSide_{\Out}}
\newcommand{\OutEdgeSet}{\EdgeSet_{\Out}}
\newcommand{\OutConstraint}[1]{\Constraint{\Out,#1}}
\newcommand{\OutEdgeDistributionName}{\EdgeDistributionName^{\Out}}
\newcommand{\OutEdgeDistribution}[1]{\OutEdgeDistributionName_{#1}}
\newcommand{\OutLeftAlphabet}{\Alphabet{\OutLeftSide}}
\newcommand{\OutRightAlphabet}{\Alphabet{\OutRightSide}}
\newcommand{\OutIndexInDecodingSet}{t}
\newcommand{\OutDecoder}[1]{\DecoderSymbol^{\Out}_{#1}}
\newcommand{\OutLeftVertex}{\textbf{a}}
\newcommand{\OutRightVertex}{\textbf{b}}
\newcommand{\OutLeftProof}{\TableProof{\OutLeftSide}}
\newcommand{\OutRightProof}{\TableProof{\OutRightSide}}
\newcommand{\In}{\mathsf{in}}
\newcommand{\InPCPDecoderSymbol}{\PCPDecoder_{\In}}
\newcommand{\InPCPDecoder}{\PCPDecoder_{\In,\OutLeftVertex}}
\newcommand{\InListSize}{\ListSize_{\In}}
\newcommand{\InPCPError}{\delta_{\In}}
\newcommand{\InLeftSide}{\LeftSide_{\In,\OutLeftVertex}}
\newcommand{\InRightSide}{\RightSide_{\In,\OutLeftVertex}}
\newcommand{\InEdgeSet}{\EdgeSet_{\In,\OutLeftVertex}}
\newcommand{\InConstraint}[1]{\Constraint{\In,#1}}
\newcommand{\InLeftAlphabet}{\Alphabet{\InLeftSide}}
\newcommand{\InRightAlphabet}{\Alphabet{\InRightSide}}
\newcommand{\InDecoder}[2]{\Decoder{\In,\OutLeftVertex}{#1}}
\newcommand{\InRightVertex}{(\aaa; b)}
\newcommand{\InLeftProof}{\TableProof{\InLeftSide}}
\newcommand{\InRightProof}{\TableProof{\InRightSide}}
\newcommand{\InSpecificRandomness}{r}
\begin{document}

\maketitle
\begin{abstract}
We construct $3$-query relaxed locally decodable codes (RLDCs) with constant alphabet size and length $\tilde{O}(k^2)$ for $k$-bit messages.
Combined with the lower bound of $\tilde{\Omega}(k^3)$ of [Alrabiah, Guruswami, Kothari, Manohar, STOC 2023] on the length of locally decodable codes (LDCs) with the same parameters, we obtain a separation between RLDCs and LDCs, resolving an open problem of [Ben-Sasson, Goldreich, Harsha, Sudan and Vadhan, SICOMP 2006].

Our RLDC construction relies on two components. First, we give a new construction of probabilistically checkable proofs of proximity (PCPPs) with $3$ queries, quasi-linear size, constant alphabet size, perfect completeness, and small soundness error. This improves upon all previous PCPP constructions, which either had a much higher query complexity or soundness close to $1$. 
Second, we give a query-preserving transformation from PCPPs to RLDCs.

At the heart of our PCPP construction is a $2$-query decodable PCP (dPCP) with matching parameters, and our construction builds on the HDX-based PCP of [Bafna, Minzer, Vyas, Yun, STOC 2025] and on the efficient composition framework of [Moshkovitz, Raz, JACM 2010] and [Dinur, Harsha, SICOMP 2013].
More specifically, we first show how to use the HDX-based construction to get a dPCP with matching parameters but a large alphabet size, and then prove 
an appropriate composition theorem (and related transformations) to reduce the alphabet size in dPCPs.
\end{abstract}

\section{Introduction}
Probabilistically checkable proofs (PCPs, for short) are a central object of study in theoretical computer science, with connections to hardness of approximation, cryptography, error-correcting codes, expansion in graphs and hypergraphs, property testing, analysis of Boolean functions and more. The connection to error-correcting codes, which is the focus of this paper, runs in both directions. In one direction, arguably all PCP constructions rely on error-correcting codes (such as the Reed-Solomon, Reed-Muller and Hadamard codes) as building blocks. In the other direction, as we shall see later on, strong PCPs are often building blocks in error-correcting codes with additional desirable properties.
To discuss PCPs and their parameters, we shift to the perspective of constraint satisfaction problems, defined as follows.
\begin{definition}\label{def:PCP_const}
    An instance $\Psi$ of a constraint satisfaction problem (CSP) consists of 
    a weighted uniform hypergraph $G = (V, E, \mc{P})$, where $\mc{P}\colon E\to (0,\infty)$ is a distribution over the edges, as well as an alphabet $\Sigma$ and a collection of constraints $\{\Phi_e\colon \Sigma^q\to\{0,1\}\}_{e\in E}$, one for each edge. 
    
    An assignment 
    $T\colon V\to\Sigma$ is said to satisfy 
    the constraint on $e=(v_1,\ldots,v_q)\in E$ if 
    $\Phi_e(T(v_1),\ldots,T(v_q)) = 1$, and the value of $T$ is defined as the total weight of edges $e\in E$ that are satisfied. The value of the instance $\Psi$ is the maximum value obtained by any assignment $T$.
\end{definition}
The PCP theorem~\cite{FGLSS,AroraSafra,ALMSS} asserts that 
there exists $\eps>0$ and a polynomial time reduction $f$ from $3$-SAT to CSP instances 
on $O(1)$-uniform hypergraphs with constant alphabet size, such that:
\begin{enumerate}
    \item {\bf Completeness:} if $\varphi$ is satisfiable, then $f(\varphi)$ has value $1$.
    \item {\bf Soundness:} if $\varphi$ is unsatisfiable, then $f(\varphi)$ has value at most $1-\eps$.
\end{enumerate}
Since the parameters of PCPs turn out to be of great importance in applications, subsequent research has put significant effort into optimizing each one of them. These parameters include the blow-up of the reduction $f$ (sometimes called the length/size of the PCP), the number of queries $q$ (the uniformity of the hypergraph), the structure of the constraints $\Phi_e$, the soundness error, and the alphabet size. Indeed, in applications, one ideally wants a nearly-linear size PCP (as opposed to a polynomial blow-up as above) with as few queries as possible (optimally $q=2$), and with soundness close to $0$ (as opposed to just bounded away from $1$ as above).

\subsection{PCPs of Proximity, Decodable PCPs and Their Combinatorial Analogs}

 It turns out that to optimize parameters of PCPs, one needs, as building blocks, PCPs 
 that have more specialized properties. In this section, we discuss two key such variants. 

\subsubsection{PCPs of Proximity} 
PCPs of proximity (PCPPs), also known as assignment testers, were initially introduced by the works ~\cite{bghsv,DinurReingold} to get PCPs of small size (say, nearly linear). They are variants of PCPs with a more subtle notion of soundness. Whereas a PCP encoding of an assignment to $3$-SAT can be used to check whether it is satisfying or not, a PCPP enables checking if the encoded assignment is \emph{close} to a satisfying assignment. Here and throughout, the distance between a string and a language is the normalized Hamming distance, measuring the fraction of coordinates that need to be changed to get an input in the language. Below is the formal definition of PCPPs.
\begin{definition} \label{def:pcpp}
Let $q\in\mathbb{N}$ and $\eps,\eta>0$.
A $q$-query probabilistically checkable proof of proximity (PCPP) with proximity parameter $\eta$ and soundness $1-\eps$ for a language $\mc{L} \subseteq \Sigma_0^n$ consists of a CSP instance $\Psi = (X\cup V, E, \mc{P}, \Sigma, \{\Phi_e\}_{e \in E})$ over a $q$-uniform hypergraph, where $X$ has size $n$ and is identified with the coordinates of $\Sigma_0^n$ and $\Sigma_0\subseteq \Sigma$. On an input $w: X \to \Sigma_0$ it has the following properties:
\begin{itemize}
    \item \textbf{Completeness:} if $w \in \mc{L}$, then there exists an assignment $A: X\cup V \to \Sigma$ extending $w$ such that 
    \[
    \Pr_{e \sim \mc{P}}[\Phi_e(A(e)) = 1] = 1.
    \]
    \item \textbf{$(\eta, \eps)$-Soundness:} if $w$ is $\eta$-far from all members of $\mc{L}$, then for any $A: X\cup V \to \Sigma$ extending $w$ we have
    \[
    \Pr_{e\sim \mc{P}}[\Phi_e(A(e)) = 1] \leq 1 - \eps. 
    \]
\end{itemize}
\end{definition}

We note that in contrast to a standard PCP, where one has full access to the input $w$, in a PCPP one only has \emph{oracle access} to $w$. In particular, the number of locations read in $w$ is also counted towards the query complexity. Thus, if the input string is very close to the language, one cannot expect constant soundness to hold in a PCPP with constantly many queries, and hence PCPPs must give soundness guarantees which are parameterized by the distance of the input from the language.

The definition of PCPPs was originally motivated by considerations arising from composition of PCPs. In a sense, a PCPP can be used to ensure that if an assignment $A$ passes almost all of the constraints of a CSP, then it is associated with some (unique) actual word from $\mc{L}$. This assertion is helpful when composing PCPs with soundness close to $1$. In fact, using PCPPs, one can state generic composition theorems in the close-to-$1$ soundness regime; see~\cref{sec:techniques_PCP} for a more thorough discussion. This feature plays a crucial role in Dinur's combinatorial proof of the PCP theorem~\cite{BS,DinurReingold,Dinur}, which, among other things, is the first PCP construction of quasi-linear length and soundness bounded away from $1$.  

\paragraph{Constructing PCPPs.} The works~\cite{bghsv,DinurReingold} show composition theorems for PCPPs, which, alongside Dinur's gap amplification technique~\cite{Dinur}, yield a $2$-query PCPP with small (but constant) $\eta,\eps>0$, see~\cite{Mie}. As we will 
see, however, some applications require PCPPs with small soundness. That is, the $1-\eps$ above is replaced by a small positive $\eps$, and intuitively this may be achievable if $\eta$ is close to $1$. The composition techniques of~\cite{bghsv,DinurReingold} cannot achieve soundness better than $1/2$, and we are not aware of any other previous attempts at establishing a small soundness PCPP with concretely small values of $q$, such as $q=2,3$ (one can always improve the soundness by increasing the query complexity, of course). While PCPPs seem only slightly stronger than standard PCPs, we are not aware of any black-box transformation from PCPs to PCPPs. For example, we are not aware of a PCPP analogous to the construction of~\cite{MoshkovitzRaz}.

\subsubsection{Decodable PCPs} 
The parallel repetition theorem of Raz~\cite{Raz} 
is a generic method for improving the soundness of
a $2$-query PCP. It is, however, incapable of giving soundness which is vanishing with the instance size. Additionally, it cannot give small (but constant) soundness PCPs with nearly-linear length. Thus, to achieve such a size-efficient 
PCPs, one must resort to other techniques, 
which all involve the composition of PCPs in one way or another. Unfortunately, composition is considerably more difficult in the low-soundness regime. The works~\cite{RazSafra,AroraSudan,DFKRS,DHK} show how
to get low-soundness PCPs from a suitable composition theorem, but the number of queries their PCPs make grows as the soundness decreases. 
In particular, their approach falls short of giving $2$-query PCPs with small soundness.

The work of~\cite{MoshkovitzRaz}, later abstracted in~\cite{dh}, gives a novel composition technique in the low-soundness regime that maintains the number of queries being $2$. A
key concept of that approach is decodable PCPs (dPCPs, for short).\footnote{The work~\cite{MoshkovitzRaz} uses concrete instantiations of this object and refers to them as locally decode/reject codes. The terminology, which by now is more standard in the area, is from~\cite{dh}.} Below is a somewhat simplified definition sufficient for this introductory section, and we refer the reader to~\cref{def:dpcp} for the more precise definition that we actually use.
\begin{definition} \label{def:dPCP_intro}
Let $q,L\in\mathbb{N}$ and $\eps>0$.
A $q$-query decodable PCP (dPCP) for a language $\mc{L} \subseteq \Sigma_0^n$ consists of a CSP instance $\Psi = (V, E, \mc{P}, \Sigma, \{\Phi_e\}_{e \in E})$, a collection of decoding maps $\{D_t\colon \Sigma^q\to\Sigma_0\cup\{\bot\}\}_{t\in [n]}$, and a collection of decoding distributions $\{\mc{P}_t\}_{t\in [n]}$ over $E$, one for each coordinate of $\mc{L}$. It has the following properties:
\begin{itemize}
    \item \textbf{Completeness:} for any $w\in \mc{L}$ there exists an assignment 
    $A\colon V\to \Sigma$ satisfying that
    \[
    \Pr_{e \sim {\mc P}}[\Phi_e(A(e)) = 1] = 1,
    \qquad
    \Pr_{t\in [n],e\sim\mc{P}_t}[D_t(A(e)) = w_t]=1.
    \]
    In words, there is an assignment to $\Psi$ with value $1$ that decodes to $w$.
    \item \textbf{$(L, \eps)$-List-decoding soundness:} for any $A: V \to \Sigma$ we may find a list of words of size at most $L$, say $\{y_1,\ldots,y_L\}\subseteq\mc{L}$, such that
    \[
    \Pr_{t\in [n], e \sim {\mc P}_t}[\Phi_e(A(e)) = 1\land D_t(A(e)) \not\in\{(y_1)_t,\ldots,(y_L)_t\}] \leq \eps. 
    \] 
    In words, the probability that a randomly chosen edge $e$ is satisfied by $A$ but decodes to something inconsistent with the list, is at most $\eps$.
\end{itemize}
The size of $\Psi$ is defined to be $|V|+|E|$.
\end{definition}
Intuitively, in the low-soundness regime, there is no hope of associating a single word from $\mc{L}$ with an assignment $A$ that satisfies a non-negligible fraction of the constraints, so one must resort to a list-decoding notion of soundness. It turns out that requiring the local decoding feature is the crucial additional feature that enables generic composition theorems, and we elaborate on this in~\cref{sec:techniques_PCP}.

\paragraph{Constructing dPCPs.} Most known dPCP constructions are based on some variant of the manifold-versus-point test (arising from Reed-Muller codes and low-degree testing). The only other known dPCP construction is by Dinur and Meir~\cite[Theorem 1.6]{DinurMeir}, which gives a $2$-query dPCP construction with small soundness. Their construction, though, only achieves polynomial size (as opposed to nearly-linear size), and has super-constant alphabet size. While the fact that their dPCP is not of nearly-linear size seems inherent to their approach, their alphabet size being super-constant is due to the lack of composition results for dPCPs. 

\subsubsection{Relaxed Locally Decodable Codes}
The above variants of PCPs have close relations to locally testable codes, locally decodable codes, and relaxed locally decodable codes, which can all be thought of as combinatorial variants of PCPs. Indeed, composition techniques in the high-soundness regime often require constructions coming from locally testable codes, and in the low-soundness regime, they often require constructions coming from locally decodable codes. 

In the other direction, Dinur~\cite{Dinur} shows how to obtain locally testable codes  matching the parameters of her PCP. The work~\cite{bghsv} shows (and actually, defines) a relaxed variant of locally decodable codes, called \emph{relaxed locally decodable codes} (or RLDC in short), which is built upon their PCPP construction. RLDCs will be a central object in this paper, and roughly speaking, they are codes in which individual message bits can be locally recovered from a corrupted codeword, with the relaxation that the decoder may output a rejection symbol $\bot$ upon detecting corruption. Below is the formal definition.

\begin{definition} 
A code $C:  \{0,1\}^k \to \Sigma_0^n$ is called $q$-query RLDC with decoding radius $\delta$ if there is a randomized decoding algorithm $\Dec$, which satisfies the following:
    \begin{itemize}
        \item \textbf{Completeness:} for every $m \in \{0,1\}^k$ and $i \in [k]$, $\Pr[\Dec(\mc{C}(m), i) = m_i] = 1$.
        \item \textbf{Soundness:} if $w \in \Sigma_0^n$ is $\delta$-close to $\mc{C}(m)$ for some $m\in\{0,1\}^k$, then for every $i \in [k]$ we have that $\Pr[\Dec(w, i)\in \{m_i, \perp \}] \geq \frac{2}{3}$.
        \item \textbf{Success-rate:} if $w \in \Sigma_0^n$ is $\delta$-close to $\mc{C}(m)$ for some $m\in\{0,1\}^k$, then for at least $(1-16\delta-o(1))$-fraction of $i \in [k]$ we have that $\Pr[\Dec(w, i) = m_i] \geq \frac{2}{3}$.
    \end{itemize}
\end{definition}

The above relaxation of LDCs allows for constructions with dramatically better parameters compared to their non-relaxed counterparts, where the decoder is not allowed to output $\bot$. In particular, the construction in \cite{bghsv} is an $O(1)$-query RLDC with nearly-linear length, while the best known construction of $O(1)$-query (non-relaxed) LDCs has superpolynomial length \cite{yekhanin08,efr12}. These improved parameters, in turn, have led to many applications, including in property testing lifting lemmas \cite{GR18,CG18}, data structures for code membership and polynomial evaluation \cite{CGW09}, distributions that defy the law of large numbers \cite{briet2016outlaw}, constructions of universal locally testable codes \cite{GG18}, interactive proofs of proximity \cite{goldreich21}, quantum complexity separations \cite{DGMT22}, and beyond.

However, despite the great attention that RLDCs have received (cf. \cite{GGK19, GRR20, GL19, AS21,block2022relaxed, CGS22, CY22, Goldreich24, KM24, CY24}), prior to this work, no separation result was known between RLDCs and LDCs. This is since, other than the recent lower bound of $\tilde{\Omega}(k^3)$ of  \cite{AlrabiahGKM23} for $3$-query LDCs, the known lower bounds for RLDCs \cite{GL19,DGL23,GGS25} nearly match the state-of-the-art lower bounds for non-relaxed LDCs \cite{KT00,WW05}.

\paragraph{Constructing RLDCs.} There are two main paradigms for constructing RLDCs. In the low-query regime, constructions start with a base code and augment each bit of the message with a PCPP that certifies consistency of the (relevant local view of the) codeword with that bit. In the high-rate regime, constructions are closer to those for high-rate locally testable codes. Namely, they are obtained by iteratively alternating between transformations that reduce query complexity at the expense of distance, and transformations that increase distance while only mildly worsening locality (in the spirit of Dinur’s gap-amplification proof of the PCP theorem).

\subsection{Main Results}\label{sec:main_results}
In this section, we state our main results, which will be convenient to express in terms of the Circuit-SAT problem.
\begin{definition}
For a circuit $\varphi:\Sigma_0^n\to \{0,1\}$ we define the language ${\sf SAT}(\varphi) :={x\in \Sigma_0^n\ | \ \varphi(x)=1}$.
\end{definition}
We will also discuss the size of a circuit $\varphi:\Sigma_0^n\to \{0,1\}$, which is defined in the standard way. A circuit is a directed acyclic graph with a single vertex of out-degree $0$ called the output. Each node has fan-in $0$ or $2$, and is labeled either by an input variable if its fan-in is $0$, or else by an arbitrary function from $\Sigma_0^2$ to $\Sigma_0$. The size of a circuit is the sum of the number of nodes and the number of edges in the graph, and we say it computes $\varphi$ if for any input $x\in\Sigma_0^n$, the value of the output node is $\varphi(x)$.
\subsubsection{Quasi-linear, Low-soundness $2$-Query dPCPs and $3$-Query PCPPs}
Our first result is a new dPCP construction for ${\sf SAT}(\varphi)$ for any circuit $\varphi$, improving upon all previous constructions.
\begin{theorem}\label{thm:dPCP_intro}
For all $\eps>0$ there exists $L\in\mathbb{N}$, such that for all $\Sigma_0$ there is $C\in\mathbb{N}$ such that the following holds.
If $\varphi\colon \Sigma_0^n\to\{0,1\}$ is a circuit of size $N$, then the language ${\sf SAT}(\varphi)$ has a $2$-query dPCP with size
$N\log^C N$ that has $(L,\eps)$-list-decoding soundness and $O_{\eps,|\Sigma_0|}(1)$ alphabet size.
\end{theorem}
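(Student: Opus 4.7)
The plan is to follow the two-step strategy flagged in the abstract: first build a $2$-query dPCP of quasi-linear size with small list-decoding soundness over a large (non-constant) alphabet, and then reduce the alphabet to a constant via an appropriate dPCP composition theorem.

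For the first step I would start from the HDX-based PCP of Bafna--Minzer--Vyas--Yun for ${\sf SAT}(\varphi)$, which already yields a quasi-linear size $2$-query CSP over a large alphabet with arbitrarily small soundness and perfect completeness. To upgrade it into a dPCP, I would arrange that the honest proof symbol at each vertex encodes, via a direct-product style local view, enough information about the candidate witness $w$ itself to enable local decoding: for each coordinate $t \in [n]$ one defines a decoding map $D_t$ that reads $w_t$ off the local view, together with a decoding distribution $\mathcal{P}_t$ concentrated on edges whose endpoints contain coordinate $t$. The list-decoding soundness should then follow from the list-decoding flavor of the direct-product testing analysis underlying the HDX PCP, which says that any assignment satisfying an $\eps$-fraction of the constraints agrees, on most of the decoding distributions, with one of a short list of global witnesses. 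Denote the resulting dPCP by $\psid$; its alphabet $\Sigma_{\hdx}$ has size depending on the HDX parameters, hence larger than the constant the theorem demands.

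The second step is a composition theorem for $2$-query dPCPs in the low-soundness, list-decoding regime, in the spirit of Moshkovitz--Raz and Dinur--Harsha. Here an inner dPCP $\psidp$ is applied to a single outer constraint of $\psid$, viewed as a tiny CSP on $O(1)$ variables over $\Sigma_{\hdx}$; since this is of constant size (independent of $N$), an inner dPCP of polynomial size and constant alphabet $\Sigma_0'=O_{\eps,|\Sigma_0|}(1)$ is sufficient, which can be furnished by standard algebraic or Hadamard-based constructions. The composed instance uses the vertices of $\psid$ together with the internal vertices of each inner copy, its edges and constraints are those of the inner copies (hence still $2$-query over $\Sigma_0'$), and its decoding maps factor as inner followed by outer decoding. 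A parameter count then gives size $|\psid|\cdot\poly(|\Sigma_{\hdx}|)=N\log^C N$, alphabet size $|\Sigma_0'|=O_{\eps,|\Sigma_0|}(1)$, list size $L_{\out}\cdot L_{\inner}$, and list-decoding soundness behaving like $\eps_{\out}+\eps_{\inner}$.

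The hard part will be proving that this composition really works in the $2$-query, small-soundness, list-decoding regime. In particular, one must argue that passing local (inner) tests forces consistency with one of a short list of \emph{global} outer assignments without exceeding $2$ queries, and that outer vertices shared by two adjacent outer constraints are effectively cross-checked via the inner decoders rather than via additional queries; this typically forces the inner dPCP to decode into the same list used in the outer soundness analysis. Handling the decoding distributions $\mathcal{P}_t$ is similarly delicate, since decoding coordinate $t$ in the composed instance must factor through a two-level random choice (outer edge, then inner edge) while remaining faithful to the outer decoder. Once these soundness reductions are in place, the theorem follows by choosing $\eps_{\out},\eps_{\inner}$ small enough and bookkeeping the parameters.
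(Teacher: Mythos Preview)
Your two-step outline matches the paper at the coarsest level, but the composition step as you describe it has a real gap that the paper works hard to overcome. The issue is \emph{decoding degree}. In the composed dPCP, each left vertex of the inner PCP must still be able to decode whichever coordinates $t\in[n]$ the corresponding outer vertex was in charge of; concretely, the inner language at an outer vertex $\mathbf{a}$ must carry, alongside the projection symbols, one $\Sigma_0$ symbol for each $t$ in the decoding neighbourhood of $\mathbf{a}$. Hence the alphabet of the composed dPCP is at least $|\Sigma_0|^{\ddeg(\mathbf{a})}$. The HDX-based outer dPCP you obtain from direct-product amplification has $\ddeg(\mathbf{a})=\poly(\log N)$ (this is forced by the link-to-link routing), so a single naive composition leaves you with alphabet $2^{\poly(\log N)}$ --- exactly where you started. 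The paper states this explicitly: ``if we plainly apply the aforementioned composition theorem, the alphabet size of the composed PCP will be (at the very least) $2^{\poly(\log n)}$. In other words, nothing was gained from the composition.''

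The fix the paper introduces is a separate \emph{decoding-degree reduction} transformation (\cref{lm: decoding degree reduction}), which brings the decoding degree down to $1$ before composing. This transformation only works when the complete decoding distribution is \emph{agnostic} (\cref{def:agnostic-distributions}): conditioned on the right vertex $b$, the distribution of the left vertex $a$ must be independent of the index $t$ being decoded. The paper verifies this property for its HDX construction and then performs several rounds of (decoding-degree reduction + right-alphabet reduction + right-degree reduction + composition), finishing with a Hadamard inner dPCP, to drive the alphabet down to $O_{\eps,|\Sigma_0|}(1)$. Your proposal also misstates the inner problem size: the outer constraint is not ``of constant size independent of $N$'' since $|\Sigma_{\hdx}|=2^{\poly(\log N)}$; the inner language has input length and circuit size $\poly(\log N)$, which is why iterated composition (rather than a single shot) is needed.
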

It is clear that any dPCP must make at least $2$ queries, so~\cref{thm:dPCP_intro} achieves the best parameters possible, up to the logarithmic factor in the size and the implicit dependency between $L$ and $\eps$ (which for us is $L = (1/\eps)^{O(1)}$).

Using a straightforward transformation from dPCPs to PCPPs, we use~\cref{thm:dPCP_intro} to conclude a 3-query PCPP construction with small soundness.

\begin{theorem}\label{thm:PCPP_intro}
For all $\eps>0$ there is $\eta>0$, such that for all $\Sigma_0$ there is $C>0$, such that the following holds. If $\varphi\colon \Sigma_0^n\to\{0,1\}$ is a circuit of size $N$, then the language ${\sf SAT}(\varphi)$ has a $3$-query PCPP with size $N\log^C N$ that has $(1-\eta,1-\eps)$-soundness and $O_{\eps,|\Sigma_0|}(1)$ alphabet size.
\end{theorem}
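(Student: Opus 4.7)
The plan is to derive~\cref{thm:PCPP_intro} from~\cref{thm:dPCP_intro} by the standard transformation that converts a $2$-query dPCP into a $3$-query PCPP: spend one additional query to read the purported input coordinate and check consistency with the dPCP decoder. Concretely, I would apply~\cref{thm:dPCP_intro} with soundness parameter $\eps' := \eps/2$ to obtain list size $L = L(\eps')$, a constant-size alphabet $\Sigma \supseteq \Sigma_0$, and a dPCP $\Psi = (V, E, \mc{P}, \Sigma, \{\Phi_e\}, \{D_t\}, \{\mc{P}_t\})$ for ${\sf SAT}(\varphi)$ of size $N\log^C N$. I then set $\eta := \eps/(2L)$ and define the PCPP to have input coordinates $X := [n]$ (identified with the coordinates of $\Sigma_0^n$), proof variables $V$, alphabet $\Sigma$, and, for each $t \in [n]$ and each edge $e=(u,v)\in\supp(\mc{P}_t)$, a single $3$-query constraint reading $(X_t, u, v)$ that accepts iff both $\Phi_e(A(u),A(v))=1$ and $D_t(A(u),A(v)) = X_t$, weighted by the joint distribution of $(t,e)$ with $t$ uniform and $e \sim \mc{P}_t$.

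Completeness is immediate from the dPCP completeness property, which yields an assignment to $V$ that simultaneously satisfies every CSP constraint and decodes correctly at every coordinate (one may enforce $\supp(\mc{P}_t)\subseteq\supp(\mc{P})$ by a negligible mixture of distributions, in case this is not already built into the dPCP definition used in~\cref{thm:dPCP_intro}). For soundness, suppose $w$ is $(1-\eta)$-far from ${\sf SAT}(\varphi)$ and let $A$ be any assignment extending $w$. The list-decoding soundness produces words $y^1,\ldots,y^L \in {\sf SAT}(\varphi)$ with
\[
\Pr_{t, e\sim\mc{P}_t}\bigl[\Phi_e(A(e))=1 \wedge D_t(A(e))\notin\{y^1_t,\ldots,y^L_t\}\bigr] \le \eps'.
\]
Splitting the PCPP acceptance event $\{\Phi_e(A(e))=1 \wedge D_t(A(e))=w_t\}$ according to whether $D_t(A(e))$ lies in the list, I obtain
\[
\Pr[\text{accept}] \le \Pr_t\bigl[w_t \in \{y^1_t,\ldots,y^L_t\}\bigr] + \eps' \le L\eta + \eps' = \eps,
\]
where the first term is controlled by a union bound together with the fact that $w$ being $(1-\eta)$-far from each $y^i$ gives $\Pr_t[w_t=y^i_t] \le \eta$.

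The remaining parameters fall out by inspection: each test makes exactly $3$ queries (the symbol $X_t$ plus the two proof symbols of $e$), the number of constraints and proof variables is $O(|V|+|E|) = N \log^C N$, and the alphabet is $\Sigma$, of constant size depending only on $\eps$ and $|\Sigma_0|$. I do not anticipate any serious obstacle: the whole content of the transformation is the one-line soundness split above, and the only point requiring a moment of care is the joint completeness of the edge constraint and the decoding within a single assignment, which is either part of the dPCP definition referenced by~\cref{thm:dPCP_intro} or can be enforced by a trivial perturbation of $\mc{P}$.
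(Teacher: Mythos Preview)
Your proposal is correct and follows essentially the same argument as the paper (Theorem~\ref{thm:near-linear-pcpp}): both add a single input query to the $2$-query dPCP, form the $3$-query constraint that checks the CSP edge together with decoder-input consistency, and bound the acceptance probability by splitting on whether the decoded value lies in the list, yielding $\eps' + L\cdot(1-\eta)$. The only cosmetic differences are the specific choice of parameters (you take $\eta=\eps/(2L)$, the paper takes $\eta\ge 1-\eps^{C+1}$ using $L=\eps^{-C}$) and that the paper's formal dPCP decoder reads only the left vertex and its label, so the completeness concern you flag does not arise.
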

Intuitively, the construction in~\cref{thm:PCPP_intro} is the same as the construction in~\cref{thm:dPCP_intro}, except that we query a random index $i$ of the supposed word $w$ from $\mc{L}$, decode it using the dPCP, and compare to $w_i$. In other words, we ``spend’’ one query to check if the assignment to the CSP decodes to $w$. As long as one sticks to such a recipe, it seems impossible to achieve query complexity smaller than $3$. As far as we know, though, it may be possible to achieve a small soundness, $2$-query PCPP for certain languages of interest.

At the cost of paying one additional query, we can get a tight relationship between $\eps$ and $\eta$, which is important in some applications. The statement below is meaningful when $\eps>0$ is thought of as small, and $\eta$ can be taken to be any value between $0$ and $1$.
\begin{theorem}\label{thm:4PCPP_intro}
For all $\eps>0$, $0<\eta<1$ and $\Sigma_0$, there is $C>0$ such that the following holds. If $\varphi\colon \Sigma_0^n\to\{0,1\}$ is a circuit of size $N$, then the language ${\sf SAT}(\varphi)$ has a $4$-query PCPP with size $N\log^CN$ that has $(\eta,\eta+\eps)$-soundness and $O_{\eps,\eta,|\Sigma_0|}(1)$ alphabet size.
\end{theorem}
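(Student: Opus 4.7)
The plan is to derive Theorem~\ref{thm:4PCPP_intro} from the $2$-query dPCP of Theorem~\ref{thm:dPCP_intro} by spending the additional query (beyond the three used in the transformation behind Theorem~\ref{thm:PCPP_intro}) on a second input read, which effectively \emph{squares} the acceptance probability as a function of $1-\eta$ and hence gives $(\eta,\eta+\eps)$-soundness directly.

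Concretely, fix the target $\eta\in(0,1)$ and $\eps>0$, and let $\psidp$ be the $2$-query dPCP of Theorem~\ref{thm:dPCP_intro} for $\sat(\varphi)$, with list size $L$ and soundness error $\eps_0$ to be chosen as functions of $\eta,\eps$. The $4$-query tester acts on a dPCP proof $\pi$ by sampling an edge $e$ of $\psidp$ together with two independent decoding indices $i,j$ responsible at $e$, then querying $\pi$ at the two endpoints of $e$ and the input $w$ at positions $i$ and $j$, and accepting iff $\Phi_e(\pi(e))=1$, $D_i(\pi(e))=w_i$, and $D_j(\pi(e))=w_j$. Completeness is immediate from the completeness of $\psidp$.

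For soundness, fix $w$ that is $\eta$-far from $\sat(\varphi)$ and any proof $\pi$. The $(L,\eps_0)$-list-decoding soundness produces a list $Y=\{y_1,\ldots,y_L\}\subseteq \sat(\varphi)$ such that, up to $O(\eps_0)$ slack, at a random $(e,i)$ the decoder outputs a value consistent with some $y_l\in Y$. I will use the slightly stronger statement that each satisfied edge $e$ decodes consistently to a \emph{single} codeword $y_{l(e)}\in Y$ across all of its decoding indices---this can either be read off the HDX-based construction of Theorem~\ref{thm:dPCP_intro} or enforced by a short auxiliary consistency sub-check folded into $\psidp$. Under such edge-consistent decoding, the acceptance probability factors as
\[
\Pr[\text{accept}] \;\le\; \E_e[p_e^{2}] + O(\eps_0), \qquad p_e \;:=\; \Pr_{i\in S_e}[(y_{l(e)})_i = w_i],
\]
where $S_e$ denotes the set of indices decoded at edge $e$. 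Since each $y_l$ is $\eta$-far from $w$, a Chernoff-type concentration (valid once $|S_e|$ is taken a sufficiently large constant relative to $\log L$, which can be arranged by bundling positions into the dPCP's alphabet) yields $p_e \le 1-\eta + o(1)$ with high probability over $e$, regardless of how the adversary chose $l(e)$. Squaring and averaging deliver $\E_e[p_e^{2}] \le (1-\eta)^2 + o(1)$, and taking $\eps_0$ small enough gives $\Pr[\text{accept}] \le 1-\eta-\eps$ for any target $\eps$ below the natural threshold of order $\eta(1-\eta)$---matching the paper's remark that the statement is \emph{meaningful when $\eps>0$ is thought of as small}.

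The hard part will be upgrading the raw $(L,\eps_0)$-list-decoding soundness to the per-edge consistent form used in the factorization above, since the bare definition permits the decoder to reference different codewords of $Y$ at distinct decoding indices inside the same edge. Once this refinement is in hand, the concentration step is a routine second-moment computation, and the size, query complexity, and alphabet bounds all inherit from those of $\psidp$.
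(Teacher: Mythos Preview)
Your approach has two genuine gaps, and the paper takes a completely different route.

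The main gap is the ``edge-consistent decoding'' assumption. List-decoding soundness only guarantees that for a random $(t,a,b)\sim\mc{Q}$ with the constraint satisfied, $D_t(a,T_A[a])\in\{(y_\ell)_t\}_\ell$; it does \emph{not} say a single $\ell$ works across all indices in the decoding neighborhood of $a$. After the composition and alphabet-reduction steps, nothing in the final dPCP of Theorem~\ref{thm:dPCP_intro} prevents a single left symbol from matching $(y_1)_i$ at index $i$ and $(y_2)_j$ at index $j$, and this cannot be ``read off'' the construction or folded in without additional queries. Without per-vertex consistency your factorization collapses: there is no single codeword against which to measure $p_e$, and the only control you have is on the first moment $\E_e[p_e]$, which carries precisely the list-size factor $L$ that the fourth query is meant to eliminate. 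The second gap is the concentration step: even granting consistency, you need the decoding neighborhood $S_e$ (equivalently, the conditional law of $t$ given $a$) to sample the set $\{i:(y_{l(e)})_i=w_i\}$ at density close to $1-\eta$. In the final dPCP the decoding degree is $O_\eps(1)$ and no sampler property is asserted; ``bundling positions into the alphabet'' does not manufacture the required pseudorandomness of $S_e$ against an adversarially chosen $l(e)$.

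The paper sidesteps both issues by spending the extra query not on a second input read but on an auxiliary \emph{encoding} of the input. It applies the dPCP to the augmented language $\{(m,\mc{C}(m)):\varphi(m)=1\}$ where $\mc{C}$ has distance $1-\mu$, coupling input positions $x\in[n]$ and codeword positions $y\in[n/\rho]$ by a bipartite expander $\mc{S}$. The four queries are one to the input, one to a supposed codeword table $T_Y$, and the two dPCP queries. The list $\{u_1,\dots,u_L\}$ yields codewords $\mc{C}(u_\ell)$ that are pairwise $(1-\mu)$-far, so at each $y$ at most one $\ell$ can have $\mc{C}(u_\ell)_y=T_Y[y]$ (up to an $O(L^2\mu)$ collision term); the sets $A_\ell=\{y:\mc{C}(u_\ell)_y=T_Y[y]\}$ are therefore nearly disjoint. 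With $B_\ell=\{x:(u_\ell)_x=w_x\}$ each of density at most $1-\eta$, the expander mixing lemma gives $\sum_\ell \Pr_{(x,y)\sim\mc{S}}[x\in B_\ell,\,y\in A_\ell]\le(1-\eta)\sum_\ell \mu(A_\ell)+O(L/\sqrt{d})\le 1-\eta+\eps$. The extra query thus pins down which list member the proof is encoding, collapsing $L$ to $1$ rather than squaring any probability.
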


\subsubsection{$3$-Query RLDCs are Strictly Stronger than $3$-Query LDCs}

We provide a new black-box transformation of PCPPs into RLDCs, which preserves query complexity. All prior similar transformations led to an increase in the query complexity, and in particular, they cannot be used to get a $3$-query RLDC.
Combining our transformation with~\cref{thm:PCPP_intro}, we construct RLDCs with $3$ queries. More precisely, we show the following.
\begin{theorem}\label{thm:RLDC_intro}
There are $\eps,\delta>0$, constant $C$, and alphabet $\Sigma_0$  such that for sufficiently large $k$, there is a $3$-query relaxed locally decodable code $\mathcal{C}\colon \{0,1\}^k\to \Sigma_0^{n}$ with decoding radius $\delta$, distance $\eps$ and
$n\leq k^2\log^C k$.
\end{theorem}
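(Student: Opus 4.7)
The plan is to apply the $3$-query PCPP of Theorem~\ref{thm:PCPP_intro} inside a query-preserving reduction from $3$-query PCPPs to $3$-query RLDCs, using a single designated PCPP query to simultaneously verify consistency and read off the decoded bit. The construction augments a good base code with one PCPP per message bit, arranged so that the PCPP verifier and the RLDC decoder share all three queries.

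Concretely, let $B\colon \{0,1\}^k\to\Sigma_1^K$ be a systematic code with $K=O(k)$, constant-sized alphabet $\Sigma_1$, and relative distance close to $1$ (e.g.\ an algebraic-geometric code over a sufficiently large constant alphabet). For each $i\in[k]$ and a parameter $S=\tilde O(k)$ to be fixed, let $R_i(m)\in \{0,1\}^S$ denote $m_i$ repeated $S$ times, and define the language
\[
L_i = \bigl\{(x,r)\in \Sigma_1^K\times\{0,1\}^S : x=B(m)\text{ for some }m\in\{0,1\}^k,\ r=R_i(m)\bigr\},
\]
which is recognized by a circuit of size $\tilde O(k)$. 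Invoking Theorem~\ref{thm:PCPP_intro} on this circuit yields a $3$-query PCPP $\pi_i$ of size $\tilde O(k)$ with small soundness parameters. The RLDC codeword is then
\[
\mathcal C(m) = \bigl(B(m),\; R_1(m),\ldots,R_k(m),\; \pi_1,\ldots,\pi_k\bigr),
\]
of total length $\tilde O(k^2)$. To decode bit $i$, the decoder simulates the PCPP verifier for $\pi_i$ on the portion of the received word carrying its input and proof; the three verifier queries are arranged so that exactly one is drawn uniformly from $R_i$. If the verifier accepts, output the bit read from $R_i$; otherwise output $\bot$.

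Completeness is immediate. For soundness, fix a received word $w$ that is $\delta$-close to $\mathcal C(m)$ and let $(x,r_i)$ denote the PCPP input extracted from $w$. If $(x,r_i)$ is sufficiently far from $L_i$ (in the precise sense of Theorem~\ref{thm:PCPP_intro}), the verifier rejects with high probability and the decoder outputs $\bot$. Otherwise $(x,r_i)$ is close to some $(B(m'),R_i(m'))\in L_i$; combining the large distance of $B$ with the closeness of $w$ to $\mathcal C(m)$ forces $m'=m$, so the queried bit of $R_i$ equals $m_i$ with probability close to $1$. The success-rate requirement follows from an averaging argument: for all but an $O(\delta)$-fraction of indices $i$, the block $(R_i,\pi_i)$ carries an $O(\delta)$-fraction of corruption, and for those $i$ the verifier accepts with high probability.

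The main obstacle is making the reduction genuinely query-preserving, i.e.\ guaranteeing that one of the three PCPP queries always lands in $R_i$. This is not provided by Theorem~\ref{thm:PCPP_intro} used as a black box. My plan is to obtain it by (i) padding the PCPP input so that the three components $B(m)$, $R_i$, and the proof $\pi_i$ appear in comparable sizes, and (ii) lightly restructuring the CSP produced by Theorem~\ref{thm:PCPP_intro} so that each of its constraints samples exactly one position from each of the three blocks. A secondary delicate point is parameter calibration: the base code's distance and the repetition length $S$ must be jointly tuned so that no adversarial corruption can keep the PCPP input close to an $L_i$-element corresponding to some $m'\ne m$—which is what forces the use of a constant-alphabet base code of distance approaching $1$, and hence the use of a sufficiently large constant alphabet $\Sigma_0$.
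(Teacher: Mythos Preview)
Your soundness dichotomy has a real gap. The PCPP of Theorem~\ref{thm:PCPP_intro} rejects only inputs that are $(1{-}\eta)$-far from the language, for a small $\eta$; the complementary case is merely ``within distance $1{-}\eta$ of $L_i$'', not ``close to $L_i$''. Concretely, if the adversary leaves $x\approx B(m)$ intact but sets $r_i$ to the all-$(1{-}m_i)$ string, then $(x,r_i)$ is at distance roughly $K/(K{+}S)$ from $(B(m'),(1{-}m_i)^S)\in L_i$ for any $m'$ with $m'_i=1{-}m_i$. For this to exceed $1{-}\eta$ you need $S/(K{+}S)\le\eta$, i.e.\ $S$ a tiny fraction of $K$; but then the single PCPP input query lands in $R_i$ with probability at most $\eta$, and your decoder almost never reads a bit of $R_i$. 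Conversely, making $S$ large so that a query reliably hits $R_i$ destroys the far-from-$L_i$ guarantee. Your proposed ``light restructuring'' would have to turn one of the two proof queries into an $R_i$ query (equivalently, produce a $2$-query PCPP), which is not light. This tension is exactly why the paper remarks that it ``cannot afford to have the $x$-part'' of the BGHSV-style encoding.

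The paper's fix drops $R_i$ entirely. It defines two languages per index, $\mc{L}_{i,b}=\{\mc{C}_0(m):m_i=b\}$ for $b\in\{0,1\}$, whose PCPP input is only the base codeword. The bit $b$ is appended to every proof symbol (so the proof alphabet is $\Sigma\times\{0,1\}$), and the decoder's first query is to the $B$-side of the proof---whose marginal is uniform and hence identical for $\mc{L}_{i,0}$ and $\mc{L}_{i,1}$---from which it reads the claimed bit $b$ and then completes the verifier for $\mc{L}_{i,b}$ with the remaining two queries. If $b=m_i$ the decoder is already correct; if $b=1{-}m_i$, the PCPP input $\approx\mc{C}_0(m)$ is at distance at least $1{-}O(\mu)$ from every element of $\mc{L}_{i,1{-}m_i}$ (since $\mc{C}_0$ has distance $1{-}\mu$), so the PCPP rejects with high probability. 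The crucial point is that the ``wrong'' language excludes $\mc{C}_0(m)$ itself, making the input genuinely far in the bad case without any dependence on a corruptible repetition block.
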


In words,~\cref{thm:RLDC_intro} establishes the existence of a $3$-query RLDC with constant decoding radius and distance, and roughly quadratic blow-up in the message length. All previous $O(1)$-query RLDC constructions (such as in~\cite{bghsv,GGK19,GRR20,AS21,CGS22}) require more than $3$ queries.
The significance of~\cref{thm:RLDC_intro} achieving $3$ queries is more apparent once it is compared to the lower bounds for LDCs of~\cite{AlrabiahGKM23}:
\begin{theorem}[Theorem A.2 in \cite{AlrabiahGKM23}]\label{thm:ldc_lb}
Suppose $C : \{0,1\}^k \to \Sigma_0^n$ is a $3$-query locally decodable code with decoding radius $\Omega(1)$, distance $\Omega(1)$ and alphabet $|\Sigma_0|=O(1)$. Then $n \ge \Omega(k^3/\log^6 k)$.
\end{theorem}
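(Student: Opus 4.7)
The plan is to follow the Kikuchi-graph / spectral-refutation method: developed for random CSPs by Hopkins-Kothari-Potechin, used for LDC lower bounds in the linear $\mathbb{F}_2$ case by Kothari-Manohar, and extended to the general setting of this theorem in \cite{AlrabiahGKM23}. The first step is to put the code into a clean normal form: a standard averaging/cleanup argument shows that any $3$-query LDC with constant decoding radius and constant alphabet $\Sigma_0$ can be replaced, at polylogarithmic cost in the blocklength, by one whose decoder is \emph{smooth} (every codeword coordinate is queried with probability $O(1/n)$) and such that for every message index $i\in[k]$ there is a collection $\mathcal{M}_i$ of $\Omega(n/\log n)$ essentially disjoint decoding triples $\{a,b,c\}\subseteq[n]$, each equipped with a local predicate on $\Sigma_0^3$ that recovers $m_i$ with advantage $\Omega(1)$.

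Next I would encode the decoder as a family of weighted \emph{Kikuchi graphs} $G_1,\dots,G_k$ on the vertex set $\binom{[n]}{\ell}$, for a parameter $\ell$ to be optimized. For each triple $\{a,b,c\}\in\mathcal{M}_i$ one places an edge between every pair $S,T$ of $\ell$-sets with $S\triangle T=\{a,b,c\}$, weighted by the local decoding predicate evaluated on the codeword symbols at $a,b,c$. The key point, obtained from two applications of Cauchy-Schwarz on the decoder's bias, is that the LDC condition forces the adjacency operators $A_1,\dots,A_k$ to carry $k$ approximately orthogonal quadratic-form witnesses, certifying $\|A_i\|\gtrsim\binom{n}{\ell}\cdot(\ell/n)^2$ on average. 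In the opposite direction, applying a matrix-Khintchine / non-commutative Bernstein inequality to the random signed sum $\sum_i\eta_i A_i$, with Khintchine blocks indexed by $\Sigma_0$ to deal with the non-binary alphabet, controls the spectral norm of the typical instance by $\tilde{O}\bigl(\sqrt{k\cdot d\cdot\binom{n}{\ell}}\bigr)$, where $d\asymp\ell^2/n$ is the maximum degree of each Kikuchi graph. Balancing the two estimates with $\ell\asymp\sqrt{n/k}\cdot\mathrm{polylog}(k)$ yields $n=\tilde{\Omega}(k^3)$, with the polylogarithmic slack accounting for the $\log^6 k$ factor.

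The hard part, and the reason the general $3$-LDC lower bound trailed its linear-code predecessor, is handling arbitrary non-linear predicates over an arbitrary constant alphabet. For linear binary codes each local predicate is a XOR and the Kikuchi operator is automatically a $\pm 1$-valued signed adjacency matrix, on which matrix Khintchine acts without fuss. In general one must first canonicalize the decoder, Fourier-decomposing each predicate on $\Sigma_0^3$ into $O_{|\Sigma_0|}(1)$ `character-bilinear' pieces and building one Kikuchi graph per piece, then running the spectral argument in parallel while verifying that the canonicalization and the loss in going between predicates, matchings, and signed operators together cost only polylogarithmic factors in $n$. Ensuring that the overall loss fits into the stated $\log^6 k$ slack, rather than degrading to a polynomial factor, is the most delicate step of the argument.
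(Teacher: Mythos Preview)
The paper does not prove this theorem; it is quoted verbatim as Theorem~A.2 of \cite{AlrabiahGKM23} and used as a black box to obtain the separation corollary. There is therefore no ``paper's own proof'' to compare your proposal against.

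Your sketch is a reasonable high-level outline of the argument in \cite{AlrabiahGKM23}: smooth the decoder, extract near-disjoint matchings of decoding triples, build Kikuchi graphs at level $\ell$, lower-bound the operator norm via the LDC decoding advantage, upper-bound it via matrix Khintchine on random $\pm1$ combinations, and balance $\ell$. The identification of the Fourier-decomposition step for non-linear predicates over a general constant alphabet as the crux is also correct. That said, your outline stays at the level of heuristics and does not actually pin down the bookkeeping that produces the specific $\log^6 k$ loss (as opposed to some other polylogarithmic factor), nor does it verify the ``approximately orthogonal quadratic-form witnesses'' claim carefully; in the actual argument the tension between the two bounds is set up somewhat differently. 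If you intend to reproduce the proof rather than cite it, you would need to fill in those quantitative details from \cite{AlrabiahGKM23} directly.
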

In particular, combining~\cref{thm:RLDC_intro,thm:ldc_lb} gives the following corollary, which asserts that $3$-query RLDCs achieve strictly stronger parameters than $3$-query LDCs. This provides a separation between RLDCs and LDCs, resolving an open problem due to Ben-Sasson, Goldreich, Harsha, Sudan, and Vadhan \cite{bghsv} (see also Goldreich’s textbook~\cite[Chapter 13.4.4]{GolBook} and recent surveys \cite{Goldreich24,Gas24}).

\begin{corollary}
There is a family of $3$-query RLDCs with constant decoding radius, distance, and alphabet size that achieves polynomially better length than any family of $3$-query LDCs with constant decoding radius, distance, and alphabet size.
\end{corollary}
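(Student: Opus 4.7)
The plan is to combine the two highlighted results directly: \Cref{thm:RLDC_intro} supplies the upper bound witness and \Cref{thm:ldc_lb} supplies a matching lower bound against every competitor LDC family.

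First, I would invoke \Cref{thm:RLDC_intro} to obtain a concrete family of $3$-query relaxed locally decodable codes $\mathcal{C}_k\colon \{0,1\}^k\to \Sigma_0^{n(k)}$ with decoding radius $\delta$, distance $\eps$, and $n(k)\le k^2\log^C k$, where $\delta,\eps>0$ and $C$ are the absolute constants promised by that theorem, and $\Sigma_0$ is the constant-size alphabet it provides. This is the ``achievability'' side of the separation, and no additional work is required on it beyond unpacking the statement.

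Next, I would let $\mathcal{C}'_k\colon \{0,1\}^k\to (\Sigma_0')^{n'(k)}$ denote an arbitrary family of $3$-query (non-relaxed) locally decodable codes with constant decoding radius, constant distance, and constant alphabet size. By \Cref{thm:ldc_lb} applied to $\mathcal{C}'_k$, there is a constant $c>0$ (depending only on $|\Sigma_0'|$ and the two constants hidden in $\Omega(1)$) such that $n'(k)\ge c\, k^3/\log^6 k$ for all sufficiently large $k$. This is the ``impossibility'' side, and since \Cref{thm:ldc_lb} is stated for any fixed constant choice of those parameters, the lower bound holds uniformly over every such competitor family.

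Finally, I would compare the two bounds. For all sufficiently large $k$,
\[
\frac{n'(k)}{n(k)} \;\ge\; \frac{c\, k^3/\log^6 k}{k^2\log^C k} \;=\; \frac{c\, k}{\log^{C+6} k},
\]
which grows polynomially in $k$ (concretely as $k^{1-o(1)}$). Thus $n(k)\le n'(k)/k^{1-o(1)}$, so the RLDC family from \Cref{thm:RLDC_intro} has length polynomially smaller than any $3$-query LDC family with constant decoding radius, distance, and alphabet size, which is exactly the content of the corollary. There is no real obstacle: the whole content of the separation is loaded into \Cref{thm:RLDC_intro}, and the only care needed is to verify that the hypothesis parameters (radius, distance, alphabet size all constant, query complexity exactly $3$) line up so that \Cref{thm:ldc_lb} is applicable to the competitor LDC.
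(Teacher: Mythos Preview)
Your proposal is correct and matches the paper's own approach exactly: the paper does not give a standalone proof but simply states that the corollary follows by combining \Cref{thm:RLDC_intro} (the $\tilde{O}(k^2)$ RLDC upper bound) with \Cref{thm:ldc_lb} (the $\tilde{\Omega}(k^3)$ LDC lower bound). Your write-up spells out this combination and the resulting polynomial gap, which is all that is needed.
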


\begin{remark}
The recent work \cite{grigorescu2025relaxed} showed that linear 3-query RLDCs with a \emph{binary} alphabet imply non-relaxed LDCs with similar parameters. While our construction is not linear, this barrier suggests that a non-binary alphabet is likely a prerequisite for achieving such polynomial separation.
\end{remark}

\paragraph{Other Implications.}
Our results have a few direct implications using existing results in the literature, and we mention one of them.

 Using the connection between gaps of PSPACE-hardness for the $q$-CSP reconfiguration problem and parameters of PCPPs established by~\cite{guruswami_reconfiguration}, we get the following two corollaries:  
\begin{corollary}\label{cor:recon1}
For any constant $\eta>0$ there is an alphabet
$\Sigma$ of size $O_{\eta}(1)$, 
such that $\mathrm{Gap}_{1,\frac{1}{2}+\eta}\,5$-\textrm{CSP}$_\Sigma$
Reconfiguration is $\mathsf{PSPACE}$-hard.
\end{corollary}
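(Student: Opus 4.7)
The plan is to invoke the PCPP-to-reconfiguration reduction of~\cite{guruswami_reconfiguration} in a black-box fashion, using our new small-soundness PCPP as the main ingredient. For any $\eta>0$, \cref{thm:4PCPP_intro} provides (by setting its proximity parameter to $\tfrac{1}{2}$ and its soundness slack to $\eta$) a $4$-query PCPP of quasi-linear size with perfect completeness, soundness $\tfrac{1}{2}+\eta$ against proximity $\tfrac{1}{2}$, and alphabet of size $O_\eta(1)$. This matches exactly the PCPP hypothesis required by the reconfiguration framework.

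Next, I would apply the reduction to a known PSPACE-complete base reconfiguration problem, for example $q_0$-CSP Reconfiguration for a small constant $q_0$. Each intermediate configuration on a candidate reconfiguration path is augmented with a PCPP proof attesting that the configuration satisfies the base CSP, and the local consistency checks of the reconfiguration step are replaced by the PCPP's local verification. A single probabilistic check then consists of one probe into the (alleged) current configuration together with the $4$ PCPP queries, producing a $5$-query CSP over a product alphabet of size $O_\eta(1)$. Completeness $1$ is immediate from pairing a genuine reconfiguration path with honest PCPP proofs. For soundness, if the base instance admits no reconfiguration path, then any candidate sequence must at some step be $\tfrac{1}{2}$-far from every satisfying assignment, so at that step the PCPP rejects with probability at least $\tfrac{1}{2}-\eta$, capping the overall fraction of satisfied constraints at $\tfrac{1}{2}+\eta$.

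The main obstacle is to calibrate the PCPP parameters so that the gap emerges precisely as $(1,\tfrac{1}{2}+\eta)$ rather than at some nearby but less convenient point. This is exactly why we use~\cref{thm:4PCPP_intro}, whose proximity/soundness tradeoff is freely tunable, instead of the weaker~\cref{thm:PCPP_intro}, whose slack is coupled to its proximity and need not straddle $\tfrac{1}{2}$. Once this calibration is set and a constant-query PSPACE-complete reconfiguration source is fixed, the rest of the argument is a direct invocation of the reduction of~\cite{guruswami_reconfiguration}, with the final query count $5 = 1 + 4$ accounted for entirely by one probe into the current configuration plus the four PCPP queries, and the constant alphabet size following from the product of two constant-size alphabets.
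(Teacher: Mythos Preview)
Your high-level plan---feed the $4$-query PCPP of \cref{thm:4PCPP_intro} into the reduction of~\cite{guruswami_reconfiguration}---is exactly what the paper does. But your sketch omits the second ingredient that the black-box theorem (Theorem~2 of~\cite{guruswami_reconfiguration}, restated here as \cref{thm:reconf_ven}) actually requires: an error-correcting code of relative distance close to~$1$. Without it, your soundness claim that ``any candidate sequence must at some step be $\tfrac{1}{2}$-far from every satisfying assignment'' is false---a path can hover arbitrarily close to satisfying assignments while still failing to reconfigure. The whole point of the code in the reduction is to force this farness: configurations are first encoded by a code of distance $1-\delta$, so that any point along a path that is not itself an encoding of a satisfying assignment is automatically $\tfrac{1-\delta}{2}$-far from every one.

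This also explains why you cannot set the PCPP proximity parameter to exactly~$\tfrac12$: \cref{thm:reconf_ven} constrains it to be strictly below half the code distance, hence strictly below~$\tfrac12$. The paper handles this by taking the code of \cref{lm: code GI} with distance $1-\delta$ for arbitrarily small $\delta$, invoking the PCPP at proximity $\tfrac{1-\delta}{2}$, and letting $\delta\to 0$ to push the resulting gap down to $\tfrac12+\eta$. Once you supply the code and adjust the proximity accordingly, your argument coincides with the paper's.
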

\begin{corollary}\label{cor:recon2}
For any constant $\eta>0$ there is an alphabet
$\Sigma$ of size $O_{\eta}(1)$, 
such that $\mathrm{Gap}_{1,0.9+\eta}\,2$-\textrm{CSP}$_\Sigma$
Reconfiguration is $\mathsf{PSPACE}$-hard.
\end{corollary}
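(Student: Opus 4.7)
The plan is to combine \cref{cor:recon1} with the classical reduction from $5$-CSP to $2$-CSP, adapted to preserve the reconfiguration structure. Applying \cref{cor:recon1} with $5\eta$ in place of $\eta$ gives PSPACE-hardness of Gap-$(1, \tfrac{1}{2} + 5\eta)$ $5$-CSP Reconfiguration over an alphabet $\Sigma_1$ of size $O_\eta(1)$; this is the starting point.

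Starting from such a $5$-CSP instance, introduce one auxiliary variable $y_C$ for each original constraint $C$, where the alphabet of $y_C$ is the set of satisfying $5$-tuples of $C$ (of size at most $|\Sigma_1|^5 = O_\eta(1)$); for each original variable $v_i \in C$, add the binary consistency constraint checking that the $i$-th coordinate of $y_C$ equals the assignment to $v_i$. The soundness analysis is the routine label-cover calculation: any original constraint not satisfied by the projection $a|_V$ of a $2$-CSP assignment $a$ forces at least one of its five associated binary constraints to be unsatisfied, so an unsatisfied fraction $f$ in the $5$-CSP projection implies an unsatisfied fraction of at least $f/5$ in the new $2$-CSP. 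Consequently any $2$-CSP assignment of value $> 0.9 + \eta$ projects to a $5$-CSP assignment of value $> \tfrac{1}{2} + 5\eta$, which transfers the NO direction of reconfiguration by projecting any purported path.

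The main obstacle is the YES-to-YES direction: each single-variable move in the $5$-CSP must be realized by a short sequence of single-variable moves in the $2$-CSP while keeping the value at $1$ throughout the transition. Here I would invoke the framework of \cite{guruswami_reconfiguration}, first making the $5$-CSP sufficiently regular via a standard degree-reduction gadget so that each move in the original affects only a constant number of auxiliary variables, and then augmenting the alphabet of each $y_C$ with a wildcard symbol $\star$ that vacuously satisfies the binary consistency constraints incident to $y_C$. A move flipping $v$ from $\sigma$ to $\sigma'$ in the original is then simulated by switching each incident auxiliary $y_C$ to $\star$, updating $v$, and restoring the $y_C$'s to the new correct satisfying tuples. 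One must verify that the wildcard augmentation does not spoil the soundness computation from the previous paragraph; since each wildcard at a $y_C$ trivially satisfies only the five binary constraints incident to $y_C$, a small penalty discouraging wildcards in the edge distribution degrades the gap by only an additive $o(1)$ that can be absorbed into the slack $\eta$.
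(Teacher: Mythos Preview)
Your overall route matches the paper's: the paper's proof is literally one sentence invoking ``a simple reduction from $\mathrm{Gap}_{1,1-\eps}\,(q{+}1)$-\textrm{CSP}$_\Sigma$ to $\mathrm{Gap}_{1,1-\eps/(q{+}1)}\,2$-\textrm{CSP}$_\Sigma$'', which with $q{+}1=5$ and $\eps=\tfrac12-5\eta$ (from \cref{cor:recon1} applied with $5\eta$) gives exactly the $0.9+\eta$ threshold. So the clause--variable construction and your factor-$5$ loss in the NO case are precisely what the paper intends; it simply leaves the details implicit.

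The gap is in your YES-case patch. A universal wildcard $\star$ that vacuously satisfies all five incident projection constraints, combined with ``a small penalty discouraging wildcards'', cannot work: whatever form the penalty takes (a unary constraint $y_C\neq\star$, or folding it into the projection edges), it is violated whenever $y_C=\star$, so the intermediate assignments in your simulated move are \emph{not} value $1$, and the YES case of reconfiguration demands value exactly $1$. Dropping the penalty is worse: the adversary then sets every $y_C=\star$, achieves $2$-CSP value $1$ regardless of the original assignment, and the NO-case argument fails completely (not by $o(1)$ but entirely).

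A clean fix is to use structured intermediate symbols rather than a blanket wildcard. Let the alphabet of $y_C$ be all ordered pairs $(\tau,\tau')$ of satisfying tuples of $C$ at Hamming distance at most $1$, and let the constraint $(y_C,v_i)$ accept iff $v_i\in\{\tau_i,\tau'_i\}$. In the YES case, a flip of $v$ is simulated by moving each incident $y_C$ from $(\alpha|_C,\alpha|_C)$ to $(\alpha|_C,\alpha'|_C)$, then flipping $v$, then moving $y_C$ to $(\alpha'|_C,\alpha'|_C)$, maintaining value $1$ throughout. In the NO case, whenever all five projection constraints of $C$ are satisfied, the pair $(\tau,\tau')$ agrees on four of the five coordinates, so the original values are forced to equal one of $\tau$ or $\tau'$; hence $C$ is genuinely satisfied and your factor-$5$ calculation goes through unchanged.
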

These results improve upon \cite{HiraharaOhsaka2024,guruswami_reconfiguration}, who proved that $\mathrm{Gap}_{1,0.9942+\eta}\,2$-\textrm{CSP}$_\Sigma$ is PSPACE-hard for some $\Sigma$ of size $O_{\eta}(1)$. We remark that for both of them, it is important that we use~\cref{thm:4PCPP_intro}, which has a near-tight relationship between the distance and rejection parameters.

\subsection{Techniques in our PCP Constructions}\label{sec:techniques_PCP}
In this section, we discuss the techniques in the proofs of~\cref{thm:dPCP_intro,thm:PCPP_intro}. 
We begin with a general discussion about how PCPs are constructed.
\subsubsection{The General PCP Recipe}\label{sec:recipe}
Essentially all known PCP constructions can be viewed as using the following high-level recipe. Starting with a language $\mc{L}\subseteq\Sigma_0^n$, they consist of:
\begin{enumerate}
    \item {\bf Outer PCP:} is a fairly good PCP construction for the language $\mc{L}$, except that its alphabet $\Sigma$ is too large (being either polynomial in $n$ or sometimes even quasi-polynomial in $n$, provided it has small ``decision complexity'', a notion we will define shortly).
    \item {\bf Inner PCP:} is a PCP construction for a language $\mc{L}'\subseteq (\Sigma_0')^{n'}$ that arises from the constraint structure or the alphabet structure of the outer PCP. The key additional features of the inner PCP construction are that its alphabet size is much smaller than that of the outer PCP, and that it satisfies a stronger notion of soundness as discussed above. The inner PCP could be fairly ``lossy'' with respect to other parameters, such as its length relative to $n'$. This is because the $n'$ we take is typically much smaller than the above $n$, as the intention here is to encode an alphabet symbol of the outer PCP.
    \item {\bf A composition procedure:} this is a technique that combines the outer PCP and the inner PCP constructions to yield a composed PCP. Roughly speaking, the soundness and completeness of the composed PCP are governed by the soundness and completeness of the two components, the alphabet size of the composed PCP is inherited from the inner PCP construction, and the size is the product of the sizes.
\end{enumerate}
To illustrate this recipe, we define a special class of CSPs, which goes by the name Label-Cover.
\begin{definition}
    An instance $\Psi$ of Label-Cover consists of a weighted bipartite graph $G=(A\cup B, E, \mc{P})$, where $\mc{P}\colon E\to (0,\infty)$ is a distribution over the edges, alphabets $\Sigma_A$, $\Sigma_B$ and constraints $\{\Phi_e\colon \Sigma_A\times\Sigma_B\to\{0,1\}\}_{e\in E}$. 
    \begin{enumerate}
        \item The value of a pair of assignments $T_A\colon A\to\Sigma_A$, 
    $T_B\colon B\to\Sigma_B$ is the sum of weights of constraints they satisfy, and the value of $\Psi$ is the maximum weight of constraints satisfied by any $T_A,T_B$.
    \item We say $\Psi$ is a projection Label-Cover instance if for each edge $e\in E$, the constraint $\Phi_e$ is projection from $\Sigma_A$ to $\Sigma_B$. Namely, there exists a map $\phi_e\colon \Sigma_A\to\Sigma_B$ such that $\Phi_e(\sigma,\tau) = 1$ if and only if $\tau = \phi_e(\sigma)$.\footnote{The notion of projection Label-Cover is essentially equivalent to the notion of robust PCPs, which is sometimes used in the literature.}
    \end{enumerate}
\end{definition}
Suppose we have an outer PCP construction $\Psi = (G=(V,E,\mc{P}),\Sigma,\{\Phi_e\}_{e\in E})$ as in~\cref{def:PCP_const} which is good (in the sense that $\eps>0$ and $q$ are absolute constants), but its alphabet $\Sigma$ is too large. To discuss composition, it is first convenient to convert $\Psi$ into a Label-Cover instance $\Psi'$. 

The instance $\Psi'$ is defined over the graph $H = (A\cup B, E', \mathcal{P}')$ whose sides $A$ and $B$ are 
the edge set $E$ and the vertex set $V$ of $\Psi$ respectively, its weight function $\mc{P}'$ is naturally derived from $\mc{P}$ by: sample $e\sim \mc{P}$, sample a vertex $v$ from $e$ uniformly and output $(e,v)$. For $e = (v_1,\ldots,v_q)\in A$, we interpret the alphabet $\Sigma_A$ as an assignment of $\Sigma$ symbols to the variables $v_1,\ldots,v_q$ that satisfies the constraint $\Phi_e$; for $v\in B$, we interpret the alphabet $\Sigma_B$ simply as $\Sigma$. Finally, the constraint on $(e,v)\in E'$ is that the symbol assignment to $e$ and the symbol assigned to $v$ are consistent with respect to the value given to $v$. Using error-correcting codes, the instance $\Psi'$ can be transformed into an instance $\Psi''$
over a similar graph $(A\cup C, E'',\mc{P}'')$ with similar constraints and alphabets, except that the alphabet $\Sigma_{C}$ now has constant size, and for the sake of this introduction, we just take $\Sigma_C = \{0,1\}$.

Suppose for the sake of simplicity that the constraint $\Phi_e$ is the same for all edges $e\in E$. In this case, to reduce the alphabet size of $\Psi'$, we would like to be able to encode assignments to $H$, so that we can check: (1) the hardcoded condition in $\Sigma_A$, which is morally how the constraint $\Phi_e$ is manifested, 
(2) the consistency between the assignment to $A$ and $B$, which ultimately ensures that the local assignments to the $A$ side come from a legitimate assignment to $V$. Towards this end, we may think of symbols in $\Sigma_A$ as strings of length $n'$, say $\Sigma_A\subseteq \{0,1\}^{n'}$, and take an inner PCP construction suitable for the language
\[
\mathcal{L'} = \{x\in \{0,1\}^{n'}~|~x\in \Sigma_A\}.
\]

At this point, it becomes clear that we care about the complexity of the language $\mc{L}'$, otherwise there is no reason we would have a decent PCP construction for it. This is the motivation for the notion of \emph{decision complexity}, which captures the circuit complexity of the constraints (in the context of CSP instances) or alphabet membership checks (in the context of Label-Cover), and we define it formally later. In our applications, $n'$ and the circuit complexity of $\mc{L}'$ will both be poly-logarithmic in $n$, and in particular much smaller than $n$.

In the high-soundness regime, the composition can now be performed by replacing a vertex in the $A$-side of $\Psi''$ with a copy of the inner PCP of $\mc{L}'$. Morally, this automatically takes care of issue (1) above, namely the hardcoded condition in the alphabet $\Sigma_A$. To address issue (2), namely the consistency between $A$ and $C$, we require the inner PCP to be a PCP of proximity for $\mc{L}'$. In that case, we could actually reliably read off coordinates of the PCPP corresponding to the coordinates in $\mc{L}'$ and compare them to what they are supposed to be from the assignment to the $C$ side.
This composed PCP can be either thought of as having more than $2$ queries, or equivalently, as having $2$ queries but soundness close to $1$ (by simply running each one of the above-mentioned tests with probability $1/2$).

\skipi
Composing in the low-soundness regime is more challenging, since the two tests above need to be 
combined into a single test in a way that preserves soundness. This is
precisely where the notion of decodable PCPs comes in handy. By definition, by reading off the locations corresponding to a constraint in a dPCP, we can -- at the same time -- check a constraint and decode a coordinate from the supposed word of the language.
With this description in mind, the main challenges in proving~\cref{thm:dPCP_intro} are as follows.
\begin{enumerate}
    \item {\bf Starting point:} no prior dPCP constructions achieve the parameters of~\cref{thm:dPCP_intro}, even if we relax the requirement on the alphabet size and allow it to be somewhat large (so as to serve as the outer PCP in the above recipe). The closest result in the literature is due to Dinur and Meir~\cite{DinurMeir}, but its length is significantly bigger than quasi-linear.
    \item {\bf Composition theorem:} we are not aware of any prior composition results that yield a composed dPCP. 
    As a matter of fact, in the low-soundness regime, we are only aware of a single query-efficient composition result for any variant of PCPs, namely the one in~\cite{MoshkovitzRaz,dh}. 
\end{enumerate}
In the rest of this section we discuss the way we overcome both of these challenges.
\subsubsection{Our Starting Point}
Our starting point is the recent PCP construction of~\cite{bmv}, which is based on high-dimensional expanders. Roughly speaking, the bulk of their work
goes into showing the existence of a quasi-linear, low-soundness PCP for NP, with a quasi-polynomial alphabet and poly-logarithmic decision complexity. They then reduce its alphabet using the techniques from~\cite{MoshkovitzRaz,dh}.

We show that, after appropriate modifications, applying their construction on the PCPP of Mie~\cite{Mie} gives a decodable $2$-query PCP (as opposed to just a normal PCP) with similar parameters. Since we are not aware of any black-box transformation, our argument has to carefully go through the steps of their construction and show that they all preserve the stronger notion of soundness. These steps include:
\begin{enumerate}
    \item {\bf Embedding a given PCPP into an HDX}: we use the same link-to-link routing protocol  to transform a given $2$-query dPCP to a $2$-query dPCP over the base graph of a high-dimensional expander $X$, provided the latter satisfies sufficiently good expansion properties.
    \item {\bf Amplification via direct product testing:} we show that the sparse direct product testing results in the low-soundness regime of~\cite{BM,BLM,DD1,DD2,DDL} as used in~\cite{bmv}, can also be used in the context of decodable PCPs to decrease the soundness of our HDX-based $2$-query dPCPs.
\end{enumerate}
Combining these two components gives a version of the $2$-query dPCP of Dinur and Meir~\cite{DinurMeir} with only quasi-linear size.
The next step in our argument is to show a way of reducing the alphabet size.

\subsubsection{Our Composition Theorem} 
To reduce the alphabet size of our HDX-based dPCP, we also use a composition theorem, except that we have to prove versions that yield dPCPs. To discuss our composition theorems we 
have to define a few additional parameters of interest of dPCPs, and 
to do so we first give the appropriate definition of label-cover dPCPs.
\begin{definition}\label{def:dpcp_intro2}
Let $\mc{L}\subseteq\Sigma_0^n$ be a language. A label-cover, decodable PCP for $\mc{L}$ is a tuple 
$$\mc{D} = ((A\cup B,E), 
\{\mc{P}_{t}\}_{t\in [n]},\Sigma_A,\Sigma_B,\{\Phi_e\}_{e\in E},\{D_{t}\}_{t\in [n]}),$$
    where $(A\cup B,E)$ is graph, 
    $\{\Phi_e\colon\Sigma_A\times\Sigma_B\to\{0,1\}\}_{e\in E}$ is a collection of constraints, $\mathcal{P}_t$ is a distribution over $E$ for each $t\in[n]$ and $D_t\colon A\times \Sigma_A\to \Sigma_0$ is a decoding map for each $t\in [n]$. 
    We say that $\mathcal{D}$ is a projection dPCP if for each $e\in E$, $\Phi_e$ is a projection constraint.
\end{definition}
There are several differences between~\cref{def:dpcp_intro2} and~\cref{def:dPCP_intro}. First, the former uses bipartite graphs, whereas the latter uses hypergraphs. Second, in the former,  
the decoding maps only read the left vertex and an assignment to it, whereas in the latter, they read an assignment to an edge. These two differences are relatively minor and are very natural in light of the transformation discussed in~\cref{sec:recipe}. 

Besides the obvious PCP-related parameters (such as completeness and soundness, which are defined analogously), we will also care about graph-theoretic parameters, such as left and right degrees of vertices. A somewhat less standard parameter that will be important for us is the decoding degree: 
\begin{definition}
    Let $\mc{D}$ be a label-cover decodable PCP as in~\cref{def:dpcp_intro2}. The decoding-degree of a vertex $a\in A$, denoted by ${\sf ddeg}(a)$, is the number of indices it is in charge of decoding, namely
    $|\{t\in [n]~|~\exists b\in B, (a,b)\in \supp(\mc{P}_t)\}|$. The decoding-degree of $\mc{D}$ is defined as $\max_{a\in A}{\sf ddeg}(a)$.
\end{definition} 
With this notion in place, we now discuss our composition theorem.\vspace{-1ex}
\paragraph{Composition when the decoding-degree is constant:} our composition theorem applies in the case we have a projection label-cover dPCP $\mc{D}$ with constant decoding-degree. In this case, we show how to combine two copies of it, one of the scale of $n$ as an outer PCP and one at the scale of a much smaller $n'$ as an inner PCP, to get a projection label-cover dPCP with similar parameters, except that its alphabet size is much smaller. 

This construction is a natural extension of the abstract composition theorem of Dinur and Harsha~\cite{dh}, and morally proceeds as follows. First, we use the standard right-alphabet reduction technique from~\cite{MoshkovitzRaz,dh} to ensure that $\Sigma_B$ is constant. Next, in the context of standard PCPs, the alphabet $\Sigma_A$ is replaced by $\Sigma_A'\subseteq\Sigma_B^{{\sf deg}(a)}$ reflecting the supposed assignments of all neighbors $b\in B$ of $A$; only strings that are fully consistent with some symbol in $\Sigma_A$ are allowed. 
This change in the alphabet of $A$ is not quite valid in our setting of dPCPs since it is not clear how to use the decoders of $\mc{D}$ to construct decoders for the new construction. To remedy that, we actually replace the alphabet $\Sigma_A$ with $\Sigma_A'\subseteq \Sigma_B^{{\sf deg}(a)}\times\Sigma_0^{{\sf ddeg}(a)}$, which is thought of as a string reflecting the assignments of all of the neighbors $b\in B$ of $a$, as well as of all of the $t\in [n]$ that $a$ is in charge of. Once again, in $\Sigma_A'$ we only allow strings that are fully consistent with some label $\sigma$ to $a$ from the original alphabet $\Sigma_A$, which additionally decodes correctly (under the decoding map $D_t(a,\cdot)$) on all $t$ that $a$ is in charge of.

In the simplified setting, all constraints $\Phi_e$ are the same, and a similar analysis as in~\cite{dh} can be carried out. The case that $\Phi_e$ may be different introduces several technical difficulties, stemming from the following technical issue. In the composed PCP, a left vertex corresponds to a right vertex $b$ in the outer PCP, and to the invocation of all of the inner PCPs of neighbors ${\bf a}$ of $b$ with a certain seed of randomness. Thus,
we need the inner PCPs neighbors ${\bf a}$ of $b$ to be ``aligned''; for example, it would be nice if they use the same seed of randomness. We show how to accomplish this, and this requires a few fairly technical transformations of dPCP that manipulate its size while retaining its parameters. 
\skipi
Unfortunately, this composition result alone is insufficient for our purposes. As is clear from the above description, the alphabet size of the composed PCP is at least exponential in the decoding degree of the outer PCP. With that in mind, our HDX-based $2$-query dPCP construction has poly-logarithmic decoding degree, so if we plainly apply the aforementioned composition theorem, the alphabet size of the composed PCP will be (at the very least) $2^{{\sf poly}(\log n)}$. In other words, nothing was gained from the composition.
Having a poly-logarithmic decoding degree seems inherent in the HDX construction (due to the use of routing protocols). In particular, it is not clear how to use the composition theorem we have just described to get a better dPCP. 

\vspace{-1ex}
\paragraph{Decoding-degree reduction:} in light of the above, it makes sense to ask if there is a transformation that reduces the decoding degree of a given dPCP without hurting the other parameters too much. Such transformations do exist for more combinatorial notions, such as right degree and right alphabet size.
While we do not have a generic result of this form, we give a transformation that works provided that the dPCP satisfies an additional property called \emph{agnosticism}, which our HDX-based construction luckily possesses. To present it, we first introduce the complete decoding distribution.
\begin{definition}[Complete Decoding Distribution]\label{def:complete-distribution} Let $\mc{D}$ be a decodable PCP as in~\cref{def:dpcp_intro2}. The complete decoding distribution of $\mc{D}$ is the distribution $\mc{Q}$ over 
$[n]\times A\times B$, in which a sample is drawn by taking $t\in[n]$ uniformly,  $(a,b)\sim\mc{P}_t$, and outputting $(t,a,b)$.
\end{definition}
Letting $\mc{Q}$ be the complete decoding distribution of a dPCP $\mc{D}$, we denote by $\mc{Q}(t,\cdot,b)$ the marginal distribution of $a$ conditioned on $t$ and $b$, and by $\mc{Q}(\cdot,\cdot,b)$ the marginal distribution of $(t,a)$ conditioned on $b$.
\begin{definition}
    We say a dPCP $\mc{D}$ has agnostic decoding distributions if
    for all $t$ and $b$, the marginal distribution of $\mc{Q}(\cdot,\cdot,b)$ on $a$ is identical to $\mc{Q}(t,\cdot,b)$.
\end{definition}
With this notion in mind, we can now describe our decoding-degree reduction transformation.
Given a dPCP $\mc{D}$ with an agnostic complete decoding distribution, we construct a new dPCP $\mc{D}_2$ as follows. We first replace a right-vertex $b\in B$ 
with copies $(b,t)$ of it for $t\in[n]$ such that there is a neighbour $a$ of $b$ such that $(a,b)\in\supp(\mc{P}_t)$, and modify the alphabets and constraints appropriately. Namely, denoting the right degree of $\mc{D}$ by $d$, the alphabet of the right is now a subset of $\Sigma_A^d\times \Sigma_0$. Thinking of a right vertex $b\in B$ and its neighbors $a_1,\ldots,a_d$, a symbol $(\tau_1,\ldots,\tau_d,s)$ for $(b,t)$ is allowed if all the projections $\phi_{(a_i,b)}(\sigma_i)$ are the same, and for any $i$ such that $t$ is in the decoding neighbourhood of $a_i$ we have that $D_t(a_i,\tau_i) = s$. The constraint on the edge $(a_i,(b,t))$ is that the assignments $\sigma$ to $a$ and $(\tau_1,\ldots,\tau_d, s)$ to $(b,t)$ satisfy that $\sigma =\tau_i$. The complete decoding distribution of $\mc{D}_2$ is defined by choosing $t\in [n]$ uniformly, then $(a,b)\sim\mc{P}_t$ and output $(t, (b,t), a)$. In particular, the side $A$ of $\mathcal{D}$ is the right side of $\mc{D}_2$, and the pairs $(b,r)$ are the left side of $\mc{D}_2$.
As per the new decoder, when $D_t$ is run on $((b,t),(\tau_1,\ldots,\tau_d,s))$, the output is simply $s$.

In a sense, in this transformation we have created copies of $b$ according to which $t$ we want it to be able to decode, and also we ``flipped'' the sides of the projections (by making the alphabet of $(b,t)$ include an opinion for the label of each neighbour of $b$). The latter step is similar to the flipping-of-sides transformation in~\cite{MoshkovitzRaz,dh}. 

Most of the features of the new dPCP construction, such as decoding degree (which is $1$ in $\mc{D}_2$), alphabet size, and so on, are rather immediate to analyze, and the key challenge is the soundness. If $\mc{D}$ were an arbitrary PCP, because we have split a vertex $b$ into several copies depending on $t$, the extra knowledge of both $b$ and $t$ immediately reveals from which decoding distribution $\mc{P}_t$ the constraint was sampled. When designing an assignment for $(b,t)$, this gives extra information about which vertex $a$ is chosen to test the constraint, which breaks the soundness guarantee of $\mc{D}$. This is precisely where the fact that $\mc{D}$ has agnostic decoding distribution comes in handy, since in that case no extra information about $a$ is gained, and the soundness guarantee of $\mc{D}$ persists.

\begin{remark}
    Our actual decoding-degree procedure in~\cref{sec:dec_deg_red} has one minor difference, which is that we may create several copies of each $(b,t)$. This is related to the fact that we want to ``align'' the size of the inner PCPs as per our earlier discussion.
\end{remark}

\subsection{Techniques in our RLDC Construction}\label{sec:techniques_RLDC}
We obtain our $3$-query RLDC via a new query-preserving black box transformation from PCPPs to RLDCs. 

While many RLDC constructions invoke PCPPs as a subroutine, prior approaches either employ multiple invocations of PCPP verifiers or introduce additional queries beyond those of the underlying PCPP.

Our starting point is the quadratic-length, $O(1)$-query construction in~\cite{bghsv}, which uses a PCP of proximity to construct an RLDC as follows. Let $\mc{C}_0\colon \{0,1\}^k\to\{0,1\}^{n}$ be a linear-time encodable code with constant distance rate and distance, and let $\pi_i(\cdot)$ be a PCP of proximity encoding for the language $\mc{L}_i = \{(z,y)~|~\exists x\in \{0,1\}^k,~y=\mc{C}_0(x), z=(x_i)^{n}\}$. The code $\mc{C}$ constructed by~\cite{bghsv} is given as:
\begin{equation}\label{eq:bghsv}
\mc{C}(x) = (\mc{C}_0(x))^{t} \circ (x)^{t'} \circ \pi_1(x_i^n,\mc{C}(x))\circ\cdots\circ \pi_k(x_i^n,\mc{C}(x)),
\end{equation}
namely, the first part is $t$ repetitions of the encoding of $x$ under $\mc{C}_0$, the second part is $t'$ repetitions of $x$ itself, and the last part consists of a PCP of proximity for each bit in $x$. Here, the purpose of $t,t'$ is to balance the length of the different parts, so that the code $\mc{C}$ has a constant distance. On input word $w$ and index $i\in [k]$, the basic decoder of~\cite{bghsv} proceeds by running the PCP of proximity on the oracle $((z,y),\pi)$, where 
(1) $z$ is taken to be $z = \sigma^n$ where $\sigma$ is the symbol appearing in a random place in the received word $w$ that is supposed to be $x_i$, (2) $y$ is taken to be a supposed random copy of $\mc{C}_0(x)$, (3) $\pi$ is taken to be the supposed PCPP encoding of $\mc{L}_i$. The query complexity of this basic decoder is the same as that of the PCPP, and~\cite{bghsv} show that it works provided that the PCPP for $\mc{L}_i$ has sufficiently small decoding radius $\eta$. This already implies that the PCPP must make $\Omega(1/\eta)$-queries, which means the RLDC construction above must make some unspecified, likely not-too-small, number of queries.

Our construction is syntactically similar, but with a few important distinctions. Since we wish to use~\cref{thm:PCPP_intro}, whenever we apply a PCPP decode, we want it to be in the case that the word is very far from the relevant language. In particular, we must take the code $\mc{C}_0\colon\{0,1\}^k\to\Sigma_0^n$ to have distance close to $1$, and (more importantly) we cannot afford to have the ``$x$-part'' as in the encoding in~\eqref{eq:bghsv}. We thus use the following variant:
\begin{equation}\label{eq:our_rldc}
\mc{C}(x) = (\mc{C}_0(x))^{t} \circ \pi_{1,x_1}'(\mc{C}(x))\circ\cdots\circ \pi_{k,x_k}'(\mc{C}(x)),
\end{equation}
where again $t$ is a suitable repetition parameter, and the PCPP part of the encoding is a bit different. To define $\pi_{i,b}'$ we first take $\pi_{i,b}$ to be the PCPP encoding for the language 
\[
\mc{L}_{i,b} = \{w \in\Sigma_0^n~|~\exists x\in\{0,1\}^k, \mc{C}_0(x) = w, x_i=b\}.
\]
Denoting the alphabet of $\pi_{i,b}$ by $\Sigma$, the alphabet of $\pi_{i,b}'$ is now
$\Sigma\times\{0,1\}$, and the $j$th symbol in 
$\pi_{i,b}'(w)$ is the $j$th symbol in $\pi_{i,b}(w)$ with $b$ appended to it.
 
The reason we need to use $\pi_{i,b}'$ in place of $\pi_{i,b}$ is that, for a fixed $i$, upon reading a single symbol from $\pi_{i,b}(w)$, we 
may not know whether we need to run the verifier of $\mc{L}_{i,0}$ or of $\mc{L}_{i,1}$. Appending $b$ to each symbol resolves this issue.

With this construction in mind, when run on a word $w$ and an index $i\in [k]$, the RLDC decoder looks at $i$th part in the PCPP part of~\eqref{eq:our_rldc} and invokes it on a randomly chosen copy of $\mc{C}_0(x)$ from the first part of~\eqref{eq:our_rldc}. Indeed, we show that this decoder works (provided that the PCPP decoder has a few additional technical features, which hold in our case).

\subsection{Discussion and Open Problems}
While~\cref{thm:dPCP_intro} gives dPCPs with essentially the best possible query complexity, size, and soundness, the same cannot be said about~\cref{thm:PCPP_intro}. Existing lower bounds for PCPPs~\cite{SHLM} do not rule out quasi-linear $2$-query PCPPs with constant (but not Boolean) alphabet size that have small soundness, and we leave the problems of giving better $2$-query PCPP constructions or lower bounds to future research. We remark that in some applications, there may be more context-dependent reductions that use our dPCP (instead of our PCPP) directly and thus do not suffer the loss incurred from the additional query.

Another interesting question is whether one can establish a PCPP with a tight relationship between the proximity parameter and the soundness, as in~\cref{thm:4PCPP_intro}, but with a smaller number of queries. This will yield direct improvements in some applications, such as better hardness gap in~\cref{cor:recon2}.
Lastly, it would be interesting to come up with a variant of our RLDC construction that is additionally locally testable (we suspect our code is locally testable, but investigating it seems to require uniqueness properties from our PCP constructions, which go beyond the scope of this work). 

Lastly, our codes from~\Cref{thm:RLDC_intro} have $3$ queries and nearly quadratic length, and in a companion paper we show how to generalize these codes for any odd number of queries $q\geq 3$. We do not know if better $3$-query RLDCs exist, and it would be interesting to either prove lower bounds that match the performance of our codes, or come up with sub-quadratic $3$-query RLDCs.

\section{Preliminaries}
In this section we introduce several basic notions and tools that will be used in our arguments.
\vspace{-2ex}
\paragraph{Notations.} Throughout the paper, for $n\in\mathbb{N}$ we denote $[n] = \{1, \ldots, n\}$ and use $t\in[n]$ to denote a uniformly chosen element in $[n]$. Given a distribution $\mc{Q}$ over a finite set $S$, we write $s \sim \mc{Q}$ to denote a random sample distributed according to $\mc{Q}$.  
For a subset $U \subseteq S$, $\mc{Q}(U) = \sum_{s\in U}\mc{Q}(s)$ is the measure of $U$ under $\mc{Q}$. The total variation distance between two distributions, $\mc{Q}, \mc{Q}'$ is defined as
\[
{\sf TV}(\mc{Q}, \mc{Q}') = \frac{1}{2}\sum_{s \in S} |\mc{Q}(s) - \mc{Q}'(s)| = \max_{U \subseteq S}  |\mc{Q}(U) - \mc{Q}'(U)|.
\]
We say that $\mc{Q}$ is $M$-discrete if for every $s \in S$, there is $w(s) \in \mathbb{N}$ such that $\mc{Q}(s) = \frac{w(s)}{M}$.

When $\mc{Q}$ is a multivariate distribution, say $m$-variate, we denote by $\mc{Q}(a_1,\ldots,a_i,\cdot,\ldots\cdot)$ the distribution of $a_{i+1},\ldots,a_m$ conditioned on $a_{1},\ldots,a_i$. We use a similar notation when the coordinates we condition on are not consecutive.
We use the notation $\mc{Q}(a_1,\ldots,a_i,\circ,\ldots,\circ)$ to denote  $\sum_{a_{i+1},\ldots,a_m}\mc{Q}(a_1,\ldots,a_i,a_{i+1},\ldots,a_m)$, i.e., the marginal distribution of $\mc{Q}$ on the first $i$ coordinates.

Given a length $n$ vector $v$, we write $v_i$ to denote the $i$th coordinate of $v$. Given a function $F$ over a domain $X$ and a subset $Y\subseteq X$, we denote by $F|_Y$ the restriction of $F$ to $Y$.

We use standard big-$O$ notation: for functions $f,g\colon\mathbb{N}\to[0,\infty)$, the notation $f = O(g)$ means that there is an absolute constant $C$ such that $f(n)\leq Cg(n)$ for all $n\in\mathbb{N}$, and the notation  $f = \Omega(g)$ means that there is an absolute constant $C>0$ such that $f(n)\geq Cg(n)$ for all $n\in\mathbb{N}$. Often we will have dependency on auxiliary parameters, and we write $f = O_{\eps}(g)$ or $f=\Omega_{\eps}(g)$ if the constant $C$ depends on $\eps$. When we write 
${\sf poly}_{\eps}(N)$ we mean a polynomial in $N$ whose degree and coefficients may be increasing in $1/\eps$.

\subsection{Error-Correcting Codes}

An error-correcting code over an alphabet $\Sigma$ is a map $\mc{C}: \{0,1\}^k \to \Sigma^n$. We refer to $k$ as the message length and to $n$ as the blocklength of the code. Abusing notation, we will often use $\mc{C}$ to refer to both the map and its image, and it will always be clear from context which we are referring to. We refer to the strings in $\mc{C}$ as codewords. The distance between $w, w' \in \Sigma^n$ is the relative Hamming distance between them, namely
\[
\dist(w, w') = \Pr_{i \in [n]}[w_i \neq w'_i].
\]
The relative distance of a set $\mc{C}$ is defined as
\[
\dist(\mc{C}) = \min_{w, w' \in \mc{C}, w \neq w'} \dist(w,w').
\]
Informally, when discussing error-correcting codes 
we will deal with sets $\mc{C}$ that have $\Omega(1)$ relative distance.

We will require an error-correcting code with message length $k$, quasi-linear time encoding, quasi-linear time decoding, constant alphabet size, relative distance arbitrarily close to $1$ and blocklength $n = O(k)$. By quasi-linear time encoding, we mean that there is a circuit of size  $\tilde{O}(k)$ that on input $x\in\{0,1\}^k$ computes its encoding, and similarly for decoding.
There are multiple examples of such codes throughout the literature~\cite{spielman1996linear,GI,roth2006improved}, and we use the following formulation from~\cite[Theorem 3]{GI}.
\begin{lemma}\label{lm: code GI}
    For every $\eps > 0$, $k \in \mathbb{N}$, sufficiently large alphabet $\Sigma$ relative to $1/\eps$, and blocklength $n = O_{\eps}(k)$ there exists a quasi-linear time encodable/decodable code $\mc{C}: \{0,1\}^k \to \Sigma^n$ with distance $1-\eps$. 
    Furthermore, given $k$ one can construct in time $k^{O_{\eps}(1)}$ a circuit of size $\tilde{O}_{\eps}(k)$ computing $\mc{C}$.
\end{lemma}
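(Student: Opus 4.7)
The plan is to invoke the main construction of Guruswami and Indyk~\cite{GI} essentially as a black box. Their theorem produces, for every rate $R \in (0,1)$ and every $\delta > 0$, an explicit family of codes over an alphabet $\Sigma$ of size depending only on $R$ and $\delta$, with rate $R$, relative distance at least $1 - R - \delta$, and linear-time encoders and decoders. I would instantiate their result with $R = \delta = \eps/2$ for the given message length $k$. The resulting code $\mc{C}\colon\{0,1\}^k \to \Sigma^n$ satisfies $n = k/R = 2k/\eps = O_\eps(k)$ and has relative distance at least $1 - \eps$, over an alphabet of size bounded purely in terms of $\eps$. This yields the three main guarantees -- quasi-linear encoder/decoder, blocklength $O_\eps(k)$, distance $1 - \eps$ -- almost immediately.

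The one piece of work is the ``furthermore'' clause: from $k$ alone, produce in time $k^{O_\eps(1)}$ an explicit Boolean circuit of size $\tilde{O}_\eps(k)$ that computes $\mc{C}$. This follows because the Guruswami--Indyk encoder is oblivious and highly structured, being a cascade of sparse $\Sigma$-linear maps induced by bounded-degree bipartite expanders. Each output symbol is a $\Sigma$-linear combination of $O_\eps(1)$ input symbols, and since $|\Sigma| = O_\eps(1)$ this combination is realized by a Boolean gadget of $O_\eps(1)$ gates. With $n = O_\eps(k)$ output symbols spread across $O(\log k)$ cascade layers, the overall circuit has size $\tilde{O}_\eps(k)$. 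To emit the circuit itself, I would construct the cascade's underlying expanders via an explicit expander construction in time polynomial in $k$ and then read off the wiring and gates mechanically.

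The main -- and essentially only -- obstacle is translating the ``linear-time algorithm'' statement of~\cite{GI} into an explicit circuit of size $\tilde{O}_\eps(k)$ produced in time $k^{O_\eps(1)}$. This is routine rather than conceptual, because the encoder performs no data-dependent branching, so its RAM program unfolds into a fixed dataflow graph over constant-sized alphabet operations. Should the variant of~\cite{GI} one uses be randomized, I would either substitute an explicit constant-degree expander construction, or exploit the fact that each cascade stage only needs a fixed-size expander with constant expansion, which can be found by exhaustive search in time $k^{O_\eps(1)}$. Together with the blocklength, distance, and alphabet-size calculations above, this completes the proof plan.
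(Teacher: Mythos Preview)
Your proposal is correct and matches the paper's approach: the paper does not prove this lemma at all but simply cites \cite[Theorem 3]{GI} as a black box, which is exactly what you do. Your additional discussion of how to extract an explicit $\tilde{O}_\eps(k)$-size circuit in time $k^{O_\eps(1)}$ is more detail than the paper provides, but is consistent with the intended reading of the citation.
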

We will also require the well-known Johnson Bound, which bounds the number of codewords in $\mc{C} \subseteq \Sigma^n$ that lie within proximity of a given string $w \in \Sigma^n$. We refer to \cite[Fact 5.3]{dh} for a proof.
\begin{fact} \label{fact: johnson}
    Let $\mc{C} \subseteq \Sigma^n$ be a code with minimum distance $1-\delta$. Then, for any $w \in \Sigma^n$ and $\eta > 2\sqrt{\delta}$, the number of $w' \in \mc{C}$ such that $\dist(w, w') \leq 1 -  \eta$ is at most $\frac{2}{\eta}$.
\end{fact}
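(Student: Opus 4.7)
The plan is to establish Fact~\ref{fact: johnson} via the standard second-moment (``Cauchy--Schwarz'') argument for Johnson-type list-size bounds. Let $w_1, \ldots, w_L \in \mc{C}$ enumerate the codewords with $\dist(w, w_i) \leq 1 - \eta$, so that each $w_i$ agrees with $w$ on at least an $\eta$-fraction of coordinates, and the goal is to upper bound $L$ under the hypothesis $\eta > 2\sqrt{\delta}$.

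First, I would introduce the relevant random variables. For a uniformly random coordinate $j \in [n]$, define $X_i := \mathbbm{1}[w_i(j) = w(j)]$ and $S := \sum_{i=1}^{L} X_i$. By linearity, $\E[S] = \sum_i \Pr_j[w_i(j) = w(j)] \geq L\eta$. For the second moment, the diagonal terms contribute $\sum_i \E[X_i^2] = \sum_i \E[X_i] \leq L$. The off-diagonal terms are where the minimum-distance hypothesis enters crucially: for $i \neq k$, one has
\[
\E[X_i X_k] \;=\; \Pr_j\bigl[w_i(j) = w(j) = w_k(j)\bigr] \;\leq\; \Pr_j[w_i(j) = w_k(j)] \;\leq\; \delta,
\]
since $w_i$ and $w_k$ are distinct codewords and hence agree on at most a $\delta$-fraction of coordinates. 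Summing yields $\E[S^2] \leq L + L(L-1)\delta$.

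Next, I would combine the two moment estimates. The inequality $(\E[S])^2 \leq \E[S^2]$ (equivalently, $\mathrm{Var}(S) \geq 0$) gives $L^2\eta^2 \leq L + L(L-1)\delta$, which rearranges to $L(\eta^2 - \delta) \leq 1 - \delta$. Under the hypothesis $\eta > 2\sqrt{\delta}$, i.e. $\delta < \eta^2/4$, one has $\eta^2 - \delta > \tfrac{3}{4}\eta^2$, and a routine simplification yields the stated list-size bound of $2/\eta$.

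I do not anticipate any significant conceptual obstacle: the whole proof is the classical Johnson-bound derivation via the second-moment method, with the minimum-distance property of $\mc{C}$ entering only through the pairwise agreement estimate $\E[X_i X_k] \leq \delta$. The only step requiring a little bookkeeping is the final arithmetic manipulation to extract the explicit constant $2$ from the bound $L \leq (1-\delta)/(\eta^2-\delta)$; alternatively, one can simply invoke the explicit statement in \cite[Fact 5.3]{dh} as the paper does.
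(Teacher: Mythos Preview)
The paper does not actually prove this fact; it simply cites \cite[Fact 5.3]{dh}. Your second-moment approach is the standard one and is essentially correct, but the final ``routine simplification'' does not go through as you have written it. From $L(\eta^2-\delta)\le 1-\delta$ one only gets $L\le (1-\delta)/(\eta^2-\delta)$, and this need not be at most $2/\eta$: for instance with $\delta=0.01$ and $\eta=0.3$ (which satisfies $\eta>2\sqrt{\delta}$) the bound is $\approx 12.4$ while $2/\eta\approx 6.7$.

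The slip is in the diagonal term: you correctly observe $\sum_i\E[X_i^2]=\sum_i\E[X_i]=\E[S]$, but then discard this by bounding it by $L$. Keeping it, write $A:=\E[S]\ge L\eta$ and obtain $A^2-A\le L(L-1)\delta$. If $L\eta<1$ then $L<1/\eta\le 2/\eta$ and you are done; otherwise $A\ge L\eta\ge 1$, so $A(A-1)\ge L\eta(L\eta-1)$, which rearranges to $L(\eta^2-\delta)\le \eta-\delta$ and hence $L\le(\eta-\delta)/(\eta^2-\delta)$. Now the check $\eta(\eta-\delta)\le 2(\eta^2-\delta)$ reduces to $\delta(2-\eta)\le\eta^2$, and since $\delta<\eta^2/4$ and $2-\eta\le 2$ this holds. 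With this small repair your argument is complete.
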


\subsection{Expanders and Bipartite Expanders}
Our arguments will use bipartite expander graphs and the bipartite expander mixing lemma, and we introduce these notions in this section. 
Let $G$ be a weighted bipartite graph with sides $U, V$, and let $w: E \to \mathbb{R}_{\geq 0}$ be the weight function. 
The weighted adjacency matrix of $G$ is $M\in \mathbb{R}^{(U\cup V)\times (U\cup V)}$ defined as $M=D^{-1/2} W D^{-1/2}$, where $W(u,v) = w(u,v)$ and the degree matrix $D$ is a diagonal matrix with the entries $D(u,u) = \sum_{v}w(u,v)$. The transition matrix of $G$ is defined as $P = D^{-1} M_{U,V}$, where $M_{U,V}\in \mathbb{R}^{U\times V}$ is the matrix with $M_{U,V}(u,v) = w(u,v)$. 
It is well known that the largest singular of $P$ is $1$, and is given by the vectors $x\in \mathbb{R}^{U}$ where $x(u) = D(u,u)$. Therefore, the stationary distribution of $G$ on each side is proportional to the weighted-degrees given in $D$. We denote samples according to the stationary on $U$ and $V$ by $u\sim U$ and $v\sim V$ respectively.
It is well known that the second largest eigenvalue of $M$ is the same as the second largest singular value of $P$, and we often refer to it as the second singular value of $G$.

\begin{lemma} \label{lm: expander mixing}
    Let $G = (U, V, E)$ be a weighted bipartite graph whose second singular value is at most $\lambda$. Then for any functions $F: U \to [0,1], G: V \to [0,1]$ we have
    \[
   \left| \E_{(u,v) \sim E}[F(u) \cdot G(v)] - \mu_U(F) \mu_V(G)\right| \leq \lambda \sqrt{\mu_U(F)\mu_V(G)},
    \]
    where $\mu_U := \E_{u \sim U}[F(u)], \mu_V(G) := \E_{v \sim V}[G(v)]$.
    
    In the special case where $F$ and $G$ are indicator functions of sets, say of $A \subseteq U$ and $B \subseteq V$, we get
    \[
    \left| \Pr_{(u,v) \sim E}[u \in A, v \in B] - \mu_U(A) \mu_V(B) \right| \leq \lambda \sqrt{\mu_U(A)\mu_V(B)}.
    \]
\end{lemma}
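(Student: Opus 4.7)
The plan is to reduce the statement to a standard spectral-gap argument by expressing the weighted bilinear form $\E_{(u,v)\sim E}[F(u)G(v)]$ as an inner product involving the normalized biadjacency matrix, and then decomposing the relevant vectors into a top-singular-vector component and an orthogonal tail.

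First I would pass to the matrix $\tilde{M}\in\R^{U\times V}$ with entries $\tilde M(u,v)=w(u,v)/\sqrt{D(u,u)D(v,v)}$, whose largest singular value is $1$ and is attained at $v_U(u)=\sqrt{D(u,u)/W_{\text{tot}}}$ and $v_V(v)=\sqrt{D(v,v)/W_{\text{tot}}}$, where $W_{\text{tot}}=\sum_u D(u,u)=\sum_v D(v,v)$. Setting $\hat F(u)=F(u)\sqrt{D(u,u)}$ and $\hat G(v)=G(v)\sqrt{D(v,v)}$, a direct computation gives
\[
\E_{(u,v)\sim E}[F(u)G(v)] \;=\; \tfrac{1}{W_{\text{tot}}}\,\hat F^{\,T}\tilde M\,\hat G.
\]
Second, I would decompose $\hat F=\alpha\,v_U+F^\perp$ and $\hat G=\beta\,v_V+G^\perp$ with $F^\perp\perp v_U$, $G^\perp\perp v_V$. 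The coefficients come out to $\alpha=\sqrt{W_{\text{tot}}}\,\mu_U(F)$ and $\beta=\sqrt{W_{\text{tot}}}\,\mu_V(G)$. Using $\tilde M v_V=v_U$ and $\tilde M^T v_U=v_V$, the two cross terms $\alpha\langle v_U,\tilde M G^\perp\rangle$ and $\beta\langle F^\perp,\tilde M v_V\rangle$ vanish by orthogonality, so
\[
\E_{(u,v)\sim E}[F(u)G(v)] \;=\; \mu_U(F)\mu_V(G)+\tfrac{1}{W_{\text{tot}}}(F^\perp)^T\tilde M\,G^\perp,
\]
and by the definition of the second singular value together with Cauchy--Schwarz, $|(F^\perp)^T\tilde M G^\perp|\le\lambda\,\|F^\perp\|\,\|G^\perp\|$.

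The last step is to control $\|F^\perp\|$ and $\|G^\perp\|$. Since $F$ takes values in $[0,1]$, we have $F(u)^2\le F(u)$, and therefore
\[
\|F^\perp\|^2\;\le\;\|\hat F\|^2\;=\;\sum_{u}F(u)^2D(u,u)\;\le\;\sum_{u}F(u)D(u,u)\;=\;W_{\text{tot}}\,\mu_U(F),
\]
and identically $\|G^\perp\|^2\le W_{\text{tot}}\,\mu_V(G)$. Substituting yields
\[
\bigl|\E_{(u,v)\sim E}[F(u)G(v)]-\mu_U(F)\mu_V(G)\bigr|\;\le\;\tfrac{\lambda}{W_{\text{tot}}}\sqrt{W_{\text{tot}}\mu_U(F)}\sqrt{W_{\text{tot}}\mu_V(G)}\;=\;\lambda\sqrt{\mu_U(F)\mu_V(G)},
\]
which is the desired bound; the indicator case follows immediately by taking $F=\ind_A$, $G=\ind_B$. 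There is no real obstacle here: the argument is a standard spectral decomposition, and the only point to handle carefully is the weight normalization, making sure that ``second singular value'' in the statement refers to the normalized matrix $\tilde M$ (equivalently, the second singular value of the transition matrix $P$) so that the top singular direction corresponds exactly to the stationary distributions $\mu_U,\mu_V$.
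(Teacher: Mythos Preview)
Your argument is correct and is exactly the standard spectral decomposition proof of the bipartite expander mixing lemma. The paper does not actually prove this lemma; it is stated in the preliminaries as a known fact without proof, so there is nothing to compare against beyond noting that your write-up matches the conventional treatment.
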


\subsubsection{Constructions of Expanders and Bipartite Expanders}
We will require the fact that expanders of arbitrary degree and size can be constructed efficiently.
\begin{lemma} \label{lm: poly time expander}
    For integers $n \geq d$ there is a polynomial time algorithm which constructs an (unweighted) $d$-regular graph with $n$ vertices and second singular value $O(d^{-1/2})$.
\end{lemma}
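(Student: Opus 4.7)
The plan is to invoke a classical explicit construction of expander graphs. The Lubotzky--Phillips--Sarnak Ramanujan graph construction yields, in polynomial time, $(p+1)$-regular graphs on $\Theta(p^3)$ vertices with second adjacency eigenvalue at most $2\sqrt{p}$, i.e., normalized second singular value $O((p+1)^{-1/2})$. An alternative starting point is the Margulis--Gabber--Galil Cayley graph construction followed by the Reingold--Vadhan--Wigderson zig-zag product, which also gives polynomial-time expanders of arbitrary (sufficiently large) constant degree; these can be boosted via graph powering to larger degrees while preserving a normalized second singular value of the desired order.

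The second step is to handle an arbitrary degree $d$. By Bertrand's postulate, one can find a prime $p$ with $d/2 \le p+1 \le d$; starting from the LPS graph of degree $p+1$, I would augment it with an edge-disjoint overlay of degree exactly $d-p-1$ drawn from a weaker small-degree expander on the same vertex set (for example, a Cayley graph with explicit generators, or a simple product construction). Since the adjacency matrix of a union of regular graphs is the sum of the component adjacency matrices, its second eigenvalue is at most the sum of the two individual second eigenvalues, so the resulting $d$-regular graph retains second eigenvalue $O(\sqrt{d})$, i.e., normalized second singular value $O(d^{-1/2})$.

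The third step is to handle an arbitrary $n$. The set of sizes $\Theta(p^3)$ produced by LPS (over primes $p$) is dense enough to hit any target $n$ within a constant factor, so one can pad by disjoint union with a smaller expander of the same degree, or alternatively build Cayley graphs directly over $\Z/n\Z$ or $(\Z/m\Z)^k$ with explicit generators to match the vertex count exactly. The main obstacle I anticipate is simultaneously preserving exact $d$-regularity and the $O(d^{-1/2})$ spectral bound through all of these adjustments; however, each operation used---edge-disjoint union, disjoint union with bridging edges, graph squaring, zig-zag product---is known to incur only a constant multiplicative loss in the normalized second singular value, so a careful composition with the right base parameters suffices.
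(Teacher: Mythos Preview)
The paper's proof is a one-liner: it cites an off-the-shelf construction of Alon that handles arbitrary $n$ and $d$ directly, and falls back to brute force when $n$ is below the threshold where that construction applies. There is no attempt to assemble the graph by hand.

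Your plan is in the right spirit but has a real gap in the vertex-count step. Taking a disjoint union of two expanders and then adding ``bridging edges'' does \emph{not} preserve an $O(d^{-1/2})$ normalized second singular value: a disjoint union has second eigenvalue exactly $1$, and inserting only a few bridges creates a sparse cut, so by Cheeger the second eigenvalue stays close to $1$ rather than $O(d^{-1/2})$. To repair this you would need on the order of $d$ bridging edges incident to every vertex near the interface, at which point you have destroyed $d$-regularity. The claim that ``disjoint union with bridging edges \ldots\ is known to incur only a constant multiplicative loss'' is simply false as stated. A similar (lesser) worry applies to the degree step: an ``edge-disjoint overlay'' of a second regular graph on the \emph{same} LPS vertex set is not automatic and needs an argument.

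Getting exact $n$ and exact $d$ simultaneously while keeping the Ramanujan-like bound is precisely the content of dedicated papers such as Alon's; that is why the paper just cites one rather than re-deriving it. If you want a self-contained route, the cleanest fix is to abandon padding altogether and use a construction whose vertex set is parameterized by $n$ from the start (e.g., explicit near-Ramanujan graphs for all $n,d$), rather than patching LPS sizes after the fact.
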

\begin{proof}
    The algorithm is by \cite{alon_expanders} if $n$ is sufficiently large and by brute force otherwise.
\end{proof}

We also require a version of the above lemma for bipartite, bi-regular expanders.

\begin{lemma} \label{lm: poly time bip expander}
    For integers $n_1,n_2,d_1,d_2$ such that $n_1\cdot d_1=n_2\cdot d_2$ and either $n_1$ divides $n_2$ or $n_2$ divides $n_1$, there is a ${\sf poly}_{d_1,d_2}(n_1,n_2)$-time algorithm which constructs an (unweighted) $(d_1,d_2)$-bi-regular bipartite graph with $n_1$ vertices on the left and $n_2$ vertices on the right and second singular value $O\left((\min\{d_1,d_2\}\right)^{-1/2})$.
\end{lemma}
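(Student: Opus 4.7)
The plan is to reduce to the non-bipartite construction of \cref{lm: poly time expander} via an edge-grouping procedure. By symmetry of the hypotheses, I may assume WLOG $n_1 \le n_2$, so $n_1 \mid n_2$ and hence (from $n_1 d_1 = n_2 d_2$) also $d_2 \mid d_1$; set $k := n_2/n_1 = d_1/d_2$. I would first invoke \cref{lm: poly time expander} with parameters $(n_1, d_1)$ to construct, in polynomial time, an unweighted $d_1$-regular graph $G$ on $n_1$ vertices with second singular value $\lambda_G = O(1/\sqrt{d_1})$. I then define the candidate bipartite graph $H$ as follows: the left side is $L := V(G) = [n_1]$, the right side is $R := V(G) \times [k]$ (of size $n_2$), and for each $v \in V(G)$ I fix an enumeration of $N_G(v)$ and split it into $k$ consecutive blocks $N_1(v), \ldots, N_k(v)$ of size $d_2$, joining the right vertex $(v,j)$ to the $d_2$ elements of $N_j(v)$. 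Bi-regularity is immediate from the construction: every $(v, j)$ has degree $d_2$, and every $w \in L$ has degree exactly $d_1$, since each $v \in N_G(w)$ contributes the unique right vertex $(v, j)$ whose block contains $w$.

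The core step is the spectral analysis. The square of the second singular value of $H$ equals the second-largest eigenvalue of the two-step transition matrix $P := P_{LR} P_{RL}$ acting on $L$. A direct calculation shows that, if the enumerations above are drawn uniformly at random, then
\[
\mathbb{E}[P] \;=\; c\, P_G^2 \;+\; \gamma\, I,
\qquad
c = \tfrac{(d_2-1)d_1}{(d_1-1)d_2}\le 1,
\quad
\gamma = \tfrac{d_1 - d_2}{(d_1-1)d_2}.
\]
Combined with $\lambda_2(P_G^2) = \lambda_G^2 = O(1/d_1)$, this yields $\lambda_2(\mathbb{E}[P]) = O(1/d_1) + O(1/d_2) = O(1/d_2)$, which is exactly the desired bound on $\sigma_2(H)^2$. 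To promote this to a deterministic polynomial-time construction, I would either apply a matrix-concentration inequality (e.g., Matrix Bernstein, using the rank-$k$ block structure of the per-$v$ contributions) to show that a random choice of enumerations satisfies $\lambda_2(P) = O(1/d_2)$ with high probability, and then derandomize by enumerating orderings and directly verifying the spectrum, or alternatively exploit any algebraic symmetry in the underlying construction of \cref{lm: poly time expander} to prescribe deterministic enumerations.

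The main obstacle is the last step: translating the clean expected-spectrum computation into an efficient deterministic algorithm. A crude application of matrix concentration tends to introduce spurious $\log n_1$ factors, which could in principle threaten the target $O(1/\sqrt{\min(d_1,d_2)})$ bound. Fortunately, the lemma only asks for this bound with constants allowed to depend on $d_1$ and $d_2$, and the construction time is allowed to be ${\sf poly}_{d_1,d_2}(n_1,n_2)$, so a somewhat looser analysis suffices; in particular, combining a random grouping with polynomial-time spectral verification already yields a Las Vegas algorithm, which can then be standardly derandomized by search over a sufficiently fine discretized space of orderings.
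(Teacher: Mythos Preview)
The paper's proof is a one-line citation: the construction is taken directly from \cite{GribinskiMarcus2021} (polynomial-time bi-regular bipartite Ramanujan graphs) for sufficiently large $n_1,n_2$, and brute force otherwise.

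Your route is different and has a genuine gap at exactly the step you flag. First, in the paper's conventions the unsubscripted $O(\cdot)$ is an absolute constant, not one depending on $d_1,d_2$; but even granting such dependence, a $\log n_1$ factor would still be fatal. And Matrix Bernstein really does produce one in the relevant regime (fixed $d_1,d_2$, growing $n_1$): the per-vertex deviations $X_v=P_v-\mathbb{E}[P_v]$ satisfy $\|X_v\|=\Theta(1/d_1)$, and since $P_v^2=\tfrac{1}{d_1}P_v$ one computes $\big\|\sum_v\mathbb{E}[X_v^2]\big\|=\Theta(1/(d_1d_2))$, so the high-probability bound on $\|P-\mathbb{E}[P]\|$ is only of order $\sqrt{(\log n_1)/(d_1d_2)}$, which diverges rather than staying below $O(1/d_2)$. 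The grouping is not a technicality you can wave away---badly chosen groupings can make $H$ a poor expander---so neither ``concentration then spectral verification'' nor ``unspecified algebraic symmetry in the output of \cref{lm: poly time expander}'' constitutes a proof. (Note also that ``search over a discretized space of orderings'' is not a derandomization: the space of groupings has size $(d_1!)^{n_1}$.)

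A small twist on your idea does work and is fully deterministic. Rather than taking a $d_1$-regular expander on $n_1$ vertices and \emph{splitting} neighborhoods, take a $d_2$-regular expander $G$ on $n_1$ vertices via \cref{lm: poly time expander} and \emph{replicate}: connect left vertex $w$ to $(v,j)$ for every $v\in N_G(w)$ and every $j\in[k]$. This is $(d_1,d_2)$-biregular by construction, and now the left two-step walk $P_{LR}P_{RL}$ equals $P_G^2$ exactly, giving $\sigma_2(H)=O(d_2^{-1/2})$ with no randomness at all.
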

\begin{proof}
    The algorithm is by \cite{GribinskiMarcus2021} if $n_1,n_2$ are sufficiently large and by brute force otherwise.
\end{proof}

\subsection{High-Dimensional Expanders}
In this section, we give some necessary background on high-dimensional expanders (HDXs).

\subsubsection{Simplicial Complexes}
A $d$-dimensional high-dimensional expander (HDX) on a ground set $\mc{U}$ is a downward closed collection of subsets of $\mc{U}$, typically denoted by $X = (X(1), \ldots, X(d))$, 
where $X(1) = \mc{U}$ and $X(i)$ consists of size $i$ subsets of $\mc{U}$. By downwards closed we mean that for 
$1\leq i\leq j\leq d$, if $I\subseteq J$ have sizes $i$ and $j$ respectively and
$J \in X(j)$, then $I \in X(i)$. We refer to $X(1)$ as the vertices of the complex, and to 
$X(k)$ as the $k$-faces of $X$.

For any $I \subseteq \mc{U}$, one may consider the sets in $X$ containing $I$. These form a smaller, $(d- |I|)$-dimensional complex, which goes by the name the link of $I$.
\begin{definition}
Given a set $I \in X(i)$ for $0 \leq i \leq d-2$, the link of $I$ is the $(d-i)$-dimensional complex $X_I$ whose $k$-faces are 
\[
X_I(k) = \{J \setminus I \; | \; J \in X(k+i), J \supseteq I \}.
\]
\end{definition}

\subsubsection{Measures and Walks on Complexes}

A complex $X = (X(1), \ldots, X(d))$ is equipped with  a natural measure arising as the ``push down'' measure of the uniform measure over $X(d)$. More precisely, for each $i \in [d]$, let $\mu_{i}$ denote the measure over $X(i)$ obtained by choosing $U \in X(d)$ uniformly and then outputting a uniformly chosen $A \subseteq U$ of size $i$. At times, when $i$ is clear from context, we drop $i$ from the subscript and simply write $\mu$ to be the appropriate measure over $X(i)$. We will write $U \sim X(i)$ to denote a random $U$ chosen according to the measure $\mu_i$.

We also consider the above measure when restricted to a link of $X$. Specifically, for a set $I \in X(i)$ where $0 \leq i \leq d-2$, we define the measure $\mu_{I, j-i}$ over $X_I(j-i)$ to be the measure obtained by choosing $J \sim X(j)$ conditioned on $J \supseteq I$, and then outputting $J \setminus I$. At times, if the parameter $j-i$ is clear from context, we drop it from the subscript and simply write $\mu_I$.

These measures also induce weighted graphs between layers of the complex. Throughout the paper we will consider two graphs derived from a complex $X = (X(1), \ldots, X(d))$.
\vspace{-1ex}
\paragraph{The Base Graph.} This is the graph with vertex set $X(1)$ and edge set $X(2)$. The edges are weighted by the measure $\mu_2$ over $X(2)$ and the stationary distribution of this graph is given the measure $\mu_1$ over $X(1)$.

\vspace{-1ex}
\paragraph{Inclusion Graphs.} For two levels $j < k$, the inclusion graph between $X(j)$ and $X(k)$ is a bipartite graph with parts $X(j)$ and $X(k)$. The edges of this graph are all pairs $(V,U) \in X(j) \times X(k)$ such that $V \subset U$ and are weighted according to the following natural distribution derived from $X$: choose $U \sim X(k)$, $V \subset U$ uniformly of size $j$, and output $(V,U)$.

\subsubsection{Direct Product Testing on HDX}
We will need direct product testers over HDXs. Fix a complex $X = (X(1), \ldots, X(d))$ and suppose that for $k \in [d]$ (thought of as a large constant but much smaller than $d$) there is a table $T: X(k) \to \Sigma^k$ assigning local functions on each $U \in X(k)$. We wish to determine if these individual functions assigned to each set of the complex actually correspond to some global function, $F: X(1) \to \Sigma$, meaning $T[U] = F|_U$. We will consider a natural $2$-query tester, given by the following definition.
\begin{definition}
    Let $X$ be a $d$-dimensional complex and let $k < d$. Given a  table $T: X(k) \to \Sigma^k$, the $(k, \sqrt{k})$-direct product test can be described as follows.
    \begin{itemize}
        \item Sample $D \in X(d)$ uniformly.
        \item Sample $B \subseteq D$ of size $\sqrt{k}$ uniformly.
        \item Sample $A, A'$ independently and uniformly at random conditioned on $B \subseteq A, A' \subseteq D$.
        \item The test passes if and only if $T[A]|_B = T[A']|_B$.
    \end{itemize}
\end{definition}

\begin{definition}
    We say that the $(k, \sqrt{k})$-direct product test on $X$ has soundness $\delta$ if the following holds. Let $T$ be any table that passes the $(k, \sqrt{k})$-direct product test with probability at least $\delta$. Then there exists a function $F: X(1) \to \Sigma$ such that
\[
\Pr_{A \sim X(k)}[\dist(T[A], F|_A) \leq \delta] \geq \poly(\delta).
\]
In words, if a table passes the $(k, \sqrt{k})$-direct product test on $X$ with probability at least $\delta$, then it almost agrees with a global direct product function on a sizable weight of the $k$-faces.
\end{definition}

\subsubsection{An Algebraic HDX Construction}
We will require an HDX construction $X$ as in~\cite{bmv}, which is a variant of the constructions in~\cite{DDL,BLM}. Some key features of this construction that are important for us include the fact they have poly-logarithmic degree, good expansion properties, and that the $(k,\sqrt{k})$-direct product test on $X$ has arbitrarily small soundness.

\begin{theorem}\cite[Theorem 2.13]{bmv} \label{thm: hdx construction}
    For all $\delta \in (0,1)$, $C \in \mathbb{N}$, for sufficiently large $k \in \mathbb{N}$ and sufficiently large $d\in\mathbb{N}$ the following holds. For any sufficiently large $n \in \mathbb{N}$ and $q =  \Theta(\log^C(n))$, there is an algorithm which takes as input $\delta, C, d, k, n, q$, and outputs in $\poly(n)$-time, a $d$-dimensional complex $X = (X(1), \ldots, X(d))$ which satisfies the properties in~\cite[Theorem 2.13]{bmv}, as well as the following properties.
    \begin{enumerate}
        \item $n \leq |X(1)| \leq O_{d}(n)$.
        \item Every vertex is contained in at most $q^{O(d^2)}$ $d$-faces.
        \item For any $1 \leq i \leq d$, the second singular value of the weighted bipartite inclusion graph between $(X(1), X(i))$ is  $\frac{1}{i}+\frac{i^{O(1)}}{d}$.
        \item For any vertex $v \in X(1)$, the second eigenvalue of the graph $(X_{v}(1), X_{v}(2))$ is at most $O(1/\sqrt{q})$.
        \item The $(k , \sqrt{k})$-direct product test on $X$ has soundness $\delta$.
    \end{enumerate}
    We think of $\delta, C, d, k, n, q$ as inputs to this algorithm and refer to $\delta$ as the direct-product soundness parameter, $C$ as the exponent, $d$ as the dimension parameter, $k$ as the direct-product dimension parameter, $n$ as the target number of vertices, and $q$ as the prime.  
\end{theorem}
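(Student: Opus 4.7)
Since the statement is attributed to \cite[Theorem 2.13]{bmv}, the plan is to invoke that theorem essentially as a black box, and then verify that the algebraic coset-complex construction underlying it (a variant of the LSV-type and \cite{BLM,DDL} constructions based on a quotient of an affine Bruhat--Tits building, or equivalently, a coset complex of a suitable subgroup of $\mathrm{GL}_{d+1}$ over $\mathbb{F}_q[t]$) has the listed extra parameters under our naming conventions. Concretely, given the input $(\delta,C,d,k,n,q)$ with $q=\Theta(\log^C n)$, the algorithm picks an appropriate congruence subgroup so that the resulting coset complex has between $n$ and $O_d(n)$ vertices (which can be arranged by adjusting the index of the subgroup), and then computes the complex explicitly in time polynomial in its size; this takes care of Property~1 and gives the $\poly(n)$ running time.

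Next I would verify each spectral/combinatorial property separately. For Property~2, every vertex sits in the link of a fixed vertex, which is a quotient of the spherical building at that vertex; this building has at most $q^{O(d^2)}$ chambers, giving the stated bound on the number of $d$-faces through a vertex. Property~3 (expansion of the bipartite vertex-to-$i$-face inclusion graph) comes from a swap-walk / Garland-type analysis: the $1/i$ term is the trivial contribution coming from the fact that a size-$i$ face intersects a random vertex in roughly $1/i$-fraction, and the $i^{O(1)}/d$ correction term follows from trickling-down (Oppenheim) applied to the local spectral expansion of links of small faces, which in this construction behave like spherical buildings and hence are excellent expanders. Property~4, that $(X_v(1),X_v(2))$ has second eigenvalue $O(1/\sqrt{q})$, comes directly from the fact that a link of a vertex in this construction is (a quotient of) a spherical building on $\mathrm{GL}_d(\mathbb{F}_q)$, whose one-skeleton is a well-known Ramanujan-type expander with the claimed spectral gap.

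Property~5, the direct-product soundness, is the step I would view as the main substantive import. I would invoke the chain of recent sparse direct-product testing results \cite{BM,BLM,DD1,DD2,DDL} exactly as in \cite{bmv}: these say that for any $\delta$, once $k$ and $d$ are large enough in terms of $\delta$ and once the links of the complex have sufficient local spectral expansion (which Property~4 provides), the $(k,\sqrt{k})$-direct product test on $X$ has soundness $\delta$. The only step requiring actual work here is to check the quantifier order matches: $\delta$ is fixed first, then $k,d$ are chosen large enough in terms of $\delta$ (and $C$), then $q=\Theta(\log^C n)$ is taken large enough in $n$, so that all four hypotheses of the direct-product testing theorem are simultaneously satisfied. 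The hardest and most delicate part is this coupling of parameters across Properties~3--5 (i.e.\ ensuring that the same complex $X$ satisfies all of them simultaneously for the given $\delta,C,d,k,n,q$), since e.g.\ enlarging $d$ worsens the $i^{O(1)}/d$ error in Property~3 only mildly but is needed to make the direct-product soundness small. Everything else is bookkeeping on top of \cite[Theorem 2.13]{bmv}.
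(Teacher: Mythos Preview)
Your proposal is correct in spirit but far more elaborate than necessary. The paper's proof is three sentences of citation: items~1,~2, and~5 are read off directly from the statement of \cite[Theorem~2.13]{bmv}; item~4 is extracted from the proof of that theorem (specifically the paragraph on local spectral expansion of $X$); and item~3 is \cite[Lemma~2.8]{bmv}. There is no need to re-open the algebraic construction or to re-run a Garland/trickling-down argument---the relevant bounds are already packaged as lemmas in \cite{bmv}, and the present theorem is simply a restatement with the parameters named for later use.

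One small slip in your final paragraph: enlarging $d$ \emph{improves} (shrinks) the $i^{O(1)}/d$ error in Property~3, so there is no tension there at all; taking $d$ large is good for both Property~3 and Property~5 simultaneously, which is why the quantifier order in the statement works without any delicate coupling.
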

\begin{proof}
    All items, except for the third and fourth,  follow from the statement of \cite[Theorem 2.13]{bmv}. The fourth item follows from the proof in \cite[Theorem 2.13]{bmv}, and more precisely from the paragraph ``Local spectral expansion of $X$'' in the proof.  
    The third item follows from \cite[Lemma 2.8]{bmv}.
\end{proof}
Since our arguments use several results from~\cite{bmv} in a black-box way, we make an implicit use of the properties of $X$ outlined in~\cite[Theorem 2.13]{bmv}. We thus omit a detailed description of them (as well as the set-up they require). The properties we mention in~\cref{thm: hdx construction} will be used explicitly in our arguments.

We need the following (rather standard) sampling bound.
\begin{lemma} \cite[Claim 4.7]{bmv} \label{lm: bmv hdx expander sampling}
    Let $X$ be a $d$-dimensional complex satisfying \cref{thm: hdx construction} with prime $q$, and let $\mc{E} \subseteq X(2)$ be a set of edges with measure $\mu_2(\mc{E}) \geq \eps$ for some $\eps > 0$. Then,
    \[
    \Pr_{u \in X_1(1)}[\mu_{u, 2}(\mc{E}) \geq \sqrt{\eps} ] \leq \frac{\poly(d)}{q \cdot \eps}.
    \]
\end{lemma}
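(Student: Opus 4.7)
The plan is a second-moment bound on $f(u) := \mu_{u,2}(\mathcal{E})$, followed by Markov's inequality. First, the standard pushdown of HDX measures through $X(3)$ (a $3$-face $T\sim \mu_3$ with a uniform vertex $u\in T$ gives $u\sim\mu_1$ and $T\setminus u \sim \mu_2$) yields
\[
\E_{u\sim\mu_1}[f(u)] \;=\; \mu_2(\mathcal{E}).
\]
Expanding the square and using conditional independence of two i.i.d.\ samples from $\mu_{u,2}$,
\[
\E_u[f(u)^2] \;=\; \Pr_{u \sim \mu_1,\, e,e' \sim \mu_{u,2}\text{ i.i.d.}}\!\left[\,e, e' \in \mathcal{E}\,\right],
\]
which is the pair-probability of the walk $e \to u \to e'$ on $X(2)$ passing through a ``triangle-completion'' vertex $u$ (so $\{u\}\cup e,\{u\}\cup e'\in X(3)$). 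Let $K$ denote the transition kernel of this walk, which has stationary distribution $\mu_2$, and let $\sigma_2(K)$ be its second singular value.

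The heart of the argument is the spectral estimate $\sigma_2(K) \leq \poly(d)/q$. I would factor $K = A^\ast A$, where $A \colon L^2(\mu_2) \to L^2(\mu_1)$ is the averaging operator $(Ah)(u) := \E_{e \sim \mu_{u,2}}[h(e)]$, so that $\sigma_2(K) = \sigma_2(A)^2$. To bound $\sigma_2(A)$, one views $A$ as the composition of the $X(2) \to X(3)$ incidence walk (pick a random $T \supseteq e$) followed by the $X(3) \to X(1)$ walk (output $u = T \setminus e$). The inclusion-graph bounds of item 3 of~\cref{thm: hdx construction} control the global incidence, while the key local input is item 4: the $1$-skeleton of every vertex link $X_u$ has second eigenvalue $O(1/\sqrt{q})$, so that two independent samples from $\mu_{u,2}$ mix well. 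A careful local-to-global composition, in the spirit of the standard swap-walk analyses on HDXs, upgrades these ingredients to the desired $\sigma_2(A)^2 \leq \poly(d)/q$ estimate. Once this is in hand, the expander mixing lemma (\cref{lm: expander mixing}) applied to $K$ acting on $\ind_{\mathcal{E}}$ gives
\[
\E_u[f(u)^2] \;\leq\; \mu_2(\mathcal{E})^2 \;+\; \sigma_2(K)\,\mu_2(\mathcal{E}).
\]

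Finally, Markov's inequality on $f^2$ yields $\Pr[f(u)\geq \sqrt{\eps}\,] \leq \E_u[f(u)^2]/\eps$. The target bound $\poly(d)/(q\eps)$ is trivial unless $\mu_2(\mathcal{E}) = O(\poly(d)/q)$, so in the non-trivial regime the $\mu_2(\mathcal{E})^2$ term is absorbed into $\sigma_2(K)\mu_2(\mathcal{E})$, and we conclude.

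I expect the main obstacle to be the spectral estimate $\sigma_2(K) \leq \poly(d)/q$: the local $O(1/\sqrt{q})$ eigenvalue of item 4 is the crucial input, but on its own item 3 gives only constant second singular value for the $X(1)\leftrightarrow X(i)$ inclusions, so pushing the local bound through the $X(2) \to X(1) \to X(2)$ composition into a sharp $\poly(d)/q$ estimate requires a careful decomposition across the $X(1), X(2), X(3)$ levels, with the $\poly(d)$ factor entering through the correction terms $i^{O(1)}/d$ in item 3.
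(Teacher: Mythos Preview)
The paper does not give a proof here; the lemma is quoted from~\cite[Claim~4.7]{bmv}. (Note also that the hypothesis as written, $\mu_2(\mathcal{E})\geq\eps$, appears to be a typo for $\mu_2(\mathcal{E})\leq\eps$: this is how it is applied in the proof of \cref{lm: j bad prob}, and the stated conclusion is false for $\mathcal{E}=X(2)$.) Your overall plan---a second-moment bound on $f(u)=\mu_{u,2}(\mathcal{E})$ via the swap-type walk $K=A^{\ast}A$, then a tail inequality---is the right one, and identifying the link expansion of item~4 in \cref{thm: hdx construction} as the source of the $\poly(d)/q$ spectral bound on $K$ is correct in spirit.

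However, your final step has a genuine gap. Markov on $f^2$ gives
\[
\Pr\big[f(u)\geq\sqrt{\eps}\,\big]\;\leq\;\frac{\mu_2(\mathcal{E})^2+\sigma_2(K)\,\mu_2(\mathcal{E})}{\eps},
\]
and the first term $\mu_2(\mathcal{E})^2/\eps$ can be as large as $\eps$ (take $\mu_2(\mathcal{E})=\eps$), which is \emph{not} bounded by $\poly(d)/(q\eps)$ when $\eps$ is a constant and $q$ is large. Your ``trivial unless $\mu_2(\mathcal{E})=O(\poly(d)/q)$'' reasoning is backwards: the target bound is $\geq 1$ (hence vacuous) exactly when $\eps\leq\poly(d)/q$, and neither the stated hypothesis $\mu_2(\mathcal{E})\geq\eps$ nor the intended one $\mu_2(\mathcal{E})\leq\eps$ forces you into the regime $\mu_2(\mathcal{E})\leq\sigma_2(K)$ in the non-vacuous case.

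The fix is to subtract the mean first, i.e.\ use Chebyshev rather than Markov. Your same computation yields $\E_u[f(u)^2]-\mu_2(\mathcal{E})^2\leq\sigma_2(K)\,\mu_2(\mathcal{E})$, and with the intended hypothesis $\mu_2(\mathcal{E})\leq\eps$ one has $\sqrt{\eps}-\mu_2(\mathcal{E})\geq\sqrt{\eps}-\eps\geq\sqrt{\eps}/2$ for $\eps\leq 1/4$. Then
\[
\Pr\big[f(u)\geq\sqrt{\eps}\,\big]\;\leq\;\frac{\E_u[f(u)^2]-\mu_2(\mathcal{E})^2}{(\sqrt{\eps}-\mu_2(\mathcal{E}))^2}\;\leq\;\frac{4\,\sigma_2(K)\,\mu_2(\mathcal{E})}{\eps}\;\leq\;4\,\sigma_2(K)\;=\;O\!\left(\frac{\poly(d)}{q}\right)\;\leq\;\frac{\poly(d)}{q\eps},
\]
which gives the claim.
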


\section{Decodable PCPs and Uniquely-Decodable PCPs}
In this section we formally define dPCPs, their unique-decoding analog called uniquely-decodable PCPs, and give a more precise version of~\cref{thm:dPCP_intro}, which is~\cref{thm:dPCP main} below.

\newcommand{\Language}{\mathcal{L}}
\newcommand{\PCPDecoder}{\mathcal{D}}
\newcommand{\Word}{w}
\newcommand{\Bits}{\{0,1\}}
\newcommand{\ListSize}{\mathsf{L}}
\newcommand{\PCPError}{\delta}
\newcommand{\LeftSide}{A}
\newcommand{\RightSide}{B}
\newcommand{\EdgeSet}{E}
\newcommand{\Constraint}[1]{\Phi_{#1}}
\newcommand{\EdgeDistributionName}{\mathcal{P}}
\newcommand{\EdgeDistribution}[1]{\EdgeDistributionName_{#1}}
\newcommand{\Alphabet}[1]{\Sigma_{#1}}
\newcommand{\LeftAlphabet}{\Alphabet{\LeftSide}}
\newcommand{\RightAlphabet}{\Alphabet{\RightSide}}
\newcommand{\WordLength}{n}
\newcommand{\IndexInWord}{j}
\newcommand{\LanguageAlphabet}{\Alphabet{0}}
\newcommand{\Decoder}[2]{D^{#1}_{#2}}
\newcommand{\Restricted}[2]{{#1}_{#2}}
\newcommand{\IndexInDecodingSet}{t}
\newcommand{\Degree}[1]{\mathsf{deg}\!\left(#1\right)}
\newcommand{\Neighbourhood}[1]{\Gamma\!\left(#1\right)}
\newcommand{\DecNeighbourhood}{\Gamma_{\dec}}
\newcommand{\LeftAlphabetSymbol}{\sigma}
\newcommand{\LeftVertex}{a}
\newcommand{\RightVertex}{b}
\newcommand{\TableProof}[1]{T_{#1}}
\newcommand{\LeftProof}{\TableProof{\LeftSide}}
\newcommand{\RightProof}{\TableProof{\RightSide}}
\newcommand{\DecisionComplexity}{\mathsf{DC}}
\newcommand{\Circuit}{C}
\newcommand{\ddeg}{\mathsf{ddeg}}

\newcommand{\AllDistribution}{\mathcal{Q}}

\newcommand{\SideDecodings}{[n]}
\newcommand{\DecoderSymbol}{D}

\newcommand{\Assignment}{T}

\subsection{Decodable PCPs} 
The definition of dPCPs below expands~\cref{def:dPCP_intro} as it includes the definitions of additional parameters of interest.

\begin{definition} [Decodable PCP]\label{def:dpcp}
Let $\Alphabet{0}$ be an alphabet and let $\Language \subseteq \Alphabet{0}^n$ be a language.
A decodable PCP (dPCP) for $\Language$ is a tuple
    \[
    \PCPDecoder = \left(\LeftSide\cup \RightSide,\EdgeSet,\LeftAlphabet,\RightAlphabet,\{\Constraint{e} \}_{e\in \EdgeSet},\{\EdgeDistribution{t}\}_{t \in [n]}, \{\DecoderSymbol_{t} \}_{t \in [n]}\right)
    \]
    where:
\begin{itemize}
    \item \textbf{Constraint Graph.} $(\LeftSide \cup \RightSide, \EdgeSet)$ is a bipartite graph with parts $\LeftSide, \RightSide$ and edge set $\EdgeSet$.
    \item \textbf{Left Alphabet.} $\Alphabet{\LeftSide}$ is the alphabet of the vertices in $\LeftSide$. We allow each individual vertex $a\in A$ to have a constrained alphabet, which is a subset of $\Sigma_A$.
    \item \textbf{Right Alphabet.} $\Alphabet{\RightSide}$ is the alphabet of the vertices in $\RightSide$. We allow each individual vertex $b\in B$ to have a constrained alphabet, which is a subset of $\Sigma_B$.
    \item \textbf{Constraints.} For each $e = (\LeftVertex,\RightVertex)\in \EdgeSet$, $\Constraint{e}: \Alphabet{\LeftSide} \times \Alphabet{\RightSide} \to \{0, 1\}$ is a constraint on the edge $e$. We say that an input satisfies it if it evaluates to $1$ and unsatisfied if it evaluates to $0$. If either input to the constraint is an invalid alphabet symbol for $a$ or $b$, then the constraint outputs $0$.

    \item \textbf{Projection Constraints.} We say a constraint $\Phi_e$ is a projection constraint if for every $\sigma \in \Sigma_A$, there is a unique $\sigma' \in \Sigma_B$ such that $\Phi_{e}(\sigma, \sigma') = 1$. We say $\mc{D}$ is a projection dPCP if all of its constraints are projections. 
    \item \textbf{Decoding Distribution.} For each $t \in [n]$, $\mc{P}_t$ is a distribution over $E$.
     \item \textbf{Decoder.} 
    For each $t \in [n]$, the decoder $\DecoderSymbol_t\colon A\times\Sigma_A\to\Sigma_0$ receives the left vertex of a constraint along with a symbol for it, and outputs a decoding for $t$. 
    \end{itemize}
We also define the following parameters related to dPCPs:
\begin{itemize}
    \item \textbf{Length.} The number of vertices in the constraint graph.
    \item \textbf{Alphabet Size.} $\max(|\Sigma_A|, |\Sigma_B|)$, and typically this will be $|\Sigma_A|$.
     \item \textbf{Decision Complexity.} Viewing $\Sigma_A = \{0,1\}^{\log |\Sigma_A|}, \Sigma_B = \{0,1\}^{\log |\Sigma_B|}$, the decision complexity is the maximum over all constraints $(a,b)$ of the size of the circuit with inputs $ \{0,1\}^{\log |\Sigma_A| + \log |\Sigma_B|}$  that outputs $1$ if the constraint on $(a,b)$ is satisfied and both alphabet symbols are valid and $0$ otherwise.  
     \item \textbf{Projection Decision Complexity.} In the case that the dPCP is a projection dPCP,  its projection decision complexity is the maximum over all constraints $(a, b)$ of the size of the circuit with input $\sigma \in \Sigma_A$ thought of as an element in $ \{0,1\}^{\log |\Sigma_A|}$, outputs the unique $\Sigma_B$ symbol which satisfies the constraint on $(a,b)$, and outputs $\bot$ if $\sigma$ is not a valid $a$ alphabet symbol. 
     
     We note that the (standard) decision complexity is at most $O(\log|\Sigma_B|)$ more than the projection decision complexity.
\item \textbf{Decoding Complexity.} Viewing $\Sigma_A = \{0,1\}^{\log |\Sigma_A|}$, the decoding complexity is maximum over all $a \in A$ of the size of the circuit which computes the decoding function $D_t(a, \cdot): \Sigma_A \to \Sigma_0$.
    
     \item \textbf{Neighborhoods and Degrees.} The neighborhood of $u \in A \cup B$ is $\Gamma(u) = \{v \in A \cup B \; | \; (u,v) \in E \}$, and the degree of $u$ is $\deg(u) = |\Gamma(u)|$. The left degree of $\mc{D}$ is $\max_{a \in A}\deg(a)$ and the right degree of $\mc{D}$ is  $\max_{b \in B}\deg(b)$.
    \item \textbf{Decoding-Neighborhoods and Decoding-Degrees.} The decoding neighborhood of $a \in A$ is $\Gamma_{\dec}(a) = \{t \in [n] \; | \; a \in \supp\left(\mc{P}_{t}\right) \}$ and the decoding-degree of $a$ is $\ddeg(a) = |\Gamma_{\dec}(a)|$. The decoding degree of $\mc{D}$ is $\max_{a \in A} \ddeg(a)$.
\end{itemize}
\end{definition}

We next move on to define the completeness and soundness of a dPCP. We will always use the notion of perfect completeness (all of the constructions in this paper have it). 

\begin{definition}[Perfect Completeness]\label{def:dpcp-completeness}Let $\PCPDecoder$ be a decodable PCP for $\Language$. We say that $\mc{D}$ has perfect completeness if for all $\Word \in \Language$ there are assignments $\Assignment_{\LeftSide}: \LeftSide \to \Alphabet{\LeftSide}$ and $\Assignment_{\RightSide}: \RightSide \to \Alphabet{\RightSide}$ such that 
    \[
    \Pr_{\substack{\IndexInDecodingSet \in \SideDecodings,\\ (\LeftVertex,\RightVertex)\sim \EdgeDistribution{\IndexInDecodingSet}}}[\Phi_{(\LeftVertex,\RightVertex)}(\Assignment_{\LeftSide}[\LeftVertex], \Assignment_{\RightSide}[\RightVertex])=1 \land \DecoderSymbol_{\IndexInDecodingSet}(\LeftVertex,\Assignment_{\LeftSide}[\LeftVertex]) = \Word_t] = 1.
    \]

\end{definition}

As per the soundness notion, we 
use a list-decoding soundness in the style of \cite{dh}.\footnote{The more common definition in the literature asserts that the list $w_1,\ldots,w_L$ can be constructed from an assignment $T_B$ to the right side of the given dPCP $\mc{D}$. We chose the notion below as it is more convenient for us to work with, and in~\cref{sec:switch_side} we show a relation with the more usual one.}

\begin{definition}[List-Decoding Soundness]\label{def:list-decoding-soundness}
Let $\PCPDecoder$ be a decodable PCP for $\Language$. Given $\eps \in [0,1]$, $L \in \mathbb{N}$ we say that $\mc{D}$ satisfies $(L, \eps)$-list-decoding soundness if the following holds. For any left assignment $\Assignment_{\LeftSide}: \LeftSide \to \Alphabet{\LeftSide}$ there exists a list $\{\Word_1, \ldots, \Word_L \} \subseteq \Language$ such that for any $\Assignment_{\RightSide}: \RightSide \to \Alphabet{\RightSide}$, 
\[
\Pr_{\substack{\IndexInDecodingSet \in [n],\\ (\LeftVertex,\RightVertex)\sim \EdgeDistribution{\IndexInDecodingSet}}}[\Phi_{(\LeftVertex,\RightVertex)}(\Assignment_{\LeftSide}[\LeftVertex], \Assignment_{\RightSide}[\RightVertex])=1 \land \DecoderSymbol_{\IndexInDecodingSet}(\LeftVertex,\Assignment_{\LeftSide}[\LeftVertex]) \notin \{\Restricted{(\Word_i)}{\IndexInDecodingSet} \}_{i \in [L]}] \leq \eps.
\]
\end{definition}

\paragraph{Agnostic Decoding Distributions.} 
We recall the notions of complete decoding distribution and agnosticism. 
\begin{definition}[Complete Decoding Distribution]\label{def:complete-distribution}Let $\PCPDecoder$ be a decodable PCP for $\Language$. The complete decoding distribution of $\PCPDecoder$, denoted by  $\AllDistribution$, is the distribution over 
$\SideDecodings \times \LeftSide\times\RightSide$ generated by choosing $t \in [n]$ uniformly, sampling $(a,b) \sim \mc{P}_t$ and outputting $(t,a,b)$.
\end{definition}

\begin{definition}[Agnostic Decoding Distribution]\label{def:agnostic-distributions} For a decodable PCP, $\PCPDecoder$, we say that its complete decoding distribution $\AllDistribution$ is agnostic if for all  $\RightVertex\in\RightSide,\IndexInDecodingSet\in\SideDecodings$, the distributions  
$ \AllDistribution(\IndexInDecodingSet, \cdot, \RightVertex)$ and  $\AllDistribution(\circ, \cdot, \RightVertex)$ are the same. 
\end{definition}

\subsection{Uniquely-Decodable PCPs}
We will also consider a variant of decodable PCPs called uniquely-decodable PCPs. Uniquely-decodable PCPs can be thought of as high-soundness versions of decodable-PCPs and were first introduced by Dinur and Meir \cite{DinurMeir}. Uniquely-decodable PCPs differ from decodable PCPs in three ways: 
\begin{enumerate}
    \item The constraint graph is not necessarily bipartite. We remark that in some cases it will be bipartite, and this will be important for us.
    \item The decoding distributions are over vertices, as opposed to being over edges.
    \item The constraint graph is a weighted graph and in the soundness cases edges are sampled according to this weighting. This should be thought of as the analog of the marginal distribution over edges of the complete decoding distribution $\mc{Q}$ of 
    a dPCP as in~\cref{def:dpcp}.
\end{enumerate}

\begin{definition} [Uniquely-Decodable PCP]\label{def:udpcp}
Let $\Alphabet{0}$ be an alphabet and let $\Language \subseteq \Alphabet{0}^n$ be a language.
A uniquely-decodable PCP (udPCP) for $\Language$ is a tuple
    \[
    \PCPDecoder = \left(V,\EdgeSet,\Sigma, \{\Constraint{e} \}_{e\in \EdgeSet}, \{\EdgeDistribution{t}\}_{t \in [n]}, \{\DecoderSymbol_{t} \}_{t \in [n]}\right)
    \]
    where:
\begin{itemize}
    \item \textbf{Constraint Graph:} The constraint graph $(V, E)$, which may have multi-edges, and we think of it as equipped with the uniform measure.
    \item \textbf{Alphabet.} $\Sigma$ is the alphabet and we allow each individual vertex $v \in V$ to have a constrained alphabet $\Sigma_v\subseteq \Sigma$.
    \item \textbf{Constraints.} For each $e \in E$, $\Phi_{e}: \Sigma \times \Sigma \to \{0,1\}$ is a constraint on the edge $e$.
    \item \textbf{Decoding Distribution.} For each $t \in [n]$, $\mc{P}_t$ is a distribution over $V$.
    \item \textbf{Decoder.} For each $t \in [n]$, the decoder $\DecoderSymbol_t\colon V\times\Sigma\to\Sigma_0$ receives a vertex and a symbol for it, and outputs a decoding for $t$. 
\end{itemize}
We define the parameters length, alphabet size, decision complexity, decoding complexity, neighborhoods, decoding-neighborhoods, degrees, and decoding degrees in the same way as in \cref{def:dpcp}.
\end{definition}
For technical purposes, we will also want to distinguish when a uniquely-decodable PCP has bipartite constraint graph and projection constraints (as defined in \cref{def:dpcp}). In the case that it does, we will also use a notion called left-canonical, defined below.
\begin{definition} [Bipartite and Projection Uniquely-Decodable PCP]
    We say that a uniquely-decodable PCP is \emph{bipartite} if its constraint graph is bipartite, and in this case we will denote its left and right alphabets separately. We say that a bipartite uniquely-decodable PCP has projection constraints if all of its constraints are projections from the left side to the right side.
\end{definition}

\begin{definition}[Left-Canonical Uniquely-Decodable PCP]
We say that a projection bipartite uniquely-decodable PCP is left-canonical if for every left vertex, and any labeling of its right neighborhood, there is at most one assignment to the left vertex that satisfies all of the constraints involving that vertex.
\end{definition}

In order for the decoding distributions to be meaningful they should be aligned with the uniform distribution over $E$, in the following sense. Choosing $t \in [n]$, $u \sim \mc{P}_t$, and $v$ a random neighbor of $u$ (weighted by multiplicities), $(u,v)$ is uniformly random in $E$. Indeed, this will always be the case for our uniquely-decodable PCP constructions. \footnote{Morally speaking, a uniquely-decodable PCP must satisfy this property, at least approximately, to have any useful notion of soundness.}

Perfect completeness and unique-decoding soundness are defined similarly to \cref{def:dpcp-completeness} and \cref{def:list-decoding-soundness}.

\begin{definition}[Perfect Completeness]\label{def:udpcp-completeness}Let $\PCPDecoder$ be a uniquely-decodable PCP for $\Language$. We say that $\mc{D}$ has perfect completeness if for all $\Word \in \Language$ there is an assignment $\Assignment: V \to \Alphabet{}$ such that 
    \[
    \Pr_{(a,b) \sim E}[\Phi_{(\LeftVertex,\RightVertex)}(\Assignment_{\LeftSide}[\LeftVertex], \Assignment_{\RightSide}[\RightVertex])] = 1,
    \qquad 
    \Pr_{t \in [n], a \sim \mc{P}_t}[D_t(a, T[a]) = w_t] = 1.
    \]
In words, the first condition says that the assignment is fully satisfying and the second condition says that the assignment always decodes according to $w$.
\end{definition}

\begin{definition}[Unique-Decoding Soundness]\label{def:unique-decoding-soundness}
 Let $\PCPDecoder$ be a uniquely-decodable PCP for $\Language$. Given $\eps, \eta \in [0,1]$, we say that $\mc{D}$ has $(\eta, \eps)$-unique decoding soundness 
 if the following holds. For any assignment $T: V \to \Sigma$ satisfying
 \[
 \Pr_{(a,b) \sim E}[\Phi_{(a,b)}(T[a], T[b]) = 1] \geq 1-\eps,
 \]
 there exists $w \in \mc{L}$ such that 
 \[
 \Pr_{\substack{t \in [n],\\ a \sim \mc{P}_t}}[D_t(a, T_A[a]) \neq w_t] \leq \eta.
 \] 
\end{definition}

Finally, we define the complete decoding distribution for uniquely-decodable PCPs. 
\begin{definition}[Complete Decoding Distribution]\label{def:complete-distribution}Let $\PCPDecoder$ be a uniquely-decodable PCP for $\Language$. The complete decoding distribution of $\PCPDecoder$, denoted by  $\AllDistribution$, is the distribution over 
$\SideDecodings \times V$ generated by choosing $t \in [n]$ uniformly, $a \sim \mc{P}_t$, and outputting $(t,a)$.
\end{definition}

\subsection{Main Decodable PCP Theorem}
With the definitions above in mind, below we state a more detailed version of our main dPCP construction from~\cref{thm:dPCP_intro}, specifying additional features of our new dPCP construction.
\begin{theorem}\label{thm:dPCP main}
    For all $\eps>0$ there exist $L, C\in\mathbb{N}$ such that $L = \poly(1/\eps)$ and the following holds for all alphabets $\Sigma_0$. If $\varphi\colon \Sigma_0^n\to\{0,1\}$ is a circuit of size $N$, then the language ${\sf SAT}(\varphi)$ has a $2$-query, poly-time constructible dPCP with the following properties:
    \begin{itemize}
        \item \textbf{Length.} $O_{|\Sigma_0|}\left(N \log^C N\right)$,
        \item \textbf{Alphabet Size.} $O_{\eps, |\Sigma_0|}(1)$,
        \item \textbf{Complete Decoding Distribution.} The marginal of the complete decoding distribution over the right vertices is uniform, and the marginal of the complete decoding distribution over the left vertices is also uniform.
        \item \textbf{Completeness.} The dPCP has perfect completeness,
        \item \textbf{Soundness.}  $(L,\eps)$-list-decoding soundness.
    \end{itemize}
\end{theorem}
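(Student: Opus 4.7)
The plan is to follow the outer/inner/composition recipe described in \cref{sec:techniques_PCP}, but carried out entirely in the dPCP category. I would begin by producing an ``outer'' dPCP with all of the size, soundness, and expansion properties we need, but with a super-constant alphabet, and then iteratively reduce its alphabet by composition without hurting the other parameters.

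First, I would take the Mie PCPP~\cite{Mie} as a seed and run the HDX-embedding-plus-amplification pipeline of~\cite{bmv}, but re-interpret every step so that it preserves the list-decoding soundness of \cref{def:list-decoding-soundness} rather than just the usual soundness. Concretely, one first encodes the Mie PCPP as a $2$-query dPCP of quasi-linear length over a poly-logarithmic alphabet, routes it into the base graph of an HDX $X$ from \cref{thm: hdx construction} via link-to-link routing so that the left side of the new construction lives on the $k$-faces and constraints are local, and then amplifies the soundness by replacing single-variable assignments with direct-product tables. For the amplification step the point is that the soundness analysis of the $(k,\sqrt{k})$-direct product test on $X$ already produces a global function $F$; I would show that in the decodable setting this $F$ yields a bounded list of candidate messages (via the Johnson bound \cref{fact: johnson}), so the direct product test transforms unique-decoding soundness into list-decoding soundness with $L = \poly(1/\eps)$. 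This step should, by construction, give a $2$-query, projection dPCP of size $N\polylogn$, constant list-decoding soundness, and poly-logarithmic alphabet; moreover, because the HDX base graph and the decoding distributions $\mc{P}_t$ are built from uniform draws on $X(d)$, I would verify at the same time that the complete decoding distribution is agnostic in the sense of \cref{def:agnostic-distributions}, and that both side-marginals can be made uniform after a trivial copying step.

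Second, I would close the gap from poly-logarithmic alphabet to $O_{\eps,|\Sigma_0|}(1)$ alphabet by composition. The direct composition theorem sketched in the introduction requires constant decoding degree, but the HDX-based outer construction inevitably has poly-logarithmic decoding degree, so I would first apply the decoding-degree reduction transformation from \cref{sec:techniques_PCP}: split each right vertex $b$ into copies $(b,t)$ indexed by the indices $t\in[n]$ that $b$ is involved in decoding, and inflate the right alphabet to $\Sigma_A^d\times\Sigma_0$ so that a symbol at $(b,t)$ commits to labels for all neighbors of $b$ together with a decoded value for $t$. The crucial soundness check here is that because the outer dPCP has agnostic decoding distributions, conditioning on $(b,t)$ gives no extra information about which neighbor $a\sim\mc{Q}(t,\cdot,b)$ will be examined, so the list-decoding soundness of the original dPCP transfers to the new one with only constant loss. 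After this step the decoding degree is $1$, the alphabet is still bounded by a polynomial in the original alphabet (hence quasi-polynomial in $N$), and the dPCP is in projection form.

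Third, with constant decoding degree in hand, I would iterate the dPCP composition theorem: the outer PCP is the current construction at scale $N$, and the inner PCP is a small copy of the same construction at scale $\polylogn$ applied to the alphabet-membership-and-decoding circuit at each left vertex, whose size is poly-logarithmic because the outer decision complexity and decoding complexity are. Each round of composition replaces the current alphabet by one of size polynomial in the inner alphabet and multiplies the length by $\polylogn$. A constant number of rounds drives the alphabet down to $O_{\eps,|\Sigma_0|}(1)$ while keeping the length $N\log^C N$ for some constant $C$ depending on $\eps$ and $|\Sigma_0|$. A final cosmetic symmetrization step (duplicating vertices so that both side-marginals of $\mc{Q}$ become uniform) yields the stated theorem.

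The main obstacle I expect is the bookkeeping inside the composition step when the outer constraints $\Phi_e$ differ across edges: the inner dPCPs attached to neighbors $\mathbf{a}$ of a single right vertex $b$ must be ``aligned'' in randomness so that a single inner test corresponds to a single outer left-vertex sample, and the decoding maps $D_t$ in the composed object must simultaneously decode to the correct $w_t$ and check outer consistency. Getting this right requires the technical transformations hinted at around \cref{def:dpcp_intro2}, namely modifying the left alphabet of the outer dPCP to $\Sigma_A'\subseteq \Sigma_B^{{\sf deg}(a)}\times\Sigma_0^{{\sf ddeg}(a)}$ and restricting to strings consistent with a legitimate $\Sigma_A$-symbol whose decoding $D_t(a,\cdot)$ is correct on every $t\in\DecNeighbourhood(a)$; the soundness proof of composition then reduces, modulo a careful union bound over the list sizes, to the standard abstract composition argument of~\cite{dh}, which is the step I would trust the least before writing out the details.
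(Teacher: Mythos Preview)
Your overall plan mirrors the paper's structure closely, but the third step contains a genuine gap: the claim that ``a constant number of rounds drives the alphabet down to $O_{\eps,|\Sigma_0|}(1)$'' is false as stated. If the inner dPCP in each round is \emph{the same construction at a smaller scale}, then after one composition the left alphabet becomes roughly $2^{\poly_\eps(\log\log N)}$, after two rounds $2^{\poly_\eps(\log\log\log N)}$, and in general after $k$ rounds $2^{\poly_\eps(\log^{(k+1)} N)}$. For any fixed $k$ this is still super-constant as a function of $N$; you would need $\Theta(\log^\star N)$ rounds to reach $O(1)$, and that many rounds costs $(\log N)^{\Theta(\log^\star N)}$ in length, overshooting the quasi-linear bound. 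The paper breaks this loop by performing only two self-compositions (reaching alphabet $2^{\poly_\eps(\log\log\log N)}$ and decision complexity $\poly_\eps(\log\log\log N)$) and then composing once more with a \emph{Hadamard-based} inner dPCP, whose alphabet is genuinely $O_{\eps,T'}(1)$ independent of its input size. The Hadamard dPCP has exponential length, but since it is invoked on circuits of size $\poly_\eps(\log\log\log N)$, the blow-up is $2^{\poly_\eps((\log\log\log N)^2)}\le\poly_\eps(\log N)$. Without some constant-alphabet terminal inner PCP, your recursion has no base case.

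Two smaller points. First, after the HDX amplification the alphabet is $2^{\poly_\eps(\log N)}$ (quasi-polynomial), not ``poly-logarithmic''. Second, your proposed use of the Johnson bound in the amplification step to turn the single global function $F$ into a short list of candidate messages does not directly apply: there is no high-distance code structure on the direct-product table against which to run Johnson. The paper instead extracts the list by an iterative greedy argument (repeatedly apply direct-product soundness, peel off the explained part of the table, and re-apply), each extracted $f_i$ being a near-satisfying assignment to the underlying udPCP and hence yielding one $w_i\in\mc{L}$ via unique-decoding soundness. This is the mechanism that produces $L=\poly(1/\eps)$ candidates, not Johnson.
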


\paragraph{Proof Outline.} 
The proof of~\cref{thm:dPCP main} spans~\cref{sec:udpcp} to~\cref{sec: final dpcp}, and proceeds as follows:
\begin{enumerate}
    \item In~\cref{sec:udpcp} we give a uniquely-decodable PCP construction with constant alphabet size and quasi-linear length.
    \item In~\cref{sec:hdx_routing} we show how to transform the previous result to a uniquely-decodable PCP construction over the base graph of the high-dimensional expander $X$ from~\cref{thm: hdx construction}. This increases the alphabet size to be super-constant.
    \item In~\cref{sec:amp} we perform the soundness amplification, transforming the HDX-based uniquely-decodable PCP to a $2$-query dPCP with small soundness, but large alphabet size. The goal in the rest of the argument is to reduce the alphabet size of this construction.
    \item In~\cref{sec:dPCP_transfor} we give dPCP variants of a few standard transformations on PCPs.
    \item In~\cref{sec:dec_deg_red} we show our decoding degree reduction transformation. 
    \item In~\cref{sec:comp_thms} we prove our composition theorem. 
    \item In~\cref{sec: final dpcp} we show how to combine the above tools to prove~\cref{thm:dPCP main}.
\end{enumerate}

\section{A Uniquely Decodable PCP Construction}\label{sec:udpcp}
Our starting point is a quasi-linear size uniquely-decodable PCP construction that is implicit in prior works~\cite{BS,Dinur,Mie}. Since we are 
unaware of an explicit reference we give it in full detail here. We start with~\cite[Theorem 1]{Mie}, stated below.

\begin{theorem} \label{thm: pcpp starting}
There exists $\rho>0$ such that for all $\eta > 0$ the following holds. For every circuit
$\varphi: \Sigma_0^n \to \{0,1\}$ of size $N$ there is a poly-time constructible PCPP $\Psi_0 = (V_0, E_0, \mc{P}_0,\Sigma, \{\Phi_{0,e}\}_{e\in E_0})$ for ${\sf SAT}(\varphi)$, such that:
\begin{itemize}
    \item \textbf{Length.} The constraint hypergraph has $O_{|\Sigma_0|}(N \cdot \poly(\log (N)))$ vertices and $O_{|\Sigma_0|}(N \cdot \poly(\log (N)))$ hyperedges, each of size $k = O(1)$. The distribution $\mc{P}_0$ is uniform over the hyperedges. 
    \item \textbf{Alphabet size.} $|\Sigma|=O_{|\Sigma_0|}(1)$ .
    \item \textbf{Completeness.} The PCPP has perfect completeness.
    \item \textbf{Soundness.} If $w \in \Sigma_0^n$ is $\eta$-far from the language $\sat(\varphi)$, then any assignment $A$ extending $w$ satisfies at most $(1-\rho \cdot \eta)$-fraction of the constraints.
\end{itemize}
\end{theorem}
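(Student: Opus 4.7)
The plan is to instantiate Mie's construction~\cite{Mie}, which assembles the desired PCPP from two ingredients: the quasi-linear PCP of Ben-Sasson and Sudan~\cite{BS} (cast as a PCPP by exposing the input variables in the constraint graph), and Dinur's gap-amplification framework~\cite{Dinur} adapted to the PCPP setting so as to preserve the soundness-proportional-to-distance guarantee. First, I would convert $\sat(\varphi)$ into a $2$-CSP of quasi-linear size with constant alphabet via the BSS reduction, designating a distinguished subset $X \subseteq V_0$ of the CSP's variables as the input-exposed variables corresponding to the $n$ coordinates of $\Sigma_0^n$. This base PCPP satisfies a weak version of the target soundness with some constant $\rho_0 > 0$ in place of $\rho$, but may still have large (constant) query complexity and may need alphabet reduction before iteration.

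Second, I would apply Dinur's iterative amplification in the PCPP regime, in the spirit of~\cite{bghsv,Dinur}. Each iteration performs a gap-amplification step (raising the constraint graph to a suitable constant power via random walks on an auxiliary expander overlay, which multiplies the soundness gap by a constant factor at constant cost in size) followed by an alphabet-reduction step (composing each constraint with a constant-size inner PCPP, so as to reduce the alphabet back to $O(1)$ at constant cost). Crucially, both operations act on the internal vertices $V_0 \setminus X$ of the constraint graph, leaving the input-exposed variables $X$ untouched so that the resulting object remains a valid PCPP. After $O(\log \log N)$ iterations the per-unit-distance soundness gap reaches a fixed constant $\rho > 0$, the overall size remains $O_{|\Sigma_0|}(N \cdot \poly(\log N))$, and the alphabet stays $O_{|\Sigma_0|}(1)$. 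Uniformity of $\mc{P}_0$ over the hyperedges is inherited directly from the uniform distributions produced at each step of the iteration, and perfect completeness is preserved because each transformation only introduces constraints that are satisfiable whenever the previous ones are.

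The main obstacle I anticipate is verifying that Dinur's gap amplification, originally analyzed for CSPs with a constant soundness gap, preserves the linear-in-$\eta$ structure demanded by the PCPP definition: if the input $w$ is $\eta$-far from $\sat(\varphi)$ and the current CSP has value at most $1 - \rho_0 \eta$, then after one amplification step the value should drop to at most $1 - \rho_0' \eta$ for some constant $\rho_0' > \rho_0$, \emph{with the improvement factor independent of $\eta$}. This requires a version of the random-walk amplification lemma whose multiplicative constants do not degrade as the unsatisfied-fraction shrinks with $\eta$; the argument proceeds by counting inconsistent random walks edge-by-edge, so that the fraction of rejecting walks is a constant multiple of the fraction of rejecting edges regardless of whether the latter is a constant or scales with $\eta$. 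Once this uniform amplification lemma is in place, the hyperedge arity $k$ (inherited as the query complexity of the final inner PCPP used in composition) is a fixed $O(1)$, and the remaining parameter bookkeeping through the $O(\log \log N)$ iterations is routine.
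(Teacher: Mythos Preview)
The paper does not prove this theorem; it is stated as a direct citation of \cite[Theorem 1]{Mie} and used as a black box. Your proposal instead sketches the underlying construction from \cite{Mie}, which is a reasonable exercise but goes beyond what the paper does.

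Your outline is broadly faithful to the actual content of \cite{Mie}: start from the quasi-linear PCPP of \cite{BS}, then apply Dinur-style amplification and composition to boost the soundness gap. One correction worth noting: the Ben-Sasson--Sudan construction already delivers a PCPP with a fixed constant soundness gap (not merely $1 - \rho_0\eta$ for a tiny $\rho_0$); the role of the $O(\log\log N)$ rounds of Dinur amplification in \cite{Mie} is to bring the query complexity and alphabet down to $O(1)$ while restoring a constant gap after each composition, rather than to bootstrap from an arbitrarily small gap. Your concern about the amplification lemma preserving the linear-in-$\eta$ structure is well-placed, and it is indeed handled: Dinur's random-walk analysis lower-bounds the fraction of violated walks by a constant multiple of the fraction of violated edges, with the multiplicative constant independent of how small that fraction is, so proportionality to $\eta$ is maintained throughout. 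This is exactly the content of the assignment-tester version of gap amplification used in \cite{Dinur,Mie}.
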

The rest of this section is devoted to deriving from~\cref{thm: pcpp starting} our initial uniquely-decodable PCP construction, and the argument is organized as follows:
\begin{enumerate}
    \item In~\cref{sec:transform_pcpp_to_udpcp} we show how to transform a PCPP into a uniquely-decodable PCP.
    \item In~\cref{sec:play_with_size} we show how to change the number of vertices in a uniquely-decodable PCP without affecting its parameters too much. This transformation will be needed in~\cref{sec:hdx_routing} for technical reasons. 
    \item In~\cref{sec:udpcp_reg} we show how to make a uniquely-decodable PCP right-regular.
    \item In~\cref{sec:udPCP_final} we combine the transformations in prior sections and prove~\cref{lm: initial udPCP regular}, which is the uniquely-decodable PCP construction we use in subsequent sections.
\end{enumerate}

\subsection{Converting a PCPP into a udPCP}\label{sec:transform_pcpp_to_udpcp}
We now describe how to obtain a udPCP from the PCPP in ~\cref{thm: pcpp starting}. Our transformation is similar to the one in~\cite[Proposition 6.12]{DinurMeir}, but it has quasi-linear length rather than quadratic length at the expense of achieving a weaker soundness guarantee.

\begin{lemma}\label{lm: initial udPCP}
There exists $k = O(1)$ such that for all $\eta > 0$ and any circuit
$\varphi: \Sigma_0^n \to \{0,1\}$ of size $N$, there is a poly-time constructible $(k+1)$-left regular, left-canonical, projection bipartite udPCP 
\[
\mc{D} =  (A \cup B, E, \Sigma_A, \Sigma_B, \{\Phi_e\}_{e \in E}, \{D_t \}_{t \in [n]},  \{\mc{P}_t \}_{t \in [n]} ),
\]
satisfying:
\begin{itemize}
    \item \textbf{Length.} $O_{|\Sigma_0|}(N \cdot \poly(\log (N)))$.
    \item \textbf{Alphabet Size.} Both left and right alphabets are of size $O_{|\Sigma_0|}(1)$.
    \item \textbf{Complete Decoding Distribution.} Letting $\mc{Q}$ be the complete decoding distribution, the distribution $\mc{Q}(\circ,\cdot)$ is uniform over $A$.
    \item \textbf{Completeness.} The udPCP has perfect completeness.
    \item \textbf{Soundness.} $(\eta, \frac{\rho\cdot \eta}{2(k+1)})$-unique decoding soundness, where $\rho$ is the constant from \cref{thm: pcpp starting}.
\end{itemize}
\end{lemma}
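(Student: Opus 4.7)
The plan is to perform the standard reduction from a PCPP to a bipartite projection udPCP, and then verify the parameters. I would let the left side $A$ consist of one vertex $a_e$ per hyperedge $e \in E_0$ of the PCPP $\Psi_0$ from \Cref{thm: pcpp starting}, and identify the right side $B$ with the vertex set $V_0$ of $\Psi_0$ (so the input coordinates $[n]$ sit inside $B$). The left alphabet at $a_e$ is the set of $\sigma \in \Sigma^k$ satisfying $\Phi_{0,e}$, the right alphabet is $\Sigma$, and for every $e=(v_1,\ldots,v_k)$ and $i\in[k]$ I would add the edge $(a_e,v_i)$ carrying the projection constraint ``the $i$-th coordinate of the left symbol equals the right symbol.'' This automatically makes the construction projection, left-canonical (the $k$ right labels pin down any $\sigma \in \Sigma^k$ uniquely), and preserves both the constant alphabet sizes and the quasi-linear length inherited from $\Psi_0$. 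To meet the $(k+1)$-left regularity and arrange that the marginal $\mc{Q}(\circ,\cdot)$ is uniform on $A$, I would augment each left vertex with one additional edge (for example, a duplicated consistency check, or a dedicated edge from $a_e$ to an input vertex of $e$), and set the decoding distributions $\mc{P}_t$ so that $\mc{Q}(\circ,a_e)=1/|A|$ for every $e$. The decoder $D_t(a_e,\sigma)$ returns the coordinate of $\sigma$ at the position of $t$ in $e$ whenever $t \in e \cap [n]$.

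\textbf{Completeness and soundness.} Perfect completeness is immediate: given $w \in \sat(\varphi)$, the PCPP completeness yields an extension $A^\star:V_0\to\Sigma$, and then setting $T[v]=A^\star(v)$ for $v\in B$ and $T[a_e]=(A^\star(v))_{v\in e}$ for $a_e\in A$ satisfies every udPCP constraint and decodes to $w$ on every coordinate. For soundness, suppose $T$ satisfies a $1-\eps$ fraction of udPCP constraints with $\eps=\rho\eta/(2(k+1))$. Since each left vertex carries exactly $k+1$ edges, a counting argument shows that the fraction of $a_e$ with at least one violated incident edge is at most $(k+1)\eps=\rho\eta/2$; for every other $e$ the restriction $T|_{V_0}$ agrees with $T[a_e]\in\Sigma_A$ on all $k$ vertices of $e$ and therefore satisfies $\Phi_{0,e}$. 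Hence $T|_{V_0}$ is a PCPP assignment satisfying at least a $1-\rho\eta/2$ fraction of the constraints of $\Psi_0$, and the soundness clause of \Cref{thm: pcpp starting} in contrapositive form implies that $w':=T|_{[n]}$ is within distance $\eta/2$ of $\sat(\varphi)$; let $w\in\sat(\varphi)$ be a closest codeword. Since $D_t(a_e,T[a_e])$ agrees with $w'_t = T[v_t]$ whenever the edge $(a_e,v_t)$ is satisfied by $T$, and the near-uniformity of $\mc{Q}$ ensures the $\mc{Q}$-probability of hitting a violated edge is at most $\eta/2$, the triangle inequality gives $\Pr_{t,a\sim\mc{P}_t}[D_t(a,T[a])\neq w_t]\leq \eta$ as required.

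\textbf{Main obstacle.} The bookkeeping involved in engineering the $(k+1)$-th edge and the distributions $\mc{P}_t$ is where I expect most of the care to go: these must be chosen simultaneously so that $(k+1)$-left regularity, uniformity of $\mc{Q}(\circ,\cdot)$, and the quantitative link between the fraction of unsatisfied udPCP edges and the fraction of unsatisfied ``decoding'' edges under $\mc{Q}$ (needed in the last line of the soundness argument) all coexist. In contrast, the remaining properties -- projection, left-canonicality, alphabet size, and length -- follow transparently from the construction and from the parameters of $\Psi_0$.
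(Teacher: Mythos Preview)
Your overall architecture is correct and matches the paper's: bipartize $\Psi_0$ by making left vertices correspond to hyperedges, right vertices correspond to $V_0$, use projection-to-coordinate constraints, and read off soundness by passing back to the PCPP. Completeness, projection, left-canonicality, alphabet, and length all go through exactly as you say.

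The gap is in the one piece you flag as the obstacle, and your two suggested fixes for it do not work. A ``duplicated consistency check'' gives you the $(k+1)$-th edge but contributes nothing to decoding, so you still have no way to define $D_t$ on most left vertices. The alternative ``dedicated edge from $a_e$ to an input vertex of $e$'' presupposes that every hyperedge of $\Psi_0$ meets $[n]$, which is false in general: a PCPP has many proof-only constraints that touch no input coordinate. Consequently your decoder $D_t(a_e,\sigma)=\sigma_{\text{position of }t}$ is undefined on all such $a_e$, and there is no way to distribute the $\mc{P}_t$'s over $A$ so that $\mc{Q}(\circ,\cdot)$ is uniform while every $a_e$ lies in some $\supp(\mc{P}_t)$.

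The paper's resolution is simple but different from both of your suggestions: partition $E_0$ into $n$ equal parts $E_{0,1},\ldots,E_{0,n}$ (padding with trivial edges so $n\mid|E_0|$), label the left vertex coming from $e\in E_{0,t}$ as $(e,t)$, and add the $(k+1)$-th edge from $(e,t)$ to the input vertex $t\in B$ \emph{regardless of whether $t$ appears in $e$}. Correspondingly, the left alphabet of $(e,t)$ is enlarged to maps $F_{e,t}:\{v_1,\ldots,v_k\}\cup\{t\}\to\Sigma'$ (still constrained to satisfy $\Phi_{0,e}$ on the $e$-part), and the decoder is $D_t((e,t),F_{e,t})=F_{e,t}(t)$. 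With $\mc{P}_t$ uniform on $\{(e,t):e\in E_{0,t}\}$, uniformity of $\mc{Q}(\circ,\cdot)$ on $A$ is automatic from the equipartition, and your soundness calculation then goes through verbatim: if $(e,t)$ has all $k+1$ incident edges satisfied, both $\Phi_{0,e}$ holds under $A|_B$ and $F_{e,t}(t)=A|_B(t)$, so the decoding error is at most $(k+1)\eps+\eta/2\le\eta$.
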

\begin{proof}
Take $\Psi = (V_0, E_0, \Sigma', \{\Phi_{0,e}\}_{e\in E_0})$ to be the PCPP from \cref{thm: pcpp starting}. We assume without loss of generality that $|E_0| / n$ is an integer, otherwise we may add trivial edges to $\Psi_0$ at the cost of at most halving the constant $\rho$ in the soundness guarantee. We construct the udPCP $\mc{D}$ as follows.
\vspace{-1ex}
\paragraph{Vertices.} 
The vertex set, which we denote by $V$, consists of two parts $A, B$, where $B := V_0$; we recall that $[n]\subseteq V_0$ by~\cref{def:pcpp}. To construct $A$, we first partition the edges $E_0$ evenly into $n$ 
parts of equal size, which we denote by $\{E_{0,t}\}_{t \in [n]}$. For each $t \in [n]$ and $e \in E_{0, t}$ we add the vertex $(e, t)$ to $A$, so in total $|A| = |E_0|$. Writing an edge $e \in E_0$ as a $k$-tuple $e = (v_1,\ldots, v_k) \in B^k$, we also think of $(e, t)$ as $(v_1,\ldots, v_k, t)$. Throughout the proof, we will refer to each $v_i$ as a vertex in $e$.

\vspace{-1ex}
\paragraph{Edges.} For each $(e, t) \in A$ where $e = (v_1, \ldots, v_k)$, we add an edge from $(e, t)$ to $t$ and to each one of $\{v_1,\ldots,v_k\}$. Thus in total, the number of neighbors of $(e,t)$ is $k+1$.
\vspace{-1ex}
\paragraph{Alphabets.} We describe a constrained alphabet for each individual vertex. For each  $v \in B$ its alphabet is $\Sigma'$. For each $(e, t) = (v_1,\ldots,v_k,t) \in A$, the alphabet $\Sigma_{e,t}$ consists of all maps $F_{e,t}: \{v_1,\ldots,v_k\} \cup \{t\} \to \Sigma'$ such that the assignment to $e$ satisfies the constraint $\Phi_{0,e}$ from $\Psi_0$. If $v = t$ or $v$ is a vertex in $e$, we will write $F_{e,t}(v)$ to denote the value of the symbol corresponding to $v$.
\vspace{-1ex}
\paragraph{Constraints.} For each $(e,t) \in A$ and $v \in B$ such that $v$ is a vertex in $e$ or $v= t$, the constraint $\Phi_e$ checks for agreement.  That is, for $F_{e,t} \in \Sigma_{e,t}$  and $\sigma \in \Sigma_{v}$, the constraint is given by
\[
\Phi_{((e,t), v)}(F_{e,t}, \sigma) = 1 \iff F_{e,t}(v) = \sigma.
\]
\vspace{-5ex}
\paragraph{Decoding Distributions.} For each $t \in [n]$, the decoding distribution 
$\mc{P}_t$ chooses a uniform  
$(e,t) \in \{(e,t)~|~e\in E_{0,t}\}$.  One can check that choosing $t \in [n]$ uniformly, and $(e,t)$ according to the marginal of $\mc{P}_t$, we have $(e,t)$ is uniform in $A$. This is because we split $|E_0|$ into $n$ equal parts before appending the edges of each part with a unique $t \in [n]$. Hence the condition on the complete decoding distribution is satisfied. 

\vspace{-2ex}
\paragraph{Decoder.} The decoder is defined as $D_t((e,t), F_{e,t}) = F_{e,t}(t)$.

\skipi
We now move on to the analysis of of $\mc{D}$.  The length of our construction is $|A| + |B| = |V_0| + |E_0|$ and this is $N \cdot \polylogn$. It is also clear that the alphabet size is $O(1)$, that the constraint graph is $(k+1)$-left regular and bipartite, 
that the constraints are projections, and that the udPCP is left-canonical.

\vspace{-2ex}
\paragraph{Perfect Completeness.} Fix $w \in \mc{L}$; then by the completeness of $\Psi$, there is an assignment $A_0: V_0 \to \Sigma'$ which satisfies all of the constraints of $\Psi_0$ and is equal to $w$ on the vertices of $V_0$ associated with $[n]$. Define the assignment $F: A \to B$ by setting $F(v) = A_0(v)$  for each $v \in B$, and setting $F((e,t))$ to be the restriction of $A_0$ to the vertices in $(e,t)$ for each $(e, t) \in A$. It is easy to see that $F$ assigns a valid alphabet symbol to each vertex, that it satisfies all of the constraints in $\mc{D}$, and that for any $t$ and $(u,v) \in \supp(\mc{P}_t)$, we have $D_t(u, F(u)) = w_t$. 

\paragraph{Unique Decoding Soundness.} Fix $\eta > 0$, set $\eps = \frac{\rho \cdot \eta}{2k}$ and suppose $A: V \to \Sigma$ is an assignment to $\mc{D}$ that satisfies at least $(1-\eps)$-fraction of the constraints. We will show that there exists $w \in \mc{L}$ such that the decoding error with respect to this $w$ is at most $\eta$.

For each $(e, t) \in A$ define $p(e,t)= \Pr_{v \in (e,t)}[A(v) \neq A(e,t)(v)]$, and note that the assumption that $A$ satisfies at least  $(1-\eps)$-fraction can be written as
$\E_{(e,t) \in A}[p(e,t)] \leq \eps$. Note that if $p(e,t) > 0$ for some $(e,t)$, then it must be the case that $p(e,t) \geq \frac{1}{k+1}$. Applying Markov's inequality we conclude that
\begin{equation} \label{eq: left side consistency} 
\Pr_{(e,t) \in A}[p(e,t) = 0] \geq 1- (k+1) \eps.
\end{equation}

In words, for at least $(1- (k+1)\eps)$-fraction of $(e,t) \in A$, under the assignment $A$, there is perfect agreement with all $v$ appearing in $e$ and with $v=t$. Consider the restriction $A_0 = A|_{B}$ and view it as an assignment to the PCPP $\Psi_0$. Then
\begin{equation} \label{eq: sat old pcpp}
\Pr_{e \in E_0}[\Phi_{0,e}(A_0(e)) = 1] \geq \Pr_{(e,t) \in A}[p(e,t) = 0] \geq 1- (k+1)\eps \geq 1 - \frac{\rho \cdot \eta}{2},
\end{equation}
so $A_0$ satisfies at least $(1-\rho\eta/2)$-fraction of the constraints of $\Psi_0$.
Here, in the first transition we used the fact that for a uniformly random $(e,t) \in E$, the distribution of $e$ is uniform over $E_0$, and also that if $p(e,t) = 0$, then $A_0$ satisfies the constraint $\Phi_{0,e}$. 
Combining~\eqref{eq: sat old pcpp} with the soundness guarantee of~\cref{{thm: pcpp starting}} we get that the restriction of $A_0|_X$ is $\frac{\eta}{2}$-close to $\mc{L}$. Namely, there exists $w \in \mc{L}$ such that 
\begin{equation} \label{eq: consistency with w}
    \Pr_{t \in [n]}[A_0(t) \neq w_t] \leq \frac{\eta}{2}.
\end{equation}
The decoding error of $\Psi_0$ can now be bounded by:
\begin{align*}  
\Pr_{t \in [n], (e,t) \sim \mc{P}_t}[D_t((e,t), A(e,t)) \neq w_t] 
&= \Pr_{(e,t) \in E}[ A(e,t)(t) \neq w_t] \\
&\leq  \Pr_{(e,t) \in E}[ A(e,t)(t) \neq w_t \land A(e,t) = A_0|_{(e,t)}] \\
&+ \Pr_{(e,t) \in E}[A(e,t) \neq A_0|_{(e,t)}] \\
&\leq \frac{\eta}{2} + (k+1)\eps \\
&\leq \eta.
\end{align*}
By $A_0|_{(e,t)}$, we mean the restriction of $A_0$ to the vertices in $e$ and $t$. In the last transition we bound the first term using \eqref{eq: consistency with w} and use the fact that when choosing $(e,t) \in E$ the element $t$ from the tuple is uniformly random in $[n]$. We bound the second term by the probability that $p(e,t) > 0$, which is at most $(k+1)\eps$ by \eqref{eq: left side consistency}.
\end{proof}

\subsection{Enlarging the Constraint Graph}\label{sec:play_with_size}
Here, we describe how to enlarge the constraint graph of a bipartite, projection, left-canonical udPCP. When working with standard PCPs, one can simply create multiple disjoint copies of the constraint graph and be done. It is not hard to see that essentially all properties, including the soundness, of the PCP are preserved when copying the constraint graph. This transformation does not work for udPCPs, since it does not necessarily preserve unique decoding soundness. Indeed, there could be two completely different, fully satisfying assignments on each copy that decode to different members of the language. We circumvent this issue by using the left-canonical property.  

\begin{lemma} \label{lm: repeat A side}
    Suppose that 
    \[
    \mc{D} = (A \cup B , E,\Sigma_A, \Sigma_B, \{\Phi_{e}\}_{e \in E}, \{D_t \}_{t \in [n]}, \{\mc{P}_t \}_{t\in [n]})
    \]
    is a bipartite, $k$-left-regular, projection, left-canonical udPCP for the language $\mc{L} \subseteq \Sigma_0^n$ with $(\eta_1, \eta_2)$-unique decoding soundness. Suppose additionally that the marginal of the complete decoding distribution over the vertices is uniform in $A$. Then for all $C\in \mathbb{N}$, there is a $\poly(C, |A| + |B|)$-time algorithm which takes as input $\mc{D}$ and outputs a udPCP for $\mc{L}$,
    \[
    \mc{D}' = (A \times [C] \cup B , E', \Sigma_A, \Sigma_B, \{\Phi'_{e}\}_{e \in E'}, \{\mc{P}'_t \}, \{D_t \}_{t \in [n]})
    \]
      with the following properties:
    \begin{itemize}
        \item \textbf{Constraint graph.} The constraint graph has $C|A| + |B|$ vertices, and each $(a,i) \in A \times [C]$ is adjacent to the same vertices that $a$ is adjacent to in the original constraint graph.
        \item \textbf{Degrees.} The left degree and the left decoding degree of each vertex is preserved, while the right degree of each vertex is multiplied by $C$.

        \item \textbf{Soundness.} $\mc{D}'$ has $(\eta_1 + \eta_2, \frac{\eta_2}{2k})$-unique decoding soundness.
    \end{itemize}
    Furthermore, the completeness, decision complexity, decoding complexity, and alphabet sizes of $\mc{D}$ and $\mc{D}'$ are the same, and the marginal of the complete decoding distribution over the vertices is uniform over $A \times [C]$. If $\mc{D}$ has projection constraints then $\mc{D}'$ does as well and if $\mc{D}$ is left-canonical then $\mc{D}'$ is as well.
\end{lemma}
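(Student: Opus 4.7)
The plan is to replace each $a\in A$ by $C$ copies $(a,1),\ldots,(a,C)$, all sharing the same right-neighbors, alphabet, and constraints as $a$. Formally, take vertex set $(A\times[C])\cup B$, edge set $E'=\{((a,i),b):(a,b)\in E,\ i\in[C]\}$, put $\Phi'_{((a,i),b)}:=\Phi_{(a,b)}$, draw from $\mc{P}'_t$ by sampling $a\sim\mc{P}_t$ together with an independent uniform $i\in[C]$, and decode via $D_t((a,i),\sigma):=D_t(a,\sigma)$. Perfect completeness, the projection property, and left-canonicity transfer at once since constraints and alphabets are unchanged on each copy; the left degree and $\Gamma_{\dec}$ of each $(a,i)$ equal those of $a$, while each $b\in B$ has its right degree multiplied by $C$. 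Since the marginal of the complete decoding distribution of $\mc{D}$ on $A$ is uniform and $i$ is drawn independently and uniformly, the corresponding marginal of $\mc{D}'$ is uniform on $A\times[C]$.

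\textbf{Soundness.} Let $T'\colon V'\to\Sigma$ satisfy at least a $1-\eta_2/(2k)$ fraction of $E'$. I will lift it to an assignment $T=(T_A,T_B)$ of $\mc{D}$ that satisfies at least a $1-\eta_2/2$ fraction of $E$, and then bound how $T'$ drifts from this lift. Set $T_B:=T'|_B$, and for each $a\in A$ let $\sigma_a\in\Sigma_A$ be the unique symbol (if any) satisfying $\Phi_{(a,b)}(\sigma_a,T_B(b))=1$ for every $b\in\Gamma(a)$; uniqueness is exactly left-canonicity. Set $T_A(a):=\sigma_a$ if it exists, and arbitrary otherwise. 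Writing $q(a,i)$ for the fraction of $b\in\Gamma(a)$ with $\Phi_{(a,b)}(T'(a,i),T_B(b))\ne 1$, the $k$-left regularity of $\mc{D}'$ together with $(a,i)$ being uniform on $A\times[C]$ under the uniform edge distribution on $E'$ gives $\E_{(a,i)}q(a,i)\le\eta_2/(2k)$. Every nonzero $q(a,i)$ is $\ge 1/k$, so by Markov at most an $\eta_2/2$ fraction of $a$'s satisfy $\E_i q(a,i)\ge 1/k$; for every other $a$ some $i$ has $q(a,i)=0$, in which case $T'(a,i)$ witnesses $\sigma_a$. Hence $\sigma_a$ exists for at least $1-\eta_2/2$ of $a\in A$, on every such $a$ the lifted $T$ satisfies all $a$-edges of $\mc{D}$, and so $T$ satisfies at least a $1-\eta_2/2\ge 1-\eta_2$ fraction of $E$. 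The $(\eta_1,\eta_2)$-soundness of $\mc{D}$ now yields $w\in\mc{L}$ with $\Pr_{t,a\sim\mc{P}_t}[D_t(a,T_A(a))\ne w_t]\le\eta_1$. Since $D_t((a,i),T'(a,i))=D_t(a,T'(a,i))$, the decoding error of $T'$ against $w$ is at most $\eta_1+\Pr_{t,(a,i)\sim\mc{P}'_t}[T'(a,i)\ne T_A(a)]$. The latter probability splits into the event ``$\sigma_a$ does not exist'' (probability $\le\eta_2/2$, using that the $\mc{P}'$-marginal on $a$ equals the uniform $\mc{P}$-marginal) and the event ``$\sigma_a$ exists but $T'(a,i)\ne\sigma_a$''; by canonicity the second event forces $q(a,i)\ge 1/k$, so by Markov it has $\mc{P}'$-probability at most $k\cdot\eta_2/(2k)=\eta_2/2$. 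Summing yields the claimed $(\eta_1+\eta_2,\eta_2/(2k))$-soundness.

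\textbf{Main obstacle.} The only nontrivial point is that naive replication would let each copy of a left vertex support a different decoding, potentially creating many fully satisfying assignments decoding to different members of $\mc{L}$ and destroying unique decoding soundness. Left-canonicity is exactly what rules this out: given $T_B$, each $a$ admits at most one consistent label, so whenever any copy is fully consistent on $a$'s edges, every copy that decodes differently must already be caught by the constraints. The only numerical content is that the two Markov applications above --- one when lifting constraint satisfaction from $\mc{D}'$ to $\mc{D}$ and one when lifting decoding back to $\mc{D}'$ --- each incur loss $\eta_2/2$, which together consume the extra factor $1/(2k)$ built into the hypothesis.
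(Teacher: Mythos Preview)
Your proof is correct and follows essentially the same approach as the paper's proof: define the lifted left assignment $T_A(a)$ as the unique label (guaranteed by left-canonicity) consistent with $T_B$ on all of $a$'s edges whenever such a label exists, use Markov together with $k$-left-regularity to show that this lift satisfies enough constraints of $\mc{D}$ to invoke its soundness, and then bound the drift between $T'(a,i)$ and $T_A(a)$ by the union of ``$\sigma_a$ does not exist'' and ``$q(a,i)\ne 0$''. The organization of the two Markov steps differs cosmetically (you threshold $\E_i q(a,i)$ over $a$, while the paper first thresholds $p(a,i)$ over $(a,i)$ and then over $a$), but the numerics and the role of left-canonicity are identical.
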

\begin{proof}
      Construct the new udPCP
    \[
    \mc{D}' = (A \times [C] \cup B , E', \{\mc{P}'_t \}_{t\in [n]}, \Sigma_A, \Sigma_B, \{\Phi'_{e}\}_{e \in E'}, \{D'_t \}_{t \in [n]})
    \]
    as follows. The edges are $E' = \{((a,i),b) \in (A \times [C]) \times B \; | \; (a,b) \in E \}$. 
    The constraint on $((a,i), b)$ is the same as that on $(a,b)$ in the original udPCP, $\mc{D}$. The new decoding distribution $\mc{P}'_t$ is generated by choosing $a \sim \mc{P}_t$, choosing $i \in [C]$ uniformly at random, and outputting $(a,i)$. 
    The new decoder $D'_t$ on vertex $(a,i)$ is the same as the original decoder $D_t$ on vertex $a$. 
    
    It is clear that completeness is preserved, the alphabets are unchanged, the distribution of $(a,i) \in A \times [C]$ is uniform in the complete decoding distribution of $\mc{D}'$. It is also straightforward to see that $\mc{D}'$ has projection constraints if $\mc{D}$ has projection constraints and $\mc{D}'$ is left-canonical if $\mc{D}$ is left-canonical. Furthermore, the conditions regarding the constraint graph, and degrees are straightforward to check.
    
    We complete the proof by showing $\mc{D}$ has $(\eta_1 + \eta_2, \frac{\eta_2}{2k})$-unique decoding soundness. 
    Suppose $T_1: A \times [C] \to \Sigma_A$ and $T_2: B \to \Sigma_B$ are a $(1 - \eta')$-satisfying assignments to $\mc{D}'$ for $\eta':= \frac{\eta_2}{2k}$. For each $(a,i) \in A \times [C]$, let $p(a,i)$ denote the fraction of constraints involving $(a,i)$ that are not satisfied. Since each $(a,i)$ has degree at most $k$, we have that if $p(a,i) \neq 0$ then $p(a,i) \geq \frac{1}{k}$, and $\E_{(a,i) \in A \times [C]}[p(a,i)] \leq \eta'$. By Markov's inequality, we have that 
    \begin{equation} \label{eq: pai is 0}   
    \Pr_{(a,i) \in A \times [C]}[p(a,i) = 0] = \E_{a \in A}\left[\Pr_{i \in [C]}[p(a,i) = 0]\right]\geq 1 - k\eta'.
    \end{equation}
    By Markov's inequality again, we have that for at least $(1-k\eta')$-fraction of $a \in A$, $\Pr_{i \in [C]}[p(a,i) = 0] > 0$, and hence for $(1-k\eta')$-fraction of $a$, there exists an assignment in $\mc{D}$ to $a$ which satisfies all of the constraints involving $a$ conditioned on the right side being assigned $T_2$. We call such $a$ good. Let $T^*: A \to \Sigma_A$ be the table such that $T^*[a] = T_1[(a,i)]$ if there exists $i \in [C]$ such that $p(a,i) = 0$, and $T^*[a]$ is arbitrary otherwise. Note that since $\mc{D}$ is left-canonical, if $a$ is good then $T_1[(a,i)]$ is the same for all $i \in [C]$ such that $p(a,i) = 0$, so the assignment $T^*$ is well defined. 
    
    It follows that for the $T^*, T_2$ satisfy $(1-k\eta')$-fraction of the constraints in $\mc{D}$. Since $k \eta' \leq \eta_2$ and $\mc{D}$ satisfies $(\eta_1, \eta_2)$-unique decoding, we conclude that there exists $w \in \mc{L}$ such that
    \begin{equation} \label{eq: udpcp_size_manip}   
    \Pr_{t \in [n], a \sim \mc{P}_t}[D_t(a, T^*[a]) \neq w_t] \leq \eta_1.
     \end{equation}
    We now bound the decoding error of $T_1, T_2$ in $\mc{D}'$ as
    \begin{align*}
     &\Pr_{t \in [n], a \sim \mc{P}_t, i \in [C]}[D_t((a,i), T_2[(a,i)]) \neq w_t] \\
     &\leq \Pr_{t \in [n], a \sim \mc{P}_t, i \in [C]}[ T_2[(a,i)] \neq T^*[a]] +  \Pr_{t \in [n], a \sim \mc{P}_t, i \in [C]}[D_t(a, T^*[a]) \neq w_t] \\
     &\leq \Pr_{a \in A, i \in [C]}[\text{$a$ is not good} \lor p(a,i) \neq 0] + \eta_1 \\
     &\leq 2k\eta' + \eta_1 \\
     &= \eta_1 + \eta_2
    \end{align*}
In the second transition we are using the assumption that $a \in A$ is uniform in the complete decoding distribution of $\mc{D}$ and that if $a$ is good and $p(a,i) = 0$, then we must have $T^*[a] = T_2[(a,i)]$ by definition of $T^*$. In the third transition we are using~\eqref{eq: pai is 0},~\eqref{eq: udpcp_size_manip} and the fact that at least $(1-k\eta')$-fraction of $a \in A$ are good.
\end{proof}

\subsection{Making the udPCP Right-Regular}\label{sec:udpcp_reg}
The next step is to convert the construction in~\cref{lm: initial udPCP} into a bi-regular one. Towards this end we use the following transformation (similar transformations have been used in the context of standard PCPs in~\cite{MoshkovitzRaz,dh}). 
\begin{lemma} \label{lm: right degree reduction udpcp}
    Suppose $\mc{L}$ has a $k$-left-regular projection udPCP $\mc{D} = \left( A \cup B, E, \Sigma_A, \Sigma_B ,\{D_t \}_{t \in [n]}, \{\mc{P}_t \}_{t \in [n]} \right)$ with $(\eta, \eps)$-unique decoding soundness and minimum right degree $d_0$. Then for every integer $d' \leq d_0$, there is a polynomial time transformation which takes $\mc{D}$ and outputs a udPCP, $\mc{D}' = \left(A \cup B', E', \Sigma_A, \Sigma_B, \{D_t\}_{t \in [n]} , \{\mc{P}'_t \}_{t \in [n]} \right)$, for $\mc{L}$ that is $(d'k, d')$-regular and has $(\eta, \eps - O(d'^{-1/2}))$-unique decoding soundness. Furthermore $|B'| = |E| = k \cdot |A|$ and if $\mc{D}$ is left-canonical, then $\mc{D}'$ is as well.
\end{lemma}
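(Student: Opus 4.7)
The plan is to implement a classical expander-based right-degree reduction, adapted to the uniquely-decodable setting. I would keep the left side $A$, its alphabet, the decoding distributions $\{\mc{P}_t\}$, and the decoders $\{D_t\}$ completely unchanged (they only depend on $A$), and modify only the right side and the edge set. Concretely, set $B' := E$ and write $b_e$ for the copy of $b$ corresponding to the edge $e = (a,b)$; let $E_b := \{e' \in E : b \in e'\}$ and note $|E_b| = d_b \geq d_0 \geq d'$. For each $b \in B$, invoke~\cref{lm: poly time expander} to construct a $d'$-regular graph $H_b$ on the vertex set $E_b$ with second singular value $\lambda' = O(\sqrt{d'})$. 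The new edges are $E' := \{(a, b_{e'}) : b \in B,\ e' \in E_b,\ e = (a,b) \in N_{H_b}(e')\}$, and the constraint on $(a, b_{e'})$ is the original projection $\phi_e$ associated with $e = (a,b)$, now demanding $\phi_e(\sigma_a) = \tau_{b_{e'}}$.

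Most of the stated properties are immediate. Each $a \in A$ is adjacent to $d'$ copies of each of its $k$ original right-neighbors, so its left degree is $d'k$; each $b_{e'}$ has exactly $d'$ neighbors in $A$ (one per $H_b$-neighbor of $e'$), so right degree is $d'$; the graph is bipartite with projection constraints and $|B'| = |E| = k|A|$. Alphabets and decision/decoding complexities are unchanged. Perfect completeness follows by extending a satisfying $\mc{D}$-assignment $(T_A, T_B)$ via $T'_{B'}(b_e) := T_B(b)$. Left-canonicity is preserved, since for any $a \in A$ the $d'k$ new constraints incident to $a$ reduce to exactly the $k$ original $\mc{D}$-constraints (each duplicated $d'$ times), so at most one $\sigma_a$ can satisfy them. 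The alignment of $\{\mc{P}_t\}$ with the uniform distribution on $E'$ follows from the alignment in $\mc{D}$ and the $d'$-regularity of each $H_b$: sampling $a \sim \mc{P}_t$ and then a uniform neighbor of $a$ in $\mc{D}'$ is equivalent to first sampling $(a,b) \in E$ uniformly and then a uniform $e' \in N_{H_b}(e)$.

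The heart of the argument is soundness. Given an assignment $(T_A, T'_{B'})$ that fails at most $\eps'$ fraction of $\mc{D}'$-constraints, define $T_B : B \to \Sigma_B$ by letting $T_B(b)$ be the plurality value of $\{\phi_e(T_A(a)) : e = (a,b) \in E_b\}$, and let $\beta_b := |\{e \in E_b : \phi_e(T_A(a)) \neq T_B(b)\}|/d_b$, so that the $\mc{D}$-rejection rate of $(T_A, T_B)$ is $\eps_{\mc{D}} := \sum_b \beta_b d_b / |E|$. For each $b$, define $f, g : E_b \to \Sigma_B$ by $f(e) := \phi_e(T_A(a))$ (with $e=(a,b)$) and $g(e') := T'_{B'}(b_{e'})$. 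The number of unsatisfied $\mc{D}'$-edges incident to copies of $b$ is exactly the number of ordered $H_b$-edges $(e,e')$ with $f(e) \neq g(e')$, namely $d' d_b - \sum_\sigma |\{(e,e') \in H_b : f(e) = g(e') = \sigma\}|$. Applying~\cref{lm: expander mixing} to each pair $(f^{-1}(\sigma), g^{-1}(\sigma))$, bounding the main term using $\max_\sigma |f^{-1}(\sigma)| = (1-\beta_b)d_b$ and the error term using Cauchy--Schwarz $\sum_\sigma \sqrt{|f^{-1}(\sigma)||g^{-1}(\sigma)|} \leq d_b$, gives a lower bound of $d' \beta_b d_b - \lambda' d_b$. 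Summing over $b$, using $\sum_b d_b = |E|$, and dividing by $|E'| = d'|E|$ yields
\[
\eps' \;\geq\; \eps_{\mc{D}} - \lambda'/d' \;=\; \eps_{\mc{D}} - O(d'^{-1/2}).
\]
Hence if $\eps' \leq \eps - O(d'^{-1/2})$ then $\eps_{\mc{D}} \leq \eps$, and the $(\eta,\eps)$-soundness of $\mc{D}$ produces $w \in \mc{L}$ with decoding error at most $\eta$ on $T_A$. Since $\mc{D}'$ inherits its decoder and decoding distribution from $\mc{D}$, the same $w$ witnesses the desired $(\eta, \eps - O(d'^{-1/2}))$-soundness.

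The main obstacle I foresee is the expander-mixing step: applying~\cref{lm: expander mixing} once per label $\sigma \in \Sigma_B$ naively introduces a spurious $|\Sigma_B|$ factor in the accumulated error. The Cauchy--Schwarz bound $\sum_\sigma \sqrt{|f^{-1}(\sigma)||g^{-1}(\sigma)|} \leq \sqrt{\sum_\sigma |f^{-1}(\sigma)|}\sqrt{\sum_\sigma |g^{-1}(\sigma)|} = d_b$ eliminates this dependence and keeps the final soundness loss $O(d'^{-1/2})$ independent of the alphabet size, which is what the statement requires.
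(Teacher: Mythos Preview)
Your proposal is correct and follows essentially the same approach as the paper: replace each right vertex by a cloud of $\deg(b)$ copies connected to $A$ via a $d'$-regular expander, and use expander mixing plus Cauchy--Schwarz (to avoid the $|\Sigma_B|$ blowup) in the soundness analysis. The paper uses a bipartite expander $H_b=(A_b,B_b,E_b)$ with $A_b$ identified with $\Gamma(b)$ and a randomized right-assignment $T_2^*(b)=\sigma$ with probability $\mu(G_{b,\sigma})$, whereas you use a non-bipartite expander on $E_b$ and the plurality right-assignment, but these are cosmetic variations within the same framework.
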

\begin{proof}
Fix $d'$ as in the statement. By \cref{lm: poly time bip expander} we get that for every $N' \geq d'$ there is a ${\poly}(N')$-time algorithm that constructs a $d'$-regular bipartite expander graph with $N'$ vertices on each side and second singular value $O(d'^{-1/2})$, and we will use these graphs in our construction.
 
 The udPCP $\mc{D}'$ is constructed as follows. Recall that $(A \cup B, E)$ is the constraint graph of $\mc{D}$ and that for each vertex $b \in B$, $\Gamma(b)$ is the neighborhood of $b$ in this graph. For each $b \in B$ we construct a $d'$-regular bipartite expander graph $H_b = (A_b, B_b, E_b)$ with second singular value $O(d'^{-1/2})$, and we identify $A_b$ with $\Gamma(b)$.
 With these expander graphs in mind, the left side of $\mc{D}'$ is $A$, while the right side of $\mc{D}'$ is $\bigcup_{b \in B} B_b$. The edges, $E'$, are obtained as follows. For each $b \in B$, add the edges $E_b$ from the graph $H_b$, where the left side, $A_b$, is identified with $\Gamma(b)$ and the right side is $B_b$. Overall, the edges are $E' = \cup_{b \in B} E_b$. Note that choosing $(a,b) \in E$ uniformly at random and $b' \in B_b$ adjacent to $a$ uniformly at random, the edge $(a,b') \in E'$ is uniformly random. 
 
 The left and right alphabets of $\mc{D}'$ are still $\Sigma_A, \Sigma_B$ and the decoders of $\mc{D}'$ are still $\{D_t \}_{t \in [n]}$. For each $(a, b') \in E'$, if $b' \in B_b$, the constraint on $(a,b')$ is the same as the constraint on $(a,b)$ in $\mc{D}$. Finally the decoding distributions for $\mc{D}'$ are the same as in $\mc{D}$.
 
This completes the description of $\mc{D}'$, and we now move on to the analysis.
It is clear from the construction that $|B'|=\sum_{b\in B}|\Gamma(b)| = |E|=k|A|$. Also, $\mc{D'}$ is $(d'k, d')$-regular as each neighbor $b$ of $a\in A$ in $\mc{D}$ contributes $d'$ neighbors for $a$ in $\mc{D}'$. We move on to showing that the left-canonical property is preserved (if $\mc{D}$ satisfies it) as well as verifying completeness and soundness. Throughout, we write $(a', b') \sim H_b$ to denote a uniformly random edge from $H_b$.

\vspace{-1ex}
\paragraph{The left-canonical property is preserved:} suppose $\mc{D}$ is left-canonical, and consider any vertex $a \in A$, let $b_1, \ldots, b_k$ be neighbors of $a$ in $\mc{D}$, and let $b'_1, \ldots, b'_k$ be neighbors of $a$ in $\mc{D}'$, where $b'_i \in B_{b_i}$ for each $i \in [k]$. Fix any assignment to $b'_1, \ldots, b'_k$, and consider the same assignment to $b_1, \ldots, b_k$. Since $\mc{D}$ is left-canonical, there is at most one assignment to $a$ satisfying all of the constraints in $\mc{D}$, and since the constraints between $(a, b_i)$ in $\mc{D}$ and $(a, b'_i)$ in $\mc{D}'$ are the same, there is also at most one assignment to $a$ satisfying all of the constraints in $\mc{D}'$. Hence, $\mc{D}'$ is left-canonical as well.

\vspace{-1ex}
\paragraph{Perfect Completeness:} fix $w \in \mc{L}$; by the perfect completeness of $\mc{D}$, there are assignments $T_1:A \to \Sigma_A$ and $T_2: B \to \Sigma_B$ that satisfy all of the constraints in $\mc{D}$, and additionally for each $t\in [n]$, the decoder $D_t$ outputs $w_t$ with probability $1$. 
We take the assignments $T_1$ and $T_2'$ to $\mc{D}'$, where $T'_2: B' \to \Sigma_B$ is defined by $T'_2[b'] = T_2[b]$ if $b'\in B_b$. It is clear that $T_1$ and $T_2'$ satisfy all of the constraints in $\mc{D}'$ and that for all $t\in [n]$, the decoder $D_t'$ outputs $w_t$ with probability $1$.

\vspace{-1ex}
\paragraph{Unique Decoding Soundness:} fix assignments $T_1: A \to \Sigma_A$ and $T_2': B' \to \Sigma_B$ to $\mc{D}'$, and suppose that $T_1, T_2'$ satisfy at least $\left(1 - \eps'\right)$-fraction of the constraints, where $\eps':= \eps - O(d^{-1/2})$. For each $b \in B$ and $\sigma \in \Sigma_B$ define 
\[
X_{b,\sigma} = \{b' \in B_b \; | \; T'_2[b'] = \sigma \} \quad \text{and} \quad Y_{b, \sigma} = \{a' \in A_b \; | \; \Phi_{(a', b)}(T_1[a'], \sigma) = 1\}.
\]
In words, $X_{b, \sigma}$ is the set of vertices in $B_b$ that are assigned $\sigma$ under $T'_2$, and $Y_{b,\sigma}$ is the set of vertices in $a' \in \Gamma(b)$ such that the assignments $a \to T_1[a']$ and $b \to \sigma$ satisfy the constraint on $(a', b)$ in $\mc{D}$. It is clear that for each $b\in b$, the sets $\{X_{b,\sigma}\}_{\sigma \in \Sigma_B}$ form a partition of $B_b$. Additionally, since the constraints in $\mc{D}$ are all projections, for all $b\in B$ the sets $\{Y_{b,\sigma}\}_{\sigma \in \Sigma_B}$ form a partition of $A_b$. Define $F_{b,\sigma}: A_b \to [0,1]$, $G_{b,\sigma}: B_b \to \{ 0,1\}$ by
\[
G_{b, \sigma}(b') := \ind_{b' \in X_{b,\sigma}} \quad \text{and} \quad F_{b, \sigma}(a') := \ind_{a' \in Y_{b,\sigma}},
\]
and denote  $\mu(F_{b, \sigma}) = \E_{a \in A_b}[F_{b, \sigma}(a')]$ and $\mu(G_{b, \sigma})= \E_{b' \in B_b}[G_{b, \sigma}(b')]$. 
Consider the randomized assignment $T^*_2: B \to \Sigma_B$ obtained by setting $T^*_2(b) = \sigma$ with probability $\mu(G_{b, \sigma})$. The expected fraction of constraints in $\mc{D}$ satisfied by $T_1$ and $T^*_2$ is
\begin{equation} \label{eq: prior expected ud error} 
\E_{b \sim B}\left[\sum_{\sigma \in \Sigma_B} \mu(G_{b, \sigma}) \cdot \mu(F_{b, \sigma})]\right], 
\end{equation}
where $b \sim B$ is sampled according to the stationary distribution over $b \in B$ in the constraint graph of $\mc{D}$.
Next, the fraction of constraints satisfied by $T_1$ and $T_2$ in $\mc{D}'$ is
\begin{equation} \label{eq: sum over sigma_b in ud}  
\E_{b \sim B}\left[ \sum_{\sigma \in \Sigma_B} \E_{(a',b') \sim H_b}[F_{b,\sigma}(a') \cdot G_{b,\sigma}(b') ] \right] \geq 1 - \eps',
\end{equation}
where the inequality is because we assumed that $T_1, T_2$ satisfy at least $(1-\eps')$-fraction of constraints in $\mc{D}'$.

Now fix a $b \in B$ and a $\sigma \in \Sigma_B$. We will attempt to relate the inner expectation from \eqref{eq: sum over sigma_b in ud} to a product in the summation of \eqref{eq: prior expected ud error}. By~\cref{lm: expander mixing} we get
\[
\E_{(a',b') \sim H_b}[F_{b,\sigma}(a') \cdot G_{b,\sigma}(b') ] \leq \mu(F_{b,\sigma}) \cdot \mu(G_{b,\sigma}) + O(d'^{-1/2}) \cdot \sqrt{ \mu(F_{b,\sigma}) \cdot \mu(G_{b,\sigma})}.
\]
Keeping $b \in B$ fixed and summing this inequality over all $\sigma \in \Sigma_B$, we get
\begin{align*}  
 \sum_{\sigma \in \Sigma_B} \E_{(a',b') \sim H_b}[F_{b,\sigma}(a') \cdot G_{b,\sigma}(b') ] &\leq \sum_{\sigma \in \Sigma_B} \mu(F_{b,\sigma}) \cdot \mu(G_{b,\sigma}) + O(d'^{-1/2}) \cdot \sqrt{ \mu(F_{b,\sigma}) \cdot \mu(G_{b,\sigma})} \\
 &\leq \sum_{\sigma \in \Sigma_B} \mu(F_{b,\sigma}) \cdot \mu(G_{b,\sigma}) + O(d'^{-1/2}),
\end{align*}
where the second inequality uses Cauchy-Schwarz and the fact that $\sum_{\sigma \in \Sigma_B} \mu(G_{b,\sigma}) \leq 1$ and $\sum_{\sigma \in \Sigma_B} \mu(F_{b,\sigma}) = 1$. Thus, we have
\begin{align*}
 \E_{b \sim B}\left[\sum_{\sigma \in \Sigma_B} \mu(F_{b, \sigma}) \cdot \mu(G_{b, \sigma})]\right] \geq  \E_{b \sim B}\left[ \sum_{\sigma \in \Sigma_B} \E_{(a',b') \sim H_b}[F_{b,\sigma}(a') \cdot G_{b,\sigma}(b') ] \right] - O(d'^{-1/2}) \geq 1- \eps' - O(d'^{-1/2}),
\end{align*}
which is at least $1- \eps$ by the choice of $\eps'$.
It follows that there exists a choice for $T_2^{*}: B \to \Sigma_B$ such that $T_1$ and $T_2^{*}$ satisfy at least $(1-\eps)$-fraction of constraints in $\mc{D}$. By the $(\eta,\eps)$-unique decoding of $\mc{D}$, there exists $w \in \mc{L}$ such that
\[
\Pr_{t \in [n], a \sim \mc{P}_t}[D_t(a, T_1[a]) \neq w_t] \leq \eta.
\]
Notice that this is exactly the same as the decoding error of $\mc{D'}$ with assignments $T_1, T_2$. This is because for all $t$,  $\mc{D}_t$ and $\mc{D}'_t$ are the same function and the marginal distribution over $A$ in $\mc{P}_t$ and $\mc{P}'_t$ is the same. Thus, we have
\[
\Pr_{t \in [n], a' \sim \mc{P}_t'}[D'_t(a', T_1[a']) \neq w_t]  = \Pr_{t \in [n], a  \sim \mc{P}_t}[D_t(a, T_1[a]) \neq w_t] \leq \eta.
\]

It follows that $\mc{D}'$ satisfies $(\eta, \eps-O(d^{-1/2}))$-unique decoding soundness.
\end{proof}

\subsection{A Bi-regular Uniquely-decodable PCP of Arbitrary Size}\label{sec:udPCP_final}

In this section we put all of the previous pieces together to obtain a final udPCP that will go into the routing transformation of the next round. We highlight the following important properties:
\begin{itemize}
    \item The constraint graph is $d'$-regular and bipartite.
    \item The number of vertices in the constraint graph can be multiplied by any arbitrary $C \in \mathbb{N}$.
    \item The marginal of the complete decoding distribution over the vertices is uniform.
\end{itemize}

We remark that the last item above is in contrast to our original udPCP, from \cref{lm: initial udPCP}, whose complete decoding distribution has a marginal which is uniform over only a subset of the vertices.

\begin{lemma} \label{lm: initial udPCP regular}
   There is $\rho'>0$ such that for every $\eta > 0$ there is a sufficiently large $d' \in \mathbb{N}$ such that the following holds. For any alphabet $\Sigma_0$, there exists a function $s(N) = O_{|\Sigma_0|,\eta}\left(N \cdot \poly(\log(N))\right)$ such that for any $C \in \mathbb{N}$ and any size-$N$ circuit $\varphi: \Sigma_0^n \to \{0,1\}$, the language $\sat(\varphi)$ has a
   udPCP which satisfies the following properties:
    \begin{itemize}
        \item \textbf{Constraint Graph.} The constraint graph is a $d'$-regular bipartite graph.
        \item \textbf{Length.} $C \cdot s(N)$.
        \item \textbf{Alphabet Size.} $O_{|\Sigma_0|}(1)$.
        \item \textbf{Degrees.} The left and right degree are both $d'$ and the decoding degree is $1$.
        \item \textbf{Complete Decoding Distribution.} The marginal of the complete decoding distribution over the vertices is uniform.
        \item \textbf{Completeness.} The udPCP has perfect completeness.
        \item \textbf{Soundness.} The udPCP has $\left(\eta, \rho' \cdot \eta^5  \right)$-unique decoding soundness.
    \end{itemize}
\end{lemma}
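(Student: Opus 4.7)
The plan is to obtain the desired bi-regular udPCP by composing the three transformations developed earlier in this section. Fix $\eta>0$ and an arbitrary $C\in\mathbb{N}$, let $k=O(1)$ be the hyperedge-size constant from~\cref{thm: pcpp starting}, and fix (with hidden constants to be chosen at the end of the calculation) parameters $\eta_0=\Theta(\eta)$, enlargement parameter $C_1=\Theta(1/\eta^2)$, and target right-degree $d^*=\Theta(1/\eta^2)$. I will apply four transformations in sequence. First,~\cref{lm: initial udPCP} with proximity parameter $\eta_0$ produces a $(k+1)$-left-regular, left-canonical, projection bipartite udPCP $\mc{D}_1$ of length $O_{|\Sigma_0|}(N\polylogn)$ and constant alphabet, with perfect completeness, $(\eta_0,\rho\eta_0/(2(k+1)))$-unique decoding soundness, decoding degree $1$, and uniform marginal of the complete decoding distribution over $A_1$. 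Second,~\cref{lm: repeat A side} with multiplier $C_1\cdot C$ produces $\mc{D}_2$ by scaling $|A|$ and the right-degree by $C_1\cdot C$; in particular the minimum right-degree then exceeds $d^*$, and every other relevant property is preserved. Third,~\cref{lm: right degree reduction udpcp} with target right-degree $d^*$ produces a $((k+1)d^*,d^*)$-bi-regular udPCP $\mc{D}_3$ with $(k+1)C_1 C|A_1|$ right-vertices. Fourth,~\cref{lm: repeat A side} is applied once more with multiplier $k+1$ to produce $\mc{D}_4$: this multiplies $|A|$ by $k+1$ (so $|A_4|=(k+1)C_1 C|A_1|=|B_3|=|B_4|$) and scales the right-degree up from $d^*$ to $(k+1)d^*$, equalizing it with the left-degree.

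The resulting udPCP $\mc{D}_4$ is $d'$-regular bipartite with common degree $d':=(k+1)d^*=\Theta(1/\eta^2)$ and total length $2(k+1)C_1 C|A_1|=C\cdot s(N)$, where $s(N)=2(k+1)C_1|A_1|=O_{|\Sigma_0|,\eta}(N\polylogn)$. The ancillary properties follow by direct inspection: the alphabet stays $O_{|\Sigma_0|}(1)$ since none of the three transformations enlarges it; the decoding degree remains $1$ throughout because~\cref{lm: repeat A side} explicitly preserves it and~\cref{lm: right degree reduction udpcp} does not alter the decoding distributions; the marginal of the complete decoding distribution is uniform over the current left side at every step, by the corresponding clauses of~\cref{lm: repeat A side} and the fact that~\cref{lm: right degree reduction udpcp} leaves $\{\mc{P}_t\}$ unchanged; and perfect completeness is preserved by all three lemmas.

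For the soundness, I track parameters through the four steps using the update rule $(\eta_1,\eta_2)\mapsto(\eta_1+\eta_2,\eta_2/(2k_{\mathrm{left}}))$ for~\cref{lm: repeat A side} (where $k_{\mathrm{left}}$ is the current left-degree) and $(\eta,\eps)\mapsto(\eta,\eps-O(d^{*-1/2}))$ for~\cref{lm: right degree reduction udpcp}. A short calculation gives final soundness $(O(\eta_0),\Omega(\eta_0)/d^*)$, which upon substituting $\eta_0=\Theta(\eta)$ and $d^*=\Theta(1/\eta^2)$ becomes $(O(\eta),\Omega(\eta^3))$. Choosing the hidden constant in $\eta_0$ so that the final decoding error is at most $\eta$, and choosing the universal constant $\rho'>0$ small enough that $\rho'\eta^5\leq\Omega(\eta^3)$ for all $\eta\in(0,1]$, yields the claimed $(\eta,\rho'\eta^5)$-unique decoding soundness.

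The main obstacle I anticipate is Step~4: because~\cref{lm: right degree reduction udpcp} outputs an asymmetric $((k+1)d^*,d^*)$-bi-regular graph rather than a graph with equal left and right degrees, one has to re-invoke~\cref{lm: repeat A side} with precisely the multiplier $k+1$ to simultaneously balance the two sides of the bipartition and scale the right-degree up to match the left-degree. The cost is the extra $1/(2(k+1)d^*)$ factor in the satisfiability threshold, and with $d^*=\Theta(1/\eta^2)$ this is exactly what degrades the linear-in-$\eta$ satisfiability down to $\Theta(\eta^3)$. The slack in the $\eta^5$ bound in the statement is what absorbs this degradation without requiring tight bookkeeping of constants.
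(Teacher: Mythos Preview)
Your four-step plan correctly produces a $d'$-regular bipartite udPCP with the right length, alphabet, decoding degree, completeness, and soundness (indeed your sequence of transformations differs slightly from the paper's but yields the same structure, and your soundness bookkeeping gives $\Theta(\eta^3)$, comfortably within the stated $\rho'\eta^5$). However, there is a genuine gap in the \emph{complete decoding distribution} item. The statement requires the marginal over \emph{all} vertices to be uniform, whereas your $\mc{D}_4$ only has the marginal uniform over the \emph{left} side $A_4$: the decoding distributions $\{\mc{P}_t\}$ are supported entirely on $A_4$, so a sample from $\mc{Q}(\circ,\cdot)$ never lands in $B_4$. You yourself write ``uniform over the current left side,'' so this is not a slip in the analysis but an actual missing piece of the construction.

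The paper closes this gap with an additional step after reaching the regular stage. Since $|A'|=|B'|$, one picks an arbitrary perfect matching $\pi\colon A'\to B'$ and adds $R$ parallel copies of each matching edge (so the graph becomes $2R$-regular). The right-side alphabet is augmented by one extra $\Sigma_0$ symbol, intended to hold the decoded value of the unique $t$ that $\pi^{-1}(b)$ is responsible for. The constraint on a matching edge $(a,\pi(a))$ additionally checks that this $\Sigma_0$ symbol equals $D_{t(a)}(a,\cdot)$ applied to $a$'s label. The decoder on a right vertex $b$ simply outputs the stored $\Sigma_0$ symbol, and the decoding distribution $\mc{P}''_t$ now samples $a\sim\mc{P}'_t$ and outputs $a$ or $\pi(a)$ with probability $1/2$ each. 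Because $\pi$ is a bijection and $a$ was already uniform in $A'$, the resulting marginal is uniform over $A'\cup B'$. The soundness analysis splits the constraints into the old half and the matching half; the matching half ensures that the right-side decoder agrees with the left-side decoder up to the fraction of violated matching constraints, so the decoding error only grows by an additive $O(\eta)$. The satisfiability threshold halves, which is absorbed by the $\eta^5$ slack.
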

\begin{proof}
Fix $\eta > 0$ and take the bipartite, $(k+1)$-left regular, projection, left-canonical udPCP $\mc{D}$, from \cref{lm: initial udPCP} with $\left(\eta, \frac{\rho \cdot \eta}{2(k+1)}\right)$-unique decoding soundness. Let $(A \cup B, E)$ be the constraint graph of this  udPCP, set $d' = \Theta(\eta^{-2})$, and fix an arbitrary $C \in \mathbb{N}$. Also set $s(N) = 2d'(k+1)|A| = 
O_{|\Sigma_0|,\eta}\left(N \cdot \poly(\log(N))\right)$. Let $\{D_t\}_{t \in [n]}$ and $\{ \mc{P}_t\}_{t \in [n]}$ be the decoders and decoding distributions respectively of $\mc{D}$.

We start by repeating every edge in $\mc{D}$ $d'$-times so that the minimum degree on the right hand side is $d'$. Now apply \cref{lm: repeat A side} with repetition parameter $C$. At this point, we get a $d'(k+1)$-left regular udPCP which has $C|A|$ vertices on the left, $|B|$ vertices on the right, and $(\eta + \frac{\rho \eta}{2(k+1)}, \frac{\rho \eta}{4(k+1)^2})$-unique decoding soundness.

Next, we apply \cref{lm: right degree reduction udpcp} with degree parameter $d' = O(\eta^{-2})$ chosen sufficiently large so that $O(d'^{-1/2})$ is sufficiently small relative to $\frac{\rho \eta}{4(k+1)^2}$. Let $R = d'^2(k+1)$. The resulting udPCP's constraint graph has $C|A|$ vertices on the left, $C|A|d'(k+1)$ vertices on the right, is $(R, d')$-bi-regular, and has  $(\eta + \frac{\rho \eta}{2(k+1)}, \frac{\rho \eta}{5(k+1)^2})$-unique decoding soundness.

We then apply \cref{lm: repeat A side} one more time with repetition parameter $d'(k+1)$ to obtain an $R$-regular udPCP. Since the constraint graph is fully regular, the number of vertices is just twice the number on either side, so the length of the resulting udPCP is $2Cd'(k+1)|A| = C \cdot s(N)$. Let $\mc{D}'$ be the resulting udPCP and $G' = (A' \cup B', E')$ be its constraint graph. Note that $G'$ is $R$-regular and $|A'| = |B'|$.
Since $C$ was arbitrary, we can indeed obtain length $C \cdot s(N)$ for any $C\in\mathbb{N}$ for $\mc{D}'$. Since $\rho$ and $k$ are constants and $d'$ is large enough, one can check through \cref{lm: repeat A side} that $\mc{D}'$ has $(1.1\eta, \frac{\rho \eta}{5(k+1)^2 R})$-unique decoding soundness. Here we use the fact that
\[
\eta + \frac{\rho \eta}{2(k+1)} + \frac{\rho \eta}{5(k+1)^2} \leq 1.1\eta.
\]
It is also straightforward to verify that perfect completeness is preserved from $\mc{D}$, the alphabets are the same as $\mc{D}$, that the left decoding degree is still $1$ as in $\mc{D}$, and that the marginal of the complete decoding distribution over the vertices is
uniform over $A$, the left side of the constraint graph.

Now we will further modify $\mc{D}'$, to obtain a new udPCP, $\mc{D}''$ whose parts are as follows. In the below, recall that each $a\in A$ has decoding degree $1$, so we denote by $t(a) \in [n]$ its only decoding neighbor.

\begin{itemize}
    \item \textbf{Constraint Graph.}  Starting with $G'$, the constraint graph of $\mc{D}'$, choose an arbitrary matching $\pi: A' \to B'$, meaning the edges of the matching are $\{(a, \pi(a)) \}_{a \in A'}$. Add $R$ copies of this matching to the graph (meaning between each $a$ and $\pi(a)$, we add $R$ parallel edges). The resulting $2R$-regular graph, $G'' = (A' \cup B', E'')$, is the constraint graph of $\mc{D}''$.
    \item \textbf{Alphabet.} The alphabet for each vertex in $A'$ is the same as in $\mc{D}'$.For each right vertex $b \in B'$, we require its alphabet to additionally include one $\Sigma_0$ symbol. It is intended that this symbol holds the $t(\pi^{-1}(b))$-th index of some satisfying assignment.

\item \textbf{Decoding Distribution.} For each $t \in [n]$, the  decoding distribution $\mc{P}''_t$ now samples $a \sim \mc{P}_t$, and then outputs one of $a$ or $\pi(a)$ uniformly at random. 
    \item \textbf{Decoder.} For each $a \in A'$ and symbol $\sigma$ for $a$, the decoder $D''_t(a, \sigma)$ is unchanged and outputs $D_t(a, \sigma)$. For $b = \pi(a) \in B'$, the decoder outputs the $\Sigma_0$ symbol held in $b$'s symbol (we note that it supposedly corresponds to what the decoder would have output if given the vertex $t(\pi^{-1}(b))=t(a)$ and its label).
 \item \textbf{Constraints.} For each $(a,b) \in E''$, if $b \neq \pi(a)$, then 
 the constraint on $(a,b)$ checks the constraint from $\mc{D}'$. Specifically, given labels $\sigma_a$ for $a$ and $\sigma_b$ for $b$ is as follows. Recall that $\sigma_b$ contains a label for $b$ in the udPCP $\mc{D}'$. Let $\sigma'_b$ be this symbol. The constraint checks if $\sigma_a, \sigma'_b$ satisfy the constraint on $(a,b)$ in $\mc{D}'$. If $b = \pi(a)$, then the constraint checks the previous condition and in addition checks that the symbols held by $a$ and $\pi(a)$ lead to the same decoding under $D'_t$. That is, given symbols $\sigma, \sigma'$, the constraint additionally checks if $D'_t(a, \sigma) = D'_t(b, \sigma')$.
\end{itemize}
We now verify that the guarantees of the lemma are still satisfied after the above modification. It is clear that the length is unchanged, the alphabet size only increases by a constant factor, the constraint graph is  $2R$-regular, every vertex has decoding degree $1$, and perfect completeness is preserved. Regarding the new decoding distribution, it is clear that choosing $t \in [n]$ and sampling $v \sim \mc{P}''_t$, we have that $v$ is uniformly random in $A'' \cup B''$. This is because, choosing $t \in [n]$ and $a \sim \mc{P}'_t$ we have that $a$ is uniformly random in the left side. Since $\pi$ is a matching, if we then output $a$ or $\pi(a)$ uniformly at random, the resulting vertex is uniform in the whole graph.

Finally, we show that $\mc{D}''$ satisfies $(1.11 \eta, \frac{\rho \eta}{  10(k+1)^2 R})$ unique decoding soundness. Suppose $T$ is an assignment to $A' \cup B'$ satisfying at least $(1 - \frac{\rho \eta}{10(k+1)^2 R})$-fraction of the constraints in $\mc{D}'$. Abusing notation, we will also think of $T$ as an assignment to $\mc{D}$ by removing the $\Sigma_0$ symbol attached to each $T(b)$.

Let us split the constraints into $E_0$, consisting of the constraints in $\mc{D}$, and $E_1$ consisting of the new constraints added from the matching. Since each of $E_0, E_1$ consists of half of the overall constraints, we get that the fraction of satisfied constraints in both $E_0, E_1$ is at least $\left(1 -  \frac{\rho \eta}{5(k+1)^2 R}\right)$-fraction of the constraints there, and hence applying the unique-decoding soundness of $\mc{D}'$, we get that there exists a $w\in\sat(\varphi)$ such that
\begin{equation} \label{eq: non bip 1}   
\Pr_{t \in [n], a \sim \mc{P}'_t}[D'_t(a, T[a]) \neq w_t] \leq 1.1\eta.
\end{equation}
On the other hand the decoding error of $T$ in $\mc{D}''$ relative to $w$ can be bounded by
\begin{equation}
 \begin{split}   
&\Pr_{t \in [n], a \sim \mc{P}'_t}[D''_t(a, T[a]) \neq w_t \lor D''_t(\pi(a), T[\pi(a)]) \neq w_t] \\
&\leq \Pr_{t \in [n], a \sim \mc{P}'_t}[D'_t(a, T[a]) \neq w_t]  + \Pr_{t \in [n], a \sim \mc{P}'_t}[D'_t(\pi(a), T[\pi(a)]) \neq D'_t(a, T[a])]  \\
&\leq  1.1\eta + \frac{\rho \eta}{5(k+1)^2 R} \\
&\leq 1.11 \eta,
\end{split}
\end{equation}
where in the third transition we bound the first term using \eqref{eq: non bip 1} and the second term using the fact that it is at most the fraction of unsatisfied constraints in $E_1$. This shows that $\mc{D}''$ has $(1.11 \eta, \frac{\rho \eta}{  10(k+1)^2 R})$-unique decoding soundness which is as desired because $R = O(\eta^{-2})$.
\end{proof}
\section{Routing Onto an HDX}\label{sec:hdx_routing}
In this section, we show how to transform 
the udPCP construction from~\cref{lm: initial udPCP regular} into a udPCP whose constraint graph is the base layer of an HDX from~\cref{thm: hdx construction}, and for that we use the fault-tolerant routing protocol of~\cite{bmv}. 
The main difference is that we analyze unique-decoding soundness, whereas~\cite{bmv} analyze the standard notion of soundness.

\begin{theorem}\label{thm: udpcp graph to hdx}
For every sufficiently small $\eta > 0$, any $\delta \in (0,1)$, sufficiently large  $C'$, sufficiently large $k\in\mathbb{N}$ and sufficiently large $d\in \mathbb{N}$ the following holds. Any circuit $\varphi: \Sigma_0^n \to \{0,1\}$ of size $N$ has a udPCP $\mc{D}$ which satisfies the following properties:
\begin{itemize}
    \item \textbf{Constraint graph.} Let $X$ be the complex constructed from the algorithm in \cref{thm: hdx construction} where the direct product soundness, exponent, direct-product dimension, target number of vertices, and prime parameters are set to $\delta$, $C'$, $k$, $d$, $N \cdot \poly(\log N)$, $\Theta(\log^{C'}(N))$ respectively. The constraint graph of $\mc{D}$ is $(X(1), X(2))$, where the multiplicity of each edge $(u,v)$ is proportional to the weight of $(u,v)$ in stationary distribution over $(X(1), X(2))$. The udPCP has length $|X(1)| = O_{|\Sigma_0|, \eta, \delta}(N  \poly(\log(N))$.
    \item \textbf{Degrees.} Each vertex in the constraint graph has degree $\poly_{\delta}(\log N)$ and decoding degree $\poly_{\delta}(\log N)$.
    \item \textbf{Alphabet Size.} $O_{|\Sigma_0|}\left(2^{\poly_{\eta, \delta}(\log(N ))}\right)$. 
    \item \textbf{Decision Complexity.} $O_{ |\Sigma_0|}(\poly_{\eta,\delta}(\log(N )))$.
    \item \textbf{Decoding Distribution.} The marginal of the complete decoding distribution over the vertices is the stationary distribution of $(X(1), X(2))$ over $X(1)$.
    \item \textbf{Decoding Complexity.} $O_{|\Sigma_0|}(\poly_{\eta, \delta}(\log N ))$. 
    \item \textbf{Unique-decoding Soundness.} The udPCP has $(\eta, \eta^2/6)$-unique decoding soundness.
\end{itemize}
\end{theorem}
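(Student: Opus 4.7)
The plan is to adapt the fault-tolerant link-to-link routing protocol of Bafna--Minzer--Vyas--Yun so that it preserves unique-decoding soundness, not merely standard soundness. Invoke \cref{lm: initial udPCP regular} with proximity parameter $\eta_0=\Theta(\eta)$ to obtain a $d'$-regular bipartite udPCP $\mc{D}_0$ on $A'\cup B'$ with decoding degree $1$, constant alphabet $\Sigma'$, and replication parameter $C$ chosen so that $|A'|+|B'|$ evenly divides $|X(1)|$. Fix a balanced map $\phi\colon X(1)\to A'\cup B'$ so that each original vertex has exactly $|X(1)|/(|A'|+|B'|)$ preimages. For every original constraint $e=(a,b)\in E_0$ and every pair of preimages $(x,y)$ of its endpoints, use the link expansion of $X$ (Property 4 of \cref{thm: hdx construction}) to choose a short routing path between $x$ and $y$ inside $X$, exactly as in BMV.

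\textbf{Alphabet, constraints, and decoder.} At each $v\in X(1)$, the alphabet holds (i) a proposed label $\sigma_v\in\Sigma'$ for $\phi(v)$, and (ii) for each routing path passing through $v$, claimed endpoint symbols of the associated original constraint. Because each HDX vertex participates in $\poly_{\eta,\delta}(\log N)$ routes, the alphabet size is $2^{\poly_{\eta,\delta}(\log N)}$ and the decision/decoding complexities are $\poly_{\eta,\delta}(\log N)$. The constraint on an HDX edge $(u,v)\in X(2)$ checks (i) local consistency of all routing certificates shared by $u$ and $v$, (ii) agreement of endpoint certificates with $\sigma$-labels at route endpoints, and (iii) at the endpoint of a routing path for $e$, that the two claimed labels satisfy $\Phi_e$. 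Since $\mc{D}_0$ has decoding degree $1$, every coordinate $t\in[n]$ has a unique decoder vertex $a(t)$; set $\mc{P}_t$ to sample $v$ uniformly from $\phi^{-1}(a(t))$ and $D'_t(v,\cdot):=D_t(a(t),\sigma_v)$. Uniformity of $\phi$ together with the uniform marginal in $\mc{D}_0$ ensure that the marginal of the complete decoding distribution over $X(1)$ is exactly the stationary distribution of $(X(1),X(2))$.

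\textbf{Completeness, soundness, obstacle.} Completeness is immediate: lift any satisfying $T$ for $\mc{D}_0$ to $T'(v):=(T(\phi(v)),\text{correct routing certificates})$, which satisfies all constraints and decodes perfectly. For soundness, given a $T'$ satisfying a $(1-\eta^2/6)$-fraction of constraints, define $T(u):=\mathrm{plurality}_{v\in\phi^{-1}(u)}\sigma_v$. Applying BMV's routing soundness analysis together with the link-sampling bound \cref{lm: bmv hdx expander sampling} and the expansion of the inclusion graph $(X(1),X(2))$ shows that $T$ satisfies at least $(1-\rho'\eta_0^5)$-fraction of the constraints of $\mc{D}_0$; the $(\eta_0,\rho'\eta_0^5)$-unique decoding soundness of $\mc{D}_0$ then produces $w\in\sat(\varphi)$ such that $T$ has decoding error at most $\eta_0$. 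A second application of \cref{lm: bmv hdx expander sampling} bounds the fraction of $v$ at which $\sigma_v$ disagrees with $T(\phi(v))$ within $\phi^{-1}(a(t))$, so the decoding error of $T'$ with respect to $w$ is at most $\eta$. The principal obstacle is this last step: unique decoding is a pointwise guarantee on each coordinate $t$, so we need the pulled-back assignment to agree with $T'$ on \emph{most} HDX vertices \emph{within each} fiber $\phi^{-1}(a(t))$, not just on average over $X(1)$. This forces us to apply the link sampling bound at two nested scales (a global scale to extract $T$ and a per-fiber scale to control pointwise disagreement), which is why the starting udPCP of \cref{lm: initial udPCP regular} must have the high-precision $(\eta_0,\rho'\eta_0^5)$ scaling rather than a crude linear tradeoff.
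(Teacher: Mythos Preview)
Your construction diverges from the paper's in a way that breaks the soundness argument. In the paper (following BMV), the original udPCP vertices are identified with $V(Z)$, the vertices of the zig-zag product of $(X(1),X(2))$ with a constant-degree expander family; each HDX vertex $u\in X(1)$ then carries a label for \emph{every} $j\in V(Z)$ with $u\in X_{j_1}(1)$, so the ``fiber'' over an original vertex $j$ is the link $X_{j_1}(1)$. It is precisely the link expansion (item~4 of \cref{thm: hdx construction}) that lets one show, as in \cref{lm: j bad prob}, that any link with significant internal disagreement must contain many violated $X(2)$-edges, which is what drives the plurality argument. Your map $\phi\colon X(1)\to A'\cup B'$ runs in the opposite direction: each HDX vertex holds a single label $\sigma_v$, and the fiber $\phi^{-1}(u)$ is an \emph{arbitrary} balanced subset of $X(1)$ with no expansion structure whatsoever. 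Neither \cref{lm: bmv hdx expander sampling} (which concerns links, not arbitrary sets) nor the BMV routing analysis applies to such fibers, so there is no mechanism forcing $\sigma_v$ to be consistent across $\phi^{-1}(u)$, and your claim that $T(u)=\mathrm{plurality}_{v\in\phi^{-1}(u)}\sigma_v$ satisfies a $(1-\rho'\eta_0^5)$-fraction of $\mc{D}_0$'s constraints is unsupported.

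Two smaller points. First, unique-decoding soundness (\cref{def:unique-decoding-soundness}) is an \emph{average} over $t\in[n]$ and $a\sim\mc{P}_t$, not a per-coordinate guarantee; the obstacle you flag is therefore not real, and in the paper the error is controlled simply by $\Pr[D'_t(j,B(j))\ne w_t]+\Pr[A_0(j,u)\ne B(j)]$ averaged over the complete decoding distribution. Second, a balanced $\phi$ makes the marginal of your decoding distribution \emph{uniform} on $X(1)$, whereas the theorem requires the stationary distribution of $(X(1),X(2))$; in the paper this is delivered by \cref{fact: hdx sampling}, which is specific to the zig-zag identification with $V(Z)$ and has no analogue for your fiber map.
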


\subsection{The Zig-Zag Product}
In order to convert from a udPCP over an arbitrary constraint graph $G = (V, E)$ into a udPCP over the base layer $(X(1), X(2))$ of an HDX, we need a way to associate the vertices of $G$ with the vertices $X(1)$. As in~\cite{bmv}, to do so we rely on the zig-zag product of~\cite{RVWzigzag}, which we present next.

The zig-zag product takes a graph $G = (V, E)$, with minimum degree $m_0$, and a family of $k$-regular expander graphs $\mc{G} = \{\mc{G}_m\}_{m \geq m_0}$ where $\mc{G}_m$ has $m$ vertices, and outputs a $k^2$-regular graph on $2|E|$ vertices. We denote the output of the zig-zag product as $\ZZ(G, \mc{G})$. Typically, the family of graphs $\mc{G}$ is chosen to have second singular value at most $\sigma$, for some chosen parameter $\sigma$. 
Before describing the zig-zag product, it is convenient to describe the replacement product of $G'$ and $\mc{G}$. We denote this graph by $\Rep(G, \mc{G})$, defined in the following way.
\begin{itemize}
    \item For each vertex $v$, arbitrarily order its neighbors from $1$ to $\deg(v)$. We will refer to the $c$th neighbor of $v$ as the vertex labeled $c$ under this ordering.
    \item Replace each vertex $v \in V'$ with a copy of $\mc{G}_{\deg(v)}$, where $\deg(v)$ is the degree of $v$ in $G'$. Hence $v$ is replaced by a cloud of $\deg(v)$ vertices, and we also label these vertices by $\{v\}\times[\deg(v)]$. 
     \item Let $(u, v) \in E'$ be an edge where $v$ is the $c_1$th neighbor of $u$ and $u$ is the $c_2$th neighbor of $v$, then we add the edge $((u, c_1), (v, c_2))$ to $\Rep(G, \mc{G})$. In addition, we add the edge $((v,c_1), (v,c_2))$ to $\Rep(G, \mc{G})$ for all $v$ and $(c_1,c_2)$ such that $((v,c_1),(v,c_2))$ is an edge in $\mc{G}_{\deg(v)}$.
\end{itemize}
Note that $\Rep(G, \mc{G})$ has $2|E|$ vertices and is $(k+1)$-regular. We define the zig-zag product, $\ZZ(G', \mc{G})$ as follows.
\begin{itemize}
    \item The vertex set is $V(\ZZ(G, \mc{G})) = V(\Rep(G, \mc{G}))$.
    \item We add an edge $((u, c_1), (v, c_4))$ to $E(\ZZ(G, \mc{G}))$ if there exist $c_2$ and $c_3$ such that all of $((u, c_1), (u,c_2))$, $((u, c_2), (v,c_3))$, $((v, c_3), (v,c_4))$ are edges in $E(\Rep(G, \mc{G}))$. In words, the vertex $(u, c_1)$ is adjacent to $(v, c_4)$ if the latter vertex can be reached, in $\Rep(G, \mc{G})$, by first taking a step in the cloud of $u$, then taking a step between the clouds $u$ and $v$, and finally taking a step in the cloud of $v$. 
\end{itemize}
It is straightforward to verify that $\ZZ(G, \mc{G})$ has $2|E|$ vertices and is $k^2$ regular.
\subsection{Routing Protocols}
We now discuss routing protocols and associated useful notations, and we start by explaining what a $T$-round routing protocol on a graph $G = (V,E)$ is. It has three stages: input, communication, and output, described as follows.
\vspace{0.1cm}

\noindent \textit{Input.} Each vertex $v \in V$ holds an initial message, which we think of as a symbol from some global alphabet $\Sigma$.

\vspace{0.1cm}

\noindent \textit{Communication.}  During each round $r \in [T]$, each vertex $v\in V$ performs some computation and sends a message, which is a symbol from $\Sigma$, to each of its neighbors. The message to each neighbor may be different. Moreover, the computation $v$ makes depends on its initial message and any messages received from its neighbors in prior rounds. The history of messages $v$ sent/received before step $r$ is called the \emph{transcript} of $v$ in round $r$, so that $v$'s outgoing messages during round $r$ are functions of this transcript. For $r = 1$, $v$'s message depends only on its initial message.

\vspace{0.1cm}

\noindent \textit{Output.} After round $T$, each vertex calculates an output (as some function of its transcript in round $T$). 

\begin{definition}
Given an arbitrary graph $G = (V, E)$, a $T$-round routing protocol on $G$, which we denote by $\mc{R}$, consists of the following items.

\begin{itemize}
    \item \textbf{Initial Message.} An initial message $A: V \to \Sigma$ where $\Sigma$ is some alphabet.
    \item \textbf{Out Maps.} For each round $r \in [T]$, a map $\outmap_{r}: 2E \times \Sigma^{*} \to \Sigma$, where $\outmap_r(u,v, \sigma)$ denotes the message that $u$ sends to $v$ in round $r$ given that $u$'s transcript at round $r$ is $\sigma$. Here $2E$ consists of all ordered pairs of adjacent vertices, meaning each edge $(u,v) \in E$ appears as $(u,v)$ and $(v,u)$ in $2E$, and the map is defined only for $\sigma$ which represent a valid transcript of the protocol up to step $r$.
    \item \textbf{In Maps.} For each round $r \in [T]$, a map $\inmap_r: 2E \to \Sigma$ where for each $r \in [T]$, $\inmap_r(u,v)$ should be thought of as the message received by $v$ from vertex $u$ in round $r$ of the protocol.
    \item \textbf{Output}. A function $A_T: V \times \Sigma^{*} \to \Sigma$, where $A_T(v, \sigma)$ is the symbol output by $v$ at the end of the routing protocol given that its transcript at the end of the routing protocol is $\sigma$. We refer to $A_T$ as the function computed by the protocol. 
\end{itemize}
\end{definition}
We will also have to consider a quantity known as the \emph{work complexity} of a routing protocol.
\begin{definition}
    The work complexity of a $T$-round routing protocol over $G = (V,E)$ is the maximum over $r\in[T]$ and $(u,v) \in 2E$, of the circuit complexity of the function $\outmap_{r}(u,v, \cdot): \Sigma^* \to \Sigma$.
\end{definition}

Although we are able to use the routing protocol of \cite{bmv} in a black-box way, it will still be helpful to give a brief description of why routing protocols are useful for transforming the constraint graph of udPCPs. We refer the reader to \cite{bmv} for a more detailed discussion.

In the setting of udPCPs, routing protocols are used as follows. Let $\mc{D}$ be an initial udPCP with bipartite, $k$-regular constraint graph $G = (A \cup B, E)$, and suppose that $G' = (V, E')$ is the target graph which we want to use as the constraint graph of a new udPCP, $\mc{D}'$. Suppose for this exposition that $G$ is a matching, that $|V| = |A| = |B|$, and let $\pi: A \to B$ denote this matching. Identifying both $A$ and $B$ with $V$, we can view $\pi$ as a permutation on $V$, i.e.\ as a mapping $\pi: V \to V$. Now the initial message $A: V \to \Sigma$ to the routing protocol should be thought of as the assignment of $A$ and the goal of the communication of the routing protocol is to pass the value $A(v)$, which is initially held at vertex $v$, to the vertex $\pi(v)$. If this is done correctly, then each vertex $\pi(v)$ can check if the assignment $A$ actually satisfied the constraint on $(v, \pi(v))$ in the original udPCP. Hence, the desired output function is $A_T(\pi(v), \sigma) = A(v)$, where here $\sigma$ is the transcript of the honest routing protocol received by vertex $\pi(v)$. Assuming $\pi(v)$ correctly computes $A_T(\pi(v), \sigma) = A(v)$, it can check itself if $A(v)$ and its own initial value $A(\pi(v))$ should satisfy the constraint on $(v, \pi(v))$.

The above gives a high level overview of the connection between routing protocols and transforming constraint graphs of udPCPs, and roughly speaking it describes what happens in the perfect completeness case. In order to show soundness, one has to consider what happens to the routing protocol under adversarial edge corruptions. 

\begin{definition}
    We say that an edge $(u,v)$ is adversarially corrupted if, for every round, the outgoing message from $u$ may be an adversarially selected symbol in $\Sigma$. 
\end{definition}

In \cite{bmv}, routing protocols are considered in a slightly different setting. There, the target graph  $G = (X(1), X(2))$ is the base layer of an arbitrary HDX, $X$, and one considers the zig-zag product $Z = \ZZ(G, \mc{G})$, where $\mc{G}$ is an appropriately chosen family of expanders. Instead of considering a function directly over the vertices of $G$, one considers a function over the zig-zag product, $Z$, which is then associated with $G$. Strictly speaking, the zig-zag product is not necessary to describe the \cite{bmv} routing protocol, but it is helpful to have it in mind as it virtually appears in the definition of the initial messages and the function computed. We will similarly use $Z$ and below we describe how routing protocols over $Z$ translate to protocols over $G$.

Let us label the vertices of $Z$ as $V(Z) = \{(u,v) \; | \; u \in X(1), (u,v) \in X(2) \}$ 
and let the initial messages over $V(Z)$ be given by $A'_0: V(Z) \to \Sigma$. Fix a permutation $\pi: V(Z) \to V(Z)$. The goal of the routing protocol is to transmit $A'_0(v)$ to $A'_0(\pi(v))$ for all $v \in V(Z)$. The actual routing protocol should take place over $G$ though, so we would like a way to simulate the initial messages and final function (which are over $V(Z)$), as functions over $X(1)$. To this end, define $\mc{H} = \{(j, u) \; | \; j \in V(Z), u \in X_{j_1}(1) \}$, and define the function $A_0: \mc{H} \to \Sigma$, where $A_0(j, u) = A'_0(j)$. Now, we think of the initial message of a vertex $u \in X(1)$ as holding all of the symbols $(A_0(j,u))_{j \in V(Z)\text{ such that }u \in X_{j_1}(1)}$. One can think of the function $A_0$ as a function over the links of $X$: for each $j\in V(Z)$, each vertex $u\in X_{j_1}(1)$ contains the symbol of $j$ (among other things). Then, the routing protocol over $Z$ naturally translates over to a link-to-link transfer protocol over $G$.

Below we describe the formal guarantee of the routing protocol from \cite[Lemma 4.6]{bmv}. In the soundness case, we have the following lemma.
\begin{lemma}\label{lm: routing}
    There exist $\eps_0, \alpha > 0$ and $d_0, C' \in \mathbb{N}$ such that the following holds for sufficiently large $n \in \mathbb{N}$. Let $\mc{G}$ be a family of expanders, each with second singular value at most $\frac{\alpha}{2}$, and let $X$ be any complex constructed as in \cref{thm: hdx construction} with dimension parameter, exponent, target number of vertices, and prime set to $d \geq d_0$, $C'$, $n$, and $q = \Theta\left(\log^{C'}(n)\right)$ respectively. 
    Let $Z = \ZZ(G, \mc{G})$, and let $\pi: V(Z) \to V(Z)$ be any permutation over $V(Z)$. 
    
    Then there is a routing protocol on $G = (X(1), X(2))$ with round complexity $T = O(\log(n))$ and work complexity $q^{O(d^2)}\log |\Sigma|$  such that for any initial function $A_0 : \mc{G} \to \Sigma$ which satisfies
    \[
    \Pr_{j \in V(Z)}[\maj_{0.99}(A_0(j,u)) \; | \; u \sim X_{j_1}(1) )\neq \perp] \geq 1 - \xi,
    \]
    and for all possible adversarial corruptions of at most $\eps$-fraction of edges with $\eps \leq \eps_0$, the protocol computes a function $A_T: \mc{G} \to \Sigma$ which satisfies
    \[
    \Pr_{j \in V(Z)}[\maj_{0.99}(A_T(\pi(j), w) \; | \; w \sim X_{\pi(j)_1}(1)) = \maj_{0.99}(A_T(j, u) \; | \; u \sim X_{\pi(j)_1}(1))] \geq 1 - \xi - \frac{1}{\log n}.
    \]
\end{lemma}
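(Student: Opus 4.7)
The plan is to design a link-to-link routing protocol on $G = (X(1), X(2))$ that simulates one step of a permutation routing on the zig-zag product $Z = \ZZ(G,\mc{G})$ by exploiting the redundancy of storing each virtual symbol $A_0(j,\cdot)$ across the whole link $X_{j_1}(1)$. Each link comes with its own local spectral expansion, and a $0.99$-majority vote inside a link will absorb the adversarial corruption that touches it. First I would build a fault-free permutation routing on $Z$ of length $T = O(\log n)$ — for example, by the standard two-phase random-intermediate routing on constant-degree expanders, which handles any permutation in $O(\log|V(Z)|)$ rounds with bounded congestion. A single step of $Z$ is then simulated on $G$ by the natural zig--zag--zig triple: an intra-link expander step on $X_{u}(1)$ using $\mc{G}_{\deg_G(u)}$, an edge step in $X(2)$, and another intra-link step on $X_{v}(1)$.

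Next I would verify fault-free correctness: with no corruption the protocol delivers, for every $j \in V(Z)$, the entire function $A_0(j,\cdot)$ onto the target link $X_{\pi(j)_1}(1)$, so the $0.99$-majorities at source and target agree. The work complexity $q^{O(d^2)} \log|\Sigma|$ follows because by item~2 of \cref{thm: hdx construction} every vertex lies in at most $q^{O(d^2)}$ top-dimensional faces and therefore carries at most that many virtual symbols in any round; each outgoing message is a function of these, computable by a circuit of size $q^{O(d^2)} \log|\Sigma|$.

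The core of the argument is the fault-tolerance analysis. I would maintain, for each round $r = 0,\ldots,T$, the invariant that the fraction of $j \in V(Z)$ whose current encoding has a well-defined $0.99$-majority is at least $1 - \xi - r/T^2$. Classify a link as \emph{heavy} in round $r$ if more than $\sqrt{\eps}$ of its incident edges are corrupted, and \emph{light} otherwise. By \cref{lm: bmv hdx expander sampling} at most a $\poly(d)/(q \eps)$ fraction of links are heavy, and light links contribute harmlessly: by \cref{lm: expander mixing} applied with the local spectral bound of item~4 of \cref{thm: hdx construction} ($\lambda \le O(1/\sqrt{q})$), together with the mixing of the rotor graphs in $\mc{G}$, the corrupted symbols occupy only an $O(\sqrt{\eps} + 1/\sqrt{q})$ fraction of each virtual encoding, well below the $0.01$ slack of the $0.99$-majority. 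Summing the per-round loss over $T = O(\log n)$ rounds yields the $\xi + 1/\log n$ bound in the conclusion.

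The main obstacle will be controlling how corruption introduced at round $r$ interacts with corruption from earlier rounds without compounding geometrically. The crucial saving is that the $0.99$ threshold is a fixed constant while the per-round noise is driven by $1/\sqrt{q}$, so by choosing $C'$ large enough that $q = \Theta(\log^{C'}(n))$ dominates any polylogarithmic accumulation, and $\alpha$ and $\eps_0$ small enough that the underlying permutation routing on $Z$ has bounded congestion and the rotor expanders provide enough local mixing, the invariant closes. These choices, together with the threshold $d_0$ needed to invoke \cref{thm: hdx construction} with usable spectral parameters, are where the constants $\eps_0,\alpha,C',d_0$ in the statement emerge.
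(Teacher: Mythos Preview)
The paper does not prove this lemma at all: it is quoted as \cite[Lemma~4.6]{bmv} and used as a black box. The surrounding text says explicitly ``we are able to use the routing protocol of \cite{bmv} in a black-box way'' and ``Below we describe the formal guarantee of the routing protocol from \cite[Lemma~4.6]{bmv}.'' So there is no proof in this paper to compare your proposal against.

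Your sketch is broadly consistent with the approach in \cite{bmv} --- link-to-link routing on $G$ simulating a permutation routing on the zig-zag product $Z$, with local majority votes absorbing corruption and the HDX spectral parameters controlling per-round loss --- but it remains a high-level plan rather than a proof. In particular, the step where you assert a fault-free $O(\log n)$-round permutation routing on $Z$ ``by the standard two-phase random-intermediate routing'' is doing a lot of work: the actual protocol in \cite{bmv} is more delicate and is the technical heart of that paper. If you intend to supply a self-contained proof here you would need to either reproduce that construction or cite it; as written, your proposal is a reasonable outline but not a substitute for the reference.
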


In words,~\cref{lm: routing} gives, for any permutation $\pi$ over $V(Z)$, a routing protocol that is tolerant against up to $\eps$ errors. If an overwhelming majority value exists in at least $1-\xi$ of the links in the input message, then the same overwhelming majority value exists in at least $1-\xi - \frac{1}{\log n}$ of the links in the output message. 

The completeness of the protocol from \cref{lm: routing} is stated in the following lemma. 

\begin{lemma} \label{lm: routing completeness}
Let $G = (X(1), X(2))$ where $X = (X(1), \ldots, X(d))$ is a $d$-dimensional complex, let $\mc{G}$ be an appropriate family of expanders as above and let $A_0: \mc{G} \to \Sigma'$ be a function such that $A_0(j,v)$ is the same for all $v \in X_{j_1}(1)$. Then for each permutation $\pi\colon V(Z)\to V(Z)$, an honest run of the protocol from~\cref{lm: routing} satisfies
\[
    A_{0}(j, v) = A_{T}(\pi(j), u)
\]
    for every $(j, v), (\pi(j), u) \in \mc{G}$.
\end{lemma}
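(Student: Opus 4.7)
The plan is to prove completeness by induction on the round number $r \in [T]$, maintaining the invariant that under honest execution on a link-consistent input, the protocol's state remains link-consistent at every round and mirrors exactly the execution of the underlying idealized routing protocol on $Z = \ZZ(G,\mc{G})$ that implements the permutation $\pi$.

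First, I would unpack the link-to-link protocol of \cite{bmv}: each virtual vertex $j \in V(Z)$ has its symbol stored redundantly across the link $X_{j_1}(1)$, and each step of the zig-zag routing on $Z$ is simulated by communication between two links of $G$ together with internal synchronization along the expanders $\mc{G}_{\deg(\cdot)}$. Since the hypothesis gives $A_0(j, v) = A_0(j, v')$ for all $v, v' \in X_{j_1}(1)$ and no edge is corrupted, every vertex of a given link performs the same deterministic $\outmap_r$ computation on the same transcript, so all outgoing messages along the internal and link-to-link edges are identical across copies. Hence at the end of round $r$ the value stored across each link for every virtual $j$ remains a single well-defined symbol $A'_r(j)$, which establishes the inductive step.

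By construction, $A'_r(j)$ is exactly the value that a standard (non-fault-tolerant) routing protocol on $Z$ would hold at step $r$ starting from the input $A'_0(\cdot) := A_0(\cdot, v)$. The protocol of \cite{bmv} is designed so that after $T = O(\log n)$ rounds this underlying virtual protocol routes the datum at $j$ to $\pi(j)$, i.e.\ $A'_T(\pi(j)) = A'_0(j)$. Combined with the link-consistency invariant at round $T$, this yields
\[
A_T(\pi(j), u) = A'_T(\pi(j)) = A'_0(j) = A_0(j, v)
\]
for every admissible $(u, v)$, which is the claim.

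The main obstacle is bookkeeping rather than mathematical depth: one must verify, by opening up the definition of the protocol in \cite{bmv}, that each of its sub-steps per round (a link-internal expander round, a link-to-link edge traversal, and a final link-internal expander round) preserves link-consistency, which reduces to the fact that a deterministic function of a link-constant input is again link-constant. Once this invariant is in place, the lemma reduces to the design correctness of the underlying zig-zag routing on $Z$, which is built into the protocol's construction.
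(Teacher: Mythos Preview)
Your proposal is a reasonable and essentially correct sketch, but it is worth noting that the paper does not actually prove this lemma at all: it is stated as a black-box completeness property of the routing protocol imported from \cite{bmv}, with no proof given in the paper itself. So there is no ``paper's own proof'' to compare against here.

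Your argument---maintaining a link-consistency invariant by induction on the round number, and then appealing to the design correctness of the underlying zig-zag routing on $Z$---is the natural way to unpack this black box, and it is correct in outline. The one place to be careful is that the protocol of \cite{bmv} is not literally a plain zig-zag routing but a more elaborate fault-tolerant protocol (involving majority voting and message replication across links), so the reduction to ``standard routing on $Z$'' is not quite as clean as you suggest; rather, one must check that in the honest, link-consistent case all the fault-tolerance machinery (e.g.\ majority votes) becomes trivial and the protocol degenerates to the intended permutation routing. This is what your final paragraph gestures at, and it is indeed bookkeeping rather than a mathematical obstacle.
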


We will require the following fact from \cite{bmv} which translates the uniform distributions over vertices and edges of $Z$ to the stationary distributions over vertices and edges of $X$.
\begin{fact} \label{fact: hdx sampling}
    Choosing $j \sim V(Z)$ uniformly at random and outputting $j_1$, we have that $j_1 \in X(1)$ is distributed according to the underlying distribution of $X(1)$. Additionally, choosing an edge $(j,j')$ of $Z$ uniformly at random and outputting $(j_1, j'_1) \in X(2)$, we have that $(j_1, j'_1)$ is distributed according to the underlying distribution of $X(2)$.
\end{fact}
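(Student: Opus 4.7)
The plan is to verify both claims by direct computation using the regular structure of the zig-zag product together with the $k$-regularity of the inner clouds. Throughout, I work in the multigraph picture described in \cref{thm: udpcp graph to hdx}, where the multiplicity of each edge $(u,v) \in X(2)$ is proportional to its weight under $\mu_2$; in this picture the multi-degree $\deg_G(u)$ is proportional to $\mu_1(u)$, and the vertices of $V(Z)$ are precisely the pairs $(u,v)$ with $v$ ranging over the (multi-)neighbours of $u$ in $G$.

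For the first claim, simply observe that $|\{j \in V(Z) : j_1 = u\}| = \deg_G(u)$ (counted with multiplicity), since each neighbour $v$ of $u$ contributes exactly one vertex $(u,v)$ to the cloud of $u$. Hence picking $j$ uniformly from $V(Z)$ gives $\Pr[j_1 = u] = \deg_G(u)/|V(Z)| = \deg_G(u)/(2|E(G)|)$, which is precisely the stationary measure $\mu_1(u)$.

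For the second claim, note first that $Z$ is $k^2$-regular, so sampling a uniform (directed) edge of $Z$ is equivalent to sampling $j \sim V(Z)$ uniformly and then a uniform $Z$-neighbour of $j$; and the marginal of $(j_1, j'_1)$ does not depend on whether we regard the sampled edge as directed or undirected. Writing $j = (u,v)$ and unpacking the zig-zag definition, a uniform $Z$-neighbour $j'$ is produced in three steps: (i) take a uniform $\mathcal{G}_{\deg_G(u)}$-neighbour $v^\ast$ of $v$ inside the cloud of $u$, (ii) cross the inter-cloud edge to $(v^\ast, u)$, and (iii) take a uniform $\mathcal{G}_{\deg_G(v^\ast)}$-neighbour $u^\ast$ of $u$ inside the cloud of $v^\ast$. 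Thus $j' = (v^\ast, u^\ast)$, giving $(j_1, j'_1) = (u, v^\ast)$. Conditioning on $u$, the variable $v$ is uniform over the multi-neighbours of $u$, and since $\mathcal{G}_{\deg_G(u)}$ is $k$-regular, a single step from a uniform vertex is again uniform; hence $v^\ast$ is a uniformly random multi-neighbour of $u$ in $G$. Consequently $(u, v^\ast)$ is uniformly distributed over directed edges of the multigraph $G$, which is exactly $\mu_2$ on $X(2)$.

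There is no real obstacle here: the proof is pure bookkeeping. The only points that require any care are (a) maintaining the multigraph convention consistently so that ``uniform over $V(Z)$'' matches $\mu_1$ via a proportional-to-multi-degree count, and (b) invoking $k^2$-regularity of $Z$ and $k$-regularity of each cloud $\mathcal{G}_{\deg_G(u)}$ to convert the three-step zig-zag walk into a uniform draw of $v^\ast$ among the neighbours of $u$.
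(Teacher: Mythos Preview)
Your argument is correct. The paper does not give its own proof of this fact; it simply imports it from \cite{bmv}, so there is no proof to compare against. Your direct computation---counting cloud sizes for the first claim, and using $k^2$-regularity of $Z$ together with $k$-regularity of the inner graphs to reduce the zig-zag walk to a uniform neighbour draw for the second---is exactly the standard verification and is complete as written.
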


\subsection{Proof of \cref{thm: udpcp graph to hdx}}
In this section we prove \cref{thm: udpcp graph to hdx}. 

\subsubsection{The Construction}
Fix $\eta > 0$, 
$\delta \in (0,1)$, a size $N$ circuit $\varphi: \Sigma_0^n \to \{0,1\}$ and set $\mc{L} = \sat(\varphi) \subseteq \Sigma_0^n$. Let
\[
 \mc{D}' = \left(A' \cup B', E', \Sigma', \{\Phi_{e} \}_{e \in E'}, \{\mc{P}'_t \}_{t \in [n]}, \{D'_t \}_{t \in [n]}\right)
\]
to be the udPCP for $\mc{L}$ guaranteed by \cref{lm: initial udPCP regular} with $\left(\frac{\eta}{100}, \rho' \cdot \left(\frac{\eta}{100}\right)^5\right)$-unique decoding soundness. 
Note that by \cref{lm: initial udPCP regular}, we may take the length of $\mc{D}'$ to be $C\cdot s(N)$ for any $C \in \mathbb{N}$, where $s(N) = O_{|\Sigma_0|,\eta}(N \cdot \log^{c_0}(N))$ for some constant $c_0$. Here we are slightly abusing notation as technically, there is a separate udPCP for each $C$, but let us just refer to the udPCP as $\mc{D}'$ for now and fix $C$ later. The udPCP $\mc{D}'$ is bipartite and $d'$-regular. We denote its constraint graph by $G' = (A' \cup B', E')$, write $V' = A' \cup B'$, and set $N' = |V'|$.

Take $\eps_0, \alpha, d_0, C'$ to be the constants from \cref{lm: routing} and let $\mc{G}$ be a family of $O(\alpha^{-2})$-regular expander graphs such that each graph has second singular value at most $\frac{\alpha}{2}$. In particular, we use \cref{lm: poly time expander} where $d = O(\alpha^{-2})$, and obtain a family $\mc{G}$ with one graph of each size $m$ for all $m \geq \Omega(\alpha^{-2})$. Note that by \cref{lm: poly time expander},
each graph in the family $\mc{G}$ can be constructed in time polynomial in the number of vertices.
We take the complex $X$ from \cref{thm: hdx construction} so that the following items hold:
\begin{itemize}
    \item the direct-product soundness parameter is set to the $\delta$ fixed above, 
    \item the dimension parameter $d \geq d_0$,
    \item the number of vertices satisfies $N \cdot \log^{c_0+10}(N) \leq |X(1)| \leq N \cdot \log^{10c_0 + 10}(N)$, and is sufficiently large in accordance with \cref{lm: routing},
    \item the prime $q$ is set to $q = \log^{c'}(N)$ for some constant $c'$,
    \item $N \cdot \log^{c_0+10}(N) \leq |X(2)| \leq N \cdot \log^{c_1}(N)$, where $c_1 \geq c_0 + 10$ is some constant. This item is implied by the fact that $d$ is constant along with the prior two items and item 3 of \cref{thm: hdx construction}.
\end{itemize}
Let us briefly argue why such a complex $X$ can be constructed from \cref{thm: hdx construction}. There, one can fix $d$ arbitrarily and then choose a desired number of vertices, say $n'$, arbitrarily. In our case, we set $n'$ to be in the range $[N \cdot \log^{c_0+10}(N), N \cdot \log^{10c_0+10}(N)]$, and note that we can assume $n'$ is sufficiently large in accordance with \cref{lm: routing} by assuming that the original circuit is sufficiently large. After $n'$ is fixed, we set $q = \Theta(\log^{C'}(n'))$ and we run the algorithm of \cref{thm: hdx construction} with direct-product soundness, exponent, dimension, target number of vertices, and prime parameters set to $\delta, C', d, n', q$ and direct-product dimension set arbitrarily. \cref{thm: hdx construction} guarantees that this $X$ is constructed in  $\poly(n') = \poly(N)$ time.
\vspace{-2ex}
\paragraph{Viewing $(X(1), X(2))$ as a multigraph.} We will want to consider $(X(1), X(2))$ as a weighted graph according to its stationary distribution and this is accomplished by having each edge $(u,v) \in X(2)$ have multiplicity proportional to its weight under the stationary distribution. Specifically, recall that the stationary distribution over $(u,v)$ samples a $d$-face $U \sim X(d)$ and then 
$\{u, v\} \in \binom{U}{2}$ uniformly, so the weight of $(u,v)$ under this distribution is, $\frac{m(u,v)}{\binom{d}{2}}$, where $m(u,v)$ is the number of $d$-faces containing $u$ and $v$, and is hence $\polylogn$. For each $(u,v) \in X(2)$, we think of it as having multiplicity $m(u,v)$ in the multigraph $(X(1), X(2))$.

\vspace{-2ex}
\paragraph{Identifying $V'$ with $V(Z)$ by adding a negligible number of vertices.}
The first step will be to modify $G'$ so that it has size equal to $|V(Z)|$. Recall $|V(Z)| = 2|X(2)|$, so 
\begin{equation}
    N \cdot \log^{c_0 + 10}(N) \leq |X(2)| \leq N \cdot \log^{c_1}(N),
\end{equation}
for some $c_1 \geq c_0$. We can thus write $|V(Z)| = s(N) \cdot Q + R$ for integers $Q$ and $R$ satisfying $\log^{10}(N) \leq Q \leq \log^{c_1 - c_0}(N)$ and $0\leq R < s(N)$. We set the repetition parameter $C = Q$, so that we have $|V'| = Q \cdot s(N)$ and add $R$ vertices to the constraint graph of $\mc{D}'$. On these $R$ additional vertices we add an arbitrary bipartite $d'$-regular graph with equality constraints. Note that these $d'R$ extra equality constraints are at most $\frac{d'R}{d' \cdot s(N) \cdot Q} \leq \frac{1}{\log^{10}(N)}$ fraction of all constraints, so they can essentially be ignored. The decoders for $\mc{D}'$ are unchanged after adding $R$ additional vertices (we can define the decoders arbitrarily when one of these $R$ additional vertices is given as input). We define the new decoding distributions as follows. Let $\xi := R/|V'|$-so that the $R$-additional vertices account for $\xi$-fraction of the vertices overall in $G'$ and divide the $R$ additional vertices into equal parts of size $n$ and label them $1$ through $n$.
For each $t \in [n]$, the decoding distribution $\mc{P}'_t$ is modified as follows
\begin{itemize}
    \item With probability $1-\xi$ choose $u$ according to the old decoding distribution for $t$ without the $R$ extra vertices.
    \item With probability $\xi$ choose one of the extra $R$ vertices in the part labeled by $t$.
\end{itemize}
One can check that choosing $t \in [n]$ and then $u \sim \mc{P}'_t$ as above, we still get that $u$ is uniform in $V'$ (with the $R$ additional vertices). Since $\xi \leq \frac{1}{\log^{10}(N)}$, it is easy to see that after adding the $R$-additional vertices, the udPCP still has $\left(\frac{\eta}{100}, \frac{\rho'}{2}\left(\frac{\eta}{100}\right)^5\right)$-unique decoding soundness.
It is also straightforward to check that the resulting udPCP has length $|V(Z)|$  and perfect completeness. Furthermore, alphabet size, $d'$-regularity, and left-decoding degree are maintained. Henceforth, we slightly abuse notation and still refer to the modified constraint graph as $G' = (A' \cup B', E')$ and assume that it has the same number of vertices as $Z$ for the remainder of the section. 

\vspace{-2ex}
\paragraph{Breaking $G'$ into matchings and setting the routing protocols.}
Since $G'$ is a $d'$-regular bipartite graph we can decompose $E'$ into $d'$ perfect matchings $\pi_1, \ldots, \pi_{d'}$. Since $|V(Z)| = |V'|$ we can associate $V'$ with $V(Z)$ in an arbitrary manner, and we fix such an identification henceforth. Abusing notation, we view each $\pi_i$ as a function from $V(Z) \to V(Z)$ satisfying that $\pi_i^2$ is the identity. In particular, for $j \in V(Z)$ corresponding to some $a \in A'$, $\pi_i(a)$ is the vertex in $V(Z)$ corresponding to the vertex in $B'$ matched with $a$ under $\pi_i$, and vice-versa if $j \in V(Z)$ corresponds to some $b \in B'$.

In order to describe the construction, it will be helpful to define $\mc{H} = \{(j,u) \; | \; j \in V(Z) , u \in X_{j_1}(1) \}$ and consider an initial function $A_0: \mc{H} \to \Sigma'$ where recall that $\Sigma'$ is the alphabet of $\mc{D}'$. One should think of $A_0$ as coming from a satisfying assignment of $\mc{D}'$, so that $A_0(j,u)$ is the $\Sigma'$ symbol assigned to the vertex in $V'$ corresponding to $j$ under this satisfying assignment. Moreover, one should think of every $A_0(j,u)$ as being ``held'' at the vertex $u \in X(1)$, so that $u \in X(1)$ holds many $\Sigma'$ symbols (and in particular, one for each link it is in). That is, an assignment to the new udPCP is supposed to provide such values for each $u \in X(1)$.

For each $i \in [d']$ let $\mc{R}_i$ be the routing protocol from \cref{lm: routing} for the permutation $\pi_i$. The protocols $\mc{R}_i$ all have round complexity $T$ for some $T = O(\log N)$. Here, one should think of the routing protocols as being run with the initial maps $A_0$ above. Then in words, for every $j \in V(Z)$, the routing protocol $\mc{R}_i$ is meant to route an assignment of the $V'$ vertex corresponding to $j$ from every vertex in the link $X_{j_1}$ to every vertex in the link $X_{\pi_i(j_1)}(1)$. For each $i \in [d'], r \in [T]$, let $\inmap_{i, r}$ and $\outmap_{i,r}$ denote the in map and out map respectively during round $r$ of protocol $\mc{R}_i$. Also, let $\mc{A}_{i,T}$ denote the output function of protocol $\mc{R}_i$.
\vspace{-2ex}
\paragraph{The description of $\mc{D}$.}
We denote the new dPCP that we construct as
\[
\mc{D} = \left(X(1), X(2), \Sigma, \{\Phi_{e} \}_{e \in X(2)}, \{D_t\}_{t \in [n]}, \{\mc{P}_t \}_{t \in [n]}\right),
\]
and define its parts as follows.
\begin{itemize}
    \item \textbf{Constraint Graph.} The constraint graph is the base layer of $X$, i.e.\ $(X(1), X(2))$ (with the multiplicities proportional to the weights of the stationary distribution of $X(2)$ over $X(1)$).
    \item \textbf{Alphabets:} For each $u \in X(1)$, an alphabet symbol for $u$ consists of the following.
    \begin{itemize}
        \item \textbf{Initial Messages:} one $\Sigma'$ symbol for each $j \in V(Z)$ such that $u \in X_{j_1}(1)$. This symbol is intended to be the value $A_0(j, u)$ for each $j \in V(Z)$ which is supposed to come from a satisfying assignment to $\mc{D}'$ as described above.
        \item \textbf{Routing Transcript:} the maps $\inmap_{i, r}: 2E(G) \to \Sigma$ for each $i \in [d']$ and $r \in [T]$. For each $i \in [d']$ and $r \in [T]$, this is the in map of round $r$ for the protocol $\mc{R}_i$. Recall that $\inmap_{i,r}(u,v)$ should be thought of as the message received by $v$ from $u$ during round $r$ in the routing protocol for the matching $\pi_i$. 
    \end{itemize}
    In addition, we constrain the alphabet of $u$ in the following ways.
    \begin{itemize}
        \item \textbf{Hardcoding $\mc{D}'$ Constraints:} for each $i \in [d']$ and $u\in V'$, we only allow symbols $\sigma$ for $u$ such that the values in $\sigma$ corresponding to $A_0(j, u)$ and $A_{i, T}(\pi_i(j), u)$ satisfy the constraint $\Phi'_{(j, \pi_i(j))}$ in $\mc{D}'$, for every $j$ such that $u \in X_{j_1}(1)$.
        \item \textbf{Hardcoding Decoding Consistency:} for all $t\in [n]$ and $u\in V'$, we only allow symbols $\sigma$ for $u$ which satisfy the following. Denoting by $\sigma_j$ the symbol corresponding to $A_0(j, u)$ in $\sigma$,  
        we require that $\mc{D}'_t(j, \sigma_j)$ is the same for all $j$ such that $u \in X_{j_1}(1)$ and $j \in \supp(\mc{P}_t')$. 
    \end{itemize}
    We denote the constrained alphabet for $u$ as $\Sigma_u$. 
        \item \textbf{Constraints:} For each $(u,v) \in X(2)$, the constraint between $u$ and $v$ checks the following. 
        \begin{itemize}
            \item For each $j \in V(Z)$ such that $u,v \in X_{j_1}(1)$, check that
            \[
            A_0(j, u) = A_0(j,v) \quad \text{and} \quad A_{i,T}(j,u) = A_{i,T}(j,v), \; \forall i \in [d'].
            \]
            \item For each $i \in [d']$ and $r \in [T-1]$, check that 
            \[
            \inmap_{i, r+1}(u, v) = \outmap_{i, r}(v, u) \quad \text{and} \quad \inmap_{i, r+1}(v,u) = \outmap_{i,r}(v,u).
            \]
    \end{itemize}
    In words, the first item checks that the initial message parts of the symbols assigned to $u$ and $v$ are consistent. The second item checks that the routing transcripts of $u$ and $v$ are consistent, that is, for each protocol $\mc{R}_i$ and every round $r \in [T-1]$, the constraint checks that the message that $v$ receives from $u$ (according to $v$'s alphabet symbol), is equal to the message that $u$ sends to $v$ (calculated based on the prior round in maps held at $u$'s alphabet symbol), and vice-versa.
    \item \textbf{Decoding Distributions:} For each $t \in [n]$, the distribution $\mc{P}_t$ is as follows
    \begin{itemize}
        \item Choose $j \sim \mc{P}'_t$ and think of $j \in V(Z)$ as a vertex in identifying $V'$ (by the identification we have between $V'$ and $V(Z)$).
        \item Output $u \sim X_{j_1}(1)$.
    \end{itemize}
   
    \item  \textbf{Decoders:} For each $t \in [n]$, $u \in \supp(\mc{P}_t)$  
    and label $\sigma \in \Sigma_u$ for $u$, the decoder outputs 
    \[
    \mc{D}_t(u, \sigma) = \mc{D}'_t(j, \sigma_j),
    \]
    where $j$ is an arbitrary vertex in $V(Z)$ such that $u \in X_{j_1}(1)$, and  $\sigma_{j}$ refers to the part of $\sigma$ which contains a label for $A_0(j, u)$. Note that the specific identity of $j$ above does not matter due to the way the alphabet of $\Sigma_u$ is constrained (specifically, the second item) and hence the decoder is well defined.
\end{itemize}

\subsubsection{Proof of Everything Except Unique-Decoding Soundness}

We start by proving all of the properties except for unique decoding soundness, which is deferred to the next section. 
\vspace{-2ex}
\paragraph{Constraint Graph and Length.} It is clear that the constraint graph is $(X(1), X(2))$ and hence the length is $|X(1)| = O_{|\Sigma_0|,\eta}(N \cdot \poly(\log(N)))$ by construction of the complex $X$. 
\vspace{-2ex}
\paragraph{Degrees.} Let $\Gamma', \Gamma'_{\dec}$ denote neighborhoods and decoding neighborhoods in $\mc{D}'$ and let $\Gamma, \Gamma_{\dec}$ be defined similarly for $\mc{D}$.  The degree of each vertex in the constraint graph is $\poly_{\delta}(\log n)$ by the properties of the HDX. In particular, it follows from the fact that each vertex is contained in $q^{O(d^2)}$ many $d$-faces by item 3 of \cref{thm: hdx construction}. As for the decoding degree, we have that $t \in \Gamma_{\dec}(v)$ if and only if there exists $j \in V(Z)$ such that $v \in X_{j_1}(1)$ and $t \in  \Gamma'_{\dec}(j)$. Since $|\Gamma'_{\dec}(j)| = O(1)$ for all $j \in V(Z)$, and the number of $j \in V(Z)$ such that $v \in X_{j_1}(1)$ is $\poly_{\delta}(\log n)$, the size of $v$'s decoding neighborhood is $\poly_{\delta}(\log n)$.
\vspace{-2ex}
\paragraph{Alphabet Size.} An alphabet for each vertex consists of the initial messages which the vertex holds, and the messages it received throughout all of the rounds of the routing protocol. For the initial messages, each vertex $v$ contains one symbol from a $O_{|\Sigma_0|}(1)$-sized alphabet for each $j \in V(Z)$ such that $v \in X_{j_1}(1)$. The number of such $j$ is $\poly_{\delta}(\log(N))$ since $v$ has $\poly_{\delta}(\log(N))$ neighbors in $(X(1), X(2))$ and for each $u$, there are a constant number of $j \in V(Z)$ such that $j_1 = u$. Thus, the initial message part of the alphabet consists of $\poly_{\delta}(\log(N))$ symbols from a $O_{|\Sigma_0|}(1)$-sized alphabet. For the routing transcript part, the alphabet of a vertex $v$ consists of one message per round of the protocol. By the work complexity of the protocol, the message contains at most $\poly_{\delta}(\log(N))$ bits, and there are $T = O(\log N)$ rounds. Altogether, this shows that each alphabet symbol for $v$ can be written using $\poly_{\delta}(\log(N))$ symbols from a $O_{|\Sigma_0|}(1)$-sized alphabet and hence the alphabet size is $2^{O_{|\Sigma_0|}(\poly_{\delta}(\log(N)))}$ as desired.

\vspace{-2ex}
\paragraph{Decision Complexity.} Fix an edge $(u,v)\in X(2)$. The decision complexity of $(u,v)$ is the sum of the following 2 parts: 
first, circuit sizes for checking that the alphabet symbols for $u$ and $v$ are valid symbols (i.e.\ that they satisfy the hardcoded constraints), and second, checking that the constraint is satisfied. For the first part, let us describe the circuit complexity for checking validity of $u$'s symbol, as the circuit complexity for checking $v$'s symbol is the same. The circuit needs to check that for each of $O_{\eta}(1)$ matchings $\pi$, the parts of the alphabet corresponding to $A_0(j, \pi(u))$ and $A_{i, T}(j,u)$ satisfy some constraint in $\mc{D}'$ for every $j$ such that $u \in X_{j_1}(1)$. Since the constraint in $\mc{D}'$ is over $O_{|\Sigma_0|}(1)$ sized alphabet symbols, its circuit complexity is $O_{|\Sigma_0|}(1)$. Moreover, there are at most $\polylogn$ many such $j$'s. Altogether, this results in circuit complexity $O_{|\Sigma_0|}(\poly_{\eta}(\log(N)))$. For the second part, the circuit needs to calculate the outgoing messages of $u$ and $v$ in each of the routing protocols $\mc{R}_1, \ldots, \mc{R}_{d'}$, and then check an equality on the computed message. As the work complexity is $O_{|\Sigma_0|}(\polylogn)$ and there are $O_{\eta}(1)$ routing protocols, this results in $O_{|\Sigma_0|}(\poly_{\eta}(\log(N)))$ circuit complexity as well. Overall, the decision complexity is $O_{|\Sigma_0|}(\poly_{\eta}(\log(N)))$.

\vspace{-2ex}
\paragraph{Decoding Distribution.} Choosing $t \in [n]$ uniformly and $j \sim \mc{P}'_t$ uniformly, we have that $j \in V(Z)$ is distributed uniformly at random by the guarantee of starting udPCP $\mc{D}'$ (specifically the fourth item of \cref{lm: initial udPCP regular}). Then the decoding distribution of $\mc{D}$ chooses $u \sim X_{j_1}(1)$, and we have $u \sim X(1)$ by \cref{fact: hdx sampling}.
\vspace{-2ex}
\paragraph{Decoding Complexity.} It is clear from the description of the decoder, that given a vertex $u$ and a symbol $\sigma$ for it, the decoder can restrict $\sigma$ to a symbol from the alphabet of $\mc{D}'$ and run the decoder of $\mc{D}'$ (which is another restriction of the symbol to a symbol from $\Sigma_0$). The circuit complexity for both of these tasks is on the order of logarithmic in the left alphabet size (which is also the length of the circuit's input), and hence it is $O_{|\Sigma_0|}(\poly_{\eta, \delta}(\log(N )))$. 
\vspace{-2ex}
\paragraph{Completeness.} Fix $w \in \mc{L}$ and let $A': V' \to \Sigma'$ be the assignment to $\mc{D}'$ which satisfies all of the constraints and always decodes according to $w$. Now, define the initial messages according to $A'$ and run the routing protocols $\mc{R}_1, \ldots, \mc{R}_{d'}$ with these initial messages. Let $A$ be the resulting assignment for $\mc{D}$. We will show that $A$ is a valid assignment satisfying all of the constraints to $\mc{D}$ and that it always decodes according to $w$. 

For the first part, it is clear that all of the constraints of $\mc{D}$ are satisfied, and it remains to check that $A$ is a valid assignment, i.e.\ it satisfies the harded constraints of the alphabets. Indeed, for any $v$, any of the matchings $\pi_i$, and any $j$ such that $v \in X_{j_1}(1) \cap X_{\pi_i(j)_1}(1)$, the alphabet symbols corresponding to $A_0(j, v), A_{i,T}(\pi(j),v)$ are precisely $A'(j)$ and $A'(\pi(j))$ by \cref{lm: routing completeness}, so they satisfy the constraint in $\mc{D}'$. For the decoding hardcoded constraints fix any vertex $v \in X(1)$ and $j \in V(Z), t \in [n]$ such that $v \in X_{j_1}(1)$ and $t$ is in the decoding neighborhood of $j$ in $\mc{D}'$. Then by the completeness of $\mc{D'}$, we have that $D'_t(j, A'(j)) = w_t$. Since the symbol corresponding to the initial value of $j$ under assignment $A(v)$ is precisely $A'(j)$, this shows that the folding for decoding consistency is satisfied.
\vspace{-1ex}
\subsubsection{Proof of Unique-Decoding Soundness}
We will show that $\mc{D}$ satisfies $(\eta, \eta^2/6)$-unique decoding soundness for a fixed $\eta$ sufficiently small relative to the constant $\eps_0$ from~\cref{lm: routing}.
Suppose there is an assignment to $\mc{D}$ that satisfies at least $(1-\eps)$-weight of the constraints, for $\eps := \eta^2 / 6$. Let $A: X(1) \to \Sigma$ be the assignment to $\mc{D}$ achieving this value. From $A$, we can derive the function $A_0: \mc{H} \to \Sigma'$. Specifically, for each $(j, v) \in \mc{H}$, we set $A_0(j,v)$ to be the part of the symbol $A(v)$ corresponding to an assignment for $j$ from the alphabet of $\mc{D}'$, $\Sigma'$. Now, define the following assignment, $B\colon V'\to\Sigma'$:
\[
B(j) = {\sf plurality}_{u \sim X_{j_1}(1)} \{A_0(j, u)\},
\]
where the function ${\sf plurality}$ refers to the most common value $A_0(j, u)$ when choosing $u \sim X_{j_1}(1)$. It is shown in \cite{bmv} that this assignment $B$ is a nearly satisfying assignment for $\mc{D}'$.

\begin{lemma} \cite[Lemma 5.3]{bmv}
Suppose that $A$ satisfies $(1-\eps)$-weight of the constraints in $\mc{D}$. Then $B$ defined as above satisfies at least $\left(1 - O\left(\frac{1}{\log N}\right)\right) \geq   \left(1- \frac{\rho'}{2}\left(\frac{\eta}{100}\right)^5\right)$-fraction of the constraints of $\mc{D}'$.
\end{lemma}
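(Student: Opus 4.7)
The plan is to transfer the almost-satisfying structure of $A$ on $\mc{D}$ to an almost-satisfying structure of $B$ on $\mc{D}'$ in three steps: (i) use the link-consistency constraints of $\mc{D}$ together with the internal expansion of the links to show that $B(j)$ is in fact a $0.99$-majority on most links of $X$; (ii) apply the routing guarantee of \cref{lm: routing} to propagate this majority along each matching $\pi_i$; and (iii) feed the resulting majority statements into the hardcoded alphabet constraints of $\mc{D}$ to deduce that $\Phi'_{(j, \pi_i(j))}(B(j), B(\pi_i(j))) = 1$ on almost all matching edges of $\mc{D}'$.

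Concretely, I would first partition the violated constraints of $\mc{D}$ into link-consistency violations (edges $(u,v)\in X(2)$ witnessing $A_0(j,u) \neq A_0(j,v)$ or some $A_{i,T}(j,u) \neq A_{i,T}(j,v)$ for a common link $j$) and routing-transcript violations $\inmap_{i,r+1}(u,v) \neq \outmap_{i,r}(v,u)$; each type contributes at most $O(\eps)$ weight. Let $\mc{E} \subseteq X(2)$ denote the link-consistency-violating edges; by \cref{lm: bmv hdx expander sampling}, the fraction of $j\in V(Z)$ for which $\mu_{j_1,2}(\mc{E}) \geq \sqrt{\eps}$ is at most $\poly(d)/(q\eps) = O(1/\log N)$ for our choice of $q = \Theta(\log^{C'} N)$ and $\eps = \eta^2/6$. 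For every remaining ``good'' $j$, the link $(X_{j_1}(1), X_{j_1}(2))$ has second singular value $O(1/\sqrt{q})$ by item 4 of \cref{thm: hdx construction}; applying the expander mixing lemma \cref{lm: expander mixing} inside this link rules out two values of $A_0(j,\cdot)$ each carrying mass in $[0.01, 0.99]$, since the crossing edges between such level sets would exceed the $\sqrt{\eps}$ in-link violation budget. This forces $\maj_{0.99}(A_0(j,u) \mid u \sim X_{j_1}(1)) = B(j)$ on a $(1 - O(1/\log N))$-fraction of $j$, and the identical argument applied to the link-consistency constraints on $A_{i,T}$ gives the analogous statement for each $A_{i,T}(j,\cdot)$.

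Next, I would invoke \cref{lm: routing} separately for each of the $d'$ matching protocols $\mc{R}_i$, treating the routing-transcript-violating edges as the $\leq \eps_0$-fraction of adversarially corrupted edges and using the bound $\xi = O(1/\log N)$ supplied by the previous step. Each application yields that for a $(1 - O(1/\log N))$-fraction of $j$, the $0.99$-majority of $A_{i,T}(\pi_i(j),\cdot)$ over the target link $X_{\pi_i(j)_1}(1)$ equals $B(j)$. To close, observe that on any vertex $u$ whose alphabet symbol is valid (which holds on all but $O(\eps)$-fraction of vertices, since an invalid symbol at $u$ automatically fails every constraint incident to $u$), the hardcoded folding forces $\Phi'_{(j, \pi_i(j))}(A_0(j,u), A_{i,T}(\pi_i(j),u)) = 1$; combining this with the two link-majority statements via a union bound furnishes, for a $(1 - O(1/\log N))$-fraction of matching edges $(j,\pi_i(j))$ of $\mc{D}'$, a witness $u$ at which $A_0(j,u) = B(j)$ and $A_{i,T}(\pi_i(j),u) = B(\pi_i(j))$, hence $\Phi'_{(j,\pi_i(j))}(B(j),B(\pi_i(j))) = 1$, and averaging over the $d'$ matchings gives the lemma. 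The main obstacle is step (i): promoting the plurality $B(j)$ to a genuine $0.99$-majority requires delicately combining the $\sqrt{\eps}$-sampling guarantee on links with the $O(1/\sqrt{q})$ internal expansion, and it is precisely this quantitative interplay that dictates the choice $q = \poly(\log N)$ in the HDX construction.
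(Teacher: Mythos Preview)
Your proposal is correct and follows essentially the same approach as the paper's proof (which is itself only a sketch deferring to \cite{bmv}): establish that $B(j)$ is a $0.99$-majority on all but an $O(1/\log N)$-fraction of links via the sampling lemma and link expansion (the paper points to \cref{lm: j bad prob} for this), feed this into the routing guarantee of \cref{lm: routing} with the $\eps$-fraction of transcript-violating edges treated as adversarial corruptions, and then read off the constraint $\Phi'_{(j,\pi_i(j))}$ from the alphabet hardcoding at a witness vertex. One small indexing slip to watch: in your final step you write $A_{i,T}(\pi_i(j),u) = B(\pi_i(j))$, but the routing lemma gives that the $0.99$-majority of $A_{i,T}(\pi_i(j),\cdot)$ equals $B(j)$, not $B(\pi_i(j))$; the correct witness statement (using that $\pi_i^2=\mathrm{id}$) is that for $u\in X_{j_1}(1)$ one has $A_0(j,u)=B(j)$ and the routed value held at $u$ for link $j$ (coming from $\pi_i(j)$) equals $B(\pi_i(j))$, which is exactly what the hardcoding pairs together.
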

\begin{proof}
    The lemma is implicitly shown in the proof of the soundness claim in Lemma 5.3 of \cite{bmv}. Specifically, they argue that if $A$ is a $(1-\eps)$-satisfying assignment to $\mc{D}$, then for each of the routing protocols $\mc{R}_i$ the routing transcript of $\mc{R}_i$ corresponding to the assignments of $A$ corresponds to a routing on $(X(1), X(2))$ with at most $\eps$-fraction of adversarial edge corruptions. 
    
     Additionally, since $A$ is $(1-\eps)$-satisfying, one can show that for $\left(1-\frac{1}{\log N}\right)$-fraction of $j \in V(Z)$, we have that the probability a random vertex in the link $X_{j_1}(1)$ holds a value for $j$ that is consistent with $B$, i.e.\ the probability $\Pr_{u \in X_{j_1}(1)}[A_0(j,u) = B(j)]$, is very close to $1$ (in fact this statement is required for our proof of unique-decoding soundness too, so the argument is reproduced, essentially in \cref{lm: j bad prob}, below). As a result, the initial messages $A_0$ satisfy the condition in \cref{lm: routing} and applying \cref{lm: routing}, we get that for $\left(1-\frac{1}{\log N}\right)$-fraction of $j \in V(Z)$, the values $B(j)$ and $B(\pi_i(j))$ are consistent with those held at some vertex $u \in X_{j_1}(1)$. By the hardcoding of the alphabet of $u$, this shows that $B$ satisfies the constraint on $(j, \pi_i(j))$, and applying this over all routing protocols shows that $B$ satisfies almost all of the constraints in the original udPCP.
\end{proof}

Because $\mc{D}'$ satisfies $\left(\frac{\eta}{100}, \frac{\rho'}{2} \cdot \left(\frac{\eta}{100}\right)^5\right)$-unique decoding soundness with respect to $\mc{L}$ for some constant $\rho$, there exists $w \in \mc{L}$ such that 
\begin{equation} \label{eq: ref B decoding error}
\Pr_{t \in [n], j \sim \mc{P}'_t}[D'_t(j, B(j)) \neq w_t] \leq  \frac{\eta}{100}.
\end{equation}
We can bound the decoding error of the assignment $A$ for the udPCP $\mc{D}$ as follows
\begin{equation} \label{eq: bound decoding hdx}
\begin{aligned}
    \Pr_{t \in [n], u \sim \mc{P}_t}[D_t\left(u, A(u) \right) \neq w_t]  &= \Pr_{t \in [n], j \sim \mc{P}'_t, u \sim X_{j_1}(1)}[D'_t(j, A_0(j,u)) \neq w_t]\\
    &\leq \Pr_{t \in [n], j \sim \mc{P}'_t}[D'_t(j, B(j)) \neq w_t] \\
    &+ \Pr_{t \in [n], j \sim \mc{P}'_t, u \sim X_{j_1}(1)}[A_0(j, u) \neq B(j)] \\
    &\leq \frac{\eta}{100} +\Pr_{t \in [n], j \sim \mc{P}'_t, u \sim X_{j_1}(1)}[A_0(j, u) \neq B(j)]. 
\end{aligned}
\end{equation}

In the first transition we used the fact that $D_t\left(u, A(u) \right) = D'_t(j, A_0(j,u))$, in the second transition we used the union bound, and in the third transition we used~\eqref{eq: ref B decoding error}. Note that for any fixed $j$, we have 
\[
\Pr_{u \sim X_{j_1}(1)}[A_0(j,u) \neq B(j)] \leq \Pr_{u,v \sim X_{j_1}(1)}[A_0(j,u) \neq A_0(j,v)].
\]
This is because $B(j)$ is defined to be the most popular value of $A_0(j,u)$ when sampling $u \sim X_{j_1}(1)$. Plugging this into~\eqref{eq: bound decoding hdx} gives 
\begin{equation} \label{eq: hdx route 1}
\begin{split}
 \Pr_{t \in [n], u \in \mc{P}_t}[D_t\left(u, A(u)  \right) \neq w(t)] &\leq  \frac{\eta}{100} + \E_{t \in [n], j \sim \mc{P}'_t}\left[\Pr_{u,v \in X_{j_1}(1)}[A_0(j,u) \neq A_0(j,v)]\right].
 \end{split}
\end{equation}
 We say $j \in V(Z)$ is bad if 
\begin{equation} \label{eq: j bad assumption}
   \Pr_{u,v \in X_{j_1}(1)}[A_0(j, u) \neq A_0(j,v)] \geq 2\sqrt{\eps}. 
\end{equation}
The expectation from \eqref{eq: hdx route 1} can be bounded by

\begin{equation} \label{eq: hdx route dec error bound}
     \E_{t \in [n], j \sim \mc{P}'_t}\left[\Pr_{u,v \in X_{j_1}(1)}[A_0(j,u) \neq A_0(j,v)]\right] \leq 2\sqrt{\eps} + \Pr_{t \in [n], j \sim \mc{P}'_t}[\text{$j$ is bad}].
\end{equation}

To bound the probability that $j$ is bad, we will show that if $j$ is bad, then the link $X_{j_1}(1)$ must contain many violated constraints. To this end, let $\mc{E} \subseteq X(2)$ denote the set of constraints that are not satisfied by the assignment $A$. By assumption, $A$ satisfies at least $(1-\eps)$-fraction of constraints, so $\mu_{2}(\mc{E}) \leq \eps$.  On the other hand, if $A_0(j,u) \neq A_0(j,v)$ for any $j \in V(Z)$, and $(u,v) \in X_{j_1}(2)$, then the constraint on $(u,v)$ is violated and $(u,v) \in \mc{E}$. We can thus show the following.

\begin{lemma} \label{lm: j bad prob}
    If $j$ is bad then $\mu_{j_1,2}(\mc{E}) \geq \sqrt{\eps}$, and consequently  $\Pr_{t \in [n], j \sim \mc{P}'_t}[j \text{ is bad}] \leq \frac{\poly(d)}{q}$.
\end{lemma}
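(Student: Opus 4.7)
The plan is to prove the two statements sequentially, with the first providing the input for the second. The core idea is that a badly assigned link $X_{j_1}(1)$ forces many local disagreements, which via the expansion of the link translate into many violated edges in $X_{j_1}(2)$, which in turn (by sampling on the whole complex) can only occur on a small fraction of vertices.

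First I would prove the implication ``$j$ bad $\Rightarrow$ $\mu_{j_1,2}(\mc{E})\geq\sqrt{\eps}$.'' Fix such a $j$ and let $P_\sigma=\Pr_{u\sim X_{j_1}(1)}[A_0(j,u)=\sigma]$ for each $\sigma\in\Sigma'$. Then the badness assumption~\eqref{eq: j bad assumption} reads $1-\sum_\sigma P_\sigma^2\geq 2\sqrt{\eps}$. By item~$4$ of \cref{thm: hdx construction}, the bipartite inclusion graph $(X_{j_1}(1),X_{j_1}(2))$ has second singular value $\lambda=O(1/\sqrt{q})$. Applying~\cref{lm: expander mixing} with the indicator of the set $\{u: A_0(j,u)=\sigma\}$ on both sides yields
\[
\Pr_{(u,v)\sim X_{j_1}(2)}[A_0(j,u)=A_0(j,v)=\sigma]\leq P_\sigma^2+\lambda P_\sigma,
\]
and summing over $\sigma$ (using $\sum_\sigma P_\sigma=1$) gives $\Pr_{(u,v)\sim X_{j_1}(2)}[A_0(j,u)=A_0(j,v)]\leq \sum_\sigma P_\sigma^2+\lambda$, i.e.\
\[
\Pr_{(u,v)\sim X_{j_1}(2)}[A_0(j,u)\neq A_0(j,v)]\geq 2\sqrt{\eps}-\lambda.
\]
Since $q=\Theta(\log^{C'}N)$, for $N$ large enough we have $\lambda\leq\sqrt{\eps}$, so this probability is at least $\sqrt{\eps}$. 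But every such disagreeing pair $(u,v)\in X_{j_1}(2)$ is a violated constraint of $\mc{D}$ (the consistency part of the constraint between $u$ and $v$ fails on the coordinate indexed by $j$), hence it lies in $\mc{E}$. Thus $\mu_{j_1,2}(\mc{E})\geq\sqrt{\eps}$, as desired.

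For the second statement, I would first argue that the marginal of the distribution on $j_1$ induced by sampling $t\in[n]$ and $j\sim\mc{P}'_t$ is exactly the stationary distribution on $X(1)$. Indeed, by the fourth property of \cref{lm: initial udPCP regular} (applied also to our slight modification that added $R$ vertices, for which the complete decoding distribution remains uniform over $V'$), sampling $t$ and then $j\sim\mc{P}'_t$ yields $j$ uniform in $V(Z)$; by \cref{fact: hdx sampling}, projecting to $j_1$ then gives the stationary distribution on $X(1)$. Combining with the implication proved above,
\[
\Pr_{t\in[n],\,j\sim\mc{P}'_t}[j\text{ is bad}]\leq \Pr_{j_1\sim X(1)}\bigl[\mu_{j_1,2}(\mc{E})\geq\sqrt{\eps}\bigr].
\]
Now I apply \cref{lm: bmv hdx expander sampling} to the set $\mc{E}$, whose measure is $\mu_2(\mc{E})\leq\eps$ by assumption on the satisfaction rate of $A$; strictly speaking, that lemma requires a lower bound on $\mu_2(\mc{E})$, but if $\mu_2(\mc{E})<\eps$ we can harmlessly enlarge $\mc{E}$ by an arbitrary set to have measure exactly $\eps$. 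The lemma then gives $\Pr_{v\sim X(1)}[\mu_{v,2}(\mc{E})\geq\sqrt{\eps}]\leq\poly(d)/(q\eps)$. Since $\eps=\eta^2/6$ is a constant, this is $\poly(d)/q$, completing the proof.

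The only genuine subtlety is the first step, where one must pass from ``independent-pair disagreement in the link'' to ``edge disagreement in the link''; everything else is a bookkeeping of distributions and an invocation of the HDX sampling lemma. The expander-mixing computation is smooth because the link expansion $O(1/\sqrt{q})$ is much smaller than the gap $\sqrt{\eps}$ we need to absorb.
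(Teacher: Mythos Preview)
Your proposal is correct and follows essentially the same route as the paper: use the link expansion (item~4 of \cref{thm: hdx construction}) together with the expander mixing lemma to turn the independent-pair disagreement of \eqref{eq: j bad assumption} into an edge-disagreement lower bound in $X_{j_1}(2)$, then invoke \cref{fact: hdx sampling} and \cref{lm: bmv hdx expander sampling}. One small terminological slip: the graph $(X_{j_1}(1),X_{j_1}(2))$ in item~4 is the base graph of the link $X_{j_1}$, not a ``bipartite inclusion graph'' (those are the objects in item~3); this does not affect the argument, since the expander mixing lemma applies equally well here.
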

\begin{proof}
    Fix $j$ that is bad and for each symbol for $j$ in the original udPCP $\mc{D}'$,  $\sigma \in \Sigma'_{j}$, let 
    \[
    U_{\sigma} = \{u \in X_{j_1}(1) \; | \; A_0(j, u) = \sigma \}.
    \]
Then
\begin{equation} \label{eq: routing proof aux 1}
 \mu_{j_1,2}(\mc{E}) \geq \Pr_{(u,v) \sim X_{j_1}(2)}[A_0(j, u) \neq A_0(j, v)]
 = 1 - \sum_{\sigma \in \Sigma'_j} \Pr_{(u,v) \sim X_{j_1}(2)} [u \in U_{\sigma} \land v \in U_{\sigma}]. 
\end{equation}

Now, we use the fact that the graph $(X_{j_1}(1), X_{j_1}(2))$ has second eigenvalue at most $O\left(\frac{1}{\sqrt{q}}\right)$ by item 4 of \cref{thm: hdx construction}, so we can bound
\begin{equation} \label{eq: routing proof aux 2}
\begin{split}
    \sum_{\sigma \in \Sigma'_j} \Pr_{(u,v) \sim X_{j_1}(2)} [u \in U_{\sigma} \land v \in U_{\sigma}] &\leq 
  \sum_{\sigma \in \Sigma'_j} \mu_{j_1}(U_{\sigma})^2 + O(q^{-1/2})  \mu_{j_1}(U_{\sigma}) \\
  &\leq O(q^{-1/2}) + \sum_{\sigma \in \Sigma'_j} \mu_{j_1}(U_{\sigma})^2.
\end{split}
\end{equation}
Combining \eqref{eq: routing proof aux 1}, \eqref{eq: routing proof aux 2}, and the observation that $1 - \sum_{\sigma \in \Sigma'_j} \mu_{j_1}(U_{\sigma})^2 = \Pr_{u,v \sim X_{j_1}(1)}[A_0(j, u) \neq A_0(j,v)]$ we get that 
\begin{equation*}
\mu_{j_1,2}(\mc{E}) \geq 1 -  \sum_{\sigma \in \Sigma'_j} \mu_{j_1}(U_{\sigma})^2 -  O(q^{-1/2}) =  \Pr_{u,v \sim X_{j_1}(1)}[A_0(j, u) \neq A_0(j,v)] -  O\left(\frac{1}{\sqrt{q}}\right) \geq 2\sqrt{\eps} -  O(q^{-1/2}) \geq \sqrt{\eps},
\end{equation*}
where the penultimate transition uses the assumption that $j$ is bad from \eqref{eq: j bad assumption}. This shows the first part of the lemma.

To obtain the ``consequently'' part of the lemma, recall that by \cref{fact: hdx sampling}, when choosing $j \in V' = A' \cup B'$ uniformly at random and then outputting $j_1$, the resulting $j_1 \in X(1)$ is distributed according to the stationary distribution over $X(1)$. On the other hand, choosing $t \in [n]$ and $j \sim \mc{P}'_t$, we have $j$ uniformly distributed in $V'$. Thus,
\[
 \Pr_{t \in [n], j \sim \mc{P}'_t}[j \text{ is bad}] =  \Pr_{t \in [n], j \sim \mc{P}'_t}[\mu_{j_1}(\mc{E})  \geq \sqrt{\eps}] = \Pr_{j \in V'}[\mu_{j_1}(\mc{E})  \geq \sqrt{\eps}] \leq \frac{\poly(d)}{q}.
\]
In the second transition we used the above observation on the distribution of $j_1$. 
In the third transition we used \cref{lm: bmv hdx expander sampling}, $\mu_2(\mc{E}) \leq \eps$, 
and the fact that a uniformly random $j \in V'$ is on the side $A'$ with probability $1/2$.
\end{proof}

We can now conclude the proof of unique decoding soundness.
\begin{proof}[Proof of Unique Decoding Soundness]
Fix $\eta > 0$ sufficiently small, and suppose there is an assignment, $A$, to $\mc{D}$ that satisfies at least $(1-\eps)$-fraction of the constraints, for $\eps = \eta^2/6$. Then, combining \eqref{eq: hdx route 1}, \eqref{eq: hdx route dec error bound}, and \cref{lm: j bad prob}, we get that there is $w \in \mc{L}$ such that the decoding error of $A$ with respect to $w$ is at most
\[
\Pr_{t \in [n], u \sim \mc{P}_t}[D_t(u, A(u)) \neq w_t]\leq \frac{\eta}{100} +  2\sqrt{\eps} + \frac{\poly(d)}{q} \leq \eta.
\qedhere
\]
\end{proof}

\newcommand{\disc}{\mathsf{disc}}
\newcommand{\tv}{\mathsf{TV}}
\newcommand{\wt}{\mathsf{wt}}
\section{Gap Amplification}\label{sec:amp}

Starting from the udPCP construction from \cref{thm: udpcp graph to hdx}, we now obtain a dPCP with small list-decoding soundness error by using a direct product based gap amplification.
\begin{theorem} \label{thm: dp dpcp}
   For all $\eps > 0$ and a circuit $\varphi: \Sigma_0^n \to \{0,1\}$ of size $N$, the language $\sat(\varphi)$ has a dPCP satisfying the following:
        \begin{itemize}
        \item \textbf{Length.} $O_{|\Sigma_0|}\left(N \cdot \poly_{\eps}(\log N)\right)$.

            \item \textbf{Left Alphabet.} For each left vertex $a$, the left alphabet of $a$ is contained in $\Sigma^k$ for some $\Sigma \supseteq \Sigma_0$ of size $|\Sigma_2| = 2^{O_{|\Sigma_0|}(\poly_{\eps}(\log N))}$ and $k = O_{\eps}(1)$. Hence the left alphabet size is $2^{O_{|\Sigma_0|}\left(\poly_{\eps}(\log N)\right)}$.
         \item \textbf{Right Alphabet.} The right alphabet is $\Sigma^{\sqrt{k}}$ where $\Sigma$ and $k$ are the same as in the left alphabet condition.
        \item \textbf{Degrees.} The left degree is $O_{\eps}(1)$, the decoding degree is $\poly_{\eps}(\log N)$, and the right degree is $\poly_{\eps} (\log N)$.
         \item \textbf{Decision Complexity.} $O_{|\Sigma_0|}(\poly_{\eps} (\log N))$.
         \item \textbf{Decoding Complexity.} $O_{|\Sigma_0|}(\poly_{\eps}(\log N))$.
         \item \textbf{Decoding Distribution.} The complete decoding distribution is agnostic. Furthermore, denoting the complete decoding distribution by $\mc{Q}$, for every $\OutRightVertex \in X(\sqrt{k})$ we have that
         \[
         \mc{Q}(\circ, \circ, \OutRightVertex) \leq \frac{\poly_{\eps}(\log N)}{N}.
         \]
         \item \textbf{Completeness.} The dPCP has perfect completeness.
         \item \textbf{Soundness.} The dPCP has $(L, \eps)$-list-decoding soundness for $L:= \eps^{-O(1)}$.
         \end{itemize}
    \end{theorem}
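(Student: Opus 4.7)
The plan is to obtain $\mc{D}$ from the udPCP $\mc{D}_0$ of \cref{thm: udpcp graph to hdx} via the standard direct-product amplification on the complex $X$. I instantiate $\mc{D}_0$ with unique-decoding soundness parameter $\eta_0$ and direct-product soundness parameter $\delta_0$ taken sufficiently small relative to $\eps$, and pick a constant $k = k(\eps) \in \mathbb{N}$ large enough for the $(k, \sqrt{k})$-direct-product soundness guarantee of item~5 in \cref{thm: hdx construction} to apply. The dPCP $\mc{D}$ has left side $A = X(k)$, right side $B = X(\sqrt{k})$, and bipartite constraint graph given by the inclusion structure (with its natural weights). The alphabet on $U \in X(k)$ consists of assignments $\tau \in \Sigma^U$ (where $\Sigma$ is the alphabet of $\mc{D}_0$) satisfying every constraint of $\mc{D}_0$ whose two endpoints lie in $U$; the alphabet on $W \in X(\sqrt{k})$ is all of $\Sigma^W$. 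The constraint on an edge $(U, W)$ is the projection $\tau_U|_W = \tau_W$, which is manifestly a projection.

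For the decoding distributions and maps I handle agnosticism by decoupling $U$ from $t$. Concretely, $\mc{P}_t$ is defined by first sampling $W \in X(\sqrt{k})$ from its stationary measure together with a vertex $v \in W$ and then $t \in \Gamma_{\dec}(v)$ according to the complete decoding distribution of $\mc{D}_0$ (so that $t$ depends only on $v$), and then \emph{independently} sampling $U \supseteq W$ from the inclusion distribution in $X$. The decoder $D_t(U, \tau)$ outputs $D_t^{\mc{D}_0}(v^\ast, \tau_{v^\ast})$ where $v^\ast \in U$ is a canonically chosen vertex lying in $\supp(\mc{P}_t^{\mc{D}_0})$; at least one such $v^\ast$ always exists in the support of $\mc{P}_t$ because $v \in W \subseteq U$ is itself such a vertex. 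Since $U$ is sampled independently of $(v, t)$ given $W$, the conditional distribution of $U$ given $(W, t)$ coincides with its conditional distribution given only $W$, which is exactly the agnosticism condition. Perfect completeness and the bounds on degrees, alphabet sizes, decision and decoding complexity, and the marginal on $B$ all follow from standard counts on the complex $X$, namely $|X(k)|, |X(\sqrt{k})| = O_\eps(|X(1)|) = O_{|\Sigma_0|}(N \cdot \poly_\eps(\log N))$ and the poly-logarithmic link sizes from \cref{thm: hdx construction}.

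For soundness, suppose $T_A$ is a left assignment whose edge test passes with probability at least $\eps$, and define the canonical right table $T_B[W] := \mathrm{plurality}_{U \supseteq W}(T_A[U]|_W)$. A short symmetrization shows that the $(k, \sqrt{k})$-direct product test on $T_A$ (sample $W$, then two $k$-faces $U, U' \supseteq W$, check $T_A[U]|_W = T_A[U']|_W$) passes with probability $\Omega(\eps^2)$. Invoking item~5 of \cref{thm: hdx construction} with direct-product soundness $\Omega(\eps^2)$ furnishes a global $F: X(1) \to \Sigma$ such that $T_A[U] = F|_U$ on a $\poly(\eps)$-weight set of $k$-faces. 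Because each such $T_A[U]$ satisfies all internal $\mc{D}_0$ constraints inside $U$, a sampling argument on the HDX (akin to \cref{lm: bmv hdx expander sampling}) transfers this to show that $F$, viewed as an assignment to $\mc{D}_0$, violates only a $\poly(\eta_0)$-fraction of constraints. The unique-decoding soundness of $\mc{D}_0$ then pins down a single $w \in \sat(\varphi)$ close to the $F$-decoding. To upgrade to $(L, \eps)$-list-decoding with $L = \poly(1/\eps)$, I iterate: after each word $w_i$ is extracted, peel off the $k$-faces on which $T_A$ agrees with the corresponding $F_i$, and observe that the test probability restricted to the surviving faces drops by $\poly(\eps)$. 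After $\poly(1/\eps)$ rounds the leftover passing probability falls below $\eps$, yielding a list of the claimed size.

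The main obstacle I anticipate is the careful bookkeeping needed to make the decoding distribution both agnostic and compatible with a well-defined deterministic decoder $D_t(U, \tau)$, while simultaneously hitting the explicit bound $\mc{Q}(\circ, \circ, b) \leq \poly_\eps(\log N)/N$ demanded by the statement. A related subtlety is the peeling step: one has to check that after conditioning away the $F_i$-consistent $k$-faces, the distribution on the surviving edges still satisfies the hypotheses of the direct-product testing theorem, which typically requires modifying $T_A$ on the surviving faces and carefully tracking how the passing probability degrades; this is the standard but somewhat delicate list-decoding reduction to the basic soundness statement.
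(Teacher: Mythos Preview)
Your construction and overall strategy coincide with the paper's: the dPCP is the $(X(k), X(\sqrt{k}))$ inclusion graph with restriction constraints, and soundness comes from direct-product testing on $X$ followed by the unique-decoding of $\mc{D}_0$ and an iterative extraction. However, two points in your proposal diverge from what actually makes the argument go through.

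First, your left alphabet only hardcodes the $\mc{D}_0$ constraints inside $U$; the paper additionally hardcodes \emph{decoding consistency}, namely that $D_t^{\mc{D}_0}(v,\tau_v)$ is the same for every $v\in U\cap\supp(\mc{P}_t^{\mc{D}_0})$. Without this, your ``canonically chosen $v^\ast$'' is a deterministic function of $(t,U)$ that need not coincide with the random vertex $v$ used to generate $t$ in the complete decoding distribution. In the soundness analysis you must bound $\Pr[D_t^{\mc{D}_0}(v^\ast, T_A[U](v^\ast)) \neq (w_i)_t]$ for $U\in\mc{U}_i$, and the unique-decoding bound for $f_i$ is stated with respect to $(t,u)\sim\mc{Q}_{\hdx}$, whose second marginal is the random $v$, not $v^\ast$. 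The hardcoding lets the paper replace $v^\ast$ by the sampled $v$ for free; your canonical choice does not, and an adversary can concentrate the $\delta k$ disagreements of $T_A[U]$ with $f_i|_U$ exactly at $v^\ast$. Adding the consistency hardcoding fixes this and costs nothing in the other parameters.

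Second, the obstacle you flag in the peeling step is real, and restricting to surviving faces is the wrong fix: the direct-product soundness in \cref{thm: hdx construction} is stated for the full measure on $X(k)$, not for a conditioned sub-distribution. The paper's device (its Lemma~6.3) is to exhibit a table $T^\ast:X(k)\to\Sigma^k$ with the property that $\Pr_{\mathbf{a}}[\dist(F|_{\mathbf{a}},T^\ast[\mathbf{a}])\le 1/3]$ is negligible for \emph{every} $F:X(1)\to\Sigma$. One then overwrites $T_A$ on $\bigcup_i\mc{U}_i$ by $T^\ast$ rather than restricting the measure; the direct-product test is still over all of $X(k)$, and any new global $F$ produced by the soundness theorem must agree with $T_A$ on a $\poly(\delta)$-mass of faces \emph{outside} $\bigcup_i\mc{U}_i$, contradicting termination. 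This same trick, combined with Cauchy--Schwarz, is what gives the bound $\Pr[T_A[U]|_W=T_B[W]\wedge U\notin\bigcup_i\mc{U}_i]\le O(\sqrt{\delta})$ uniformly over all right assignments $T_B$, which is precisely what list-decoding soundness (\cref{def:list-decoding-soundness}) demands: the list must depend on $T_A$ alone.
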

We now dsecribe the construction for~\cref{thm: dp dpcp}. Fix $\eps > 0$ and a size $N$ circuit $\varphi: \Sigma_0^n \to \{0,1\}$ and set $\mc{L} = \sat{(\varphi)}$. Set $\delta := \delta(\eps)>0$ sufficiently small relative to $\eps$ and take $\mc{D}_{\hdx}$ to be the udPCP for $\sat{(\varphi)}$ obtained through \cref{thm: udpcp graph to hdx} with the direct product soundness parameter there set to be $\delta/100$ and $\eta>0$ there to be a small absolute constant so that the result holds. After setting $\delta$, the other parameters are all chosen sufficiently large as described in \cref{thm: udpcp graph to hdx}, and in particular, we have that the parameters of $X$ are as follows. The direct product parameter $k = O_{\delta}(1)$ is sufficiently large, the dimension parameter is $d = O_{\delta}(1)$ is sufficiently large, the number of vertices is $O_{|\Sigma_0|,\delta}(N \cdot \polylogn)$, the prime is $q = \poly_{\delta}(\log N)$. Let us denote the parts of $\mc{D}_{\hdx}$ by

\[
\mc{D}_{\hdx} = \left(X(1), X(2),\Sigma, \{\Phi^{\hdx}_e\}_{e \in X(2)}, \{\mc{P}^{\hdx}_{t}\}_{t \in [n]},  \{D^{\hdx}_{t}\}_{t \in [n]} \right).
\]
The amplified dPCP takes the form \[
\mc{D}_{\dip} = \left( X(k) \cup X(\sqrt{k}), E ,\Sigma^k, \Sigma^{\sqrt{k}}, \{\Phi^{\dip}_e\}_{e \in E},\{\mc{P}^{\dip}_{t}\}_{t \in [n]}, \{ D^{\dip}_{t}\}_{t \in [n]}\right).
\] 
Throughout the proof we will use boldface letters to denote vertices of $\mc{D}_{\dip}$. The components of $\mc{D}_{\dip}$ are as follows:
\begin{itemize}
    \item \textbf{Constraint Graph.} The bipartite inclusion graph with parts $X(k)$ and $X(\sqrt{k})$. Throughout the section we will use $\OutLeftVertex$ to denote a generic vertex in $X(k)$ and $\OutRightVertex$ to denote a generic vertex in $X(\sqrt{k})$. We remark that $\OutLeftVertex$ and $\OutRightVertex$ should also be thought of, respectively, as size $k$ and $\sqrt{k}$ subsets of $X(1)$. For $\OutLeftVertex \in X(k), \OutRightVertex \in X(\sqrt{k})$ we include an edge between them in $E$ if $\OutLeftVertex \supseteq \OutRightVertex$.
    
    \item \textbf{Left Alphabet:} For each $\OutLeftVertex \in X(k)$, its alphabet is a subset $\Sigma^k$ and each symbol $\sigma$ should be thought of as a map from assigning a $\Sigma$ symbol to each $v \in \OutLeftVertex$. We constrain the left alphabet of each $\OutLeftVertex \in X(k)$ to only contain maps  $\sigma: \OutLeftVertex \to \Sigma$ that satisfy the following properties:   
    \begin{itemize}
        \item \textbf{Hardcoding  $\mc{D}_{\hdx}$ constraints.} For any $ (u,v) \in X(2)$ such that both vertices are contained in $\OutLeftVertex$, $\sigma(u)$ and $\sigma(v)$ satisfy the constraint on $(u,v)$ from $\mc{D}_{\hdx}$.
        \item \textbf{Hardcoding for decoding consistency.}  For any $t \in [n]$, and any two $u, v \in \OutLeftVertex$ such that $t$ is in the decoding neighborhood of both $u$ and $v$ in $\mc{D}_{\hdx}$, we have
        \[
        D^{\hdx}_{t}(u, \sigma(u)) =  D^{\hdx}_{t}(v, \sigma(v)).
        \]
    \end{itemize}
    
    \item \textbf{Right Alphabet.} For each $\OutRightVertex \in X(\sqrt{k})$, its alphabet is $\Sigma^{\sqrt{k}}$, and we similarly think of each symbol as a map from $R \to \Sigma$ assigning a $\Sigma$ symbol to each $v \in \OutRightVertex$.
    \item \textbf{Constraints.} For each edge of the constraint graph, $e = (\OutLeftVertex, \OutRightVertex)$, the constraint $\Phi_e$ checks that the assignments to $\OutLeftVertex$ and $\OutRightVertex$ agree. That is, if the left and right vertices are assigned symbols $\sigma \in \Sigma^k,  \sigma' \in \Sigma^{\sqrt{k}}$ respectively,
    \[
    \Phi_{e}(\sigma, \sigma') = 1 \quad \text{if and only if} \quad \sigma|_{\OutRightVertex} =\sigma',
    \]
    where $\sigma|_\OutRightVertex \in \Sigma^{\sqrt{k}}$ is the restriction of $\sigma$ to the vertices in $\OutRightVertex$.
    \item \textbf{Decoding Distributions:} Let $\mc{Q}_{\hdx}$ be the complete decoding distribution (over $[n] \times X(1)$) of $\mc{D}_{\hdx}$ wherein one chooses $t \in [n]$ uniformly, $u \sim \mc{P}_{\hdx}(t)$, and outputs $(t,u)$. The total decoding distribution of $\mc{D}_{\dip}$, denoted by $\mc{Q}_{\dip}$, is a distribution over triples $(t,\OutLeftVertex, \OutRightVertex) \in [n]\times X(k) \times X(\sqrt{k})$ is as follows:
    \begin{itemize}
        \item Choose $\OutLeftVertex \sim X(k)$, and then $\OutRightVertex \subseteq \OutLeftVertex$ uniformly.
        \item Choose  $u \in \OutRightVertex$ uniformly and choose $t\sim \mc{Q}_{\hdx}(\cdot, u)$.
    \end{itemize}
    Then the decoding distribution $\mc{P}^{\dip}_{t}$ is defined as $\mc{Q}_{\dip}(t,\cdot, \cdot)$.
        
    \item \textbf{Decoders:} For $(t, \OutLeftVertex, \OutRightVertex) \in \supp(\mc{Q}_{\dip})$ and left alphabet symbol $\sigma \in \Sigma^k$ for $\OutLeftVertex$, the decoder outputs
    \[
    D^{\dip}_{t}(\OutLeftVertex, \sigma_\OutLeftVertex) = D^{\hdx}_{t}(u, \sigma_\OutLeftVertex(u)).
    \]
    where $u \in X(1)$ is an arbitrary vertex contained in both $\OutLeftVertex$ and $\supp(\mc{P}^{\hdx}_{t})$. Due to the hardcoding for decoding consistency, the above value is the same for any $u$ and hence the decoder is well defined.
    \end{itemize}
    The above completes the description of the new dPCP, and the remainder of the section is dedicated to its analysis.
\subsection{Proof of \cref{thm: dp dpcp}: Everything Except the Soundness}
In this section we analyze the length, alphabet size, decoding degree, decision complexity and the completeness of the construction. 
\vspace{-2ex}
\paragraph{Length.} The length is $|X(k)| + |X(\sqrt{k})| = O_{|\Sigma_0|}(N \cdot \poly_{\eps}(\log N))$ by the properties of $X$ from \cref{thm: hdx construction}.
\vspace{-2ex}
\paragraph{Alphabet Size.} The left and right alphabet sizes are at most $|\Sigma|^k$ and $|\Sigma|^{\sqrt{k}}$ respectively. Since $k = O_{\eps}(1)$ and $|\Sigma| = 2^{O_{|\Sigma_0|}(\poly_{\eps}(\log N))}$, both the left and right alphabet sizes are $2^{O_{|\Sigma_0|}(\poly_{\eps}(\log N))}$.
\vspace{-2ex}
\paragraph{Degrees.} For $\OutLeftVertex \in X(k)$, its degree is $\binom{k}{\sqrt{k}} = O_{\eps}(1)$ and its decoding degree is at most the sum of the decoding degrees, in $\mc{D}_{\hdx}$, of the vertices $u\in \OutLeftVertex$. Hence, the decoding degree is at most $k\poly_{\eps}(\log N)\leq \poly_{\eps}(\log N)$. For $\OutRightVertex \in X(\sqrt{k})$, its degree is at most $d(\OutRightVertex) \cdot d^k$, where $d(\OutRightVertex)$ is the number of $d$-faces of the complex $X$ containing $\OutRightVertex$. Since $d(\OutRightVertex) \leq \poly_{\eps}(\log N)$ by \cref{thm: hdx construction}, the right degree is $\poly_{\eps}(\log N)$.
\vspace{-2ex}
\paragraph{Projection Decision Complexity.} Given a left alphabet symbol, it is clear that the unique right alphabet symbol which could satisfy the constraint is simply a restriction, so it can be computed with a circuit of size $O_{|\Sigma_0|}(\poly_{\eps}( \log N))$. Additionally, the projection circuit needs to checking that the left alphabet symbol $\sigma \in \Sigma^k$ satisfies (1) the hardcoding  of $\mc{D}_{\hdx}$ constraints, and (2) checking that the left alphabet symbol $ \sigma \in \Sigma^{k}$ satisfies hardcoding for decoding consistency.

The circuit complexity of (1) is $O_{|\Sigma_0|}(\poly_{\eps}( \log N))$ as there are $O_{\eps}(1)$ edges $e \in X(2)$ contained in $\OutLeftVertex$ and each $\mc{D}_{\hdx}$ constraint has circuit complexity $O_{|\Sigma_0|}(\poly_{\eps}(\log N))$. The circuit complexity of (2) is also $O_{|\Sigma_0|}(\poly_{\eps}( \log N))$. This is because, by the decoding degree of $\mc{D}$, there are $\poly_{\eps}(\log N)$ pairs of $u \in \OutLeftVertex$ and $t \in [n]$ such that $t$ is in the decoding neighborhood of $u$ in $\mc{D}_{\hdx}$ and the decoders $\mc{D}_{\hdx, t}$ all have circuit complexity $O_{|\Sigma_0|}(\poly_{\eps}( \log N))$ by the decoding complexity of \cref{thm: udpcp graph to hdx}. Finally, computing the restriction and checking that is is equal to a given string amounts to equality between two strings of $O_{|\Sigma_0|}(\poly_{\eps}( \log N))$ bits, which can be done by a circuit of size $O_{|\Sigma_0|}(\poly_{\eps}( \log N))$. Altogether, this shows that the projection decision complexity is $O_{|\Sigma_0|}(\poly_{\eps}( \log N))$. 

\vspace{-2ex}
\paragraph{Decoding Distribution.}  To see that $\mc{Q}_{\dip}$ is agnostic, note that for any $(t,\OutRightVertex)\in\supp(\mc{Q}_{\dip}(\cdot,\circ,\cdot))$, both of the distributions 
$\mc{Q}_{\dip}(\circ,\cdot,\OutRightVertex)$ and 
$\mc{Q}_{\dip}(t,\cdot,\OutRightVertex)$ are of $\OutLeftVertex\sim X(k)$ 
conditioned on $\OutLeftVertex\supseteq \OutRightVertex$, so they are the same.

For the second property, we describe the following way of choosing $(\OutLeftVertex, \OutRightVertex) \sim \mc{Q}(\circ, \cdot, \cdot)$:
\begin{itemize}
        \item Choose $U \in X(d)$ uniformly at random,
        \item Choose $\OutLeftVertex \subseteq U$ of size $k$ uniformly and $\OutRightVertex \subseteq \OutLeftVertex$ of size $\sqrt{k}$ uniformly,
        \item Output $(\OutLeftVertex, \OutRightVertex)$.
    \end{itemize}
It is clear by the above that for every  
$\OutRightVertex \in X(\sqrt{k})$, we have 
\begin{equation} \label{eq: hdx disc} 
\mc{Q}(\circ,\circ, \OutRightVertex)  
=
\sum\limits_{\OutLeftVertex\supseteq\OutRightVertex}\mc{Q}(\circ,\OutLeftVertex, \OutRightVertex)
=
\sum_{\substack{\OutLeftVertex\supseteq\OutRightVertex\\ U\in X(d), U \supseteq \OutLeftVertex}} \frac{1}{|X(d)|} \cdot \frac{1}{\binom{d}{{k}}\binom{k}{\sqrt{k}}} \leq \frac{\poly_{\eps}(\log N)}{N}.
\end{equation}
In the last transition, we use the fact that for a fixed $\OutRightVertex$, 
the number of $d$-faces $U$ and $k$-faces $\OutLeftVertex$ that contain $\OutRightVertex$ is at most $\poly_{\eps}(\log N)^d\leq \poly_{\eps}(\log N)$ (as the degree of $X$ is $\poly_{\eps}(\log N)$), and also that $|X(d)|\geq N$.

\vspace{-2ex}
\paragraph{Decoding Complexity.} It is clear that given a left vertex $\OutLeftVertex$ and a symbol for it, $\sigma$, the decoder runs the decoder of $\mc{D}_{\hdx}$ on a restriction of $\sigma$, and hence the decoding complexity is $O_{|\Sigma_0|}(\poly_{\eps}( \log N))$ by~\cref{thm: udpcp graph to hdx}.

\vspace{-2ex}
\paragraph{Perfect Completeness.} Fix $w \in \mc{L}$. By the perfect completeness of $\mc{D}_{\hdx}$, there is an assignment $A': X(1) \to \Sigma$ which satisfies all of the constraints in $\mc{D}_{\hdx}$ and decodes to $w_t$ with probability $1$ for all $t \in [n]$. Now consider the assignments $T_1: X(k) \to \Sigma^k$ and $T_2: X(\sqrt{k}) \to \Sigma^{\sqrt{k}}$ such that $T_1[\OutLeftVertex] = A'|_\OutLeftVertex$ for each $\OutLeftVertex \in X(k)$ and $T_2[\OutRightVertex] = A'|_\OutRightVertex$ for each $\OutRightVertex \in X(\sqrt{k})$. It is clear that the alphabet hardcoding is satisfied by $T_1[\OutLeftVertex]$ for every $\OutLeftVertex \in X(k)$. Indeed, for any $e \in X(2)$ contained in $\OutLeftVertex$, the restriction of $T_1[\OutLeftVertex]$ to the vertices in $e$ is consistent with $A'$ and hence the constraint $e$ is satisfied by the corresponding values of $T_1[\OutLeftVertex]$. As for the hardcoding for decoding consistency, for any $u \in \OutLeftVertex$ and $t \in [n]$ which is a decoding neighbor of $u$ in $\mc{D}_{\hdx}$, we have that $D'_t(u, T_1[\OutLeftVertex](u)) = w_t$. This latter fact also shows that $T_1, T_2$ decode to $w_t$ with probability $1$ for all $t \in [n]$. Finally, it is straightforward to see that  all of the constraints in $\mc{D}_{\dip}$ are satisfied.

\subsection{Proof of \cref{thm: dp dpcp}: Tools for List-Decoding Soundness}

The proof of list-decoding soundness involves two steps. First, we show a list-decoding variant of the soundness guarantee of the direct product test, informally  saying that there is always a small list of functions $f_i: X(1) \to \Sigma$ which explains almost all of the passing probability of the direct product test in $\mc{D}_{\dip}$. Additionally, each $f_i$, when viewed as an assignment to $\mc{D}_{\hdx}$, is nearly a satisfying assignment of $\mc{D}_{\hdx}$. 
Second, we translate the list-decoding guarantee above into a list-decoding guarantee relative to valid members of the language $\mc{L}$. To accomplish this, we combine the unique decoding soundness of $\mc{D}_{\hdx}$ with fact that the $f_i$'s are nearly satisfying assignments of $\mc{D}_{\hdx}$.

The first step outlined above is achieved via the following lemma.
\begin{lemma} \label{lm: list decoding dp 1}
Let $T_1: X(k) \to \Sigma^k$ be a left assignment to $\mc{D}_{\dip}$ and let $\delta$ be defined as in the construction of $\mc{D}_{\dip}$. Then, there exists an absolute constant $C$ such that, there is a list of at most $L \leq \delta^{-C}$ functions $f_1, \ldots, f_L: X(1) \to \Sigma$, along with disjoint sets $\mc{U}_1, \ldots, \mc{U}_L \subseteq X(k)$ which together satisfy the following:
\begin{itemize}
    \item for each $i\in[L]$, we have $\mu_k(\mc{U}_i) \geq \delta^{C}$ and  $\dist(f_i|_\OutLeftVertex, T_1[\OutLeftVertex]) \leq \delta$ for all $\OutLeftVertex \in \mc{U}_i$; 
    \item for any right assignment $T_2: X(\sqrt{k}) \to \Sigma^{\sqrt{k}}$, we have
    \[
    \Pr_{\OutLeftVertex \sim X(k),  \OutRightVertex \subseteq \OutLeftVertex}[T_1[\OutLeftVertex]|_\OutRightVertex = T_2[\OutRightVertex] \land \OutLeftVertex \notin \cup_{i=1}^{L} \mc{U}_i] \leq \frac{\sqrt{\delta}}{10}.
    \]
\end{itemize}
\end{lemma}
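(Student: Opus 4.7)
The plan is to iteratively apply the $(k,\sqrt{k})$-direct product soundness of $X$ (item~5 of~\cref{thm: hdx construction}, invoked with the parameter $\delta/100$ fixed in the construction of $\mc{D}_{\dip}$) to build up the list. Starting with $\mc{U} := \emptyset$, at each step I let $T_2^{\star}$ be a right assignment maximizing
\[
p \;:=\; \Pr_{\OutLeftVertex \sim X(k),\, \OutRightVertex \subseteq \OutLeftVertex}\bigl[T_1[\OutLeftVertex]|_\OutRightVertex = T_2[\OutRightVertex] \;\land\; \OutLeftVertex \notin \mc{U}\bigr].
\]
If $p < \sqrt{\delta}/10$ I terminate; because $T_2^{\star}$ is the maximizer, the second bullet of the lemma then holds for every $T_2$ simultaneously. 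Otherwise, I will extract a new global function and a new disjoint piece $\mc{U}_i$ as follows.

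For the extraction step, define the truncated table $\tilde{T}_1 \colon X(k) \to \Sigma^k \cup \{\bot^k\}$ by $\tilde{T}_1[\OutLeftVertex] = T_1[\OutLeftVertex]$ for $\OutLeftVertex \notin \mc{U}$ and $\tilde{T}_1[\OutLeftVertex] = \bot^k$ otherwise, where $\bot \notin \Sigma$ is a sentinel symbol. Since $\tilde{T}_1[\OutLeftVertex]|_\OutRightVertex$ is never equal to $T_2^{\star}[\OutRightVertex] \in \Sigma^{\sqrt{k}}$ when $\OutLeftVertex \in \mc{U}$, the quantity $p$ equals $\Pr[\tilde{T}_1[\OutLeftVertex]|_\OutRightVertex = T_2^{\star}[\OutRightVertex]]$. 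Using the fact that the marginal of $(\OutLeftVertex,\OutRightVertex)$ under the $(k,\sqrt{k})$-direct product test coincides with $\OutLeftVertex \sim X(k)$, $\OutRightVertex \subseteq \OutLeftVertex$ uniform, a Cauchy--Schwarz argument applied inside the conditional on $\OutRightVertex$ yields
\[
\Pr\bigl[\text{the $(k,\sqrt{k})$-DP test on $\tilde{T}_1$ passes}\bigr] \;\geq\; p^2 \;\geq\; \delta/100.
\]
By the direct product soundness of $X$, there is a global function $f\colon X(1) \to \Sigma$ with $\mu_k(\{\OutLeftVertex : \dist(\tilde{T}_1[\OutLeftVertex], f|_\OutLeftVertex) \leq \delta/100\}) \geq \poly(\delta/100)$. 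The sentinel guarantees $\dist(\bot^k, f|_\OutLeftVertex) = 1$, so this agreement set lies entirely in $X(k)\setminus\mc{U}$, and inside it $\tilde{T}_1 = T_1$. I therefore take this $f$ as the next $f_i$ and set $\mc{U}_i := \{\OutLeftVertex \in X(k)\setminus\mc{U} : \dist(T_1[\OutLeftVertex], f|_\OutLeftVertex) \leq \delta\}$, which is disjoint from $\mc{U}$ by construction and has $\mu_k$-measure at least $\poly(\delta/100)$; then update $\mc{U} \gets \mc{U}\cup\mc{U}_i$.

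Each iteration consumes measure at least $\poly(\delta/100) \geq \delta^C$ from $X(k)$, for an absolute constant $C$, so the process halts after at most $L \leq \delta^{-C}$ steps, giving the claimed list-size bound. The first bullet follows from the construction of each $\mc{U}_i$, and the second bullet follows from the termination criterion together with the maximality of $T_2^{\star}$. The main step to verify carefully is the Cauchy--Schwarz inequality translating the asymmetric single-sided quantity $p$ into the symmetric two-query direct product test probability that the HDX theorem expects; the sentinel trick is what ensures that each extraction is charged only to fresh $k$-faces, which is both what keeps the $\mc{U}_i$'s disjoint and what drives the termination.
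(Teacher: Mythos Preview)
Your iterative structure and the Cauchy--Schwarz step are fine, but the sentinel trick has a genuine gap. When you apply the direct-product soundness to $\tilde{T}_1$, that table lives over the enlarged alphabet $\Sigma\cup\{\bot\}$, so the global function you are handed is $F\colon X(1)\to\Sigma\cup\{\bot\}$, not $X(1)\to\Sigma$. Once $\mc{U}$ has accumulated nontrivial measure, the constant function $F\equiv\bot$ has $\dist(\tilde{T}_1[\OutLeftVertex],F|_\OutLeftVertex)=0$ on all of $\mc{U}$ and is therefore a perfectly legitimate output of the soundness guarantee; its agreement set is exactly $\mc{U}$, so your extracted $\mc{U}_i\subseteq X(k)\setminus\mc{U}$ can be empty and the iteration stalls. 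Your sentence ``the sentinel guarantees $\dist(\bot^k, f|_\OutLeftVertex) = 1$'' already presupposes $f$ is $\Sigma$-valued, which is exactly what the DP soundness over $\Sigma\cup\{\bot\}$ does not promise.

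The paper handles this with the ``bad table'' of~\cref{lm: bad table}: using that $|\Sigma|$ is super-polynomial in the complex size, one builds $T^*\colon X(k)\to\Sigma^k$ whose entries have pairwise disjoint images, so that \emph{no} $F\colon X(1)\to\Sigma$ has $\dist(F|_\OutLeftVertex,T^*[\OutLeftVertex])\le 1/3$ on more than a $10/\sqrt{d}$ fraction of faces. The paper also arranges the logic differently: the greedy loop uses the purely combinatorial stopping rule ``no $F$ has $\ge\delta^C$ agreement with $T_1$ outside $\mc{U}^\star$'' (no DP soundness inside the loop); after stopping, one overwrites $T_1$ on $\mc{U}^\star$ by $T^*$ (so the alphabet stays $\Sigma$), runs your Cauchy--Schwarz once to get DP-test passing probability $\ge\delta/100$, and invokes DP soundness once to produce $F\colon X(1)\to\Sigma$. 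The bad-table bound forces almost all of $F$'s agreement with the modified table to lie outside $\mc{U}^\star$, contradicting the stopping rule. If you want to keep your per-step invocation of DP soundness, the minimal fix is to replace $\bot^k$ by $T^*[\OutLeftVertex]$ on $\mc{U}$; then the alphabet never leaves $\Sigma$ and the bad-table property localizes each extracted agreement set outside $\mc{U}$, restoring both disjointness and termination.
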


Towards the proof of \cref{lm: list decoding dp 1}, we first show an auxiliary lemma which says that there exists an assignment $X(k) \to \Sigma^k$ which has only trivial agreement with all possible functions $F: X(1) \to \Sigma$.

\begin{lemma} \label{lm: bad table}
    There exists a table $T^*: X(k) \to \Sigma^k$ such that for any function $F: X(1) \to \Sigma$,
    \[
    \Pr_{\OutLeftVertex \sim X(k)}\left[\dist(F|_\OutLeftVertex, T^*[\OutLeftVertex]) \leq \frac{1}{3}\right] \leq \frac{10}{\sqrt{d}},
    \]
    where recall $d$ is the dimension of the complex $X$.
\end{lemma}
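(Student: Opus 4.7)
The plan is to establish the lemma by a probabilistic argument, choosing $T^*$ at random and showing that with positive probability it has the desired property. Specifically, I would take each $T^*[\OutLeftVertex]$ to be drawn independently and uniformly at random from $\Sigma^k$, and then show that a union bound over all candidate functions $F\colon X(1)\to\Sigma$ succeeds using the parameters of the HDX from \cref{thm: hdx construction}.

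First, for any fixed $F\colon X(1)\to\Sigma$ and any fixed $\OutLeftVertex\in X(k)$, each of the $k$ coordinates of $T^*[\OutLeftVertex]$ matches the corresponding coordinate of $F|_\OutLeftVertex$ independently with probability $1/|\Sigma|$. Since $|\Sigma|\geq 3$ (in fact, $|\Sigma|$ is huge), a standard Chernoff bound on $\mathrm{Binomial}(k,1/|\Sigma|)$ gives
\[
p:=\Pr_{T^*[\OutLeftVertex]}\!\left[\dist(T^*[\OutLeftVertex],F|_\OutLeftVertex)\leq \tfrac{1}{3}\right]\leq \exp(-\Omega(k)),
\]
since we need at least $2k/3$ matches when each matches with probability $1/|\Sigma|\leq 1/3$. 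This handles a single $(F,\OutLeftVertex)$ pair.

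Next, for fixed $F$, the indicators $X_\OutLeftVertex := \mathbbm{1}[\dist(T^*[\OutLeftVertex],F|_\OutLeftVertex)\leq 1/3]$ are independent across $\OutLeftVertex\in X(k)$, each with mean at most $p$. By a standard Chernoff-type tail bound, the probability that the empirical average $\Pr_\OutLeftvertex[X_\OutLeftVertex=1]$ exceeds $10/\sqrt{d}$ is at most $\bigl(ep\sqrt{d}/10\bigr)^{(10/\sqrt{d})|X(k)|}$, whose logarithm is $-\Omega\bigl(|X(k)|\cdot k/\sqrt{d}\bigr)$ once $k\gg \log d$. Finally, a union bound over the $|\Sigma|^{|X(1)|}$ choices of $F$ requires
\[
|X(k)|\cdot k/\sqrt{d}\ \gg\ |X(1)|\cdot \log|\Sigma|.
\]
Using the HDX parameters---$|X(k)|\geq |X(1)|/\poly_d(q)$ from property 3 of \cref{thm: hdx construction} (so $|X(k)|$ is not much smaller than $|X(1)|$), together with $\log|\Sigma|=O_{|\Sigma_0|}(\poly_\eps(\log N))=q^{O(1)}$ and $q=\poly_\delta(\log N)$---this inequality is satisfied once the direct-product dimension parameter $k$ is taken sufficiently large relative to $d$, $q$, and the other constants (which is within our control when invoking \cref{thm: hdx construction}). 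Hence the union bound succeeds, and with positive probability the random $T^*$ satisfies the conclusion of the lemma.

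The main obstacle is the union bound over the enormous class of global functions $F$. Resolving it cleanly requires that the Chernoff exponent $|X(k)|k/\sqrt{d}$ dominates $|X(1)|\log|\Sigma|$, which relies on the freedom (in \cref{thm: dp dpcp}) to choose $k$ sufficiently large depending on the other parameters of $X$; fortunately this is exactly the regime in which the amplification construction is set up.
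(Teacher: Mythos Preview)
Your union-bound step does not close. In the amplification setup (see the parameter choices at the start of the proof of \cref{thm: dp dpcp}), both the direct-product dimension $k$ and the complex dimension $d$ are absolute constants depending only on $\eps$ (via $\delta$), whereas $q=\poly_\delta(\log N)$ and $\log|\Sigma|=O_{|\Sigma_0|}(\poly_\eps(\log N))$ both \emph{grow with $N$}. Thus ``$k$ sufficiently large relative to $q$'' is not an option: $k$ is fixed before $N$. With $k,d=O(1)$, your required inequality $|X(k)|\cdot k/\sqrt{d}\gg |X(1)|\cdot\log|\Sigma|$ asks for a constant times $|X(k)|$ to dominate $|X(1)|\cdot\poly(\log N)$, which you have not established (and item~3 of \cref{thm: hdx construction} concerns singular values, not face counts). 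A separate sloppiness is that you have underestimated $\log(1/p)$: the correct bound is $\Omega(k\log|\Sigma|)$ rather than $\Omega(k)$, so the $\log|\Sigma|$ factors would actually cancel---but then you would still need $|X(k)|\gg|X(1)|\sqrt{d}/k$ together with a \emph{weighted} Chernoff inequality for the non-uniform measure $\mu_k$ (the lemma samples $\mathbf{a}\sim\mu_k$, not uniformly), and you have supplied neither.

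The paper sidesteps the union bound entirely with an explicit construction. Since $|\Sigma|$ is super-polynomial in $n$ while $k\cdot|X(k)|$ is only polynomial, one can choose $T^*$ so that the functions $T^*[\mathbf{a}]$ have pairwise disjoint images; in particular, for any two intersecting $k$-faces $\mathbf{a}\neq\mathbf{a}'$, the entries $T^*[\mathbf{a}]$ and $T^*[\mathbf{a}']$ disagree at every common vertex. Assuming some $F$ violates the bound, a Cauchy--Schwarz collision argument inside a random $d$-face produces two distinct $k$-faces sharing $k-1$ vertices that are both $\le \tfrac{1}{3}$-close to $F$; but then $T^*[\mathbf{a}]$ and $T^*[\mathbf{a}']$ would have to agree on some common vertex, a contradiction. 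This collision step is precisely where the $10/\sqrt{d}$ threshold arises.
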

\begin{proof}
We use the fact that the direct-product dimension parameter, $k$, in constructing $\mc{D}_{\dip}$ is large and in particular $k \geq 8$. First let us show that there exists a table $T^*$ satisfying such that every pair for every pair $\OutLeftVertex, \OutLeftVertex' \in X(k)$ with nonempty intersection, we have that $T^*[\OutLeftVertex]$ and $T^*[\OutLeftVertex']$ disagree at every $x \in \OutLeftVertex \cap \OutLeftVertex'$. Indeed, $|\Sigma|$ is super-polynomial in $n$ whereas $k \cdot |X(k)| \leq n^{O_{\eps}(1)}$, and hence we can arrange that the $T^*[\OutLeftVertex]: \OutLeftVertex \to \Sigma$ have disjoint image over all $\OutLeftVertex \in X(k)$.

    Suppose for the sake of contradiction that this is not the case, and that there exists $F: X(1) \to \Sigma$, such that
    \begin{equation} \label{eq: f assumption}
    \Pr_{\OutLeftVertex \sim X(k)}[\dist(F|_\OutLeftVertex, T^*[\OutLeftVertex]) \leq \frac{1}{3}] > \frac{10}{\sqrt{d}}
    \end{equation}

    Now consider the probability that, when choosing $U \sim X(d)$,  $\OutRightVertex \subseteq U$ of size $k-1$, and finally $\OutLeftVertex, \OutLeftVertex'$ of size $k$ such that $\OutRightVertex \subseteq \OutLeftVertex, \OutLeftVertex' \subseteq U$, we have both $\dist(F|_\OutLeftVertex, T^*[\OutLeftVertex]) \leq \frac{1}{3}$ and $\dist(F|_{\OutLeftVertex'}, T^*[\OutLeftVertex']) \leq \frac{1}{3}$. For simplicity, let us refer to these two events as ${\sf E}(\OutLeftVertex)$ and ${\sf E}(\OutLeftVertex')$ respectively. Then,
    \[
    \E_{U \sim X(d), \OutRightVertex \subseteq U}\left[\Pr_{\OutRightVertex \subseteq  \OutLeftVertex, \OutLeftVertex' \subseteq U}[{\sf E}(\OutLeftVertex) \land {\sf E}(\OutLeftVertex')]  \right] = 
    \E_{U \sim X(d), \OutRightVertex \subseteq U}\left[\Pr_\OutLeftVertex[{\sf E}(\OutLeftVertex)~|~U,\OutRightVertex]^2\right] \geq \E_{U \sim X(d), \OutRightVertex \subseteq U}\left[\Pr_\OutLeftVertex[{\sf E}(\OutLeftVertex)~|~U,\OutRightVertex]\right]^2  > \frac{100}{d}.
    \]
    In the first transition we used the fact that $\OutLeftVertex, \OutLeftVertex'$ are independent, in the second transition we used the Cauchy-Schwarz inequality, and in the third transition we used~\eqref{eq: f assumption}. Now using the fact that the probability of choosing $\OutLeftVertex = \OutLeftVertex'$ is  at most $\frac{1}{d-k+1} \leq \frac{100}{d}$, we get there exist $\OutLeftVertex$ and $\OutLeftVertex'$ distinct such that both ${\sf E}(\OutLeftVertex)$ and ${\sf E}(\OutLeftVertex')$ occur. Then 
    $T^*[\OutLeftVertex]$ and 
    $T^*[\OutLeftVertex']$ disagree on at most 
    $\frac{2}{3}k+1 < k$ coordinates, which is a contradiction.
\end{proof}
\begin{proof}[Proof of \cref{lm: list decoding dp 1}]
Initialize $\mc{U}_1 = \mc{U}_1^{\star}= \emptyset$ and $i := 1$. Also set $T^{(1)} := T_1$. While there exists $F: X(1) \to \Sigma$ such that 
    \[
    \Pr_{\OutLeftVertex \sim X(k)}[\dist(F|_\OutLeftVertex, T^{(i)}[\OutLeftVertex]) \leq \delta \land \OutLeftVertex \notin \mc{U}^\star_i] \geq \delta^{C}
    \]
pick such $F$, set $f_i = F$, set $\mc{U}_i = \{\OutLeftVertex \in X(k) \; | \; \dist(F|_\OutLeftVertex, T^{(i)}[\OutLeftVertex]) \leq \delta \land \OutLeftVertex \notin \mc{U}^\star_i \}$, and update $\mc{U}^\star_{i+1} = \mc{U}^\star_i \cup \mc{U}_i$. Then increment $i$ by $1$, and repeat. In the above we fix $C$ to be some absolute constant which we can set arbitrarily large.

Note that this process must terminate in $J \leq \delta^{-C}$ steps.  Indeed, after each increment of $i$, the measure $\mu_k\left(\mc{U}^\star_i\right)$ increases by at least $\delta^{C}$. At the end of the process, let $f_1, \ldots, f_J$ and $\mc{U}_1, \ldots, \mc{U}_J$ be the functions and sets obtained respectively, and let $\mc{U}^\star:= \bigcup_{i \in [J]} \mc{U}_i$. The following conditions hold:
\begin{itemize}
    \item For each $i \in [J]$ and $\OutLeftVertex \in \mc{U}_i$ we have $\dist(f_i|_{\OutLeftVertex}, T_1[\OutLeftVertex]) \leq \delta$.
    \item $\mu_k(\mc{U}_i) \geq \delta^C$,
    \item The sets $\mc{U}_i$ are disjoint.
    \item For any $\OutLeftVertex \notin \mc{U}^\star$, and any $i \in [J]$, we have $\dist(f_i|_\OutLeftVertex, T[\OutLeftVertex]) > \delta$. 
    \item For any $F: X(1) \to \Sigma$, we have 
    \begin{equation} \label{eq: disagree outside L}
        \Pr_{\OutLeftVertex \sim X(k)}[\dist(T[\OutLeftVertex], F|_\OutLeftVertex) \leq \delta \land \OutLeftVertex \notin \mc{U}^\star] \leq \delta^{C}.
    \end{equation}
\end{itemize}
We will now show that $f_1, \ldots, f_{J}$ and $\mc{U}_1, \ldots, \mc{U}_J$ satisfy \cref{lm: list decoding dp 1}. First, note that the number of functions is indeed $J \leq \delta^{-C}$ for some absolute constant $C$ and the sets $\mc{U}_i$ are indeed disjoint as required by \cref{lm: list decoding dp 1}. It is also clear that the first item of \cref{lm: list decoding dp 1} holds by construction of the $\mc{U}_i$'s.

It remains to verify the second item. Let $T^*$ be the table from \cref{lm: bad table} --- which, recall, does not have significant agreement with any function over $X(1)$ --- and define the table $T'$ as follows. For $\OutLeftVertex \in \mc{U}^\star$, set $T'[\OutLeftVertex] = T^*[\OutLeftVertex]$ and for $\OutLeftVertex \notin \mc{U}^\star$, set $T'[\OutLeftVertex] = T[\OutLeftVertex]$. Suppose for the sake of contradiction that the second item fails, so that
\begin{equation} \label{eq: dp list decode temp}  
\Pr_{\OutLeftVertex \sim X(k), \OutRightVertex \subseteq \OutLeftVertex}[T'[\OutLeftVertex]|_\OutRightVertex = T_2[\OutRightVertex]] > \frac{\sqrt{\delta}}{10}.
\end{equation}

Choosing $U \in X(d)$, $\OutRightVertex \subseteq U$ of size $\sqrt{k}$, and $\OutLeftVertex, \OutLeftVertex'$ of size $k$ such that $\OutRightVertex \subseteq \OutLeftVertex, \OutLeftVertex' \subseteq U$ according to the direct product testing distribution, we have
\begin{equation}
    \begin{split}
        \Pr_{U, \OutLeftVertex, \OutLeftVertex', \OutRightVertex}[T'[\OutLeftVertex]|_\OutRightVertex = T'[\OutLeftVertex']|_\OutRightVertex] &\geq \E_{U} \left[\Pr_{\OutLeftVertex,\OutLeftVertex', \OutRightVertex}[T'[\OutLeftVertex]|_\OutRightVertex = T_2[\OutRightVertex] \land T'[\OutLeftVertex']|_\OutRightVertex = T_2[\OutRightVertex]]  \right] \\
        &= \E_{U} \left[\Pr_{\OutLeftVertex, \OutRightVertex}[T'[\OutLeftVertex]|_\OutRightVertex= T_2[\OutRightVertex]]^2 \right]  \\
        &\geq \left(\E_{U} \left[\Pr_{\OutLeftVertex, \OutRightVertex}[T'[\OutLeftVertex]|_\OutRightVertex = T_2[\OutRightVertex]] \right]  \right)^2 \\
        &\geq \frac{\delta}{100}.
    \end{split}
\end{equation}
In the third transition we used by Cauchy-Schwarz, and in the last transition we used~\eqref{eq: dp list decode temp}.
Since the complex $X$ supports a direct product test with soundness $\delta/100$, we get that there exists $F: X(1) \to \Sigma$ such that 
\[
\Pr_{\OutLeftVertex \sim X(k)}\left[\dist(T'[\OutLeftVertex], F|_\OutLeftVertex) \leq \frac{\delta}{100}\right] \geq \delta^{C_1},
\]
for some absolute constant $C_1$ coming from the soundness of the direct product test as described in \cref{thm: hdx construction}. However, this is a contradiction because
\begin{align*}  
\Pr_{\OutLeftVertex \sim X(k)}\left[\dist(T'[\OutLeftVertex], F|_\OutLeftVertex) \leq \frac{\delta}{100}\right] &= \Pr_{\OutLeftVertex \sim X(k)}\left[\dist(T'[\OutLeftVertex], F|_\OutLeftVertex) \leq \frac{\delta}{100} \land \OutLeftVertex \in \mc{U}^\star\right] \\
&+\Pr_{\OutLeftVertex \sim X(k)}\left[\dist(T'[\OutLeftVertex], F|_\OutLeftVertex) \leq \frac{\delta}{100} \land \OutLeftVertex \notin \mc{U}^\star\right] \\
&\leq \Pr_{\OutLeftVertex \sim X(k)}\left[\dist(T^*[\OutLeftVertex], F|_\OutLeftVertex) \leq \frac{\delta}{100}\right] + \delta^{C}  \\
&\leq \frac{10}{\sqrt{d}} + \delta^{C} \\
& < \delta^{C_1}.
\end{align*}
where in the second transition we used \eqref{eq: disagree outside L} along with the construction of $T'$ and in the third transition we used $\delta/100 \leq 1/3$ and \cref{lm: bad table}. To obtain the contradiction in the last transition, we use the fact that the dimension parameter of the HDX $X$ is sufficiently large relative to $1/\delta$ and the fact that the absolute constant $C$ fixed at the start of the proof can be chosen sufficiently large relative to the constant $C_1$ coming from the direct product test soundness.
\end{proof}

\subsection{Proof of \cref{thm: dp dpcp}: Proof of List-Decoding Soundness} 
We now accomplish the second step in the proof of the list-decoding soundness.
 Fix a left assignment, $T_1: X(k) \to \Sigma^k$, to $\mc{D}_{\dip}$. Let $f_1, \ldots, f_L$ and $\mc{U}_1, \ldots, \mc{U}_L$ be the functions and subsets of $X(k)$ obtained by applying \cref{lm: list decoding dp 1} with respect to $T_1$, and let $T_2: X(\sqrt{k}) \to \Sigma^{\sqrt{k}}$ be any right assignment to $\mc{D}_{\dip}$. We start by observing that each $f_i: X(1) \to \Sigma$ is a nearly-satisfying assignment to $\mc{D}_{\hdx}$.

\begin{lemma} \label{lm: f_i pass}
    For each $i \in [L]$, we have 
    \[
    \Pr_{(u,v) \sim X(2)}[\Phi^{{\sf hdx}}_{(u,v)}(f_i(u), f_i(v)) = 1] \geq 1 - 3\delta.
    \]
\end{lemma}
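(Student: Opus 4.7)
Define $\mc{E} = \{(u,v) \in X(2) : \Phi^{\hdx}_{(u,v)}(f_i(u), f_i(v)) = 0\}$, so the goal is to show $\mu_2(\mc{E}) \leq 3\delta$. The plan is to first establish a local $2\delta$-bound on the bad-edge density inside each $\OutLeftVertex \in \mc{U}_i$, and then use the expansion of the $X(k)$-to-$X(2)$ inclusion graph to convert this into a global bound, even though $\mc{U}_i$ only has measure $\mu_k(\mc{U}_i) \geq \delta^{C}$.

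For the local step, fix $\OutLeftVertex \in \mc{U}_i$. The alphabet hardcoding built into $\mc{D}_{\dip}$ forces the valid symbol $T_1[\OutLeftVertex] : \OutLeftVertex \to \Sigma$ to satisfy the $\mc{D}_{\hdx}$ constraint on every pair $(u,v) \subseteq \OutLeftVertex$. Since $\OutLeftVertex \in \mc{U}_i$ entails $\dist(f_i|_\OutLeftVertex, T_1[\OutLeftVertex]) \leq \delta$, the disagreement set $D(\OutLeftVertex) := \{v \in \OutLeftVertex : f_i(v) \neq T_1[\OutLeftVertex](v)\}$ has size at most $\delta k$, and any pair $(u,v) \subseteq \OutLeftVertex$ on which $f_i$ fails a constraint must intersect $D(\OutLeftVertex)$. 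Hence
\[
\Pr_{(u,v) \subseteq \OutLeftVertex}[(u,v) \in \mc{E}] \;\leq\; \frac{|D(\OutLeftVertex)|(k-1)}{\binom{k}{2}} \;\leq\; 2\delta.
\]

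For the global step, I would apply the expander mixing lemma (\cref{lm: expander mixing}) to the weighted bipartite inclusion graph $H$ between $X(k)$ and $X(2)$, with the indicator functions of $\mc{U}_i$ and $\mc{E}$. Sampling a weighted edge of $H$ is the same as sampling $\OutLeftVertex \sim X(k)$ followed by a uniform $(u,v) \subseteq \OutLeftVertex$, so the marginals are $\mu_k$ and $\mu_2$; plugging in the local bound gives
\[
\mu_k(\mc{U}_i)\,\mu_2(\mc{E}) - \lambda\sqrt{\mu_k(\mc{U}_i)\,\mu_2(\mc{E})} \;\leq\; \E_{\OutLeftVertex \in \mc{U}_i}\!\left[\Pr_{(u,v) \subseteq \OutLeftVertex}[(u,v) \in \mc{E}]\right] \cdot \mu_k(\mc{U}_i) \;\leq\; 2\delta\,\mu_k(\mc{U}_i),
\]
where $\lambda$ is the second singular value of $H$. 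Dividing by $\mu_k(\mc{U}_i) \geq \delta^{C}$ and using $\mu_2(\mc{E}) \leq 1$ yields $\mu_2(\mc{E}) \leq 2\delta + \lambda\,\delta^{-C/2}$, which is at most $3\delta$ as soon as $\lambda \leq \delta^{1+C/2}$.

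The one ingredient not stated verbatim in the excerpt is this bound on $\lambda$: \cref{thm: hdx construction} only records the second singular value for the $(X(1), X(i))$ inclusions, not for $(X(k), X(2))$. This is the main (though fairly minor) technical obstacle. It is overcome by a standard HDX bootstrapping that factors the $X(k)\to X(2)$ walk through individual vertex links and invokes items 3 and 4 of \cref{thm: hdx construction}, yielding a bound of the form $\lambda = O(1/\sqrt{k}) + \poly(d)/\sqrt{q}$, which can be driven below $\delta^{1+C/2}$ by enlarging $k$, $d$, and $q$ consistently with the $O_\eps(1)$ and $\poly_\delta(\log N)$ budgets already in force when constructing $X$.
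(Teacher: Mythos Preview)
Your proposal is correct and follows essentially the same two-step strategy as the paper: a local bound on bad edges inside each $\OutLeftVertex \in \mc{U}_i$ via the alphabet hardcoding, followed by the expander mixing lemma on the $(X(k), X(2))$ inclusion graph to globalize. Your local counting argument (bad pairs must meet the disagreement set) is in fact slightly cleaner than the paper's, which goes through a squaring/Cauchy--Schwarz step to reach the same $2\delta + 1/k$ bound. The one place you over-complicate things is the singular-value input: the paper simply asserts that the second singular value of the $(X(k), X(2))$ inclusion graph is $O(1/k)$, citing item~3 of \cref{thm: hdx construction} together with $d$ large relative to $k$ (recall that \cref{thm: hdx construction} explicitly inherits \emph{all} properties of \cite[Theorem~2.13]{bmv}, so the general $(X(j),X(i))$ inclusion bounds from \cite[Lemma~2.8]{bmv} are available), making your bootstrapping detour unnecessary.
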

\begin{proof} 
    Fix $i$ and let $\mc{B} \subseteq X(2)$ be the set of constraints not satisfied by $f_i$, so that $\mu_2(\mc{B})$ is precisely the probability that a randomly chosen constraint, according to the measure of $X$ over $X(2)$, is not satisfied by $f_i$. To prove the lemma, we show $\mu_2(\mc{B}) \leq 3\delta$. 
    
    To start, notice that if there exists $\OutLeftVertex \in X(k)$ such that $T_1[\OutLeftVertex](u) = f_i(u)$ and $T_1[\OutLeftVertex](v) = f_i(v)$, then we automatically have $\Phi^{{\sf hdx}}_{(u,v)}(f_i(u), f_i(v)) = 1$ due to how the alphabet of $\OutLeftVertex$ is hardcoded. This shows that a constraint is highly likely to be satisfied if it is chosen by first choosing $\OutLeftVertex \sim \mc{U}_i$, and then choosing $u,v \in \OutLeftVertex$. In the sequence below, we write $\OutLeftVertex \sim \mc{U}_i$ to denote $\OutLeftVertex$ chosen according to $\sim X(k)$, conditioned on being in $\mc{U}_i$. After $\OutLeftVertex$ is chosen, $u$ and $v$ are chosen as uniformly random vertices inside $\OutLeftVertex$.
    \begin{equation} \label{eq: list dec soundness on DP 1}
    \begin{split}
    \Pr_{\OutLeftVertex \sim \mc{U}_i, u,v \in \OutLeftVertex, u\neq v}[(u,v) \notin \mc{B}] &\geq  \E_{\OutLeftVertex \sim \mc{U}_i}\left[\Pr_{ u,v \in \OutLeftVertex,u\neq v}[T_1[\OutLeftVertex](u) = f_i(u) \land T_1[\OutLeftVertex](v) =f_i(v)]\right] \\
    &\geq \E_{\OutLeftVertex \sim \mc{U}_i}\left[\Pr_{ u \in \OutLeftVertex}[T_1[\OutLeftVertex](u) = f_i(u)]^2 \right] - \frac{1}{k} \\
    &\geq (1-\delta)^2 - \frac{1}{k}\\
    &\geq 1-2\delta-\frac{1}{k}.
    \end{split}
    \end{equation}
In the second transition we are going from a probability over distinct  $u,v\in \OutLeftVertex$ to a probability over $u, v \in \OutLeftVertex$ independent, and the $-1/k$ term accounts for the probability that $u = v$. In the third transition we use the Cauchy-Schwarz inequality and the fact that $f_i$ and $T_1[\OutLeftVertex]$ agree on at least $(1-\delta)$-fraction of vertices in $\OutLeftVertex$ for every $\OutLeftVertex \in \mc{U}_i$.

Next, by item 3 of \cref{thm: hdx construction} and the fact that $d$ is sufficiently large relative to $k$, we get that the second singular value of the bipartite inclusion graph $(X(k), X(2))$ is $O(1/k)$. Applying~\cref{lm: expander mixing} with the left and right subsets are $\mc{U}_i$ and $\mc{B}$ respectively gives:
\[
\left|\Pr_{\OutLeftVertex \sim X(k),  \{u,v\} \in \binom{\OutLeftVertex}{2}}[\OutLeftVertex \in \mc{U}_i, \{u,v\} \in \mc{B}] - \mu_k(\mc{U}_i) \mu_2(\mc{B}) \right| \leq O\left(\frac{1}{k}\right) \cdot \sqrt{ \mu_k(\mc{U}_i) \mu_2(\mc{B}) }.
\] 
Dividing both sides by $\mu_k(\mc{U}_i)$ and rearranging gives 
\begin{align*}  
\mu_2(\mc{B})  &\leq \Pr_{\OutLeftVertex \sim \mc{U}_i,  \{u,v\} \in \binom{\OutLeftVertex}{2}}[(u,v) \in \mc{B}] +  O\left(\frac{1}{k}\right)  \cdot \frac{\sqrt{ \mu_k(\mc{U}_i) \mu_2(\mc{B}) }}{\mu_k(\mc{U}_i)} \\
&\leq 2\delta + \frac{1}{k} + O\left(\frac{1}{k \sqrt{  \mu_k(\mc{U}_i)}}\right) \\
&\leq 3\delta,
\end{align*}
where in the transition we used~\eqref{eq: list dec soundness on DP 1}, and in the last transition we used $\mu_k(\mc{U}_i) \geq \delta^C$ and the fact that $k$ is sufficiently large relative to $1/\delta$.
\end{proof}

Now combining \cref{lm: f_i pass} with the unique decoding soundness assumption of $\mc{D}_{\hdx}$ from \cref{thm: udpcp graph to hdx}, we get that for each $i \in [L]$, there is $w_i \in \mc{L}$ such that 
\begin{equation} \label{eq: f_i decode}  
\Pr_{(t,u) \sim \mc{Q}_{\hdx}}[D^{{\sf hdx}}_t(u, f_i(u)) \neq \left(w_i\right)_t] \leq 5\sqrt{\delta},
\end{equation}
and we recall that the marginal distribution of $u$ in the probability above is the stationary over $X(1)$.
We are now ready to prove list-decoding soundness of $\mc{D}_{\dip}$ by showing that the $w_i$ obtained by \eqref{eq: f_i decode} give the desired list. 

\begin{proof}[Proof of List-Decoding Soundness for \cref{thm: dp dpcp}]
   Take $L = \delta^{-C}$ and $f_i, \mc{U}_i, w_i$ as above for each $i \in [L]$. Set $\mc{U}^\star:= \bigcup_{i \in [L]} \mc{U}_i$. Then
    \begin{align*}
    &\Pr_{(t,\OutLeftVertex, \OutRightVertex) \sim \mc{Q}_{\dip}}\left[T_1[\OutLeftVertex]|_\OutRightVertex = T_2[\OutRightVertex] \land D_t(\OutLeftVertex, T_1[\OutLeftVertex]) \notin \{\left(w_i\right)_t \}_{i \in [L]}\right] \\
    &= \Pr_{(t,\OutLeftVertex, \OutRightVertex) \sim \mc{Q}_{\dip}}\left[T_1[\OutLeftVertex]|_\OutRightVertex = T_2[\OutRightVertex] \land D_t(\OutLeftVertex, T_1[\OutLeftVertex]) \notin \{\left(w_i\right)_t \}_{i \in [L]} \land \OutLeftVertex \in \mc{U}^\star\right] \\
    &+\Pr_{(t,\OutLeftVertex, \OutRightVertex) \sim \mc{Q}_{\dip}}\left[T_1[\OutLeftVertex]|_\OutRightVertex = T_2[\OutRightVertex] \land D_t(\OutLeftVertex, T_1[\OutLeftVertex]) \notin \{\left(w_i\right)_t \}_{i \in [L]} \land \OutLeftVertex \notin \mc{U}^\star\right] \\
    &\leq \Pr_{(t,\OutLeftVertex, \OutRightVertex) \sim \mc{Q}_{\dip}}\left[D_t(\OutLeftVertex, T_1[\OutLeftVertex]) \notin \{\left(w_i\right)_t \}_{i \in [L]} \land \OutLeftVertex \in \mc{U}^\star\right] \\
    &+ \Pr_{(t,\OutLeftVertex, \OutRightVertex) \sim \mc{Q}_{\dip}}\left[T_1[\OutLeftVertex]|_\OutRightVertex = T_2[\OutRightVertex]  \land \OutLeftVertex \notin \mc{U}^\star\right]\\
    &\leq \Pr_{(t,\OutLeftVertex, \OutRightVertex) \sim \mc{Q}_{\dip}}\left[D_t(\OutLeftVertex, T_1[\OutLeftVertex]) \notin \{\left(w_i\right)_t \}_{i \in [L]} \land \OutLeftVertex \in \mc{U}^\star\right] + O\left(\sqrt{\delta}\right),
    \end{align*}
where in the last transition, we are using the second item of \cref{lm: list decoding dp 1} to bound
\[
\Pr_{(t,\OutLeftVertex, \OutRightVertex) \sim \mc{Q}_{\dip}}\left[T_1[\OutLeftVertex]|_\OutRightVertex = T_2[\OutRightVertex]  \land \OutLeftVertex \notin \mc{U}^\star\right] = \Pr_{\OutLeftVertex \sim X(k), \OutRightVertex \subseteq \OutLeftVertex}\left[T_1[\OutLeftVertex]|_\OutRightVertex = T_2[\OutRightVertex]  \land \OutLeftVertex \notin \mc{U}^\star\right] \leq \frac{\sqrt{\delta}}{10}.
\] 
Next, we perform a union bound
    \begin{align*}
        \Pr_{(t,\OutLeftVertex, \OutRightVertex) \sim \mc{Q}_{\dip}}\left[D_t(\OutLeftVertex, T_1[\OutLeftVertex]) \notin \{\left(w_i\right)_t \}_{i \in [L]} \land \OutLeftVertex \in \mc{U}^\star\right] &\leq \sum_{i =1}^{L}\Pr_{(t,\OutLeftVertex, \OutRightVertex) \sim \mc{Q}_{\dip}}\left[D_t(\OutLeftVertex, T_1[\OutLeftVertex]) \neq \left(w_i\right)_t \land \OutLeftVertex \in \mc{U}_i \right] \\
        &\leq \E_{i \sim [L]}\left[\Pr_{(t,\OutLeftVertex, \OutRightVertex) \sim \mc{Q}_{\dip}}[D_t(\OutLeftVertex, T_1[\OutLeftVertex]) \neq \left(w_i\right)_t \; | \; \OutLeftVertex \in \mc{U}_i]\right].
    \end{align*}
In the last line we let $i \sim [L]$ be the distribution where $i$ is chosen with weight proportional to $\mu_k(\mc{U}_i)$. Then, the last line follows from the fact that the $\mc{U}_i$'s are disjoint, so the events $\OutLeftVertex \in \mc{U}_i$ are mutually exclusive. Fix $i$ and examine a probability from the last line more closely. Here, we unpack the joint distribution of $(t, \OutLeftVertex)$ under $\mc{Q}_{\dip}$. By definition, this distribution is obtained by sampling $\OutLeftVertex \sim X(k)$, then a random $u \in \OutLeftVertex$ and finally $t \sim \mc{Q}_{\hdx}(\cdot, u)$. We have
    \begin{equation}\label{eq:list_amp_ana}
        \begin{split}
        \Pr_{(t,\OutLeftVertex, \OutRightVertex) \sim \mc{Q}_{\dip}}[D_t(\OutLeftVertex, T_1[\OutLeftVertex]) \neq \left(w_i\right)_t \; | \; \OutLeftVertex \in \mc{U}_i] 
        &\leq \Pr_{\OutLeftVertex \sim \mc{U}_i, u \in \OutLeftVertex, t \sim \mc{Q}_{\hdx}(\cdot, u)}[D^{{\sf hdx}}_t(u, f_i(u)) \neq \left(w_i\right)_t] \\
        &+  \Pr_{\OutLeftVertex \sim \mc{U}_i,  u \in \OutLeftVertex, t \sim \mc{Q}_{\hdx}(\cdot, u)}[T_1[\OutLeftVertex](u) \neq f_i(u) ].
        \end{split}
    \end{equation}
Now we bound each term individually. Towards the first term, define the function $H_i: X(1) \to [0,1]$ as
\[
H_i(u) = \Pr_{t \sim \mc{Q}_{\hdx}(\cdot, u)}[D^{{\sf hdx}}_t(u, f_i(u)) \neq \left(w_i\right)_t].
\]
In particular, from \eqref{eq: f_i decode} we have
\begin{equation} \label{eq: bound exp in dp sound}  
\mu_1(H_i) := \E_{u \sim X(1)}[ H_i(u)] =   \Pr_{(t,u) \sim \mc{Q}_{\hdx}}[D^{{\sf hdx}}_t(u, f_i(u)) \neq \left(w_i\right)_t] \leq 5\sqrt{\delta}.
\end{equation}

By \cref{lm: expander mixing} and the fact that the bipartite inclusion graph on $(X(k), X(1))$ has second singular value at most $O(1/k)$ by item 3 of \cref{thm: hdx construction} (and the fact that $d$ is sufficiently large relative to $k$), it follows that
\begin{equation} \label{eq: dp soundness end eml}
  \left| \E_{\OutLeftVertex \sim X(k), u \in \OutLeftVertex}[\ind_{\OutLeftVertex \in \mc{U}_i} \cdot H_i(u)] - \mu_k(\mc{U}_i) \mu_1(H_i) \right| \leq O\left(\frac{1}{k}\right) \cdot \sqrt{\mu_k(\mc{U}_i)}.
\end{equation}
Now note that the first term on the right hand side of \eqref{eq:list_amp_ana} is 
\[
\Pr_{\OutLeftVertex \sim \mc{U}_i, u \in \OutLeftVertex, t \sim \mc{Q}_{\hdx}(\cdot, u)}[D^{{\sf hdx}}_t(u, f_i(u)) \neq \left(w_i\right)_t] = \E_{\OutLeftVertex \sim X(k),  u \in \OutLeftVertex}[H_i(u) \; | \; \OutLeftVertex \in \mc{U}_i]=  \frac{\E_{\OutLeftVertex \sim X(k), u \in \OutLeftVertex}[\ind_{\OutLeftVertex \in \mc{U}_i} \cdot H_i(u)]}{\mu_k(\mc{U}_i)}.
\]
Thus, by \eqref{eq: dp soundness end eml},
\[
\Pr_{\OutLeftVertex \sim \mc{U}_i,  u \in \OutLeftVertex, t \sim \mc{Q}_{\hdx}(\cdot, u)}[D^{{\sf hdx}}_t(u, f_i(u)) \neq \left(w_i\right)_t] \leq 5\sqrt{\delta} + O\left(\frac{1}{k \cdot \sqrt{\mu_k(\mc{U}_i)}}\right) \leq 6 \sqrt{\delta}.
\]
The first inequality follows by \eqref{eq: bound exp in dp sound} and the second inequality follows from the fact that $\mu_k(\mc{U}_i) \geq \delta^C$ and $k$ is sufficiently large relative to $1/\delta$. 

For the second term on the right hand side of~\eqref{eq:list_amp_ana}, we have that 
\begin{align*}
  \Pr_{\OutLeftVertex \sim \mc{U}_i, u \in \OutLeftVertex}[T_1[\OutLeftVertex](u) \neq f_i(u) ] 
  =\E_{\OutLeftVertex \sim \mc{U}_i}[\dist(f_i|_{\OutLeftVertex}, T_1[\OutLeftVertex])]
  \leq \delta
\end{align*}
by definition of $f_i$ and $\mc{U}_i$. Altogether, we get that~\eqref{eq:list_amp_ana} is at most $6\sqrt{\delta} + \delta \leq 7 \sqrt{\delta}$, and plugging this up gives that  the list-decoding error is at most $O(\sqrt{\delta})$.
\end{proof}

\section{Transformations on dPCPs}\label{sec:dPCP_transfor}
\cref{thm: dp dpcp} gives us constructions of quasi-linear length dPCPs with arbitrarily small error in the list-decoding soundness, as required by \cref{thm:dPCP main}, but with alphabet size that is super-polynomial, rather than constant. Our goal now is to reduce the alphabet size in~\cref{thm: dp dpcp} down to a constant while maintaining all of the other desired properties; namely, length and small soundness. As discussed in the introduction, we achieve this via a series of transformations, and in this section we present a few of the basic ones here. The transformations in this section are both variants of well known transformations for PCPs.

\subsection{Right Alphabet Reduction}
We start with right alphabet reduction. This procedure is a simple application of error correcting codes along the lines of~\cite[Section 5.2]{dh} except that for our purposes, we also need to keep track of the degree related parameters and analyze the list-decoding soundness (rather than the standard soundness). 

\begin{lemma} \label{lm: right alphabet reduction}
    Suppose a language $\mc{L}$ has a projection dPCP 
    \[
    \mc{D} = \left(A \cup B, E, \Sigma_A, \Sigma_B , \{\mc{P}_t \}_{t \in [n]},\{D_t \}_{t \in [n]} \right)
    \]
    with perfect completeness, $(L, \eps)$-list-decoding soundness, and projection decision complexity ${\sf ProjComp}$. Then for every $0 < \eta < 1$, there is a polynomial-time algorithm which takes $\mc{D}$ and outputs a new projection dPCP for $\mc{L}$,
    \[
    \mc{D}'= \left(A \cup (B \times [m]), E', \Sigma_A, \Omega, \{\mc{P}'_t \}_{t \in [n]},\{D_t \}_{t \in [n]} \right)
    \]
    where $m = O_{\eta}(\log|\Sigma_B|)$  and $\Omega$ is an alphabet of size $O_{\eta}(1)$. Additionally, $\mc{D}'$ satisfies the following:
    \begin{itemize}
        \item \textbf{Degrees.} The decoding degree of each vertex and right degree of each vertex are the same as in $\mc{D}$. The left degree of each vertex is $m$ times its left degree in $\mc{D}$.
        \item \textbf{Projection Decision Complexity.} The decision complexity is ${\sf ProjComp} + O(\log |\Sigma_B| )$.
        \item \textbf{Decoding Complexity.} The decoding complexity is preserved.
        \item \textbf{Preserves Agnosticity.} If the complete decoding distribution of $\mc{D}$ is agnostic, then so is that of $\mc{D}'$.
        \item \textbf{Completeness.} $\mc{D}'$ has perfect completeness.
        \item \textbf{Soundness.} $(L,\eps + 3\eta)$-list-decoding soundness. 
    \end{itemize}
\end{lemma}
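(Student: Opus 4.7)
The plan is to use a standard error-correcting-code-based right-alphabet reduction. First I would invoke~\cref{lm: code GI} to obtain a code $C\colon \Sigma_B \to \Omega^m$ with constant alphabet $|\Omega|=O_{\eta}(1)$, blocklength $m=O_{\eta}(\log|\Sigma_B|)$, and relative distance at least $1-\eta^2/100$, together with a quasi-linear-size encoding circuit. The new right vertex set is $B\times[m]$, where $(b,i)$ is intended to hold the $i$-th symbol of $C(T_B[b])$. For each edge $(a,b)\in E$ with projection $\phi_{(a,b)}\colon\Sigma_A\to\Sigma_B$, I introduce $m$ new edges $(a,(b,i))$ for $i\in[m]$, each a projection constraint demanding that the assignment to $(b,i)$ equal $C(\phi_{(a,b)}(\sigma))_i$ when $a$ is assigned $\sigma\in\Sigma_A$. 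The decoding distribution $\mc{P}'_t$ samples $(a,b)\sim\mc{P}_t$ followed by a uniform $i\in[m]$, and the decoders $D_t$, which read only left vertices, are unchanged.

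Next I would verify the routine parameters. The left degree of each $a$ multiplies by $m$, the right degree of $(b,i)$ equals that of $b$ in $\mc{D}$, and the decoding degree of each $a$ is unchanged. The projection decision complexity increases by $O(\log|\Sigma_B|)$, since after computing $\phi_{(a,b)}(\sigma)$ the only extra work is evaluating the $i$-th symbol of the relevant codeword, which is cheap because $C$ admits a quasi-linear-size encoding circuit. Completeness is immediate: given a satisfying $(T_A,T_B)$ for $\mc{D}$ decoding consistently to $w\in\mc{L}$, one sets $T_B'((b,i)):=C(T_B[b])_i$, and every new edge is satisfied while the decoder output is unaffected. Agnosticism of the complete decoding distribution is preserved because the new index $i$ is sampled independently of $t$ given $b$.

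The main content is the list-decoding soundness. Given any left assignment $T_A$, I use the list $\{w_1,\dots,w_L\}$ supplied by the soundness of $\mc{D}$, which depends only on $T_A$ and is therefore still valid for $\mc{D}'$. For any $T_B'\colon B\times[m]\to\Omega$, let $y_b\in\Omega^m$ collect the symbols $T_B'((b,\cdot))$ and write $\alpha(a,b)$ for the fractional agreement between $y_b$ and $C(\phi_{(a,b)}(T_A[a]))$, which is exactly the probability over a uniform $i$ that the new edge $(a,(b,i))$ is satisfied. Hence the $\mc{D}'$ bad probability equals $\E_{t,(a,b)\sim\mc{P}_t}[\alpha(a,b)\cdot\ind_{\text{bad decode}}]$. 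Splitting by whether $\alpha(a,b)<\eta$ or $\geq\eta$, the former contributes at most $\eta$. For the latter regime, $\phi_{(a,b)}(T_A[a])$ lies in the Johnson list $\mathcal{L}(b)$ of codewords of $C$ whose agreement with $y_b$ is at least $\eta$, which by~\cref{fact: johnson} has size $L'=O(1/\eta)$. I would enumerate $\mathcal{L}(b)$ as $\sigma_1(b),\dots,\sigma_{L'}(b)$, define $T_B^{(j)}[b]:=\sigma_j(b)$ (arbitrary if $j$ exceeds $|\mathcal{L}(b)|$), and observe that the event $\phi_{(a,b)}(T_A[a])\in\mathcal{L}(b)$ decomposes disjointly over $j\in[L']$ into $T_B^{(j)}[b]=\phi_{(a,b)}(T_A[a])$, each of which is controlled by the list-decoding soundness of $\mc{D}$ applied to $(T_A,T_B^{(j)})$.

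The main obstacle is the quantitative tightness of the soundness bound: a naive union bound across the $L'$ candidate assignments would produce an additive loss of $L'\cdot\eps$, whereas the lemma only permits an additive loss of $3\eta$ on top of $\eps$. Recovering this requires choosing the code distance close enough to $1$ to keep the Johnson list small, and then carefully packaging the decomposition so that the application of list-decoding soundness across the $L'$ candidates amortizes rather than unions naively (for example, exploiting that the candidate events partition a single indicator and are weighted by $\alpha(a,b)$ rather than by $1$). The remaining items, namely the degree and complexity bookkeeping and the verification that agnosticism and the complete-decoding marginals carry over, are routine.
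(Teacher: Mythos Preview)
Your construction and the verification of degrees, complexity, completeness, and agnosticism are correct and match the paper. The soundness decomposition you set up is also the right one, and you are right to flag that a naive union bound over the $L'=O(1/\eta)$ Johnson candidates loses $L'\cdot\eps$ rather than $O(\eta)$. However, your proposed fix---``the candidate events partition a single indicator and are weighted by $\alpha(a,b)$''---does not actually close the gap. The partition property gives you $\sum_j p_j(b)\le 1$ (where $p_j(b)$ is the bad-decode-and-project-to-$\sigma_j(b)$ probability), but what you need is a bound on $\sum_j \beta_j(b)\,p_j(b)$ where $\beta_j(b)$ is the agreement of $y_b$ with $C(\sigma_j(b))$, and neither the partition nor the weighting alone controls this.

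The paper's resolution uses two ingredients you are missing. First, it bounds $\sum_{j}\beta_j(b)\le 1+O(\eta)$ by inclusion--exclusion: the union of the agreement sets has measure at most $1$, and the pairwise overlaps are each at most the codeword agreement, which is at most $\eta^3$ if the code has distance $1-\eta^3$ (your choice $1-\eta^2/100$ is slightly too weak for this arithmetic to yield $O(\eta)$). Second, and crucially, it then factors $\sum_j \beta_j(b)\,p_j(b)\le \bigl(\max_j p_j(b)\bigr)\cdot\sum_j\beta_j(b)$ and defines a \emph{single} right assignment $T^*_B[b]:=\arg\max_\sigma \delta_b(\sigma)$, so that only one invocation of the $(L,\eps)$ list-decoding soundness of $\mc{D}$ is needed. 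This yields $\E_b\bigl[\delta_b(T^*_B[b])\bigr]\cdot(1+2\eta)+\eta\le\eps+3\eta$. Without the max trick you are forced into $L'$ separate applications of the soundness bound, and no amount of amortization over disjoint events recovers the claimed additive loss.
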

\begin{proof}

    Fix $\eta \in (0,1)$ and let $\mc{D}$ denote the original dPCP. Let us arbitrarily identify $\Sigma_B$ with $\{0,1\}^{k}$ for $k = \log(|\Sigma_B|)$ and take $\mc{C}: \{0,1\}^k \to \Omega^m$ to be the code from \cref{lm: code GI} instantiated with relative distance $1 - \eta^3$, alphabet size $|\Omega| = O_{\eta}(1)$, and blocklength $m = O_{\eta}\left(k\right)$.
    
    The new dPCP $\mc{D}'$ is constructed as follows. The constraint graph is bipartite and has vertices $A \cup B \times [m]$. The edges of the constraint graph are all $(a, (b,i)) \subseteq A \times \left( B \times [m]\right)$ such that $(a,b) \in E$. The constraint on $(a,(b,i))$ is as follows. Let $\sigma_a$ be label for $a$ and $\omega$ be a label for $(b,i)$. Since $\mc{D}$ has projection constraints, there is a unique $\sigma_b \in \Sigma_B$ satisfying $\Phi_{(a,b)}(\sigma_a, \sigma_b) = 1$. The constraint on $\mc{D}'$ checks if $\mc{C}(\sigma_b)_i = \omega$. The new decoding distributions $\mc{P}'_t$ are generated by choosing $(a,b) \sim \mc{P}_t$, choosing $i \in [m]$ uniformly, and outputting $(a, (b,i))$.  We proceed to verifying the desired properties. 
    \vspace{-2ex}
\paragraph{Degrees.} It is clear that decoding degree is preserved, as each $a \in A$ has the same decoding neighborhood in both $\mc{D}$ and $\mc{D}'$. The right degree is preserved because each $(b,i) \in B \times [m]$ has the same neighbors in the new constraint graph as neighbors as $b$ in the original constraint graph. For the left degree, note that each $a \in A$ is adjacent to $(b,i)$ for all $b$ that are adjacent to $a$ in the original constraint graph and all $i \in [m]$.
    
\vspace{-2ex}
\paragraph{Projection Decision Complexity.} Fix a constraint $(a, (b,i))$ in $\mc{D}'$ and suppose the assignment to $a$ is $\sigma$. The projection circuit first computes the $\sigma' \in \Sigma_B$ which is the unique $b$ alphabet symbol that would satisfy the constraint in $\mc{D}$. This part has circuit complexity ${\sf ProjComp}$. Then, the projection circuit calculates the encoding of $\sigma'$, under the linear-time encodable code, and outputs its $i$th symbol. This part has circuit complexity $O(\log|\Sigma_B|)$ since $\mc{C}$ is linear-time encodable. Overall, this shows that the projection decision complexity is ${\sf ProjComp}  + O(\log|\Sigma_B|)$.

\vspace{-2ex}
\paragraph{Decoding Complexity.} The decoder is unchanged and hence the decoding complexity is preserved.

\vspace{-2ex}
\paragraph{Preserves Agnosticity} Let $\mc{Q}$ and $\mc{Q}'$ denote the complete decoding distributions of $\mc{D}$ and $\mc{D}'$ respectively. Note that for any $(b,i) \in B \times [m]$ and $t \in [n]$, we have 
$\mc{Q}(\cdot, \cdot, b) = \mc{Q}'(\cdot, \cdot, (b,i))$ 
and 
$\mc{Q}(t, \cdot, b) = \mc{Q}'(t, \cdot, (b,i))$. 
It follows that agnosticity is preserved.

\vspace{-2ex}
\paragraph{Perfect Completeness.} Fix $w \in \mc{L}$ and let $T_1:A \to \Sigma_A$ and $T_2: B\to \Sigma_B$ be the assignment satisfying all of the constraints to $\mc{D}$ and decoding to $w$ with probability $1$ for every $t \in [n]$. Then consider the assignment to $\mc{D}'$ where the left assignment is still $T_1$ and the new right assignment, which we call $T'_2: B \times [m] \to \Omega$ is defined as
   \[
   T'_2[(b,i)] = \mc{C}(T_2[b])_i.
   \] 
   It is straightforward to check that this assignment satisfies all of the constraints of $\mc{D}'$ (and $T_1$ still decodes to $w$ with probability $1$ for every $t \in [n]$ because it is unchanged).
\vspace{-2ex}
    \paragraph{List-Decoding Soundness.} Fix an assignment to $T_1: A \to \Sigma$ and let $w_1, \ldots,w_L \in \mc{L}$ be the list guaranteed by the list-decoding assumption of $\mc{D}$. For each $a \in A$ and $b \in B$ such that $(a,b) \in E$, we let $T_1[a]_{\rightarrow b} \in \Sigma_B$ be the unique symbol $\sigma$ such that $\Phi_{(a,b)}(T_1[a], \sigma) = 1$. Let $\mc{Q}$ be the complete decoding distribution of $\mc{D}$ generated by choosing $t \in [n]$, $(a,b) \sim \mc{P}_t$, and outputting $(t,a,b)$. 
    For each $b \in B$ and $\sigma \in \Sigma_B$, define
    \[
    \delta_b(\sigma) = \Pr_{(t,a) \sim \mc{Q}(\cdot, \cdot, b)}[T_1[a]_{\rightarrow b} = \sigma \land D_t(a, T_1[a]) \notin \{\left(w_i\right)_t \}_{i \in [L]}].
    \]
    For any $T_2: B \to \Sigma_B$, the expectation $\E_{b \sim \mc{Q}}[\delta_b(T_2[b])]$ is precisely the list-decoding error in $\mc{D}$ of $T_1, T_2$ relative to the list $\{w_1,\ldots,w_L\}$. Here and henceforth, $b \sim \mc{Q}$, refers to $b$ sampled according to the marginal of $\mc{Q}$ over $B$. By the list-decoding soundness assumption of $\mc{D}$, we have
    \begin{equation} \label{eq: list decoding assumption in right alph red}  
   \E_{b \sim \mc{Q}}[\delta_b(T_2[b])] \leq \eps.
    \end{equation}
    for any $T_2: B \to \Sigma_B$.
    
    Now, fix an arbitrary right assignment $T'_2: B \times [m] \to \Omega$ for  $\mc{D}'$. We will show that the list-decoding error of $T_1$ and $T'_2$ relative to $\{w_1,\ldots,w_L \}$ is at most $\eps + 3\eta$. We can express this list-decoding error as follows:
    \begin{equation} \label{eq: list decoding error in right alph red}     
    \E_{b \sim \mc{Q}}\left[\sum_{\sigma \in \Sigma_B} \delta_b(\sigma) \cdot \Pr_{i \in [m]}[T'_2[(b,i)] = \mc{C}(\sigma)_i]\right],
    \end{equation}
    so our goal is to now bound the expectation above.
   Fix a $b \in B$ and examine the summation in the expectation. Define ${\sf list}_b$ to consist of all $\sigma \in \Sigma_B$ such that $\mc{C}(\sigma)$ agrees with at least $\eta$-fraction of the entries of the string $(T[(b,1)], \ldots, T[(b, m)]) \in \Omega^m$. By the list-decoding bound in \cref{fact: johnson} and the fact that $\eta > 2\eta^{3/2}$, we have that $|{\sf list}_b| \leq 2/\eta$.  We define a right assignment, $T^*_2: B \to \Sigma_B$,  for $\mc{D}$ as follows. For each $b \in B$, set $T^*_2[b]$ to be the alphabet symbol $\sigma \in \Sigma_B$ which maximizes $\delta_b(\sigma)$. Then, the summation from \eqref{eq: list decoding error in right alph red} can be bounded by
    \begin{align*}
        &\sum_{\sigma \in \Sigma_B} \delta_b(\sigma) \cdot \Pr_{i \in [m]}[T'_2[(b,i)] = \mc{C}(\sigma)_i] \\
        &= \sum_{\sigma \in {\sf list}_b} \delta_b(\sigma) \cdot \Pr_{i \in [m]}[T'_2[(b,i)] = \mc{C}(\sigma)_i] + \sum_{\sigma \notin {\sf list}_b} \delta_b(\sigma) \cdot \Pr_{i \in [m]}[T'_2[(b,i)] = \mc{C}(\sigma)_i] \\
        &\leq \sum_{\sigma \in {\sf list}_b} \delta_b(\sigma) \cdot \Pr_{i \in [m]}[T'_2[(b,i)] = \mc{C}(\sigma)_i] + \eta \\
        &\leq \delta_b(T^*_2[b])\cdot \left(\sum_{\sigma \in {\sf list}_b} \Pr_{i \in [m]}[T'_2[(b,i)] = \mc{C}(\sigma)_i] \right) + \eta \\
        &\leq \delta_b(T^*_2[b]) \cdot \left( \Pr_{i \in [m]}[\exists \sigma \in {\sf list}_b, \; T'_2[(b,i)]=\mc{C}(\sigma)_i]+ \sum_{\sigma \neq \sigma' \in {\sf list}_b} \Pr_{i \in [m]}[T_2(b,i)=\mc{C}(\sigma)_i = \mc{C}(\sigma')_i]\right) + \eta \\
        &\leq \delta_b(T^*_2[b]) \cdot (1 + 2\eta) + \eta \\
        &\leq \delta_b(T^*_2[b]) + 3\eta.
    \end{align*}
    In the second transition we used $\sum_{\sigma \in \Sigma_B} \delta_b(\sigma) \leq 1$ and that  $\Pr_{i \in [m]}[T'_2[(b,i)] = \mc{C}(\sigma)_i] \leq \eta$ for each $\sigma \notin {\sf list}_b$. In the third transition we used the definition of $T^*_2[b]$. In the fourth transition we used the inclusion-exclusion principle. 
    In the fifth transition we used the fact that $|{\sf list}_b|\leq 2/\eta$ and that for $\sigma\neq \sigma'$ we have 
    $\Pr_{i \in [m]}[\mc{C}(\sigma)_i = \mc{C}(\sigma')_i]\leq \eta^3$.
    Plugging this bound back into the expectation from \eqref{eq: list decoding error in right alph red}, we get that the list-decoding error of $\mc{D}'$ is at most 
    \[
        \E_{b \sim \mc{Q}}\left[\sum_{\sigma \in \Sigma_B} \delta_b(\sigma) \cdot \Pr_{i \in [m]}[T'_2[(b,i)] = \mc{C}(\sigma)_i]\right] \leq \E_{b \sim \mc{Q}}\left[\delta_b\left(T^*_2[b]\right)\right] + 3\eta \leq \eps + 3\eta,
    \]
    where in the last transition we used \eqref{eq: list decoding assumption in right alph red} applied to the tables $T_1, T^*_2$.
\end{proof}

\subsection{Right Degree Reduction}
The goal of this subsection is to describe a procedure which transforms an arbitrary dPCP, $\mc{D}$, into a right-regular dPCP $\mc{D}'$ with constant right degree. Additionally, in the complete decoding distribution of $\mc{D}'$, the marginal distribution over the right vertices is uniform. This transformation is similar to the one in~\cref{sec:udpcp_reg}, but some extra care is needed since we deal with list-decoding soundness rather than unique-decoding soundness. 

\begin{lemma} \label{lm: right degree reduction}
    Suppose a language $\mc{L} \subseteq \Sigma_0^n$ has a projection dPCP 
    \[
    \mc{D} = \left( A \cup B, E,\{\mc{P}_t \}_{t \in [n]}, \Sigma_A, \Sigma_B ,\{D_t \}_{t \in [n]} \right)
    \]
    with $(L, \eps)$-list-decoding soundness and suppose there exists $M \in \mathbb{N}$ such that for every $(a,b) \in A \times B$, one can write $\mc{Q}(\circ, a,b) = \frac{w(a,b)}{M}$ for some $w(a,b) \in \mathbb{N}$. Then for any $d' \geq \min_{a,b}w(a,b)$, there is a polynomial-time algorithm which takes $\mc{D}$ and outputs a projection dPCP  
    \[
    \mc{D}' = \left(A \cup B', E',\{\mc{P}'_t \}_{t \in [n]}, \Sigma_A, \Sigma_B, \{D_t\}_{t \in [n]}  \right),
    \]
for $\mc{L}$ such that the following hold:
    \begin{itemize}
        \item \textbf{Length.} $|B'| = M$ and hence the length is $|A| + M$.
        \item \textbf{Degrees.} The left degree of every vertex is multiplied by $d'$, the decoding degree of every vertex is preserved, and the right degree of every vertex is $d'$.
        \item \textbf{Projection decision complexity.} The projection decision complexity is preserved.
        \item \textbf{Decoding complexity.} The decoding complexity is preserved.
        \item \textbf{Complete decoding distribution.} Call the complete decoding distribution $\mc{Q}'$. Then,
        \begin{itemize}
            \item $\mc{Q}'$ is $d'M$-discrete.
            \item For every $b \in B'$, we have $\mc{Q}'(\circ, \circ, b) = \frac{1}{M}$ and hence the marginal of $\mc{Q}'$ over $B'$ is uniform.
            \item For every $a \in A$, we have $\mc{Q}'(\circ, a, \circ) = \mc{Q}(\circ, a, \circ)$

            \item If $\mc{Q}$ is agnostic, then $\mc{Q}'$ is agnostic as well.
        \end{itemize}
        \item \textbf{Completeness.} The dPCP $\mc{D}'$ has perfect completeness.
        \item \textbf{List-decoding soundness. }$(L, \eps + O(d'^{-1/2}))$-list-decoding soundness.
    \end{itemize}
    \end{lemma}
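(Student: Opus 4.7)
The plan follows the blueprint of \cref{lm: right degree reduction udpcp} (the analogous right-degree reduction for udPCPs), adapted to the list-decoding setting via the $\delta_b(\sigma)$ accounting from the proof of \cref{lm: right alphabet reduction}. For each $b\in B$, set $C_b := \sum_a w(a,b) = M\cdot\mc{Q}(\circ,\circ,b)\in\mathbb{N}$ and introduce $C_b$ copies $(b,1),\ldots,(b,C_b)$ forming a cloud $B_b\subseteq B'$, so $|B'|=M$. Using \cref{lm: poly time bip expander} (if necessary as a union of smaller sub-expanders to satisfy its bi-regularity hypothesis, for which the condition $d'\geq \min_{a,b}w(a,b)$ gives the needed slack), build for each $b$ a $d'$-right-regular bipartite multigraph $H_b$ between a left multiset in which each neighbor $a$ of $b$ appears with multiplicity $w(a,b)$ and the cloud $B_b$, with second singular value $O(d'^{-1/2})$. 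The alphabets, constraints, and decoders are inherited vertex-by-vertex from $\mc{D}$; the edges are $E':=\bigsqcup_b E(H_b)$; and $\mc{P}'_t$ samples $b$ from the $B$-marginal of $\mc{P}_t$ followed by a uniformly random $H_b$-edge $(a,(b,i))$.

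Next I would verify the structural properties: the length $|A|+M$, right-degree $d'$, the preservation of left/decoding degrees, and the preservation of projection decision complexity and decoding complexity are immediate since constraints and decoders are inherited. For the complete decoding distribution $\mc{Q}'$, one directly checks $\mc{Q}'(\circ,\circ,(b,i))=1/M$, $\mc{Q}'(\circ,a,\circ)=\mc{Q}(\circ,a,\circ)$, $Md'$-discreteness, and inheritance of agnosticity (the conditional marginals under $\mc{Q}'$ depend on $(b,i)$ only through $b$, so they reduce to those of $\mc{Q}$). Perfect completeness follows by extending any satisfying $T_2:B\to\Sigma_B$ via $T'_2[(b,i)]:=T_2[b]$.

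For the list-decoding soundness, fix $T_1:A\to\Sigma_A$ and let $\{w_i\}_{i\in [L]}\subseteq\mc{L}$ be the list guaranteed by $\mc{D}$'s soundness. Define
\[
\delta_b(\sigma) := \Pr_{(t,a)\sim\mc{Q}(\cdot,\cdot,b)}\!\Bigl[T_1[a]_{\to b}=\sigma\;\land\;D_t(a,T_1[a])\notin\{(w_i)_t\}_{i\in[L]}\Bigr],
\]
so $\E_{b\sim\mc{Q}}[\delta_b(T_2[b])]\leq\eps$ for every deterministic (hence every randomized) $T_2:B\to\Sigma_B$. For any $T'_2:B'\to\Sigma_B$, let $X_{b,\sigma}:=\{(b,i):T'_2[(b,i)]=\sigma\}$ partition $B_b$ and let $Y_{b,\sigma,t}\subseteq A_b$ denote the multiset slots contributing to $\delta_b(\sigma)$ at time $t$. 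Note the $Y_{b,\sigma,t}$ are pairwise disjoint in $\sigma$ because the projection structure of $\mc{D}$ forces $T_1[a]_{\to b}$ to pin down $\sigma$ uniquely. The list-decoding error of $T_1,T'_2$ in $\mc{D}'$ can be rewritten as $\E_{b\sim\mc{Q},t}\bigl[\sum_\sigma\E_{H_b}[\ind_{Y_{b,\sigma,t}}\cdot\ind_{X_{b,\sigma}}]\bigr]$. Applying \cref{lm: expander mixing} on each $H_b$, then Cauchy--Schwarz over $\sigma$ using $\sum_\sigma\mu(X_{b,\sigma})=1$ and $\sum_\sigma\mu_L(Y_{b,\sigma,t})\leq 1$, this is at most
\[
\E_{b\sim\mc{Q}}\!\left[\sum_\sigma\delta_b(\sigma)\mu(X_{b,\sigma})\right]+O(d'^{-1/2}).
\]
Defining the randomized assignment $T^*_2[b]:=\sigma$ with probability $\mu(X_{b,\sigma})$, the main term equals $\E_{b\sim\mc{Q}}\bigl[\E\,\delta_b(T^*_2[b])\bigr]\leq\eps$ by $\mc{D}$'s list-decoding soundness, yielding the claimed $(L,\eps+O(d'^{-1/2}))$-list-decoding soundness.

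The main obstacle is coordinating the expander-mixing deviation across the $\sigma$-branches and across the $t$-averaging in $\delta_b$: one must introduce the left indicators $\ind_{Y_{b,\sigma,t}}$ parametrized by $t$ and show that the summed deviation $\sum_\sigma O(d'^{-1/2})\sqrt{\mu(X_{b,\sigma})\mu_L(Y_{b,\sigma,t})}$ telescopes to an overall $O(d'^{-1/2})$ via Cauchy--Schwarz, which in turn relies crucially on the pairwise-disjointness of the $Y_{b,\sigma,t}$ (a consequence of projection). A secondary but routine issue is matching the weights $w(a,b)$ exactly to $H_b$'s multiset structure while respecting the size/divisibility hypothesis of \cref{lm: poly time bip expander}, which the hypothesis $d'\geq\min_{a,b}w(a,b)$ is precisely designed to accommodate.
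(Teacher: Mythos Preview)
Your overall architecture (clouds $B_b$ of size $C_b=\sum_a w(a,b)$, a $d'$-regular bipartite expander $H_b$ on the multiset $A_b\times B_b$, inherited constraints/decoders, expander mixing plus Cauchy--Schwarz over $\sigma$) matches the paper. There is, however, a real gap in how you define $\mc{P}'_t$, and it propagates into both completeness and the soundness bookkeeping.

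You set $\mc{P}'_t$ to sample $b$ from the $B$-marginal of $\mc{P}_t$ and then a \emph{uniformly random $H_b$-edge} $(a,(b,i))$. This can produce pairs $(t,a)$ with $a\in A_b$ but $(a,b)\notin\supp(\mc{P}_t)$ (equivalently $t\notin\Gamma_{\dec}(a)$ via that $b$). Perfect completeness of $\mc{D}$ only guarantees $D_t(a,T_A[a])=w_t$ on the support of $\mc{P}_t$, so your extended assignment $T'_2[(b,i)]:=T_2[b]$ need not decode correctly on these new triples; perfect completeness can fail. The paper instead defines $\mc{Q}'$ by sampling $(t,a,b)\sim\mc{Q}$ first and then a uniform $H_b$-neighbor $b'$ of $a$. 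This keeps $(t,a)$ inside the original support, so completeness is immediate.

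With the paper's $\mc{P}'_t$, your per-$t$ soundness analysis no longer applies as written: conditioned on $(t,b)$, the pair $(a,b')$ is \emph{not} a uniform $H_b$-edge unless $\mc{Q}$ is agnostic (the $a$-marginal is $\mc{Q}(t,\cdot,b)$, not $\mc{Q}(\circ,\cdot,b)$). The paper's remedy is to marginalize $t$ before invoking mixing: conditioned on $b$ alone, $(a,b')$ \emph{is} a uniform $H_b$-edge, and the $t$-dependence is folded into the $[0,1]$-valued left function
\[
F_{b,\sigma}(a)\;:=\;\ind_{a\in Y_{b,\sigma}}\cdot \Pr_{t\sim\mc{Q}(\cdot,a,b)}\bigl[D_t(a,T_1[a])\notin\{(w_i)_t\}\bigr].
\]
One application of \cref{lm: expander mixing} to $F_{b,\sigma},G_{b,\sigma}$ and your Cauchy--Schwarz over $\sigma$ (using $\sum_\sigma\mu(F_{b,\sigma})\le 1$, $\sum_\sigma\mu(G_{b,\sigma})=1$) then gives the $\eps+O(d'^{-1/2})$ bound, and the main term is exactly $\E_{b\sim\mc{Q}}[\delta_b(T^*_2[b])]\le\eps$ for your randomized $T^*_2$. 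In short: keep your $H_b$ and your $\sigma$-accounting, but replace ``uniform $H_b$-edge after sampling $b$'' by ``sample $(a,b)\sim\mc{P}_t$, then uniform $H_b$-neighbor of $a$'', and absorb $t$ into $F_{b,\sigma}$ rather than carrying $Y_{b,\sigma,t}$.
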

\begin{proof}
Let $G_{\disc}$ be the bipartite graph on $A \cup B$, where there are $w(a,b)$ multi-edges between $a$ and $b$ for each $(a,b) \in A \times B$. 
The dPCP $\mc{D}'$ is constructed as follows. For each vertex $b \in B$, let $\Gamma_{\disc}(b)$ be the neighborhood of $b$ in $G_{\disc}$, where we view $\Gamma_{\disc}(b)$ as a multiset. Specifically, each neighbor $a$ is listed with multiplicity equal to the number of edges going from $b$ to $a$, so that $|\Gamma_{\disc}(b)| = d(b)$. Then, let $H_b = (A_b, B_b, E_b)$ be a $d'$-regular bipartite graph with $d(b)$ vertices on each side and second singular value $O(d'^{-1/2})$, and identify $A_b$ with $\Gamma_{\disc}(b) \subseteq A$. 
Recall that such a graph can be constructed in polynomial-time by \cref{lm: poly time bip expander}. We note that after this identification, there may be multiple vertices in $A_b$ that have been identified with the same vertex $a\in \Gamma_{\disc}(b)$, so there may be multiple edges between two vertices.

The constraint graph of $\mc{D}'$ has left side $A$ and right side $\cup_{b \in B} B_b$. We include an edge $(a,b')$ if for the unique $b$ such that $b' \in B_b$, it holds that $a \in \Gamma(b)$ and $(a, b') \in E_b$ (we recall that each vertex in $A_b$ is identified with a vertex in $A$). The constraint on $(a,b')$ in $\mc{D}'$ is the same as the constraint on $(a,b)$ in $\mc{D}$, where $b' \in B_b$. The left and right alphabets of $\mc{D}'$ remain $\Sigma_A, \Sigma_B$. The decoders for $\mc{D}'$ also remain the same and are still $\{D_t \}_{t \in [n]}$. By this we mean, for a left vertex $a$ and a label $\sigma$ in its alphabet, the decoder for $\mc{D}'$ still outputs $D_t(a, \sigma)$, and hence we still denote it by $D_t$. Finally, the complete decoding distribution of $\mc{D}'$ is $\mc{Q}'$, which is generated by choosing $(t,a,b) \sim \mc{Q}$, sampling a uniform $b' \in B_b$ that is adjacent to $a$ in $H_b$ and outputting $(t,a,b')$. The decoding distributions are thus given by $\mc{P}'_t = \mc{Q}'(t,\cdot, \cdot )$.

It is clear that the algorithm preserves the left alphabet, right alphabet, projection decision complexity, decoding complexity, and decoding degree. Furthermore, one can check that the left degree of each vertex in the constraint graph gets multiplied by a factor of $d'$.

\vspace{-2ex}
\paragraph{Complete Decoding Distribution.} Fix $a \in A$ and $b' \in B'$ and suppose $b'$ appears in $B_b$ for $b \in B$. Let us first check that the complete decoding distribution is $d'M$-discrete. We have
\begin{equation}\label{eq:right_alphabet_red_eq1}
\mc{Q}'(\circ, a, b') = \mc{Q}(\circ, a, b) \cdot \frac{c_b(a,b')}{d'} = \frac{w(a,b) \cdot c_b(a,b')}{d'M}
\end{equation}
where $c_b(a,b')$ is the number of multiedges between 
$a$ and $b'$ in the graph $H_b$, and is an integer in the range $[0,d']$. To see the first equality, note that in order for $(a,b')$ to be sampled from $\mc{Q}'$, we first need $(a,b)$ to be sampled from $\mc{Q}$, where $b$ is such that $b' \in H_b$. Then $\mc{Q}'$ chooses a neighbor of $a$ in $H_b$, so there is a $c_b(a, b')/d'$ probability of choosing $b'$. The second transition is due to the assumption that $ \mc{Q}(\circ, a, b) = \frac{w(a,b)}{M}$ given by the lemma statement. It follows that the complete decoding distribution is $d'M$-discrete 

To see the second item, note that the probability that $b'$ is output by the complete decoding distribution $\mc{Q}'$ is 
\[
\mc{Q}'(\circ, \circ, b') = \mc{Q}(\circ, \circ, b) \cdot \frac{1}{d(b)} = \left(\sum_{a \in A} \frac{w(a,b)}{M}\right) \frac{1}{d(b)} = \frac{1}{M}.
\]
In the first transition we used the fact that conditioned on $b$, the right-vertex we output is uniform in $B_b$ and $|B_b| = d(b)$. 
In the second transition we used $\mc{Q}(\cdot,a,b) = w(a,b)/M$. In the last transition we used the fact that $\sum_{a \in A} w(a,b) = d(b)$. 

For the third item, fix $a \in A$, then 
\[
\mc{Q}'(\circ, a, \circ) = 
\sum\limits_{b\in B}\sum_{b' \in B_b} \mc{Q}'(\circ,a , b') 
=
\sum\limits_{b\in B}\mc{Q}(\circ, a, b)\sum_{b' \in B_b} \frac{c_b(a,b')}{d'}
= \sum\limits_{b\in B}\mc{Q}(\circ, a, b)=\mc{Q}(\circ, a, \circ).
\]
In the second transition we used~\eqref{eq:right_alphabet_red_eq1} and 
in the third transition we used the fact that the degree of $a$ in $H_b$ is $d'$.

For the fourth item, suppose that $\mc{Q}$ is agnostic and take $b' \in B_b$. Note that $\mc{Q}(\circ, \cdot, b) = \mc{Q}'(\circ, \cdot, b')$, and for any $t\in[n]$ 
we have $\mc{Q}(t, \cdot, b) = \mc{Q}'(t, \cdot, b')$.
It follows $\mc{Q}'$ is agnostic.

\vspace{-2ex}
\paragraph{Perfect Completeness.} Suppose $w \in \mc{L}$ and let $T_1: A \to \Sigma_A$, $T_2: B \to \Sigma_B$ be the assignments satisfying all of the constraints in $\mc{D}$ that are decoded to $w$ with probability $1$ over $\mc{Q}$. Define $T_2'$ by setting  $T'_2[b'] = T_2[b]$ for each $b' \in B_b$.  It is straightforward to check that $T_1,T_2'$ satisfy all of the constraints in $\mc{D}'$, and that $T_1$ decodes to $w$ with probability $1$ over $\mc{Q}'$.
\vspace{-2ex}
\paragraph{List-Decoding Soundness.}
Fix an assignment $T_1: A \to \Sigma_A$. 
Then by the $(L, \eps)$-list-decoding soundness of $\mc{D}$ we may find $\{w_1, \ldots,w_L\} \subseteq \mc{L}$  such that for any $T_2: B \to \Sigma_B$, we have
\begin{equation*}
\Pr_{(t,a,b) \sim \mc{Q}}[\Phi_{(a,b)}(T_1[a], T_2[b]) = 1 \land D_t(a, T_1[a]) \notin \{\left(w_i\right)_t \}_{i \in [L]}] \leq \eps.
\end{equation*}

To establish the desired list-decoding soundness for $\mc{D}'$, we will show that for any right-assignment, the list decoding error relative to the same list $\{w_1, \ldots,w_L\}$ is small.

Fix any assignment $T'_2: B' \to \Sigma_B$ and for each $b \in B$ and $\sigma \in \Sigma_B$, define the sets
\[
X_{b,\sigma} = \{b' \in B_b \; | \; T'_2[b'] = \sigma \} \quad \text{and} \quad Y_{b, \sigma} = \{a' \in A_b \; | \; \Phi_{(a', b)}(T_1[a'], \sigma) = 1\}.
\]
In words, $X_{b, \sigma}$ is the set of vertices in $B_b$ that are assigned $\sigma$ under $T'_2$, and $Y_{b,\sigma}$ is the set of vertices in $a' \in A_b$ such that the assignments $a \to T_1[a']$ and $b \to \sigma$ satisfy the constraint on $(a', b)$ in the original dPCP, $\mc{D}$. It is clear that the sets $\{X_{b,\sigma}\}_{\sigma \in \Sigma_b}$ form a partition of $B_b$. Additionally, since the constraints in $\mc{D}$ are all projections, the sets 
$\{Y_{b,\sigma}\}_{\sigma \in \Sigma_B}$ form a partition of $A_b$. Define the functions $F_{b,\sigma}: A_b \to [0,1]$, $G_{b,\sigma}: B_b \to \{ 0,1\}$ where 
\[
G_{b, \sigma}(b') := \ind_{b' \in X_{b,\sigma}} \quad \text{and} \quad F_{b, \sigma}(a) := \ind_{a \in Y_{b,\sigma}} \cdot \Pr_{t \sim \mc{Q}'(\cdot, a,b)}\left[D_t(a, T_1[a]) \notin \{\left(w_i\right)_t \}_{i \in [L]}\right].
\]
Abusing notation, we denote $\mu(F_{b, \sigma}) = \E_{a \in A_b}[F_{b, \sigma}(a')]$ and $\mu(G_{b, \sigma})= \E_{b' \in B_b}[G_{b, \sigma}(b')]$. Consider the randomized assignment $T^*_2: B \to \Sigma_B$ obtained by setting $T^*_2(b) = \sigma$ with probability $\mu(G_{b, \sigma})$. By the list-decoding soundness applied with each fixing of $T^*_2$, we get
\begin{equation} \label{eq: prior expected ld error} 
\begin{split}
  \eps
  &\geq \E_{T^*_2}\left[  \Pr_{(t,a,b) \sim \mc{Q}}\left[\Phi_{(a,b)}(T_1[a], T^*_2[b]) = 1 \land D_t(a, T_1[a]) \notin \{\left(w_i\right)_t \}_{i \in [L]}\right] \right] \\
  &= \E_{(t,a,b) \sim \mc{Q}}\left[\Pr_{b' \in B_{b}}\left[\Phi_{a,b}(T_1[a], T_2[b'])=1 \land D_t(a, T_1[a]) \notin \{\left(w_i\right)_t \}_{i \in [L]}\right] \right]\\
  &= \E_{b \sim \mc{Q}(\circ, \circ, \cdot)}\left[\sum_{\sigma \in \Sigma_B} \mu(G_{b, \sigma}) \cdot \mu(F_{b, \sigma})]\right],
\end{split}
\end{equation}
In the third transition, we are using the fact that conditioned on $b$, the distribution of $a$ under $\mc{Q}$ is uniformly random in $A_b$.
On the other hand, the list-decoding error of the assignment $T_1, T'_2$ in $\mc{D}'$ relative to the list $\{w_1,\ldots,w_L\}$, can be written as:
\begin{equation} \label{eq: sum over sigma_b}  
\E_{b \sim \mc{Q}(\circ, \circ, \cdot)}\left[ \sum_{\sigma \in \Sigma_B} \E_{(a,b') \in E_b}[F_{b,\sigma}(a) \cdot G_{b,\sigma}(b') ] \right].
\end{equation}
Hence, our goal is to now bound the expression above, and we will do so by relating it to \eqref{eq: prior expected ld error} via the expander mixing lemma.
Fix a vertex $b \in B$ and an alphabet symbol $\sigma \in \Sigma_B$. Using~\cref{lm: expander mixing} we get
\[
\E_{(a,b') \in E_b}[F_{b,\sigma}(a) \cdot G_{b,\sigma}(b') ] \leq \mu(F_{b,\sigma}) \cdot \mu(G_{b,\sigma}) + O(d'^{-1/2}) \cdot \sqrt{ \mu(F_{b,\sigma}) \cdot \mu(G_{b,\sigma})}.
\]
Keeping $b \in B$ fixed and summing this inequality over all $\sigma \in \Sigma_B$, we get 
\begin{align*}  
 \sum_{\sigma \in \Sigma_B} \E_{(a,b') \in E_b}[F_{b,\sigma}(a) \cdot G_{b,\sigma}(b') ] &\leq \sum_{\sigma \in \Sigma_B} \mu(F_{b,\sigma}) \cdot \mu(G_{b,\sigma}) + O(d'^{-1/2}) \cdot \sqrt{ \mu(F_{b,\sigma}) \cdot \mu(G_{b,\sigma})} \\
 &\leq \sum_{\sigma \in \Sigma_B} \mu(F_{b,\sigma}) \cdot \mu(G_{b,\sigma}) + O(d'^{-1/2}),
\end{align*}
where the second inequality uses Cauchy-Schwarz and the fact that $\sum_{\sigma \in \Sigma_B} \mu(F_{b,\sigma}) \leq 1$ and $\sum_{\sigma \in \Sigma_B} \mu(G_{b,\sigma}) = 1$. Thus, we have 
\[
    \E_{b \sim \mc{Q}(\circ, \circ, \cdot)}\left[ \sum_{\sigma \in \Sigma_B} \E_{(a,b') \in E_b}[F_{b,\sigma}(a) \cdot G_{b,\sigma}(b') ] \right] \leq \E_{b \sim \mc{Q}(\circ, \circ, \cdot)}\left[\sum_{\sigma \in \Sigma_B} \mu(G_{b, \sigma}) \cdot \mu(F_{b, \sigma})\right] + O(d'^{-1/2})
    \leq \eps+O(d'^{-1/2}),
\]
where in the last transition we used~\eqref{eq: prior expected ld error}.
\end{proof}

\cref{lm: right degree reduction} gives a right degree reduction when the complete decoding distribution is $M$-discrete and results in a dPCP of size at least $M$. For dPCPs with arbitrary complete decoding distribution, which may not be $M$-discrete for any small $M$, we can still obtain a right degree reduction by slightly altering the complete decoding distribution first. Below we state a simple fact about distributions that we will use for this task, which will also be used in subsequent sections.

\begin{lemma} \label{lm: discretize}
    Let $\Delta$ be a distribution over a support of size $n$, say $[n]$. Then for every $\eps > 0$ and every integer $M \geq \frac{n}{2\eps}$, there exist $w(1), \ldots, w(n) \in \mathbb{N}$ such that $\sum_{i\in[n]} w(i) = M$ and the distribution $\Delta_{\disc}$, given by $\Delta_{\disc}(i) = \frac{w(i)}{M}$, satisfies
    \begin{itemize}
    \item  $\tv(\Delta, \Delta_{\disc}) \leq \eps$,
    \item  for every $i \in [n]$, $|\Delta_{\disc}(i) - \Delta(i)| \leq \frac{1}{M}$,
    \item ${\sf supp}(\Delta_{\disc})\subseteq {\sf supp}(\Delta)$.
    \end{itemize}
    As a consequence, every distribution with support size $n$ is $\eps$-close in total-variation distance to an $O_{\eps}(n)$-discrete distribution.
\end{lemma}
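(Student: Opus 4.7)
The plan is the standard ``round down and redistribute'' construction. For each $i \notin \supp(\Delta)$ set $w(i) = 0$, which immediately gives property (3). For each $i \in \supp(\Delta)$, tentatively set $w(i) := \lfloor M \Delta(i) \rfloor \in \mathbb{N}$. Since each tentative $w(i)$ underestimates $M\Delta(i)$ by strictly less than $1$, the deficit
\[
D := M - \sum_{i} w(i) = \sum_i \bigl(M\Delta(i) - \lfloor M\Delta(i)\rfloor\bigr)
\]
is a non-negative integer bounded by $|\supp(\Delta)| \leq n$. Redistribute this deficit by adding $1$ to $w(i)$ for exactly $D$ indices $i \in \supp(\Delta)$ (e.g., those with the largest fractional parts $\{M\Delta(i)\}$, breaking ties arbitrarily). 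After this step $\sum_i w(i) = M$, the modified coordinates stay in $\supp(\Delta)$, and each $w(i)$ differs from the unrounded value $M\Delta(i)$ by at most $1$.

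This immediately gives property (2): $|\Delta_{\disc}(i) - \Delta(i)| = |w(i)/M - \Delta(i)| \leq 1/M$ for every $i \in [n]$. Property (1) then follows from the hypothesis $M \geq n/(2\eps)$ by summing:
\[
\tv(\Delta, \Delta_{\disc}) = \frac{1}{2}\sum_{i \in [n]} |\Delta(i) - \Delta_{\disc}(i)| \leq \frac{1}{2} \cdot n \cdot \frac{1}{M} \leq \eps.
\]
The ``consequence'' part is obtained by choosing $M = \lceil n/(2\eps) \rceil = O_\eps(n)$ and applying the main statement.

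There is no real obstacle here; the only points requiring mild care are that (a) the deficit $D$ is a non-negative integer and at most $n$, so the redistribution step is well-defined and each $w(i)$ is incremented by at most $1$; (b) the redistribution is restricted to $\supp(\Delta)$ so that property (3) is preserved; and (c) we treat $0 \in \mathbb{N}$, which is consistent with the requirement $\supp(\Delta_{\disc}) \subseteq \supp(\Delta)$.
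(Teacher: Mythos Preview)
Your proposal is correct and follows essentially the same approach as the paper: round down to $\lfloor M\Delta(i)\rfloor$ on the support, compute the integer deficit (at most $|\supp(\Delta)|$), and increment that many coordinates by one. The only cosmetic difference is that the paper increments the first $r$ support indices rather than those with largest fractional parts, but this choice is irrelevant to the argument.
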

\begin{proof}
Suppose $\supp(\Delta)$ has size $n' \leq n$, and without loss of generality suppose $\supp(\Delta) = [n']$. For each $i \in [n']$, let $z_i = \lfloor M \cdot \Delta(i) \rfloor$ and let $r = M - \sum_{i = 1}^{n'} z_i$. Note that $r \leq n'$. Now increment the each of $z_1, \ldots, z_r$ by $1$ and keep the remainder of the $z_i$'s the same. Let the resulting values be $z'_i$ after incrementing. 

We have 
\begin{equation} \label{eq: discrete off}   
\left| z'_i - M \cdot \Delta(i) \right| \leq 1,
\end{equation}
for all $i \in [n']$. Now define the distribution $\Delta_{\disc}(i)$ by
\[
\Delta_{\disc}(i) = 
\begin{cases}
\frac{z'_i}{M} \quad &\text{ if $i \in [n']$,}\\
0 \quad &\text{otherwise.}
\end{cases}
\]
Note that $\Delta_{\disc}(i)$ is indeed a distribution over $[n]$ because $\sum_{i \in [n]} z'_i = M$ and by construction $\supp(\Delta_{\disc}) \subseteq \supp(\Delta)$. Moreover, \eqref{eq: discrete off} implies that for every $i \in [n]$, $|\Delta_{\disc}(i) - \Delta(i)| \leq \frac{1}{M}$, and as a consequence, 
\[
\tv(\Delta, \Delta_{\disc}) = \frac{1}{2} \sum_{i \in [n]} \left| \Delta(i)  - \Delta_{\disc}(i) \right| \leq \sum_{i \in [n']} \frac{1}{2M} \leq \frac{n}{2M} \leq \eps.\qedhere
\]
\end{proof}

The following result is a version of~\cref{lm: right degree reduction} in which the given dPCP $\mc{D}$ does not necessarily have a complete decoding distribution whose marginal over the edges is $M$-discrete . Instead, we require $\mc{D}$ to have bounded decoding degree, and we show how to modify the complete decoding distribution into a discrete one without affecting the rest of the parameters by much. For sake of simplicity and because it suffices for our application, we focus on the case where decoding degree is $1$.
\begin{lemma} \label{lm: gen right degree reduction} 
    Suppose a language $\mc{L} \subseteq \Sigma_0^n$ has a projection dPCP 
    \[
    \mc{D} = \left( A \cup B, E, \Sigma_A, \Sigma_B ,\{D_t \}_{t \in [n]}, \{\mc{P}_t \}_{t \in [n]} \right)
    \]
    with $(L, \eps)$-list-decoding soundness, left degree $\mathsf{ld}$, and decoding degree $1$. Then for every $M \geq \frac{200 \cdot \mathsf{ld} \cdot |A|}{\eps^2 \cdot n}$ and for every $d' \in \mathbb{N}$, there is a polynomial-time algorithm which takes $\mc{D}$ and outputs a dPCP  
    \[
    \mc{D}'' = \left(A'' \cup B'', E'', \Sigma_A, \Sigma_B, \{D_t\}_{t \in [n]} , \{\mc{P}''_t \}_{t \in [n]} \right)
    \]
for $\mc{L}$ such that the following hold:
    \begin{itemize}
        \item \textbf{Length.} $|B''| = d' \cdot n \cdot M$ and $A'' \subseteq A$, so the length is at most $|A| + d' \cdot n \cdot  M$.
        \item \textbf{Degrees.} The decoding degree of every vertex is $1$.
        \item \textbf{Projection decision complexity.} The projection decision complexity is preserved.
        \item \textbf{Decoding complexity.} The decoding complexity is preserved.
        \item \textbf{Complete decoding distribution.} The complete decoding distribution, which we call $\mc{Q}''$, is $d'^2 \cdot n \cdot M$-discrete  and its marginal over $B'$ is uniform. For any $b \in B$, we have
        \[
        \mc{Q}''(\circ, \circ, b) = \frac{d'}{d'^2 \cdot n \cdot M}
        \]
        and for any $a \in A$ we have
        \[
        \mc{Q}''(\circ, a, \circ) \leq \max\left\{ \frac{1}{nM} + \frac{\eps }{10\mathsf{ld} |A|} ,\mc{Q}(\circ, a, \circ) + \frac{\mathsf{ld}}{nM}\right\}.
        \]
        \item \textbf{Completeness.} The dPCP $\mc{D}'$ has perfect completeness.
        \item \textbf{List-decoding soundness. }$\left(L, \frac{9}{8} \cdot \eps + O(d'^{-1/2})\right)$-list-decoding soundness.
    \end{itemize}
\end{lemma}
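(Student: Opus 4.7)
The plan is to pipeline two existing transformations: first discretize the complete decoding distribution of $\mc{D}$ to granularity $d'nM$ using \cref{lm: discretize}, and then apply the discrete-input right degree reduction of \cref{lm: right degree reduction} with degree parameter $d'$.

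Since $\mc{D}$ has decoding degree $1$ and left degree $\mathsf{ld}$, its complete decoding distribution $\mc{Q}$ is supported on at most $\mathsf{ld}\cdot |A|$ triples $(t,a,b)$: each $a\in A$ is paired with a unique $t(a)\in [n]$ and at most $\mathsf{ld}$ vertices $b\in B$. Apply \cref{lm: discretize} with target granularity $M_{\disc}:=d'nM$ and error parameter $\eps':=\eps/8$. The required inequality $M_{\disc}\ge \mathsf{ld}\cdot|A|/(2\eps')=4\mathsf{ld}|A|/\eps$ follows from the hypothesis $M\ge 200\mathsf{ld}|A|/(\eps^2 n)$ (assuming WLOG $\eps\le 1$), so one obtains an $M_{\disc}$-discrete distribution $\mc{Q}_{\disc}$ with $\tv(\mc{Q},\mc{Q}_{\disc})\le\eps/8$, with $|\mc{Q}_{\disc}(\circ,a,b)-\mc{Q}(\circ,a,b)|\le 1/M_{\disc}$ for every $(a,b)$, and with $\supp(\mc{Q}_{\disc})\subseteq\supp(\mc{Q})$. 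Let $\mc{D}'$ be the dPCP obtained from $\mc{D}$ by replacing each $\mc{P}_t$ with the marginal $\mc{Q}_{\disc}(t,\cdot,\cdot)$; its constraint graph, alphabets, and decoders are unchanged. Perfect completeness transfers because $\supp(\mc{Q}_{\disc})\subseteq\supp(\mc{Q})$, and since the list-decoding error (against a fixed list) is a function of the decoding distribution that changes by at most $\tv(\mc{Q},\mc{Q}_{\disc})$ when the distribution is swapped, $\mc{D}'$ has $(L,\eps+\eps/8)=(L,9\eps/8)$-list-decoding soundness.

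Now apply \cref{lm: right degree reduction} to $\mc{D}'$ with its $M$ parameter set to $M_{\disc}=d'nM$ and degree parameter $d'$ (the hypothesis $d'\ge\min_{(a,b)\in\supp} w(a,b)\ge 1$ is trivially satisfied). The resulting dPCP $\mc{D}''$ has $|B''|=M_{\disc}=d'nM$, right degree $d'$, preserved decoding degree (still $1$ for every $a$ that remains in the support), preserved projection decision complexity and decoding complexity, and $(L,9\eps/8+O(d'^{-1/2}))$-list-decoding soundness with perfect completeness. Its complete decoding distribution $\mc{Q}''$ is $d'M_{\disc}=d'^2nM$-discrete with uniform marginal $\mc{Q}''(\circ,\circ,b)=1/(d'nM)=d'/(d'^2nM)$. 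Because \cref{lm: right degree reduction} preserves the left marginal, $\mc{Q}''(\circ,a,\circ)=\mc{Q}_{\disc}(\circ,a,\circ)$, and summing the per-entry discretization error over the at most $\mathsf{ld}$ neighbors of $a$ yields
\[
\mc{Q}''(\circ,a,\circ)\le \mc{Q}(\circ,a,\circ)+\frac{\mathsf{ld}}{d'nM}\le \mc{Q}(\circ,a,\circ)+\frac{\mathsf{ld}}{nM},
\]
matching the second branch of the claimed maximum. Any vertex in $A$ that becomes isolated in $\mc{D}''$ (equivalently, has $\mc{Q}_{\disc}(\circ,a,\circ)=0$) can be dropped without affecting the other properties, yielding $A''\subseteq A$.

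The argument is a clean two-step pipeline, so the real work lies in parameter bookkeeping: the single granularity $M_{\disc}=d'nM$ must simultaneously control (i) the TV loss to at most $\eps/8$, (ii) the discreteness level $d'^2nM$ required of $\mc{Q}''$, and (iii) the per-$a$ rounding error $\mathsf{ld}/(nM)$. The hypothesis $M\ge 200\mathsf{ld}|A|/(\eps^2 n)$ is calibrated precisely so that (i) holds for every $d'\ge 1$, and the other requirements are automatic. The first branch $1/(nM)+\eps/(10\mathsf{ld}|A|)$ in the stated upper bound on $\mc{Q}''(\circ,a,\circ)$ is a slack term that is already dominated by the second-branch verification, so no additional argument is needed there.
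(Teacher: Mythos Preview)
Your pipeline has a real gap at the first step. A complete decoding distribution is, by definition, obtained by sampling $t\in[n]$ \emph{uniformly} and then $(a,b)\sim\mc{P}_t$; in particular its marginal on $[n]$ is exactly uniform. When you discretize $\mc{Q}$ as a single distribution on $[n]\times A\times B$ to obtain $\mc{Q}_{\disc}$, there is no reason for $\mc{Q}_{\disc}(t,\circ,\circ)$ to remain $1/n$. Thus $\mc{Q}_{\disc}$ is not the complete decoding distribution of any dPCP. The object $\mc{D}'$ you describe, with $\mc{P}_t'=\mc{Q}_{\disc}(t,\cdot,\cdot)$, has complete distribution $\tilde{\mc{Q}}$ that puts mass $\frac{1}{n}\cdot \mc{Q}_{\disc}(t,a,b)/\mc{Q}_{\disc}(t,\circ,\circ)$ on $(t,a,b)$, and this renormalized $\tilde{\mc{Q}}$ is in general \emph{not} $M_{\disc}$-discrete, so you cannot invoke \cref{lm: right degree reduction}. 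Likewise, the identity $\mc{Q}''(\circ,a,\circ)=\mc{Q}_{\disc}(\circ,a,\circ)$ you use is false: \cref{lm: right degree reduction} preserves the left marginal of its \emph{input}, which is $\tilde{\mc{Q}}$, not $\mc{Q}_{\disc}$.

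The paper avoids this by discretizing each conditional $\mc{P}_t$ separately with a \emph{common} denominator $M$, so that uniformity over $[n]$ is retained exactly and the result is $nM$-discrete by construction. The difficulty is that \cref{lm: discretize} needs $M$ to dominate the support size of $\mc{P}_t$, and some $t$ may have very large support. The paper handles this by a Markov argument: for the $(1-\eps/10)$-fraction of $t$'s whose support is at most $10\mathsf{ld}|A|/(\eps n)$ the discretization is accurate (yielding the second branch of the $\max$), while for the remaining $t$'s it replaces $\mc{P}_t$ by a near-uniform $M$-discrete distribution on its support (this is precisely what produces the first branch $1/(nM)+\eps/(10\mathsf{ld}|A|)$, which is therefore not mere slack). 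Only after this per-$t$ discretization is \cref{lm: right degree reduction} applied, with weights inflated by $d'$ to meet its minimum-weight hypothesis.
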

\begin{proof}
Let $\mc{Q}$ be the complete decoding distribution of $\mc{D}$ and let $\mathsf{ld}, \eps$ and $M$ be as in the lemma statement. For each $a \in A$, we have that the decoding degree of $a$ is $1$, and we denote by $t_a$ its unique decoding neighbor.  Since the decoding degree is $1$, we have
\[
\sum_{t \in [n]} |\supp(\mc{Q}(t, \cdot, \cdot))| \leq \sum_{a \in A} |\supp(\mc{Q}(t_a, a, \cdot))| \leq \mathsf{ld} \cdot |A|.
\]
Hence, taking $S \subseteq [n]$ to be the set of $t \in [n]$ such that $|\supp(\mc{Q}(t, \cdot, \cdot))| \leq \frac{10 \cdot \mathsf{ld} \cdot |A|}{\eps \cdot n}$, we have by Markov's inequality that
\begin{equation} \label{eq: t small neighborhood}  
\Pr_{t \in [n]}[t \in S] \geq 1 - \frac{\eps}{10}.
\end{equation} 
Now, for each $t \in S$, let $\Delta_{t}$ be the discretized distribution over $A \times B$ obtained by applying \cref{lm: discretize} to $\mc{Q}(t, \cdot, \cdot)$ with denominator $M$. For $t \notin S$, let $\Delta_{t}$ be an arbitrary distribution  supported  on $\supp(\mc{Q}(t, \cdot, \cdot))$ such that for each $(a, b) \in \supp(\mc{Q}(t, \cdot, \cdot))$ we have either
\[
\Delta_t(a,b) = \frac{1}{M}\cdot \ceil[\bigg]{\frac{M}{|\supp(\mc{Q}(t, \cdot, \cdot))|}} \quad \text{or} \quad\Delta_t(a,b) = \frac{1}{M} \cdot \floor[\bigg]{\frac{M}{|\supp(\mc{Q}(t, \cdot, \cdot))|}}.
\]
It is straightforward to see that such a distribution indeed exists. 

Now, let the distribution $Q'$ be generated by choosing $t \in [n]$ uniformly at random, choosing $(a,b) \sim \Delta_{t}$, and outputting $(t, a, b)$. We note the following items:
\begin{itemize}
    \item The marginal over $[n]$ of $\mc{Q}'$ is uniform; this is clear by description of $\mc{Q}'$.
    \item $\supp(\mc{Q}') \subseteq \supp(\mc{Q})$; this is clear by construction of $\Delta_t$.
    \item For each $(a, b) \in A \times B$, if $t_a \in S$ then $|\mc{Q}(\circ, a, b) - \mc{Q}'(\circ, a, b)| \leq \frac{1}{nM}$, otherwise $\mc{Q}'(\circ, a, b) \leq \frac{1}{nM} + \frac{\eps }{10\mathsf{ld} |A|}$; we explain this below.
    \item $\tv(\mc{Q}, \mc{Q}') \leq \frac{\eps}{8}$; we explain this below.
\end{itemize}
To see the third item first fix an $(a, b) \in A \times B$. If $t_a \in S$, then 
$\mc{Q}(\circ, a, b) = \mc{Q}(t_a,a,b)$, 
$\mc{Q}'(\circ, a, b) = \Delta_{t_a}(a,b)$ and the item follows from the guarantee of \cref{lm: discretize}. If $t_a\not\in S$ we have,
\[
\mc{Q}'(\circ, a, b) \leq \frac{1}{n} \cdot \Delta_{t_a}(a, b) \leq \frac{1}{n} \cdot \frac{1 + M/|\supp(\mc{Q}(t, \cdot, \cdot))|}{M} \leq \frac{1}{nM} + \frac{\eps }{10\mathsf{ld} |A|}.
\]
For the fourth item, we use the fact that $\mc{Q}$ and $\mc{Q}'$ have the same marginal distribution over $[n]$, hence
\begin{align*}
    \tv(\mc{Q}, \mc{Q}') &=  \sum_{t \in [n]} \mc{Q}(t, \circ, \circ) \cdot \tv(\mc{Q}(t, \cdot, \cdot), \mc{Q}'(t, \cdot, \cdot)) \\
    &\leq \frac{\eps}{10} + \sum_{t \in S} \mc{Q}(t, \circ, \circ) \cdot \tv(\mc{Q}(t, \cdot, \cdot), \mc{Q}'(t, \cdot, \cdot)) \\
    &\leq \frac{\eps}{8}.
\end{align*}
The first transition uses~\eqref{eq: t small neighborhood} and the second transition uses the fact that 
\[
\tv(\mc{Q}(t, \cdot, \cdot), \mc{Q}'(t, \cdot, \cdot)) =  \tv(\mc{Q}(t, \cdot, \cdot), \Delta_t)\leq \frac{\eps}{40}
\]
for all $t \in S$ by \cref{lm: discretize}.

Let $\mc{D}'$ be the dPCP obtained by taking $\mc{D}$ and changing its complete decoding distribution to $\mc{D}'$. Note that the length, projection decision complexity, and decoding complexity are unchanged. For the decoding degrees, note that each $a$ has decoding degree either $0$ or $1$ now. Completeness is preserved by  the fact that $\supp(\mc{Q}') \subseteq \supp(\mc{Q})$, stated above, and $\mc{D}'$ still has $(L, 9\eps/8)$-list decoding soundness by the fact $\tv(\mc{Q}, \mc{Q}') \leq \eps /8$, stated above.

Let us now write $\mc{Q}'(\circ, a, b) = \frac{w(a,b)}{n \cdot M} = \frac{d'w(a,b)}{d' \cdot n \cdot M}$ for $w(a,b) \in \mathbb{N}$. In expressing $\mc{Q}'$ this way, we are using the fact that $a$ has decoding degree $1$. Here, we multiplied both the numerator and denominator by $d'$ to ensure that all of the nonzero numerators are at least $d'$, in accordance with \cref{lm: right degree reduction}.

We now apply \cref{lm: right degree reduction} on $\mc{D}'$ where its complete decoding distribution is written with denominator $d'nM$, and the weight of each edge $(a,b)$ is $d'w(a,b)$. Call the resulting dPCP $\mc{D}''$ and let $\mc{Q}''$ be its complete decoding distribution. The lemma follows by checking the guarantees from  \cref{lm: right degree reduction}. In particular, the length, projection decision complexity, decoding complexity, perfect completeness, and list-decoding soundness are immediate.

For the guarantee on the complete decoding distribution, we first note that  by \cref{lm: right degree reduction}, $\mc{Q}''$ has uniform marginal over the right side and is $d'^2 n M$ discrete. This also implies that 
\[
\mc{Q}''(\circ, \circ, b) = \frac{d'}{d'^2 \cdot n \cdot M}.
\]
For the remaining guarantee on the complete decoding distribution, fix a left vertex $a$. We start by using the third part of the decoding distribution guarantee from \cref{lm: right degree reduction} to get:
\[
\mc{Q}''(\circ, a, \circ)= \mc{Q}'(\circ, a, \circ) =\sum_{b \in B} \mc{Q}'(\circ, a, b).
\]
If $t_a \notin S$, then $\mc{Q}''(\circ, a, \circ)= \mc{Q}'(\circ, a, \circ) \leq \frac{1}{nM} + \frac{\eps }{10\mathsf{ld} |A|}$ by the third item above and we are done. Otherwise,

\[
|\mc{Q}(\circ, a, \circ) -\mc{Q}''(\circ, a, \circ)| = \left|\sum_{b \in B}\left( \mc{Q}(\circ, a, b) -\mc{Q}'(\circ, a, b)\right)\right| \leq \frac{\mathsf{ld}}{nM},
\]
where in the last transition we use the fact that the summation has $\mathsf{ld}$ many nonzero terms and each is at most $\frac{1}{nM}$ by the third item listed above. The above implies, 
\[
\mc{Q}''(\circ, a, \circ) \leq\mc{Q}(\circ, a, \circ) + \frac{\mathsf{ld}}{nM}.
\]

Finally, to get decoding degree $1$, note that each vertex $a \in A$ has decoding degree either $1$ or $0$, and we can delete every vertex with decoding degree $0$ without affecting any of the previous guarantees.
\end{proof}
\section{Decoding Degree Reduction}\label{sec:dec_deg_red}

In this section we show how to reduce the decoding degree of a dPCP assuming that its complete decoding distribution is agnostic. After performing this operation once, we will preserve constant decoding degree throughout the remainder of our construction.
At a high level, after \cref{thm: dp dpcp} the  decoding degree is $\polylogn$, and hence holding this many $\Sigma_0$-symbols itself already causes superpolynomial alphabet size. Therefore any method of reducing alphabet size to a constant must first reduce decoding degree to a constant. This is a non-issue in standard PCPs (as there is no decoding going on), and partly the reason earlier composition techniques did not apply to dPCPs. 

The statement of our decoding degree transformation is as follows.

\begin{lemma} \label{lm: decoding degree reduction}
    Suppose $\Sigma_0$ is an alphabet and that the language $\mc{L} \subseteq \Sigma_0^n$ has a projection dPCP
    \[
        \mc{D} = (A \cup B, E, \{\mc{P}_t \}_{t \in [n]},\Sigma_A, \Sigma_B,  \{ \Phi_{e}\}_{e \in E }, \{D_t \}_{t \in [n]} )
    \]
    with perfect completeness, $(L, \eps)$-list-decoding soundness for $L:= \poly(1/\eps)$, agnostic complete decoding distribution,  projection decision complexity $\mathsf{ProjComp}$, decoding complexity $\mathsf{DecodeComp}$, and left degree, decoding degree, right degree that are at most $\mathsf{ld}, \mathsf{ldd}, d$ respectively.
    Let $\mc{I} = \{(t,b)~|~\exists a\in A, (a,b)\in\supp(\mc{P}_t)\}$, and let $\{C_{t,b} \in \mathbb{N} \; | (t,b)\in\mc{I} \}$ be a collection of integers. Then $\mc{L}$ has a projection dPCP
    \[
    \mc{D}' = (B' \cup A, \Sigma_A^{d} \times \Sigma_0, \Sigma_A, E',\{\mc{P}_t' \}_{t \in [n]}, \{ \Phi_{e}'\}_{e \in E' }, \{D_t' \}_{t \in [n]} ),
    \]
    satisfying the following properties:
    \begin{itemize}
        \item \textbf{Length.} $B' = \bigcup_{(t,b)\in\mc{I}}\{(t,b,i)~|~i\in[C_{t,b}]\}$, and in 
        particular the length is $|A|+\sum_{(t,b) \in \mc{I}} C_{t,b}$.
        \item \textbf{Degrees.} The left degree of every vertex is at most $d$, the right degree of every vertex is at most ${\sf ld} \cdot {\sf ldd} \cdot \max_{(t,b) \in \mc{I}} C_{t,b}$, and the decoding degree of every vertex is $1$. If $\mc{D}$ is $d$-right regular, then $\mc{D}'$ is $d$-left regular.
        \item \textbf{Projection Decision Complexity.} $O(d \cdot(\mathsf{ProjComp} + \mathsf{DecodeComp}))$. 
        \item \textbf{Decoding Complexity.} $O(d\log(|\Sigma_A|) + \log(|\Sigma_0|))$.
        \item \textbf{Decoding Distributions.} For each $(t,b,i) \in B'$ and $a \in A$, $t$ is the unique index in the decoding neighborhood of $(t,b,i)$ and we have
        \[\mc{P}'_{t}((t,b,i), \circ) = \frac{\mc{P}_t(\circ, b)}{C_{t,b}} \quad \text{and} \quad \mc{Q}'(t, (t,b,i), a) = \frac{\mc{Q}(t, b, a)}{C_{t,b}},
        \]
        where $\mc{Q}$ and $\mc{Q}'$ are the complete decoding distributions of $\mc{D}$ and $\mc{D}'$ respectively.
        \item \textbf{Perfect Completeness.} $\mc{D}'$ has perfect completeness.
        \item \textbf{List-Decoding Soundness.} $\mc{D}'$ has $(L \cdot \poly(1/\eps), 2\eps^{1/5})$-list-decoding soundness.
    \end{itemize}
\end{lemma}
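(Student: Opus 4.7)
My plan is to implement the construction sketched at the end of~\cref{sec:techniques_PCP}: set the right side of $\mc{D}'$ to be $A$ (inheriting the old left alphabet $\Sigma_A$), set the left side to be $B' = \bigcup_{(t,b)\in\mc{I}}\{(t,b,i):i\in[C_{t,b}]\}$, and give each $(t,b,i)$ an alphabet consisting of tuples $(\tau_1,\ldots,\tau_d,s)\in\Sigma_A^d\times\Sigma_0$ subject to two hardcoded constraints: the projections $\phi_{(a_j,b)}(\tau_j)$ all coincide, and $D_t(a_j,\tau_j)=s$ for every $j$ with $t\in\Gamma_{\dec}(a_j)$. The edge $((t,b,i),a)$ is a projection forcing $T_R'(a)=\tau_j$ where $a=a_j$, and $D'_t$ simply reads off $s$. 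The marginal $\mc{P}'_t((t,b,i),a)=\mc{P}_t(a,b)/C_{t,b}$ is imposed by declaring $(t,(t,b,i),a)\sim\mc{Q}'$ to mean $(t,a,b)\sim\mc{Q}$ and $i$ uniform in $[C_{t,b}]$.

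The easy properties (length, degrees, projection decision complexity $O(d(\mathsf{ProjComp}+\mathsf{DecodeComp}))$, decoding complexity, the stated formulas for $\mc{P}'_t$ and $\mc{Q}'$, decoding-degree $=1$, and perfect completeness from copying the local view of a satisfying assignment of $\mc{D}$) are all immediate from the construction and will be verified in order. The real work is list-decoding soundness, and the plan there rests on two technical pillars. The first is the agnosticism of $\mc{Q}$, which says $\mc{P}_t(a,b)=\mc{P}_t(\circ,b)\cdot\mc{Q}(\circ,a\mid b)$; marginalizing $i$ out of $\mc{Q}'$ then recovers exactly $\mc{Q}$, so the joint law of $(t,a,b)$ under the $\mc{D}'$ sampling equals that of $\mc{D}$, and conditional on $b$ the variables $t$ and $a$ are independent. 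The second pillar is that by the support of $\mc{P}_t$, any $a$ sampled under $\mc{Q}'$ satisfies $t\in\Gamma_{\dec}(a)$; combined with the hardcoded constraint on $(t,b,i)$, this forces $s^{(t,b,i)}=D_t(a,V^{(t,b,i)}(a))$, where I write $V^{(t,b,i)}(a_j):=\tau_j^{(t,b,i)}$. Consequently, whenever the edge constraint $T_R'(a)=V^{(t,b,i)}(a)$ holds, we automatically have $s^{(t,b,i)}=D_t(a,T_R'(a))$.

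Given a left assignment $T_L'$, I will build the promised list as follows. Define the empirical label distribution $\nu_a(\sigma):=\Pr_{(t,b,i)\mid a}[V^{(t,b,i)}(a)=\sigma]$, and let $T_A^\star(a):=\arg\max_\sigma\nu_a(\sigma)$ with plurality mass $p_a:=\nu_a(T_A^\star(a))$. Apply the $(L,\eps)$-list-decoding soundness of $\mc{D}$ to $T_A^\star$ to obtain a base list $\{w_1,\ldots,w_L\}$, and augment it by the words of high "popularity" under the randomized rounding $T_A^{\mathrm{rand}}(a)\sim\nu_a$ (independently): include every $w$ with $\Pr_{T_A^{\mathrm{rand}}}[w\in L_{T_A^{\mathrm{rand}}}]\geq\eps^{C}$ for some constant $C$. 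A Markov/counting argument bounds the augmented list size by $L\cdot\poly(1/\eps)$, as required. To control the $\mc{D}'$-error, I will split the event $\{T_R'(a)=V^{(t,b,i)}(a)\wedge s^{(t,b,i)}\notin\mathrm{list}_t\}$ into the plurality case $V^{(t,b,i)}(a)=T_A^\star(a)$ and its complement. In the plurality case, pillar~(ii) lets me invoke the list-decoding soundness of $\mc{D}$ on $T_A^\star$ together with a suitable $T_B$ derived from $T_R'$ (taking $T_B(b)$ to be the $\mc{Q}$-plurality of $\phi_{(a,b)}(T_R'(a))$). In the complementary case, I will use the random-rounding construction together with agnosticism to bound the contribution by $\E_a[1-p_a]^{1/2}$-type quantities, and balance the two regimes by a threshold on $p_a$.

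The main obstacle I anticipate is precisely this balancing. The target bound of $2\eps^{1/5}$ suggests that one cannot simply use the plurality: if many $a$'s have $p_a$ slightly below $1$, the plurality-only argument loses too much. The chain of estimates I intend to perform is: (a) the $\mc{D}'$-error restricted to "plurality" $a$'s is bounded using $\mc{D}$'s list-decoding by $\eps+\eta_1$ where $\eta_1$ accounts for constraint-violation slack in $(T_A^\star,T_B)$; (b) the $\mc{D}'$-error restricted to "non-plurality" $a$'s is bounded by $\E_a[\mathbf{1}_{p_a<1-\eta_2}(1-p_a)]\leq(\E_a(1-p_a)^2)^{1/2}/\eta_2$ after Cauchy-Schwarz, and the latter is controlled by the randomized-rounding list via the popularity threshold; (c) choosing $\eta_1,\eta_2$ as appropriate fractional powers of $\eps$ yields an error $\leq 2\eps^{1/5}$. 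Getting the bookkeeping right — in particular, showing that the randomized-rounding list really does intercept the non-plurality contribution with the claimed polynomial blowup — is the delicate part of the proof.
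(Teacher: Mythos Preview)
Your construction and the verification of all the ``easy'' items (length, degrees, decoding degree $1$, projection/decoding complexities, the formulas for $\mc{P}'_t$ and $\mc{Q}'$, and perfect completeness) are correct and match the paper exactly. Your two technical pillars are also correctly identified. The soundness argument, however, has a real gap.

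The problem is the plurality-based split. In your step~(b) you want to control the non-plurality contribution by $\E_a[1-p_a]$ (or $(\E_a(1-p_a)^2)^{1/2}$), and you assert this is ``controlled by the randomized-rounding list via the popularity threshold''. But nothing forces $\E_a[1-p_a]$ to be small: an adversary can choose $T_L'$ so that, for every $a$, the values $\{V^{(t,b,i)}(a)\}$ are spread over many symbols, making $p_a$ tiny and $1-p_a$ close to $1$, independently of any list-decoding property of $\mc{D}$. The popularity of a word $w$ under the randomized rounding $T_A^{\mathrm{rand}}$ is a statement about an \emph{average} over all $a$'s simultaneously; it does not let you recover which particular value was used at a specific $a$, so it cannot cover the non-plurality mass pointwise. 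Your Cauchy--Schwarz inequality is also not the one you wrote, and in any case $(\E_a(1-p_a)^2)^{1/2}$ is simply not bounded by anything in the hypotheses.

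The paper's fix is to threshold rather than take a single plurality. Define $\mathsf{cand}(a)=\{\sigma:\nu_a(\sigma)\ge\eps^{1/5}\}$, so $|\mathsf{cand}(a)|\le J:=\eps^{-1/5}$, and cover $\mathsf{cand}(a)$ by $J$ \emph{deterministic} left assignments $T_{1,1},\ldots,T_{1,J}$ for $\mc{D}$; the final list is the union of the $J$ lists from the soundness of $\mc{D}$ applied to each $T_{1,j}$. The ``not in $\mathsf{cand}$'' part is then trivially $\le\eps^{1/5}$, because the constraint pins $V^{(t,b,i)}(a)$ to the fixed value $T_R'(a)$, and if that value is outside $\mathsf{cand}(a)$ its $\nu_a$-mass is below the threshold. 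For the ``in $\mathsf{cand}$'' part a single $T_B$ derived from $T_R'$ is still not enough: the paper does a \emph{second} threshold, now on the $b$-side, defining a symbol $\sigma\in\Sigma_B$ to be $b$-heavy if $\Pr_{a\sim\mc{Q}(\circ,\cdot,b)}[(T_{1,j}[a])_{\to b}=\sigma]\ge\sqrt{\eps}$, covers the heavy symbols by $J'=\eps^{-1/2}$ tables $H_1,\ldots,H_{J'}:B\to\Sigma_B$, and applies $\mc{D}$'s soundness with each pair $(T_{1,j},H_{j'})$. The not-heavy case is where agnosticism is actually used: it lets you replace $a\sim\mc{Q}(t,\cdot,b)$ by $a\sim\mc{Q}(\circ,\cdot,b)$ and bound that term by $\sqrt{\eps}$. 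The final count is $\eps^{1/5}+J(J'\eps+\sqrt{\eps})\le 2\eps^{1/5}$, which explains the exponent.
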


\begin{proof}
Let
\[
\mc{D} = \left(A \cup B, \Sigma_A, \Sigma_B, E,  \{ \Phi_{e}\}_{e \in E }, \{D_t \}_{t \in [n]}, \{\mc{P}_t \}_{t \in [n]} \right)
\]
be a dPCP for the language $\mc{L}$ as in the premise of the lemma. Denote its complete decoding distribution by $\mc{Q}$, namely a sample is  obtained by choosing $t \in [n]$ uniformly, $(a,b) \sim \mc{P}_t$, and outputting $(t,a,b)$. 

We denote the new dPCP for $\mc{L}$ by
\[
\mc{D}' = \left(B' \cup A, \Sigma_A^{d} \times \Sigma_0, \Sigma_A, E', \{ \Phi_{e}'\}_{e \in E' }, \{D_t' \}_{t \in [n]}, \{\mc{P}_t' \}_{t \in [n]} \right),
\]
and define its components in the following way. Below, we use $\Gamma$ and $\Gamma_{\dec}$ to denote the neighborhood and decoding neighborhood functions in $\mc{D}$ respectively, and use $\Gamma'$ and $\Gamma_{\dec}'$ to denote the neighborhood and decoding neighborhood functions in $\mc{D}'$ respectively.
\newcommand{\cloud}{\mathsf{cloud}}
\begin{itemize}
    \item \textbf{Constraint Graph.} The constraint graph is as follows:
     \begin{itemize}
        \item The left side $B' = \bigcup_{(t,b)\in\mc{I}}\{(t,b,i)~|~i\in[C_{t,b}]\}$.
        \item The right side is $A$.
        \item The edge set $E'$ consist of all $((t,b,i), a)$ such that $(a,b) \in E$.
    \end{itemize}
    \item \textbf{Alphabets.} The alphabets are as follows.
    \begin{itemize}
        \item The left alphabet is $\Sigma_{B'}\subseteq \Sigma_A^{d} \times \Sigma_0$. For a fixed $(t,b,i)\in B'$, we think of a label to it as assigning a $\Sigma_A$ label for each $a \in \Gamma(b)$ and a $\Sigma_0$ label for $t$. Furthermore, we hardcode constraints on the alphabet of $(t,b,i) \in B'$ as follows. Writing $\Gamma(b) = \{a_1,\ldots, a_d \}$, for a label $\sigma\in \Sigma_A^{d}\times \Sigma_0$ to $(t,b,i)$ we denote by $\sigma_{a_j}$ be the label assigned to $a_j$ in $\sigma$, and by $\sigma_t$ the label for $t$ in $\sigma$. We denote by $\sigma^{(b)}_{a_j}$ the unique value in $\Sigma_B$ satisfying $\Phi_{(a_j, b)}(\sigma_{a_j}, \sigma^{(b)}_{a_j}) = 1$ (recall that there is a unique such element in $\Sigma_B$ since the constraints in $\mc{D}$ are projections). Then, we require that $D_t(a_j, \sigma_{a_j}) = \sigma_t$ for all $i \in [d]$, and that $\sigma^{(b)}_{a_j}$ is the same for all $j \in [d]$.
        \item The right alphabet is $\Sigma_A$.
    \end{itemize}
    \item \textbf{Constraints.} For each edge $((t,b,i), a) \in E'$, the constraint $\Phi'_{((t,b,i), a)}: \Sigma_{B'} \times \Sigma_A \to \{0,1\}$ is defined as follows. Given labels $\sigma \in \Sigma_B$ and $\sigma' \in \Sigma_A$, let $\sigma_a$ be the label for $a$ in $\sigma$. The constraint $\Phi'_{((t,b,i), a)}(\sigma, \sigma')$ is satisfied if and only if $\sigma_a = \sigma'$.
    
    \item \textbf{Decoders.} For each $t \in [n]$, the decoder $D_t': B' \times \Sigma_B \to \Sigma_0$ is defined as follows. For $(t,b,i)\in B'$ and $\sigma\in \Sigma_{B'}$ label for it, the decoder $D_t'((t,b,i),\sigma)$ outputs $\sigma_t$, the label for $t$ in $\sigma$. 
    
    We note that by the constraints on the left alphabet symbol in $\mc{D}$, for any $a \in \Gamma(b)$, letting $\sigma_a$ be the label for $a$ in $\sigma$, we have that $D_t(a, \sigma_a) = \sigma_t = D'_t((t,b,i), \sigma)$.

    \item \textbf{Decoding Distributions.} For each $t \in [n]$, the decoding distribution $\mc{P}_t'$ is defined as follows. To sample from $\mc{P}_t'$, sample $(a,b) \sim \mc{P}_t$, $i \in [C_{t,b}]$ uniformly, and then output $((t,b,i), a)$. Since we think of the side-decoding distribution as uniform over $[n]$, the complete decoding distribution, denoted $\mc{Q}'$ is obtained by sampling $t \in [n]$ uniformly and then $((t,b,i), a) \sim \mc{P}'_t$. We remark that the complete decoding distribution $\mc{Q}'$ can also be obtained by choosing $(t,a,b) \sim \mc{Q}$ and then choosing $i \in [C_{t,b}]$ uniformly and outputting $(t, (t,b,i), a)$.
\end{itemize}

Let us now analyze the parameters of the new dPCP $\mc{D}'$. As usual, the main difficulty is in analyzing the list-decoding soundness.

\vspace{-2ex}
\paragraph{Length.} It is clear that $|B'| = \sum_{(t,b) \in \mc{I}} C_{t,b}$ and hence the length is as described.
\vspace{-2ex}
\paragraph{Degrees.} It is clear that the degree of every vertex in $B'$ is at most $d$, as for each $(t,b,i) \in B'$, $\Gamma'((t,b,i)) = \Gamma(b)$, and in the case that $\mc{D}$ is right-regular, it is clear that $\mc{D}'$ is left-regular. As for the degrees of each $a \in A$, it is clear that its degree is at most ${\sf ld} \cdot {\sf ldd} \cdot \max_{(t,b)\in\mc{I}}C_{t,b}$ as desired. Finally, the decoding degree of every vertex in $B'$ is $1$, as one can see that each $(t,b,i) \in B'$ is only responsible for decoding $t$.

\vspace{-2ex}
\paragraph{Projection Decision Complexity.} Fix a constraint $((t,b,i), a)$ and let $\sigma$ be the symbols assigned to the $(t,b,i)$. The constraint must check that $\sigma$ is a valid alphabet symbol. After that, the unique symbol to $a$ that would satisfy the constraint can be found by taking a restriction in $\sigma$, and this has circuit complexity at most $\mathsf{ProjComp}$. To check that $\sigma$ is valid, the projection circuit must compute one projection from $\mc{D}$ and one decoding for each neighbor of $b$'s $d$ neighbors and perform $d$ equality checks. Overall, this shows the projection decision complexity is $O(d \cdot( \mathsf{ProjComp} + \mathsf{DecodeComp}))$.

\vspace{-2ex}
\paragraph{Decoding Complexity.} The decoder outputs the restriction to a single symbol from an alphabet of size $|\Sigma_A|^d \cdot |\Sigma_0|$ and hence the decoding complexity is $O(d \log |\Sigma_A| + \log|\Sigma_0|)$.

\vspace{-2ex}
\paragraph{Perfect Completeness.} Fix any $w \in \mc{L}$. As $\mc{D}$ has perfect completeness, there are assignments $T_1: A \to \Sigma_A$ and $T_2: B \to \Sigma_B$ which satisfy all of the constraints in $\mc{D}$ and decode to $w_t$ with probability $1$ under $\mc{P}_t$ for every $t \in [n]$. Then we construct assignments $T'_1: B' \to \Sigma_A^{d} \times \Sigma_0$ and $T'_2: A \to \Sigma_A$ in the natural way as follows. Set $T'_2 = T_1$. For each $(t,b,i) \in B'$ write $\Gamma(b) = \{a_1, \ldots, a_d \}$, and take $T'_1[(t,b,i)] = (T_1[a_1],\ldots,T_1[a_d], D_t(a_1,T_1[a_1])$. In words, this is the symbol in $\Sigma_A^{d} \times \Sigma_0$ which contains $T_1[a_i]$ as the label for $a_i$ for all $i \in [d]$, and $D_t(a_i, T_1[a_i])$ as the label for $t$ (we note that by the hardcoded constraints it does not matter which $a_i$ we pick). Note that by the fact that $T_1$ and $T_2$ satisfy all of the constraints in $\mc{D}$ and always decode to $w_t$ under $\mc{P}_t$, this assignment indeed satisfies the constraints on the left alphabet and is a valid assignment. It is straightforward to verify that these assignments satisfy all of the constraints in $\mc{D}'$ and decode to $w_t$ with probability $1$ under $\mc{P}'_t$ for every $t \in [n]$, so $\mc{D}'$ has perfect completeness.
\vspace{-2ex}
\paragraph{List-Decoding Soundness.} Fix any left assignment $T'_1: B' \to \Sigma_A^{d} \times \Sigma_0$. 
Based on $T_1'$, we construct a candidate list of words from $\mc{L}$ and show that it satisfies the list-decoding soundness. Recall that for each $(t,b,i) \in B'$ and $a \in A$ adjacent to it in $\mc{D}'$, the assignment $T'_1[(t,b,i)]$ contains a symbol for $a$, which we denote by $T'_1[(t,b,i)]_a$. For each $a \in A$, let ${\sf cand}(a)$ consist of all symbols $\sigma \in \Sigma_A$ such that 
\[
\Pr_{(t, (t,b,i)) \sim \mc{Q'}(\cdot, a, \cdot)}[T'_1[(t,b,i)]_a = \sigma]  = \Pr_{(t,b) \sim \mc{Q}(\cdot, a, \cdot), i \in [C_{t,b}]}[T'_1[(t,b,i)]_a = \sigma] \geq \eps^{1/5}.
\]
where the first equality is by the observation regarding how to generate $\mc{Q}'$ from $\mc{Q}$ made in the definition of $\mc{D}'$ above.

Clearly, $|{\sf cand}(a)|\leq \eps^{-1/5}:=J$ for every $a\in A$. Therefore, one can find assignments $T_{1,1}, \ldots, T_{1,J}: A \to \Sigma_A$ such that for every $a \in A$ and every $\sigma \in {\sf cand}(a)$ there is some $j \in [J]$ such that $T_{1,j}[a] = \sigma$. Since each $T_{1, j}$ is a left assignment to $\mc{D}$, we can apply the list-decoding soundness of $\mc{D}$ to it. Specifically, for each $j \in [J]$, let $\List_j = \{w_{j,1}, \ldots, w_{j,L} \} \subseteq \mc{L}$ be the list guaranteed by the $(L, \eps)$-list-decoding soundness of $\mc{D}$ for the left assignment $T_{1,j}$. Then by the list-decoding soundness of $\mc{D}$ we conclude that for any right assignment $T_2: B \to \Sigma_B$ and any $j \in [J]$, we have
\begin{equation}\label{eq: flip sides outer list dec soundness}
\Pr_{(t,a,b) \sim \mc{Q}}[\Phi_{(a,b)}(T_{1,j}[a], T_2[b]) = 1 \land D_t(a, T_{1,j}[a]) \notin \{w_{j,1}[t], \ldots, w_{j,L}[t] \}] \leq \eps.
\end{equation}
We show that $\List = \bigcup_{j \in [J]} \List_j$ satisfies the list-decoding soundness condition of $\mc{D}'$ with size $J \cdot L = L \cdot \poly(1/\eps)$ and error $2\eps^{1/5}$. The size is clear, and next we show that for any right assignment $T'_2: A \to \Sigma_A$, the list-decoding error of $T_1'$, $T_2'$ relative to $\List$ is small. We define the following events:
\begin{itemize}
    \item Let $\mc{E}_{(t,b,i),a}$ be the event that the constraint on $((t,b,i), a)$ is satisfied by assignments $T'_1[(t,b,i)]$ and $T'_2[a]$.
    \item Let $\mc{F}_{j, (t,b,i),a}$ be the event that $D'_t((t,b,i), T'_1[(t,b,i)]) \notin \{(w_{j,1})_t, \ldots, (w_{j,L})_t \}$.
\end{itemize} 
It is clear that the list-decoding error can be expressed and bounded as follows
\begin{equation} \label{eq: decoding degree reduction list dec error}
    \begin{split}
      &\Pr_{(t,(t,b,i),a) \sim \mc{Q}'}\left[\mc{E}_{(t,b,i),a} \bigwedge \left(\bigwedge_{j \in [J]}\mc{F}_{j, (t,b,i),a}\right) \right]\\
       &= \Pr_{\substack{(t,a,b) \sim \mc{Q}, i \in [C_{t,b}]}}\left[\mc{E}_{(t,b,i),a} \bigwedge \left(\bigwedge_{j \in [J]}\mc{F}_{j, (t,b,i),a}\right) \right] \\
      &=  \Pr_{(t,a,b) \sim \mc{Q}, i \in [C_{t,b}]}\left[\mc{E}_{(t,b,i),a} \bigwedge \left(\bigwedge_{j \in [J]}\mc{F}_{j, (t,b,i),a}\right) \land \left( T'_{1}[(t,b,i)]_a \in {\sf cand}(a)\right) \right] \\
     &+ \Pr_{(t,a,b) \sim \mc{Q}, i \in [C_{t,b}]}\left[\mc{E}_{(t,b,i),a} \bigwedge \left(\bigwedge_{j \in [J]}\mc{F}_{j, (t,b,i),a}\right) \land \left( T'_{1}[(t,b,i)]_a \notin {\sf cand}(a)\right) \right] \\
      &\leq  \Pr_{(t,a,b) \sim \mc{Q}, i \in [C_{t,b}]}\left[\mc{E}_{(t,b,i),a} \bigwedge \left(\bigwedge_{j \in [J]}\mc{F}_{j, (t,b,i),a}\right) \land \left( T'_{1}[(t,b,i)]_a \in {\sf cand}(a)\right) \right]  \\
     &+ \Pr_{(t,a,b) \sim \mc{Q}, i \in [C_{t,b}]}\left[\mc{E}_{(t,b,i),a} \land \left( T'_{1}[(t,b,i)]_a \notin {\sf cand}(a)\right) \right]\\
     &\leq \sum_{j \in [J]} \Pr_{(t,a,b) \sim \mc{Q}, i \in [C_{t,b}]}[\mc{E}_{(t,b,i),a} \land\mc{F}_{j, (t,b,i),a} \land \left(T'_{1}[(t,b,i)]_a = T_{1, j}[a]\right) ] \\
     &+ \Pr_{(t,a,b) \sim \mc{Q}, i \in [C_{t,b}]}\left[\mc{E}_{(t,b,i),a} \land \left( T'_{1}[(t,b,i)]_a \notin {\sf cand}(a)\right) \right]
    \end{split}
\end{equation}
In the last transition we used the union bound, and we next bound the terms on the right hand side. We upper bound the second term on the right hand side of~\eqref{eq: decoding degree reduction list dec error} as
\begin{equation} \label{eq: single prob}
\begin{split}
&\Pr_{(t,a,b) \sim \mc{Q}, i \in [C_{t,b}]}\left[\mc{E}_{(t,b,i),a} \land \left( T'_{1}[(t,b,i)]_a \notin {\sf cand}(a)\right) \right] 
\\
&\leq\E_{a}\left[
1_{T_2'[a]\not\in {\sf cand}(a)}
\Pr_{(t,b)\sim\mc{Q}(\cdot,a,\cdot), i \in [C_{t,b}]}[T_1'[(t,b,i)]_a=T_2'[a]] 
\right]\\
&\leq \eps^{1/5},
\end{split}
\end{equation}
where in the last inequality we used the definition of ${\sf cand}(a)$.

We now go on to bound the first term on the right hand side of~\eqref{eq: decoding degree reduction list dec error}. We bound the summand corresponding to each $j \in [J]$ separately, and we fix a $j$ henceforth. Define 
\begin{itemize}
    \item $\mc{E}'_{j, (t,b,i),a}$ be the event $\mc{E}_{(t,b,i),a} \land\mc{F}_{j, (t,b,i),a} \land \left( T_{1, j}[a] = T'_1[(t,b,i)]_a\right)$,
\end{itemize}
so our task is now to upper bound $\Pr_{(t,a,b) \sim \mc{Q}, i \in [C_{t,b}]}[\mc{E}'_{j, (t,b,i),a}]$. For any constraint of $\mc{D}$, say $(a,b) \in E$, and label $\sigma$ for $a$, we denote by $\sigma_{\to b}$ the unique label for $b$ such that $\Phi_{(a,b)}(\sigma, \sigma_{\to b}) = 1$. Here, the uniqueness follows since $\mc{D}$ is a projection dPCP. For each $(t,b,i) \in B'$, let $T'_{1}[(t,b,i)]_{\to b}$ be the label deduced for $b$ in $T'_{1}[(t,b,i)]$. In particular, by the hardcoded alphabet constraints we have that $T'_{1}[(t,b,i)]_{\to b} = \left(T'_{1}[(t,b,i)]_a\right)_{\to b}$ for any $a \in \Gamma(b)$.

We say a symbol $\sigma \in \Sigma_B$ is $b$-heavy if 
\begin{equation} \label{eq: heavy def} 
\Pr_{(t,a) \sim \mc{Q}(\cdot, \cdot, b)}[T_{1,j}[a]_{\to b} = \sigma] \geq \sqrt{\eps}.
\end{equation}

It is clear that for every $b \in B$, there are at most $J' = 1/\sqrt{\eps}$ many $b$-heavy symbols, hence we can find tables $H_1, \ldots, H_{J'}: B \to \Sigma_B$ such that for every $b$-heavy symbol $\sigma$, there is some $j' \in [J']$ such that $H_{j'}[b] = \sigma$. Thus, we may write
\begin{equation} \label{eq: heavy split new}
    \begin{split}
\Pr_{(t,a,b) \sim \mc{Q}}[\mc{E}'_{j, (t,b,i),a}] &= \Pr_{(t,a,b) \sim \mc{Q}}[\mc{E}'_{j, (t,b,i),a} \land (T'_{1}[(t,b,i)]_{\to b} \text{ is } b\text{-heavy})] \\
&+ \Pr_{(t,a,b) \sim \mc{Q}}[\mc{E}'_{j, (t,b,i),a} \land (T'_{1}[(t,b,i)]_{\to b}  \text{ is not } b\text{-heavy})].
\end{split}
\end{equation}
To bound the first probability, notice that if $\mc{E}'_{j, (t,b,i),a} \land (T'_{1}[(t,b,i)]_{\to b} \text{ is } b\text{-heavy})$ occurs, then for some $j' \in [J']$, both events below occur:
\begin{itemize}
\item the assignments to $a$ and $b$ given by $T_{1,j}[a]$ and $H_{j'}[b]$ respectively satisfy the constraint $\Phi_{(a,b)}$,
\item $D_t(a, T_{1,j}[a]) \notin \{(w_{j,1})_t, \ldots, (w_{j,L})_t \}$.
\end{itemize}
The first item requires unpacking the event $\mc{E}'_{j, (t,b,i),a}$. Indeed, if this event holds then $T'_{1}[(t,b,i)]_a = T_{1, j}[a]$, which implies $T'_{1}[(t,b,i)]_{\to b} = T_{1, j}[a]|_{\to b}$. Hence, $ T_{1, j}[a]|_{\to b} \in \{H_1[b], \ldots ,H_{J'}[b] \}$ because $T'_{1}[(t,b,i)]_{\to b}$ is heavy. Now, to bound the first probability in \eqref{eq: heavy split new}, notice that for a fixed $j' \in [J']$, the above two items occur simultaneously with probability at most $\eps$. Indeed, this follows by \eqref{eq: flip sides outer list dec soundness} applied to $T_{1,j}$ and $H_{j'}$. Union bounding over all $j' \in [J']$ gives that 
\[
\Pr_{(t,a,b) \sim \mc{Q}}[\mc{E}'_{j, (t,b,i),a} \land (T'_{1}[(t,b,i)]_{\to b} \text{ is } b\text{-heavy})]  \leq J' \cdot \eps.
\]
We move on to bounding the second probability in \eqref{eq: heavy split new}, and here we will use the agnostic property of $\mc{Q}$. We have
\begin{equation}
    \begin{split}
    &\Pr_{(t,a,b) \sim \mc{Q}, i \in [C_{t,b}]}[\mc{E}'_{j, (t,b,i),a} \land (T'_{1}[(t,b,i)]_{\to b} \text{ is not } b\text{-heavy})] \\
    &\leq \Pr_{(t,a,b) \sim \mc{Q}, i \in [C_{t,b}]}[T_{1,j}[a]_{\to b} = T'_{1}[(t,b,i)]_{\to b} \land T'_{1}[(t,b,i)]_{\to b} \text{ is not } b\text{-heavy}] \\
    &= \E_{(t,b) \sim \mc{Q}, i \in [C_{t,b}]}\left[1_{T'_{1}[(t,b,i)]_{\to b} \text{ is not } b\text{-heavy}}\Pr_{a \sim \mc{Q}(t, \cdot, b)}[T_{1,j}[a]_{\to b} = T'_{1}[(t,b,i)]_{\to b}] \right] \\
    &= \E_{(t,b) \sim \mc{Q}, i \in [C_{t,b}]}\left[1_{T'_{1}[(t,b,i)]_{\to b} \text{ is not } b\text{-heavy}}\Pr_{a \sim \mc{Q}(\cdot, \cdot, b)}[T_{1,j}[a]_{\to b} = T'_{1}[(t,b,i)]_{\to b} ] \right]\\
    &\leq \sqrt{\eps}.
    \end{split}
\end{equation}
The first transition is by definition of $\mc{E}'_{j, (t,b,i),a}$, the second transition is clear, the third transition uses the agnostic property of $\mc{Q}$, and the last transition uses 
the fact that the distribution of $a$ in the probability of the fourth line is the same as the distribution of $a$ in \eqref{eq: heavy def} and hence the probability in the expectation of the fourth line is always at most $\sqrt{\eps}$ by definition of not being $b$-heavy.

Altogether, this yields
\[
\Pr_{(t,a,b) \sim \mc{Q}}[\mc{E}'_{j, (t,b,i),a}]  \leq  J' \cdot \eps + \sqrt{\eps}.
\]
Hence, going back to the list-decoding error expression in \eqref{eq: decoding degree reduction list dec error}, we get that each term in the summation of the last expression is at most $ J' \cdot \eps + \sqrt{\eps}$, and combined with \eqref{eq: single prob} we get that the list-decoding error is at most 
\[
J \cdot (J' \cdot \eps + \sqrt{\eps}) + \eps^{1/5} \leq 2\eps^{1/5}.
\]
Hence $\mc{D}'$ has $(J \cdot L, 2\eps^{1/5})$-list-decoding soundness, where $J \cdot L = L \cdot \poly(1/\eps)$. 
\end{proof}

\subsection{An Inner dPCP}

Now we describe how to find a version of our dPCP that will be used as the inner dPCP for composition. Below, we highlight a few technical properties of this inner PCP construction which will be used later on in the composition step. 

\begin{itemize}
    \item The decoding degree is $1$.
    \item There is some integer $J$ such that for every pair of vertices $(a,b)$, the weight given by the complete decoding distribution can be written as 
     $\mc{Q}(\circ, a, b) = \frac{w(a,b)}{J}$ for some integer $w(a,b)$. Furthermore, for each $a$, we have that the sum $\sum_{b \in B} w(a,b)$ is the same. This latter condition also implies that the marginal of the complete decoding distribution over the left vertices is uniform.
\end{itemize}

We will obtain these properties by using \cref{lm: decoding degree reduction}, which gives a dPCP with decoding degree $1$, thereby establishing the first item above. To achieve the second item, we carefully choose the parameters $C_{t,b}$
so as to adjust the complete decoding distribution, following an approach similar to that of \cref{lm: discretize}.

\begin{lemma} \label{lm: left uniformity}
Suppose the language $\mc{L} \subseteq \Sigma_0^n$ has a projection dPCP
\[
\mc{D} = (A \cup B, E, \Sigma_A, \Sigma_B, \{\Phi_{e} \}_{e \in E}, \{D_t \}_{t \in [n]}, \{ \mc{P}_{t}\}_{t\in [n]} )
\]
 with perfect completeness, $(L, \eps)$-list-decoding soundness, agnostic complete decoding distribution,  projection decision complexity $\mathsf{ProjComp}$, decoding complexity $\mathsf{DecodeComp}$, and left degree, decoding degree, right degree that are at most $\mathsf{ld}, \mathsf{ldd}, d$ respectively.
 Then for every integer $M \geq |B|$, $\mc{L}$ has a projection dPCP
\[
\mc{D}' = (Z \cup A, E', \Sigma^{d}_A \times \Sigma_0, \Sigma_A, \{\Phi'_{e} \}_{e \in E'}, \{D'_t \}_{t \in [n]}, \{ \mc{P}'_t \}_{t\in [n]} )
\]
which satisfies the following properties:
\begin{itemize}
    \item \textbf{Length.} The left side has the form  $Z = [n] \times [M]$, so the length is $n\cdot M + |A|$. 
    \item \textbf{Degrees.} The decoding degree is $1$ and the left degrees are at most $d$. Furthermore if $\mc{D}$ is right-regular, then $\mc{D}'$ is left regular.
     \item \textbf{Projection Decision Complexity.} $O(d \cdot(\mathsf{ProjComp} + \mathsf{DecodeComp}))$.
        \item \textbf{Decoding Complexity.} $O(d\log |\Sigma_A| + \log |\Sigma_0|)$. 
         \item \textbf{Decoding Distribution.} For each $t \in [n]$, the marginal of $\mc{P}'_t$ over the left vertices, $Z$, is uniform in $\{t\} \times [M]$. 
    \item \textbf{Completeness.} The dPCP has perfect completeness.
    \item \textbf{List-Decoding Soundness.} $\left(L\cdot \poly(1/\eps), 2\eps^{1/5} + \frac{|B|}{M}\right)$-list-decoding soundness.
\end{itemize}
\end{lemma}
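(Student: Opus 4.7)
The plan is to invoke the decoding-degree reduction (\cref{lm: decoding degree reduction}) with carefully chosen multiplicities $\{C_{t,b}\}$, so that the resulting decoding distributions are almost uniform on the left side, and then to slightly perturb them so that the marginals become exactly uniform, paying only $|B|/M$ in list-decoding soundness. The hypothesis $M \geq |B|$ is exactly what makes this perturbation cheap.

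First I would set $p_t(b) := \mc{P}_t(\circ, b)$ and, for each $t$, apply \cref{lm: discretize} with denominator $M$ and error $|B|/(2M)$ (valid since $M \geq |B|$) to obtain integer weights with support in $\supp(p_t)$ summing to $M$. Writing $\mc{I}_t := \{b : (t,b)\in\mc{I}\}$, since $|\mc{I}_t| \leq |B| \leq M$ we may re-allocate at most $|B|$ units of weight (raising any zero weights on $\mc{I}_t$ to $1$ and decrementing some weight that is $\geq 2$) to obtain integers $C_{t,b}\geq 1$ for all $(t,b)\in\mc{I}$ with $\sum_b C_{t,b}=M$ and $\sum_b |C_{t,b}/M - p_t(b)| \leq 2|B|/M$. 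Next I would apply \cref{lm: decoding degree reduction} to $\mc{D}$ with these multiplicities (using agnosticity of $\mc{D}$), yielding a projection dPCP $\widetilde{\mc{D}}$ with left side of total size $\sum_{t,b} C_{t,b}=nM$, decoding degree $1$, alphabets $\Sigma_A^d\times\Sigma_0$ and $\Sigma_A$, left-degree $\leq d$ (left-regular if $\mc{D}$ is right-regular), the required projection-decision and decoding complexities, and $(L\cdot\poly(1/\eps),2\eps^{1/5})$-list-decoding soundness. A direct computation from \cref{lm: decoding degree reduction} shows $\widetilde{\mc{P}}_t((t,b,i),a)=\mc{P}_t(a,b)/C_{t,b}$, so the conditional of $a$ given the left vertex $(t,b,i)$ equals $\mc{P}_t(a\mid b)$.

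Then I would define $\mc{D}'$ to be $\widetilde{\mc{D}}$ with the decoding distributions replaced by $\mc{P}'_t((t,b,i),a) := \tfrac{1}{M}\mc{P}_t(a\mid b)$. The marginal of $\mc{P}'_t$ on the left is uniformly $1/M$ over the $M$ vertices $(t,b,i)$ in the $t$-slice, and after relabeling these slices one obtains $Z=[n]\times[M]$ with the required marginal property. Since $\mc{P}'_t$ and $\widetilde{\mc{P}}_t$ have the same support, perfect completeness is inherited from $\widetilde{\mc{D}}$, and all structural parameters (length $nM+|A|$, decoding degree $1$, left-degree $\leq d$, left-regularity when $\mc{D}$ is right-regular, and the projection-decision and decoding complexities) are inherited verbatim.

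Finally, for soundness, a direct TV-distance calculation gives
\[
2\,\tv(\widetilde{\mc{P}}_t,\mc{P}'_t) \;=\; \sum_{b}\bigl|p_t(b)-C_{t,b}/M\bigr| \;\leq\; 2|B|/M,
\]
hence the complete decoding distributions satisfy $\tv(\widetilde{\mc{Q}},\mc{Q}')\leq |B|/M$. Thus, for any left assignment to $\mc{D}'$ (equivalently to $\widetilde{\mc{D}}$, as the alphabets agree), taking the list guaranteed by the soundness of $\widetilde{\mc{D}}$ and bounding the $\mc{D}'$-error by the $\widetilde{\mc{D}}$-error plus the TV distance yields $(L\cdot\poly(1/\eps),\,2\eps^{1/5}+|B|/M)$-list-decoding soundness. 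The only mildly subtle point is the discretization in Step~1: one must simultaneously maintain $C_{t,b}\geq 1$ on $\mc{I}$ (so that \cref{lm: decoding degree reduction} applies and $\widetilde{\mc{P}}_t$ remains a probability distribution) and keep the total discretization error $O(|B|/M)$, which is precisely what the hypothesis $M \geq |B|$ enables.
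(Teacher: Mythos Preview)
Your proposal is correct and follows essentially the same approach as the paper: discretize each $\mc P_t(\circ,\cdot)$ with denominator $M$, feed the resulting integers as the multiplicities $C_{t,b}$ into \cref{lm: decoding degree reduction}, and then replace the decoding distributions by the exactly-left-uniform ones, absorbing the discrepancy as a $|B|/M$ additive loss in soundness via a TV bound. The only cosmetic difference is that you take extra care to re-allocate so that $C_{t,b}\geq 1$ on all of $\mc I$, whereas the paper simply allows $C_{t,b}=0$ and treats $\{(t,b)\}\times[0]$ as empty; both are fine.
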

\begin{proof}
Fix $M \geq |B|$ and take $\mc{D}$ to be the dPCP from the lemma statement. We start by using \cref{lm: discretize} to obtain a discretized version of the marginal of $\mc{P}_{t}$ over $B$ for each $t \in [n]$. We apply \cref{lm: discretize} with denominator there set to $M$ for each $t \in [n]$, which yields the following. For each $t \in [n]$, there are nonnegative integers $z'_{t,b}$ for each $b \in B$ such that $M := \sum_{b \in B} z'_{t,b}$, and
\begin{equation} \label{eq: discretize error}
    \left| z'_{t,b} - M \cdot  \mc{P}_{t}(\circ, b) \right| \leq 1,
\end{equation}
for every $b \in B$. We let $\mc{R}_t$ be the distribution over $B$ given by 
\[
\mc{R}_t(b) = \frac{z'_{t,b}}{M}.
\]
Note that $\mc{R}_t$ is the discretized version of the marginal of $\mc{P}_t$ over $B$ given by \cref{lm: discretize}.

Apply \cref{lm: decoding degree reduction} to $\mc{D}$ with the nonnegative integers $C_{t,b}$ therein set to $C_{t,b} = z'_{t,b}$ to get a dPCP of the form
\[
\mc{D}_2 = \left(\left(\cup_{t \in [n]} S_t \right) \cup A, E_2, \Sigma^{d}_A \times \Sigma_0, \Sigma_A, \{\Phi_{e} \}_{e \in E_2}, \{D_{2,t} \}_{t \in [n]}, \{ \mc{P}_{2,t} \}_{t\in [n]} \right),
\]
where $S_t = \bigcup_{b \in B} \{(t,b)\} \times [z'_{t,b}]$ for each $t\in[n]$. Note that the sets $S_t$ are disjoint  and that when defining $S_t$, if $z'_{t,b} = 0$ we think of $\{(t,b)\} \times [z'_{t,b}]$ as the empty set. Furthermore, for each $t$, 
\begin{equation}
    |S_t| = \sum_{b \in B} z'_{t,b} = M.
\end{equation}
By \cref{lm: decoding degree reduction}, $\mc{D}_2$ has perfect completeness, $(L \cdot \poly(1/\eps), 2\eps^{1/5})$-list-decoding soundness, and decoding degree $1$. Likewise, we get that if $\mc{D}$ is right-regular, then $\mc{D}_2$ is left-regular.

Next, we will modify the decoding distributions of $\mc{D}_2$ to obtain our final dPCP, $\mc{D}'_2$. Our goal is to obtain a dPCP whose complete decoding distribution has uniform marginal over the left vertices (and we will later show that this implies the decoding distribution guarantee of the theorem statement). Let $\mc{Q}_2$ be the complete decoding distribution of $\mc{D}_2$. By the decoding distribution guarantee of \cref{lm: decoding degree reduction}, we have that 
\[
\mc{Q}_2(t, (t,b,i), \circ) = \frac{1}{n} \cdot \frac{\mc{P}_t(\circ, b)}{z'_{t,b}} = \frac{1}{n} \cdot \frac{p_{t,b}}{z'_{t,b}}
\]
The above shows that the marginal of $\mc{Q}_2$ over the left vertices is almost uniform, and hence we will only need to make a small modification.

We note that for each $t, b$ and $i$, the distribution of $a \sim \mc{Q}_2(t, (t,b,i), \cdot) $ is equal to $\mc{Q}(t, \cdot , b)$. Indeed, for any $a \in A$, using the decoding distribution guarantee of \cref{lm: decoding degree reduction} we have
\[
\frac{Q_2(t, (t,b,i), a)}{\mc{Q}_2(t, (t,b,i), \circ)} = \frac{\mc{Q}(t, a, b) / C_{t,b}}{\mc{Q}(t, \circ, b)/C_{t,b}} = \frac{Q(t,a,b)}{Q(t, \circ, b)},
\]
where in the second transition we used the fact that (using the decoding distribution guarantee of \cref{lm: decoding degree reduction} again)
\[
\mc{Q}_2(t, (t,b,i), \circ) = \sum_{a \in A} \mc{Q}_2(t, (t,b,i), a) =  \sum_{a \in A} \frac{\mc{Q}(t,a,b)}{C_{t,b}} = \frac{\mc{Q}(t,\circ,b)}{C_{t,b}}.
\]

On the other hand, consider the distribution $\mc{Q}'_{2}$, generated by choosing $t \in [n]$ uniformly, $b \sim \mc{R}_t$, $a \sim \mc{Q}(t, \cdot, b)$, $i \in [z'_{t,b}]$ uniformly and outputting $(t, (t,b,i), a)$. Then for any $t$ and $(t,b,i)$, we have
\[
\mc{Q}'_{2}(t, (t,b,i), \circ) = 
\frac{1}{n}\cdot\frac{z_{t,b}'}{M}\cdot\frac{1}{z_{t,b}'}
=
\frac{1}{n} \cdot \frac{1}{M}.
\]
Now we show that $\mc{Q}'_{2}$ is close to $\mc{Q}_2$ in TV-distance. First note that for every $t,b$ and $i$, by definition the distributions $\mc{Q}_2(t, \cdot, b) = \mc{Q}'_2(t, (t,b,i), \cdot)$, which in turn implies that, $\mc{Q}_2(t, (t,b,i), \cdot)$ and $\mc{Q}'_2(t, (t,b,i), \cdot)$ are identical. Therefore, the TV-distance between $\mc{Q}_2$ and $\mc{Q}'_{2}$ is equal to 
the TV-distance between their marginal on $(t,b,i)$. 
It follows that
\begin{align*}
   \tv(\mc{Q}_{2}, \mc{Q}'_2)
   \leq \sum_{t \in [n], b \in B, i \in [z'_{t,b}]} \left| \mc{Q}_{2}(t, (t,b,i), \circ) - \mc{Q}'_2(t, (t,b,i), \circ)\right|  
   \leq  \sum_{t \in [n], b \in B, i \in [z'_{t,b}]} \left|\frac{1}{n}\left( \frac{p_{t,b}}{z'_{t,b}} - \frac{1}{M}  \right)\right|.  
\end{align*}
Since the summation is only taken over $t,b$ such that $z'_{t,b} \geq 1$, so we can bound each summand there as
\[
\frac{1}{n}\left|\frac{p_{t,b}}{z'_{t,b}} - \frac{1}{M} \right|=
\frac{1}{n}\left| \frac{Mp_{t,b}-z'_{t,b}}{Mz'_{t,b}} \right| \leq \frac{1}{nMz_{t,b}'},
\]
where we used \eqref{eq: discretize error}. Hence, putting everything together, we get 
\begin{equation} \label{eq: tv distance in inner}
    \tv(\mc{Q}_{2}, \mc{Q}_2')\leq \frac{|B|}{M}.
\end{equation}

Our final dPCP, which we call $\mc{D}'_2$, is obtained by taking $\mc{D}_2$ and changing its complete decoding distribution to $\mc{Q}'_{2}$. In particular all of the parts of $\mc{D}'_2$ are the same as in $\mc{D}_2$ except for the decoding distributions. We have,
\[
\mc{D}'_2 = \left(\left(\cup_{t \in [n]} S_t \right) \cup A, E_2, \Sigma^{d}_A \times \Sigma_0, \Sigma_A, \{\Phi'_{e} \}_{e \in E_2}, \{D_{2,t} \}_{t \in [n]}, \{\mc{P}'_{2,t}\}_{t \in [n]}\right),
\]
where for each $t \in [n]$, the new decoding distribution $\mc{P}'_{2,t} $ is the distribution over $E_2$ given by $\mc{Q}'_{2}(t, \cdot, \cdot)$. 

It is clear that $\mc{D}'_2$ still has perfect completeness and the degree related parameters, projection decision complexity, and decoding complexity are the same as in $\mc{D}_2$. By \eqref{eq: tv distance in inner}, $\mc{D}'_2$ has $(L \cdot \poly(1/\eps), 2\eps^{1/5} + |B|/M)$-list decoding soundness.
Finally, for each $t \in [n]$, it can be checked that the decoding distribution, $\mc{P}'_{2,t}$, chooses a vertex in $S_t$ uniformly at random, and furthermore $|S_t| = M$. Hence, we can view the left vertices as $[n] \times [M]$ and the decoding distribution $\mc{P}'_{2,t}$ is indeed as described in the lemma. 
\end{proof}

We now describe how to obtain the inner dPCP that we will use for composition.

\begin{theorem}[Inner dPCP] \label{thm: inner dpcp}
    For every $\eps > 0$ there exist $L = {\sf poly}(1/\eps)$ and functions $M, K: \mathbb{N} \times \mathbb{N} \to \mathbb{N}$ satisfying
    \[
    M(T, N) = O_T(N \poly_{\eps}(\log N)) \quad \text{and} \quad K(T, N) = O_T( \poly_{\eps}(\log N))
    \]
    such that the following holds. For every alphabet $\Sigma_0$ of size at most $T$ and circuit $\varphi: \Sigma_0^n \to \{0,1\}$ of size at most $N$, the language $\sat(\varphi)$ has a projection dPCP of the form
    \[
    \mc{D} = \left(\left([n] \times [M]\right)\cup B, E, \Sigma_A, \Sigma_B, \{\Phi_{e}\}_{e\in E}, \{D_t\}_{t \in [n]}, \{\mc{P}_t \}_{t\in[n]} \right)
    \]
    which satisfies the following with $M:= M(T, N)$ and $K := K(T,N)$ 
     \begin{itemize}
        \item \textbf{Length.} $O_{T}(n M + N\poly_{\eps}(\log N))$.
        \item \textbf{Alphabet Sizes.} The left and right alphabet sizes are $2^{\left(O_{T}(\poly_{\eps}(\log N))\right)}$ and $O_{\eps}(1)$ respectively.
        \item \textbf{Degrees.} $\mc{D}$ has decoding degree $1$.
        \item \textbf{Projection Decision Complexity.} $ O_{|T|}(\poly_{\eps}(\log N))$.
        \item \textbf{Decoding Complexity.} $O_{|T|}(\poly_{\eps}(\log N))$.
        \item \textbf{Decoding Distribution.} Let $\mc{Q}$ be the complete decoding distribution. Then, $\mc{Q}$ satisfies the following:
        \begin{itemize}
            \item For each $(t,r) \in [n] \times [M]$ and $b \in B$ we can write $\mc{Q}(\circ, (t,r), b) = \frac{w((t,r),b)}{nMK}$ for some  $w((t,r),b) \in \mathbb{N}$. 
            \item For each $t \in [n]$, the distribution $\mc{Q}(t, \cdot, \circ)$ is uniform over $\{t\} \times [M]$ and hence for every $(t,r) \in [n] \times [M]$, 
            \[
            \sum_{b \in B} w((t,r),b) = K
            \]
        \end{itemize}
    
        \item \textbf{Completeness.} The dPCP has perfect completeness.
        \item \textbf{List-Decoding Soundness.} The dPCP has $(L, \eps)$-list-decoding soundness.
    \end{itemize}
\end{theorem}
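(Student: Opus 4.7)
The plan is to obtain the inner dPCP by chaining the $2$-query dPCP of \cref{thm: dp dpcp}, two applications of \cref{lm: right alphabet reduction}, the decoding-degree-reduction-and-left-uniformization of \cref{lm: left uniformity}, and a final discretization based on \cref{lm: discretize}.

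First, instantiate \cref{thm: dp dpcp} on $\varphi$ with list-decoding error $\eps_0$ chosen small enough that $2\eps_0^{1/5} \leq \eps/4$. This yields a projection dPCP $\mc{D}_0$ of length $O_T(N\poly_{\eps}(\log N))$ with agnostic complete decoding distribution, decoding and right degrees both $\poly_{\eps}(\log N)$, and alphabets of size $2^{O_T(\poly_{\eps}(\log N))}$. Applying \cref{lm: right alphabet reduction} with proximity parameter $\eta \ll \eps$ reduces the right alphabet to $O_{\eps}(1)$ while preserving agnosticity and projection constraints, producing $\mc{D}_1$ whose length and left degree are only multiplied by an $O_T(\poly_{\eps}(\log N))$ factor.

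Next, apply \cref{lm: left uniformity} to $\mc{D}_1$ with parameter $M := \Theta(|B_1|/\eps)$, producing a projection dPCP $\mc{D}_2$ whose left vertex set is $[n]\times[M]$, whose decoding degree is $1$, and for which $\mc{P}_t$ has uniform marginal over $\{t\}\times[M]$; the resulting soundness is $2\eps_0^{1/5} + |B_1|/M \leq \eps/3$. Since the right alphabet of $\mc{D}_2$ is inherited from the (still large) left alphabet of $\mc{D}_1$, apply \cref{lm: right alphabet reduction} a second time to get $\mc{D}_3$ with right alphabet of size $O_{\eps}(1)$, incurring only an additive $3\eta$ soundness loss and another $O_T(\poly_{\eps}(\log N))$-factor left-degree blow-up.

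The final step is to enforce the discreteness requirement $\mc{Q}(\circ,(t,r),b) = w((t,r),b)/(nMK)$ with $\sum_b w((t,r),b) = K$ for integer weights $w$ and an integer $K = O_T(\poly_{\eps}(\log N))$, and this is the most delicate part. Because every intermediate transformation inflates the left degree only by $O_T(\poly_{\eps}(\log N))$ multiplicative factors starting from the $\poly_{\eps}(\log N)$ bound in \cref{thm: dp dpcp}, each left vertex of $\mc{D}_3$ has at most $d_{\mathrm{left}} = O_T(\poly_{\eps}(\log N))$ neighbors; in particular, for each $(t,r)$ the conditional distribution of $b$ given $(t,r)$ is supported on at most $d_{\mathrm{left}}$ right vertices. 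Applying \cref{lm: discretize} with denominator $K = O(d_{\mathrm{left}}/\eps)$ replaces each such conditional by a $K$-discrete distribution with the same support and TV distance at most $\eps/10$. Redefining the decoding distributions of $\mc{D}_3$ accordingly, while keeping the left marginal $\mc{P}_t((t,r),\circ) = 1/M$ intact, preserves perfect completeness (via nested supports), raises the soundness error by at most $\eps/10$, and yields the desired dPCP with $M = O_T(N\poly_{\eps}(\log N))$ and $K = O_T(\poly_{\eps}(\log N))$. The hinge of the whole argument is that the left degree stays polylogarithmic throughout, which is what permits $K$ to be polylogarithmic rather than polynomial in $N$.
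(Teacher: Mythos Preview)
Your proposal is correct and follows essentially the same blueprint as the paper: start from \cref{thm: dp dpcp}, apply \cref{lm: left uniformity}, follow with \cref{lm: right alphabet reduction}, and finish by discretizing each conditional $\mc{Q}(t,(t,r),\cdot)$ via \cref{lm: discretize} using that the left degree stays $O_T(\poly_{\eps}(\log N))$.

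The one difference is your \emph{first} application of \cref{lm: right alphabet reduction}, before \cref{lm: left uniformity}. This step is redundant: the decoding-degree reduction inside \cref{lm: left uniformity} flips the sides, so the new right alphabet is the \emph{old left} alphabet $\Sigma_A$, and the old right alphabet $\Sigma_B$ does not enter the output at all. The paper therefore skips your first reduction and applies \cref{lm: right alphabet reduction} only once, after \cref{lm: left uniformity}. Your extra step is harmless (agnosticity, projection constraints, and the relevant degree bounds are all preserved), but it does nothing useful, so you can drop it and shorten the argument.
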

\begin{proof}
     Fix $\eps > 0$, fix sizes $T$ and $N$ and let $\varphi: \Sigma_0^n \to \{0,1\}$ be a circuit of size at most $N$ with $|\Sigma_0| \leq T$. We take $\mc{D}_{\dip}$ to be the dPCP for $\sat(\varphi)$ in \cref{thm: dp dpcp} with soundness parameter set to $\eps$:
     \[
     \mc{D}_{\dip} = (X(k), X(\sqrt{k}), E_{\dip}, \Sigma^k, \Sigma^{\sqrt{k}}, \{\Phi_{e} \}_{e \in E_{\dip}}, \{D_{\dip,t} \}_{t \in [n]}, \{ \mc{P}_{\dip, t}\}_{t\in [n]} ).
     \]
    Now set $M = O_T(N \poly_{\eps}(\log N))$ sufficiently large relative to $|X\left(\sqrt{k}\right)| = O_T(N \poly_{\eps}(\log N))$. We also have that there is some $d_{r} = \poly_{\eps}(\log N)$ such that regardless of the initial circuit $\varphi$, the right degree of every vertex is at most $d_{r}$. We will transform $\mc{D}_{\dip}$ into the desired inner dPCP.

      \paragraph{Decoding Degree Reduction + Right Alphabet reduction.} We first apply \cref{lm: left uniformity} with $M$ set as above, and then apply right alphabet reduction from \cref{lm: right alphabet reduction} with sufficiently small $\eta = \poly(\eps)$ on the resulting dPCP. Call the dPCP after both transformations $\mc{D}_2$ and let its constraint graph be $([n] \times [M] \cup B, E)$. Note that we can assume such structure on the left vertices by \cref{lm: left uniformity}. Call the resulting complete decoding distribution $\mc{Q}_2$. Let us list a few properties we have that are straightforward to verify from \cref{lm: left uniformity} and \cref{lm: right alphabet reduction}.
      \begin{itemize}
            \item \textbf{Length.} $|B| = |X(k)| \cdot O_{\eta}(\sqrt{k}\log |\Sigma|) = O_{T}(N \poly_{\eps}(\log N))$.
            This follows from the length guarantees of \cref{lm: left uniformity} and the structure of the resulting constraint graph from \cref{lm: right alphabet reduction}. The length is therefore $nM + |B| = O_T(nM + N\poly_{\eps}(\log N))$.
            \item \textbf{Alphabets.} The left alphabet has size at most $|\Sigma|^{k \cdot d_r} \cdot T = 2^{O_T(\poly_{\eps}(\log N))}$ after applying \cref{lm: left uniformity}, and this is unaffected by \cref{lm: right alphabet reduction}. The right alphabet has size at most $O_{\eps}(1)$ by \cref{lm: right alphabet reduction}.
            \item \textbf{Degrees.} The decoding degree is $1$ after applying \cref{lm: left uniformity} and this is unaffected by \cref{lm: right alphabet reduction}.  The degree of each left vertex is at most $O_{T}(\poly_{\eps}(\log N))$. This follows from the left degree guarantee of \cref{lm: left uniformity} (which turns right degree of $\mc{D}_{\dip}$ into left degree) and the fact that \cref{lm: right alphabet reduction} multiplies the left degree by $O_{\eta}(\sqrt{k}\log |\Sigma|) = O_{T}(\poly_{\eps}(\log N))$. We refer to \cref{thm: dp dpcp} to bound $\sqrt{k} \log |\Sigma| \leq O_{T}(\poly_{\eps}(\log N))$.
            \item \textbf{Projection Decision Complexity.} $O_T(\poly_{\eps}(\log N))$. This is straightforward to check by using \cref{thm: dp dpcp}, then \cref{lm: left uniformity}, and finally \cref{lm: right alphabet reduction}.
            \item \textbf{Decoding Complexity.} $O_T(\poly_{\eps}(\log N))$. This is straightforward to check by using \cref{thm: dp dpcp}, then \cref{lm: left uniformity}, and finally \cref{lm: right alphabet reduction}.
          \item \textbf{Decoding Distribution.} The marginal of $\mc{Q}_2$ over $[n] \times [M]$ is uniform and each left vertex $(t, r) \in [n] \times [M]$ has exactly one decoding neighbor, $t$. This follows from the decoding distribution and decoding degree guarantees of \cref{lm: left uniformity},
          \item \textbf{Completeness.} $\mc{D}_2$ has perfect completeness because $\mc{D}_{\dip}$ does and both \cref{lm: left uniformity} and \cref{lm: right alphabet reduction} preserve it.
          \item \textbf{Soundness.} $\mc{D}_2$ has $(L \cdot \poly(1/\eps), O(\eps^{1/5}))$ soundness by \cref{thm: dp dpcp}, \cref{lm: left uniformity}, and \cref{lm: right alphabet reduction}. Here we use the fact that $M$ is sufficiently large relative to $|X(\sqrt{k})|$.
      \end{itemize}
    At this point, we have all of the desired properties, except for the discreteness of the marginal of the complete decoding distribution over the edges (i.e., the first guarantee of the decoding distribution above). We resolve this next, and some care is needed to accomplish this while also preserving a uniform marginal over the left vertices.
    \vspace{-2ex}
    \paragraph{Discretizing the Complete Decoding Distribution in a Left Uniform Manner.}
    We construct a discretized version of the distribution $\mc{Q}_2(t, (t,r), \cdot)$ for each $(t,r) \in [n] \times [M]$. Specifically, note that $|\supp(\mc{Q}_2(t, (t,r), \cdot))|$ is upper bounded by the degree of $(t,r)$, which is $O_{T}(\poly_{\eps}(\log N))$. We apply \cref{lm: discretize} with denominator set to a sufficiently large $M' = O_{T}(\poly_{\eps}(\log N))$ and let  $\Delta_{t,r, \disc}$ be the distribution obtained. Here we again note that we can choose the same $M'$ regardless of the specific circuit $\varphi$ fixed at the start. By \cref{lm: discretize} and the fact that $M'$ is sufficiently large, we have 
    \begin{equation} \label{eq: inner tv bound}
        \tv(\Delta_{t,r,\disc}, \mc{Q}_2(t, (t,r), \cdot)) \leq \eps 
    \end{equation}
    and $\supp(\Delta_{t,r,\disc}) \subseteq \supp( \mc{Q}_2(t, (t,r), \cdot))$. Now consider the following complete decoding distribution $\mc{Q}'_2$:
    \begin{itemize}
        \item Choose $(t, r) \in [n] \times [M]$ uniformly. 
        \item Choose $b \sim \Delta_{t,r,\disc}$.
        \item Output $(t, (t,r), b)$, 
    \end{itemize}
    and let $\mc{D}'_2$ be the dPCP obtained by changing the complete decoding distribution of $\mc{D}_2$ from $\mc{Q}_2$ to $\mc{Q}'_2$.
    
    Since  $\mc{Q}'_2$ and $\mc{Q}_2$ have the same marginal distribution over $[n] \times ([n] \times [M])$, we have 
    \begin{equation}  \label{eq: inner tv bound final}
    \begin{split}    
    \tv(\mc{Q}'_2, \mc{Q}_2) &=  \sum_{(t,r) \in [n] \times [M]}\mc{Q}_2(t, (t,r), \circ)\cdot  \tv(\mc{Q}'_2(t, (t,r), \cdot), \mc{Q}(t, (t,r), \cdot) ) \\
    &= \sum_{(t,r) \in [n] \times [M]}\mc{Q}_2(t, (t,r), \circ)\cdot  \tv(\Delta_{t,r, \disc}, \mc{Q}(t, (t,r), \cdot )) \\
    &\leq \eps,
    \end{split}
    \end{equation}
where the last transition uses \eqref{eq: inner tv bound}. We now verify that $\mc{D}'_2$ satisfies all of the desired properties. The length, degree, projection decision complexity, and decoding complexity are all unchanged from $\mc{D}_2$ and hence follow by the items above. Perfect completeness is preserved in $\mc{D}'_2$ because $\supp(\Delta_{t,r,\disc}) \subseteq \supp( \mc{Q}_2(t, (t,r), \cdot))$ as noted above, and $\mc{D}'_2$ still has $(L\cdot \poly(1/\eps), O(\eps^{1/5}))$-soundness because of \eqref{eq: inner tv bound final}. This shows the desired soundness condition because $\eps$ can be taken arbitrarily small and we can reparameterize the list size. Finally, we check the two properties of the decoding distribution. First, it is clear from the above that $\mc{Q}'_2$ satisfies that for each $t \in [n]$, the distribution $\mc{Q}'_2(t, \cdot, \circ)$ over $\{t\} \times [M]$ is uniform, so this takes care of the second condition there. For the first condition, we note that for any $(t,r) \in [n] \times [M], b \in B$ we can write
\[
\mc{Q}'_2(t, (t,r), b) = \frac{1}{n} \cdot \frac{1}{M} \cdot \Delta_{t,r, \disc}(b) = \frac{w((t,r), b)}{n M M'},
\]
for some $w((t,r), b) \in \mathbb{N}$. Hence, the first condition follows for $K(T, N) = M' = O_T(\poly_{\eps}(\log N))$.
\end{proof}

\subsection{An Outer dPCP}
The following result will be used as our outer dPCP. 
Compared to~\cref{lm: left uniformity}, the construction below has quasi-linear length and a complete decoding distribution whose marginal over the right vertices is uniform.

\begin{theorem} \label{thm: final dpcp before comp}
     For every $\eps > 0$ there exist $d', L = {\sf poly}(1/\eps)$ such that the following holds. For any alphabet $\Sigma_0$ and circuit $\varphi: \Sigma_0^n \to \{0,1\}$ of size at most $N$, the language $\sat(\varphi)$ has a projection dPCP of the form
    \[
    \mc{D} = (A\cup B, E, \Sigma_B^{\poly_{\eps}(\log N)} \times \Sigma_0, \Sigma_B, \{\Phi_{e}\}_{e\in E}, \{D_t\}_{t \in [n]}, \{\mc{P}_t \}_{t\in[n]})
    \]
    which satisfies the following
    \begin{itemize}
        \item \textbf{Length.} $O_{|\Sigma_0|}(N \cdot\poly_{\eps}(\log N))$.
        \item \textbf{Alphabet Sizes.} The left and right alphabet sizes are $2^{O_{|\Sigma_0|}(\poly_{\eps}(\log N))}$ and $O_{\eps}(1)$ respectively.
        \item \textbf{Degrees.} The dPCP has decoding degree $1$.
        \item \textbf{Projection Decision Complexity.} $O_{|\Sigma_0|}(\poly_{\eps}(\log N))$.
        \item \textbf{Decoding Complexity.} $O_{|\Sigma_0|}\left(\poly_{\eps}(\log N)\right)$.
        \item \textbf{Decoding Distribution.} The complete decoding distribution, $\mc{Q}$, satisfies
        \begin{itemize}
             \item The marginal of the complete decoding distribution over $A \times B$ is $K$-discrete for some 
             \[
             K = O_{|\Sigma_0|}(N\poly_{\eps}(\log N)).
             \]
            \item  Writing $\mc{Q}(\circ, a, b) = \frac{w(a,b)}{K}$ for integers $w(a,b)$, we have $\sum_{a \in A} w(a,b) = d'$ for all $b \in B$ and hence the marginal over the right vertices is uniform.
            \item  For each left vertex $a$, $\sum_{b \in B} w(a,b) \leq  O_{|\Sigma_0|}(\poly_{\eps}(\log N))$.
        \end{itemize}
        \item \textbf{Completeness.} The dPCP has perfect completeness.
        \item \textbf{List-Decoding Soundness.} The dPCP has $(L, \eps)$-list-decoding soundness.
    \end{itemize}
\end{theorem}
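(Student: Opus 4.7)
The plan is to start from the $2$-query dPCP $\mc{D}_{\dip}$ of \cref{thm: dp dpcp}, instantiated with list-decoding error $\eps_0 \ll \eps$, and to apply three transformations in order: decoding-degree reduction, right alphabet reduction, and right-degree reduction with discretization. This parallels the construction of the inner dPCP in \cref{thm: inner dpcp}, but we now aim only for a quasi-linear length in $N$ rather than fitting the structured left side $[n]\times[M]$, and we additionally arrange the final complete decoding distribution to have a uniform right-marginal.

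First, apply \cref{lm: decoding degree reduction} to $\mc{D}_{\dip}$; its agnostic-decoding hypothesis is supplied by \cref{thm: dp dpcp}. We take all multiplicities $C_{t,b}$ equal to $1$, so the new left side $B'$ has size at most the total number of decoding edges of $\mc{D}_{\dip}$, which is $O_{|\Sigma_0|}(N\poly_\eps(\log N))$. This step makes the decoding degree $1$, turns the new left alphabet into $(\Sigma^k)^{d}\times\Sigma_0$ with $d=\poly_\eps(\log N)$ the original right degree, and places the old left side $X(k)$ (still equipped with the large alphabet $\Sigma^k$) on the right. Second, apply \cref{lm: right alphabet reduction} with parameter $\eta=\Theta(\eps)$ to reduce this new right alphabet to some constant-size $\Sigma_B$; decoding degree $1$ is preserved, and both the right-side size and the left degree grow only by a factor $O_\eps(\log|\Sigma^k|)=\poly_\eps(\log N)$. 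Third, apply \cref{lm: gen right degree reduction} with $d'=\Theta(\eps^{-2})$ and $M$ at its lower threshold $\Theta(\mathsf{ld}\cdot|A|/(\eps^2 n))$. This yields the desired $K$-discrete complete decoding distribution with $K=d'^2nM=O_{|\Sigma_0|}(N\poly_\eps(\log N))$, uniform right marginal, and final length $|A|+d'nM=O_{|\Sigma_0|}(N\poly_\eps(\log N))$. The per-vertex weight bound $\sum_b w(a,b)\leq \poly_\eps(\log N)$ follows by combining $\sum_b w(a,b)=K\cdot\mc{Q}(\circ,a,\circ)$ with the per-vertex bound on $\mc{Q}(\circ,a,\circ)$ stated in \cref{lm: gen right degree reduction}, together with the observation that before the third step each left vertex already had weight $O_{|\Sigma_0|}(\poly_\eps(\log N)/|A|)$ in its complete decoding distribution.

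The remaining items (alphabet sizes, projection-decision complexity, decoding complexity, perfect completeness) propagate transparently through each transformation using the complexity bounds recorded in \cref{lm: decoding degree reduction}, \cref{lm: right alphabet reduction}, and \cref{lm: gen right degree reduction}. The main delicacy is soundness bookkeeping: \cref{lm: decoding degree reduction} sends $(\cdot,\eps_0)$ to $(\cdot,2\eps_0^{1/5})$ and inflates the list by $\poly(1/\eps_0)$; \cref{lm: right alphabet reduction} adds $3\eta$; and \cref{lm: gen right degree reduction} multiplies by $9/8$ and adds $O(d'^{-1/2})$. Choosing $\eps_0=\Theta(\eps^{5})$, $\eta=\Theta(\eps)$, and $d'=\Theta(\eps^{-2})$ makes the final list-decoding error bounded by $\eps$ and keeps the final list size at $\poly(1/\eps)$, as required. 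The one subtle point is verifying that $M$ at its lower threshold is large enough for \cref{lm: gen right degree reduction} while still producing quasi-linear length $d'nM$; this balances because the intermediate left degree $\mathsf{ld}=\poly_\eps(\log N)$ only contributes a polylogarithmic factor on top of $|A|/n$.
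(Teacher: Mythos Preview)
Your proposal is correct and follows essentially the same three-step pipeline as the paper: decoding-degree reduction with all $C_{t,b}=1$, then right-alphabet reduction, then \cref{lm: gen right degree reduction}. The only point worth flagging is that your ``observation that before the third step each left vertex already had weight $O_{|\Sigma_0|}(\poly_\eps(\log N)/|A|)$'' is not automatic: in the paper this is obtained by tracing the new left vertex back through \cref{lm: decoding degree reduction} to a right vertex $\mathbf b\in X(\sqrt{k})$ of $\mc{D}_{\dip}$ and then invoking the explicit bound $\mc{Q}_{\dip}(\circ,\circ,\mathbf b)\le \poly_\eps(\log N)/N$ from the decoding-distribution clause of \cref{thm: dp dpcp}; you should make that step explicit.
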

\begin{proof}
Fix $\eps > 0$, and let $\mc{D}_1 = (X(k), X(\sqrt{k}), E_1, \Sigma^k, \Sigma^{\sqrt{k}}, \{\Phi_{e} \}_{e \in E_{1}}, \{D_{1,t} \}_{t \in [n]}, \{ \mc{P}_{1, t}\}_{t\in [n]} )$ be the dPCP from \cref{thm: dp dpcp} with $(L, \eps)$-list-decoding soundness where $L = \poly(1/\eps)$. This dPCP has decoding degree, left degree, and right degree that are $\poly_{\eps}(\log N), O_{\eps}(1), \poly_{\eps}(\log N)$ respectively, and its complete decoding distribution, which we call $\mc{Q}_1$, is agnostic and has marginal over $X(k) \times X(\sqrt{k})$ which is $K$-discrete for some $K = N \poly_{\eps}(\log N)$. By the decoding distribution guarantee of \cref{thm: dp dpcp}, for each $\OutRightVertex \in X(\sqrt{k})$,  we have
\begin{equation} \label{eq: hdx disc} 
\mc{Q}_1(\circ,\circ, \OutRightVertex)  \leq \frac{\poly_{\eps}(\log N)}{N}.
\end{equation}
Starting from $\mc{D}_1$, we will first apply decoding degree reduction (\cref{lm: decoding degree reduction}) where all of the $C_{t,b}$ therein are set to $1$, and then apply right alphabet reduction (\cref{lm: right alphabet reduction}). We call the resulting dPCP $\mc{D}_2$, and obtain our final dPCP, which we call $\mc{D}_3$, by applying right degree reduction on $\mc{D}_2$.

\paragraph{Decoding Degree Reduction + Right Alphabet Reduction.} First, we note that $\mc{D}_1$ has agnostic complete decoding distribution by \cref{thm: dp dpcp}, so we may apply \cref{lm: decoding degree reduction} with the $C_{t,b}$ therein all set to $1$. We then apply \cref{lm: right alphabet reduction} with $\eta = O(\eps)$. The resulting dPCP,
\[
\mc{D}_2 = \left(A_2 \cup B_2, E_2, \Sigma_A, \Sigma_B, \{\Phi_{e} \}_{e \in E_2},\{D_{2,t} \}_{t\in [n]}, \{\mc{P}_{2,t} \}_{t \in [n]}\right),
\]
has the following properties, which are straightforward to verify.
\begin{itemize}
    \item \textbf{Completeness.} Perfect completeness.
    \item \textbf{Length.} We have $|A_2| \leq |X(\sqrt{k})| \cdot \poly_{\eps}(\log N)$ and $|B_2| \leq 
    O_{\eps}(|X(k)| \cdot \log(|\Sigma|^k))$, so the overall length is $O_{|\Sigma_0|}(N\cdot \poly_{\eps}(\log N))$.
    \item \textbf{Degrees.} The left degree is $O_{|\Sigma_0|}(\poly_{\eps}(\log N))$, the decoding degree is $1$, and the right degree is $\poly_{\eps}(\log N)$.
    \item \textbf{Projection Decision Complexity.} $O_{|\Sigma_0|}(\poly_{\eps}(\log N))$.
    \item \textbf{Decoding Complexity.} $O_{|\Sigma_0|}(\poly_{\eps}(\log N))$.
    \item \textbf{List-Decoding Soundness.} $(\poly(1/\eps), O(\eps^{1/5}))$-list-decoding soundness.
\end{itemize}

In order to apply right degree reduction next, we need to analyze the complete decoding distribution of $\mc{D}_2$, which we call $\mc{Q}_2$. Fix an arbitrary $a_2 \in A_2$ and let us bound $\mc{Q}_2(\circ, a_2, \circ)$. First, since the decoding degree is $1$, there is some $t$ such that 
\[
\mc{Q}_2(\circ, a_2, \circ) = \mc{Q}_2(t, a_2, \circ) = \mc{P}_{2,t}(a_2, \circ) 
\]
Then, by the decoding distribution guarantee of \cref{lm: decoding degree reduction}, there is some $\OutRightVertex \in X(\sqrt{k})$, such that 
\[
\mc{Q}_2(\circ, a_2, \circ) =  \mc{P}_{2,t}(a_2, \circ)  =  \mc{P}_{1,t}(\circ, \OutRightVertex) = \mc{Q}_1(t, \circ, \OutRightVertex).
\] 
Now applying \eqref{eq: hdx disc}, we have
\begin{equation} \label{eq: a not skewed in d2} 
\mc{Q}_2(\circ, a_2, \circ) = Q_1(t, \circ, \OutRightVertex) \leq Q_1(\circ, \circ, \OutRightVertex) \leq \frac{\poly_{\eps}(\log N)}{N}.
\end{equation}

\paragraph{Right Degree reduction.} Having analyzed the complete decoding distribution of $\mc{D}_2$, we are now ready to apply \cref{lm: gen right degree reduction} on it. Note that we need to apply the version of right degree reduction in \cref{lm: gen right degree reduction}, which changes the complete decoding distribution, because we did not analyze the discreteness of the marginal of $\mc{Q}_2$ over the edges.

To this end, set $$M = \Theta_{|\Sigma_0|}\left(\frac{N \poly_{\eps}(\log N)}{n}\right)$$ sufficiently large and degree parameter $d'$ sufficiently large relative to $\eps$ and apply \cref{lm: gen right degree reduction} with these parameters.
To see that we have set $M$ large enough to apply \cref{lm: gen right degree reduction}, we use the fact that $\mc{D}_2$ has left degree at most $O_{|\Sigma_0|}(\poly_{\eps}(\log N))$, and $|A_2| \leq N \poly_{\eps}(\log N)$.
Call the resulting dPCP
\[
\mc{D}_3 = \left(A_2 \cup B_3, E_3, \Sigma_A, \Sigma_B, \{\Phi_{e} \}_{e \in E_3},\{D_{2,t} \}_{t\in [n]}, \{\mc{P}_{3,t} \}_{t \in [n]}\right),
\]
and let $\mc{Q}_{3}$ be the complete decoding distribution of $\mc{D}_3$. 
By \cref{lm: gen right degree reduction} there are positive integers $w(a,b)$ for each $(a,b) \in A_2 \times B_3$ such that we can write  $\mc{Q}_{3}(\circ, a, b) = \frac{w(a,b)}{d'^2 n M}$. 
Fix a left vertex $a$ of $\mc{D}_3$. By the decoding distribution guarantee of \cref{lm: gen right degree reduction}, we have that either, 
\begin{equation} \label{eq: a not skewed} 
\mc{Q}_{3}(\circ, a, \circ)  \leq \mc{Q}_{2}(\circ, a, \circ) + \frac{|\Gamma_2(a)|}{nM} \leq \frac{O_{|\Sigma_0|}(\poly_{\eps}(\log N))}{N}
\end{equation}
or 
\begin{equation} \label{eq: a not skewed 2} 
\mc{Q}_{3}(\circ, a, \circ)  \leq \frac{1}{n M} + \frac{\eps}{10 |A_2|} \leq \frac{1}{N}.
\end{equation}
In the above inequalities, $\Gamma_2(a)$, denotes $a$'s neighborhood in $\mc{D}_2$, and the first inequality uses \cref{lm: gen right degree reduction} while the second inequality uses the fact that $\mc{D}_2$ has left degree $O_{|\Sigma_0|}(\poly_{\eps}(\log N))$ and $|A_2| \geq N$. We can now verify the listed properties for $\mc{D}_3$

\vspace{-2ex}
\paragraph{Length.} The right side has size $d' n M = O_{|\Sigma_0|}(\poly_{\eps}(N \log N))$ and the left side has size $|A_2|$, so the final length is $O_{|\Sigma_0|}(N\poly_{\eps}(\log N))$.
\vspace{-1ex}
\paragraph{Alphabet Sizes.} The left alphabet size is $|\Sigma_A| = 2^{O_{|\Sigma_0|}(\poly_{\eps} (\log N))}$ and the right alphabet size is $O_{\eps}(1)$ because we applied \cref{lm: right alphabet reduction}.

\vspace{-2ex}
\paragraph{Degrees.} The decoding degree is $1$ in $\mc{D}_2$ and this is preserved by \cref{lm: gen right degree reduction} when going to $\mc{D}_3$.

\vspace{-2ex}
\paragraph{Projection Decision Complexity.} The decision complexity is $O_{|\Sigma_0|}(\poly_{\eps}(\log N))$ in $\mc{D}_2$  and this is preserved by \cref{lm: gen right degree reduction}.

\vspace{-2ex}
\paragraph{Decoding Complexity.} The decoding complexity is $O_{|\Sigma_0|}(\poly_{\eps}(\log N))$ in $\mc{D}_2$ and this is preserved by \cref{lm: gen right degree reduction}.

\vspace{-2ex}
\paragraph{Decoding Distribution.} The complete decoding distribution is $\mc{Q}_{3}$ which is $K$-discrete for 
\[
K := d'^2 n M = O_{|\Sigma_0|}(N\cdot \poly_{\eps}(\log N)).
\]
The second item in the decoding distribution condition, follows because the complete decoding distribution guarantee of \cref{lm: gen right degree reduction} results in $\mc{Q}_3$ having uniform marginal over the right vertices. The third item follows from \eqref{eq: a not skewed} and \eqref{eq: a not skewed 2} as for any $a$ we have
\[
\sum_{b \in B} w(a,b) = K\cdot \mc{Q}_{3}(\circ, a, \circ) \leq O_{|\Sigma_0|}(\poly_{\eps}(\log N)).
\]

\vspace{-2ex}
\paragraph{Completeness.} We have perfect completeness in $\mc{D}_2$ and this is preserved by \cref{lm: gen right degree reduction}.

\vspace{-2ex}
\paragraph{Soundness.} It is straightforward to check that the $\mc{D}_3$ has $(\poly(1/\eps), O(\eps^{1/5}))$-list-decoding soundness by the list-decoding soundness of $\mc{D}_2$ and \cref{lm: gen right degree reduction}.
\end{proof}

\newcommand{\InEventConstraint}{\Event_{\In,1}}
\newcommand{\InEventDecoding}{\Event_{\In,2}}
\newcommand{\OutEventConstraint}{\Event_{\Out,1}}
\newcommand{\OutEventDecoding}{\Event_{\Out,2}}
\newcommand{\bb}{\textbf{b}}
\newcommand{\aaa}{\textbf{a}}
\newcommand{\co}{\mathsf{copy}}

\newcommand{\Comp}{\mathsf{comp}}
\newcommand{\CompLeftSide}{\LeftSide_{\Comp}}
\newcommand{\CompRightSide}{\RightSide_{\Comp}}
\newcommand{\CompConstraint}[1]{\Constraint{\Comp,#1}}
\newcommand{\CompEdgeDistributionName}{\EdgeDistributionName^{\Comp}}
\newcommand{\CompEdgeDistribution}[1]{\CompEdgeDistributionName_{#1}}
\newcommand{\CompLanguage}{\Language}
\newcommand{\CompDecoder}[2]{\Decoder{\Comp}{#1}}
\newcommand{\CompLeftVertex}{(\OutIndexInDecodingSet,\OutRightVertex,\InSpecificRandomness)}
\newcommand{\CompLeftVertexOther}{(\OutRightVertex, \InSpecificRandomness)}
\newcommand{\CompRightVertex}{(\OutLeftVertex; b)}
\newcommand{\CompLeftProof}{\TableProof{\CompLeftSide}}
\newcommand{\CompRightProof}{\TableProof{\CompRightSide}}
\newcommand{\CompDecisionComplexity}{\DecisionComplexity_{\Comp}}
\newcommand{\CompIndexInDecodingSet}{\OutIndexInDecodingSet}
\newcommand{\CompProofOutLeftProof}[1]{\TableProof{\OutLeftSide}^{(#1)}}
\newcommand{\CompProofOutRightProof}[1]{\TableProof{\OutRightSide}^{(#1)}}
\newcommand{\CompProofListIndex}{j}
\newcommand{\PCPList}{\mathsf{List}}
\newcommand{\InList}{\PCPList_{\In,\OutLeftVertex}}
\newcommand{\OutList}[1]{\PCPList_{\Out}^{(#1)}}
\newcommand{\CompList}{\PCPList_{\Comp}}
\newcommand{\InListWord}{y^{\OutLeftVertex}}
\newcommand{\OutListWord}[1]{w^{(#1)}}
\newcommand{\Event}{\mathcal{E}}
\newcommand{\DC}{\mathsf{DecisionComp}}

\section{A Composition Theorem for Decodable PCPs}\label{sec:comp_thms}
This section proves a composition theorem for dPCPs, which facilitates left alphabet reduction. In~\cref{sec: final dpcp} we show how to apply it on the constructions from~\cref{sec:dec_deg_red} (with a final composition step with a standard Hadamard-code based dPCP) to reduce the alphabet size from $2^{\polylogn}$ all the way down to $O(1)$.

Our composition result is as follows. Throughout the proof, we will use bold letters (e.g., $\OutLeftVertex,\OutRightVertex$) for outer dPCP vertices, and non-boldface symbols for inner dPCP vertices (e.g. $a,b$).

\begin{theorem} \label{thm: alph reduction comp}
Suppose we have the following two dPCPs, which we call outer dPCP and inner dPCP.

\begin{itemize}
    \item \textbf{Outer dPCP.} There exists $C_1 \in \mathbb{N}$ such that for every $\eps > 0$ there exist $d_1 \in \mathbb{N}$ and functions $F_1, \mathsf{DecisionComp}_1,$  $\mathsf{DecodeComp}_1, K_1, k_1: \mathbb{N} \times \mathbb{N} \to \mathbb{N}$ such that every circuit $\varphi: \Sigma_0^n \to \Sigma$ of size $N$ over some alphabet $\Sigma_0$ has a dPCP as follows

    \begin{itemize}
        \item \textbf{Parameters.} Length $S_1(N, |\Sigma_0|)$, left and right alphabets of sizes $F_1(N, |\Sigma_0|)$ and $O_{\eps}(1)$ respectively, decoding degree $O_{\eps}(1)$, projection decision complexity at most $\mathsf{DecisionComp}_1(N, |\Sigma_0|)$, and decoding complexity at most $\mathsf{DecodeComp}_1(N, |\Sigma_0|)$.
        \item \textbf{Complete Decoding Distribution.} Call the complete decoding distribution $\mc{Q}_{\out}$ and the sides of the constraint graph $\OutLeftSide, \OutRightSide$. Then, the following holds 
        \begin{itemize}
            \item (Discreteness): for every $\aaa \in \OutLeftSide,\bb \in \OutRightSide$, $\mc{Q}_{\out}(\circ, \aaa, \bb) = \frac{\wt(\aaa,\bb)}{K_1(N, |\Sigma_0|)}$ for $\wt(\aaa,\bb) \in \mathbb{N}$.
                \item (Right Uniform): for every right vertex $\bb$, $\sum_{\aaa \in \OutLeftSide} \wt(\aaa, \bb) = d_1$, and hence the marginal of $\mc{Q}_{\out}$ is uniform over $\OutRightSide$.
                \item (Left Bounded): for every left vertex $\aaa$, $\wt(\aaa) := \sum_{\bb \in \OutRightSide} w(\aaa, \bb) \leq k_1(N, |\Sigma_0|)$.
            \end{itemize}
         \item \textbf{Completeness.} Perfect completeness.
            \item \textbf{Soundness.} $(\OutListSize(\eps), \eps)$-list-decoding soundness where $\OutListSize = \eps^{-C_1}$.
    \end{itemize}
    \item \textbf{Inner dPCP.}  There exists $C_2 \in \mathbb{N}$ such that for every $\eps > 0$ there are bivariate functions $F_2, \mathsf{DecisionComp}_2,$  $\mathsf{DecodeComp}_2, K_2: \mathbb{N} \times \mathbb{N} \to \mathbb{N}$ and $M: \mathbb{N}\to \mathbb{N}$  such that every circuit $\varphi: \left(\Sigma'_0\right)^{n'} \to \Sigma$ of size at most $N$ over some alphabet $\Sigma'_0$ of size at most $T'$ has a dPCP as follows.
    \begin{itemize}
        \item \textbf{Parameters.} Length $S_2(N', T')$, left and right alphabets of sizes $F_2(N', T')$ and $O_{\eps}(1)$ respectively, decoding degree $1$, projection decision complexity at most $\mathsf{DecisionComp}_2(N', T')$, and decoding complexity at most $\mathsf{DecodeComp}_2(N', T')$.
         \item \textbf{Complete Decoding Distribution.} Call the complete decoding distribution $\mc{Q}_{\inner}$ and the sides of the constraint graph $A_{\inner}, B_{\inner}$. Then the following holds:
            \begin{itemize}
                \item (Discreteness): for every $a \in A_{\inner}, b \in B_{\inner}$, we have  $\mc{Q}_{\inner} (\circ,a,b)= \frac{ \wt'(a,b)}{n' \cdot M(N', T') \cdot K_2(N', T')}$ for $\wt'(a,b) \in \mathbb{N}$.
                
                \item (Left Uniform): the left vertices can be labeled as $[n'] \times [M(N', T')]$, so that for every $t \in [n']$, the distribution $\mc{Q}_{\inner}(t, \cdot, \circ)$ 
                is uniform over 
                $\{t\} \times [M(N', T')]$. It follows that the marginal distribution of $\mc{Q}_{\inner}$ over the left vertices is uniform and  for every left vertex $a$, $\wt'(a) := \sum_{b \in \InRightSide} \wt'(a,b) = K_2(N', T')$.
            \end{itemize}
             \item \textbf{Completeness.} perfect completeness.
            \item \textbf{Soundness.}$(\InListSize(\eps), \eps)$-list-decoding soundness where $\InListSize = \eps^{-C_2}$.
    \end{itemize}
\end{itemize}
  Then for every $\eps > 0$, and every circuit $\varphi: \Sigma_0^n \to \{0,1\}$ of size $N$, the language $\sat(\varphi)$ has a projection dPCP with the following properties, where we set
  \[
  N' = O(k_1(N, |\Sigma_0|) + \DC_1(N, |\Sigma_0|)) \quad\qquad \text{and} \quad\qquad T' = O_{\eps, |\Sigma_0|}(1).
  \]

    \begin{itemize}
        \item \textbf{Length.} $S_1(N, |\Sigma_0|) \cdot \left(M(N', T') + S_2(N', T') \right)$.
        \item \textbf{Left alphabet size.} $F_2(N', T')^{d_1}$.
        \item \textbf{Right alphabet size.} $O_{\eps}(1)$. 
        \item \textbf{Decoding Degree.} $O_\eps(d_1)$.
        \item \textbf{Projection Decision Complexity.} \[
        O(\log(F_1(N, |\Sigma_0|))) + O_{\eps, |\Sigma_0|}( \DC_2(N', T') + \mathsf{DecodeComp}_2(N', T')).
        \]
        \item \textbf{Decoding Complexity.} $\log(F_1(N, |\Sigma_0|)) + \mathsf{DecodeComp}_2(N', T')$. 
        \item \textbf{Complete Decoding Distribution.} Call the complete decoding distribution $\mc{Q}_{\comp}$ and the sides of the constraint graph $\CompLeftSide, \CompRightSide$. Then the following holds:
        \begin{itemize}
            \item (Discreteness): $\mc{Q}_{\comp}(\circ, a, b) = \frac{\wt''(a,b)}{J}$ for $\wt''(a,b)\in\mathbb{N}$ and $J = K_1(N, |\Sigma_0|) \cdot M(N', T') \cdot K_2(N', T')$.
            \item (Left Uniform): The marginal of $\mc{Q}_{\comp}$ over $\CompLeftSide$ is uniform, and for every $a \in A_{\comp}$ we have $\sum_{b \in B} \wt''(a,b) = d_1 K_2(N', T')$.
        \end{itemize}
        \item \textbf{Completeness.} The dPCP has perfect completeness.
        \item \textbf{Soundness.} The dPCP has $(\poly(1/\eps), \eps)$-list-decoding soundness.
    \end{itemize}
\end{theorem}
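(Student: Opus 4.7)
The plan is to follow the outer-inner composition blueprint, carefully adapted to dPCPs. For each outer left vertex $\OutLeftVertex$, I build an inner circuit $C_\OutLeftVertex$ of size $N' = O(k_1 + \DC_1)$ over an alphabet $\Sigma_0'$ of size $O_{\eps,|\Sigma_0|}(1)$: its input consists of a putative label $\sigma \in \Sigma_\OutLeftVertex$ (encoded in $\log F_1$ bits), together with, for each edge $(\OutLeftVertex, \OutRightVertex) \in E_\out$, $\wt(\OutLeftVertex, \OutRightVertex)$ duplicated coordinates holding a putative $\OutRightAlphabet$-symbol for $\OutRightVertex$, and, for each index $t$ in $\OutLeftVertex$'s decoding neighborhood, a coordinate holding a putative $\Sigma_0$-symbol. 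The circuit verifies that $\sigma$ is a valid symbol for $\OutLeftVertex$, that every right-coordinate equals $\phi_{(\OutLeftVertex,\OutRightVertex)}(\sigma)$, and every decoding-coordinate equals $D_{\out,t}(\OutLeftVertex,\sigma)$; all circuits are padded with dummy coordinates so the input length is a common $n' = O(k_1 + \DC_1)$, which is feasible by the left-bounded property. Apply the inner dPCP hypothesis to each $C_\OutLeftVertex$ to obtain $\mc{D}_\inner^\OutLeftVertex$ with left side $[n']\times[M]$ and right side $B_\inner^\OutLeftVertex$.

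The composed dPCP has right side $\{(\OutLeftVertex,b) : b \in B_\inner^\OutLeftVertex\}$ with right alphabet $\InRightAlphabet$, and left side $\{(\OutRightVertex,r) : \OutRightVertex \in \OutRightSide,\, r \in [M]\}$ with left alphabet $\InLeftAlphabet^{d_1}$: a label at $(\OutRightVertex,r)$ provides, for each $\OutLeftVertex \sim \OutRightVertex$, an inner-left-alphabet symbol representing the putative label at the inner left vertex $(t_{\OutLeftVertex \to \OutRightVertex}, r)$, where $t_{\OutLeftVertex \to \OutRightVertex}$ selects one of the $\wt(\OutLeftVertex,\OutRightVertex)$ coordinates of $C_\OutLeftVertex$ dedicated to $\OutRightVertex$. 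Edges $((\OutRightVertex,r),(\OutLeftVertex,b))$ exist when $\OutLeftVertex \sim \OutRightVertex$ and carry the inner constraint between $(t_{\OutLeftVertex \to \OutRightVertex}, r)$ and $b$. The decoding distribution $\mc{P}_{\comp,t}$ samples $(\OutLeftVertex,\OutRightVertex) \sim \mc{P}_{\out,t}$, then $r \in [M]$ uniformly, then an inner edge $(a,b) \sim \mc{P}_\inner^\OutLeftVertex$ conditioned on $a = (s_{t,\OutLeftVertex},r)$, where $s_{t,\OutLeftVertex}$ is the $t$-decoding coordinate of $C_\OutLeftVertex$, and outputs $((\OutRightVertex,r),(\OutLeftVertex,b))$. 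The decoder at $(\OutRightVertex,r)$ locates the unique $\OutLeftVertex \sim \OutRightVertex$ that handles $t$, reads its coordinate of the composed label, and invokes the inner decoder at $(s_{t,\OutLeftVertex},r)$.

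The routine parameters follow by direct calculation. Completeness is immediate: a satisfying outer assignment induces consistent $C_\OutLeftVertex$-inputs and hence satisfying inner assignments. Length $\leq S_1(M + S_2)$, decoding degree at $(\OutRightVertex,r)$ is $\leq \deg_\out(\OutRightVertex) \cdot \ddeg_\out = O_\eps(d_1)$, and the complete decoding distribution $\mc{Q}_\comp$ has per-edge weight $\wt(\OutLeftVertex,\OutRightVertex) \cdot \wt'((s_{t,\OutLeftVertex},r),b)/(K_1 M K_2)$; on summing over $(\OutLeftVertex,b)$ at fixed $(\OutRightVertex,r)$, this gives $d_1 K_2$, using the right-uniformity of $\mc{Q}_\out$ and left-uniformity of $\mc{Q}_\inner$, which yields the stated left-uniformity of $\mc{Q}_\comp$. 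The projection decision complexity is dominated by $O(\log F_1)$ (to select the relevant coordinate and compute $\phi$) plus one inner projection invocation, and the decoding complexity by $O(\log F_1)$ plus one inner decoder invocation.

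The real work is the soundness analysis. Given a composed left assignment $T_\comp$, extract for each $\OutLeftVertex$ an inner left assignment $T_\inner^\OutLeftVertex$ by taking the $\OutLeftVertex$-coordinate of $T_\comp[(\OutRightVertex,r)]$ over pairs $(\OutRightVertex,r)$ with $\OutRightVertex \sim \OutLeftVertex$, and apply inner soundness at error $\eps' = \eps^c$ to obtain a list $\List_\OutLeftVertex \subseteq \mc{L}_{C_\OutLeftVertex}$ of size $\InListSize(\eps')$; each list element yields a candidate symbol in $\Sigma_\OutLeftVertex$. A uniformly random choice of one candidate per $\OutLeftVertex$ defines a random outer left assignment $\tilde T_\out$; outer soundness then produces a list $L(\tilde T_\out) \subseteq \mc{L}$ of size $\OutListSize(\eps)$. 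I take as the composed list all $w \in \mc{L}$ with $\Pr_{\tilde T_\out}[w \in L(\tilde T_\out)] \geq \eps/10$, which has size $\leq 10\,\OutListSize/\eps = \poly(1/\eps)$. The main obstacle is to bound the composed decoding error by cleanly decomposing it into: (i) an ``inner fails'' event bounded by $\eps'$ via inner soundness, which requires that the $r$-uniform sampling on the composed side reproduces the inner decoding distribution $\mc{P}_{\inner,t}$ at each fixed $t$, i.e.\ the left-uniformity of $\mc{Q}_\inner$; (ii) an ``outer fails'' event bounded in expectation by outer soundness, which requires that marginalizing over $r$ and the randomization of $\tilde T_\out$ reproduces $\mc{P}_{\out,t}$, using the weight-proportional duplication of right-coordinates in $C_\OutLeftVertex$ and the right-uniformity of $\mc{Q}_\out$; and (iii) a contribution from non-heavy words bounded by $\eps/10$ via Markov. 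A subtle point is that distinct right-neighbors of $\OutLeftVertex$ may carry different outer constraints $\Phi_{(\OutLeftVertex,\OutRightVertex)}$; since all of them are hard-coded into $C_\OutLeftVertex$ and the shared randomness $r$ sits on the composed left side, they are tested in parallel by the inner PCP, which is precisely the alignment that makes the decomposition work.
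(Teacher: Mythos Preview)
Your construction has a structural gap that the paper resolves by a different design of the inner language. You place the outer-decoding values at separate coordinates $s_{t,\OutLeftVertex}$ of $C_{\OutLeftVertex}$, distinct from the right-projection coordinates $t_{\OutLeftVertex\to\OutRightVertex}$. But the composed left label at $(\OutRightVertex,r)$ only stores inner-left symbols for the right-projection coordinates $(t_{\OutLeftVertex\to\OutRightVertex},r)$, so your decoder has no symbol at $(s_{t,\OutLeftVertex},r)$ on which to invoke the inner decoder; feeding it the symbol for a different inner vertex is meaningless for inner soundness. Worse, your decoding distribution samples $b$ as an inner neighbor of $(s_{t,\OutLeftVertex},r)$ and then outputs the composed edge $((\OutRightVertex,r),(\OutLeftVertex,b))$, whose constraint you defined to be the inner constraint on $((t_{\OutLeftVertex\to\OutRightVertex},r),b)$ --- an inner edge that need not exist. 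So the decoding distribution is not even supported on the edge set you built. The fix cannot be to also store symbols for the $s_{t,\OutLeftVertex}$-coordinates at $(\OutRightVertex,r)$, since those coordinates are not tied to any particular $\OutRightVertex$.

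The paper avoids all this by taking the inner-language alphabet to be $\OutRightAlphabet\times\Sigma_0^{q_1}$, so that each coordinate $(\OutRightVertex,i)$ \emph{already} carries all outer decodings alongside the projected right value; the inner decoder at $(\OutRightVertex,i)$ then outputs both simultaneously, and there are no separate decoding-coordinates. Equally essential, and something you omit entirely, is that the composed left alphabet is \emph{hardcoded}: all $d_1$ inner symbols held at $(\OutRightVertex,r)$ are required to inner-decode to the same $\OutRightAlphabet$-value $\sigma_{\rightsquigarrow\OutRightVertex}$ and, for each relevant $t$, to the same $\Sigma_0$-value $\sigma_{\rightsquigarrow t}$. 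This hardcoding is what lets the soundness analysis define, for each $r$, a single outer right assignment $T_{\OutRightSide}^{(r)}[\OutRightVertex]=\sigma_{\rightsquigarrow\OutRightVertex}$ and conclude that ``inner decoded to the list element $y_j^{\OutLeftVertex}$'' forces the outer constraint on $(\OutLeftVertex,\OutRightVertex)$ to be satisfied by $(T_{\OutLeftSide}^{(j)},T_{\OutRightSide}^{(r)})$. Without the bundling and the hardcoding, your step (ii) does not go through: an inner test passing at a right-projection coordinate says nothing about the $\Sigma_0$-value sitting at a different coordinate. Once the construction is corrected, the soundness is a clean union bound --- take $\InListSize$ deterministic outer left assignments $T_{\OutLeftSide}^{(j)}[\OutLeftVertex]=y_j^{\OutLeftVertex}$, set the composed list to $\bigcup_j\OutList{j}$, and the error is $\InPCPError+\InListSize\cdot\OutPCPError$ --- so your randomized $\tilde T_{\out}$ and Markov step (iii) are unnecessary. (Also, the $\OutLeftVertex\sim\OutRightVertex$ that handles $t$ is not unique in general; the hardcoding is what makes the composed decoder well-defined.)
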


\paragraph{Proof overview.}
Our proof strategy follows the lines of the proof of \cite{MoshkovitzRaz} and \cite{dh}. Specifically, we construct a version of the inner dPCP for each left vertex $\aaa$ in the outer dPCP. The size of the circuit relevant for a given left vertex is determined by (1) the projection decision complexity, (2) the parameter $k_1$ which should be thought of as a weighted version of left degree (with weights coming from the complete decoding distribution), and (3) the decoding degree. In our application all of these are $O_{|\Sigma_0|}(\polylogn)$ or $O_{\eps, |\Sigma_0|}(1)$, so the inner dPCP is run on a much smaller scale. 

In order to compose all of these smaller inner dPCPs into one overall dPCP, we attempt to check several constraints ``in parallel''. Namely, as in~\cite{MoshkovitzRaz,dh} we think of choosing a right vertex $\bb$ of the outer PCP and checking all of the inner PCPs of $\aaa$ that are neighbours of $\bb$. This requires some alignment 
between the inner PCPs of various $\aaa$, which is handled by the extra flexibility we have arranged in~\cref{sec:dec_deg_red} in the size of the PCPs. 
Additional care is needed to make sure that the decoding is still maintained throughout, and this aspect is where we differ from the prior compositions mentioned. Indeed, this is why we went through extensive efforts in \cref{lm: decoding degree reduction} to reduce the decoding degree of the dPCP from \cref{thm: dp dpcp} to $1$. Roughly speaking, having small decoding degree allows us to pass decoding to the inner dPCPs while not increasing their alphabet size by too much.

\subsection{The Construction}\label{subsec:composition-construction}
Fix $\eps > 0$ and a size $N$ circuit $\varphi: \Sigma_0^n \to \{0,1\}$. Our goal is to construct a dPCP for the language $\Language := {\sf SAT}(\varphi) \subseteq \Alphabet{0}^{n}$, and we have the outer dPCP $\OutPCPDecoder$ and the inner dPCP $\InPCPDecoderSymbol$ at our disposal. We will apply the outer dPCP to $\varphi$ and the inner dPCP for circuits $\varphi_{\OutLeftVertex}$ of size roughly $O(k_1(N, |\Sigma_0|) + O(\mathsf{DecisionComp}_1(N, |\Sigma_0|))$, which should be thought of as much smaller than $N$.
\vspace{-1ex}
\paragraph{The Outer dPCP.} Take $\OutPCPDecoder$ to be the outer dPCP for $\mc{L}$ as in the premise with the soundness error set to $\OutPCPError := \eps^{C_2 + 10}$. We denote its parts by
\[
\OutPCPDecoder = \left(\OutLeftSide \cup \OutRightSide, \OutEdgeSet, \OutLeftAlphabet, \OutRightAlphabet ,\{\OutConstraint{e} \}_{e \in \OutEdgeSet}, \{\OutDecoder{t} \}_{t \in [n]}, \{\mc{P}^{\out}_{t}\}_{t \in [n]}\right).
\]
Let $q_1$ denote the decoding degree of $\mc{D}_{\out}$ so that $q_1 =   O_{\eps}(1)$ by assumption.

Henceforth, it will be convenient to let $\Gamma_{\disc, \out}(\aaa)$ denote the set of $\aaa$'s neighbors listed with multiplicity $\wt(\aaa, \bb)$. In order to avoid confusion when referring to multiple copies of $\bb$ (which appears in $\Gamma_{\disc, \out}(\aaa)$ more than once if $\wt(\aaa, \bb) > 1$), we refer to the members of $\Gamma_{\disc, \out}(\aaa)$ as $(\bb, i)$ where for each $\bb \in \Gamma_{\out}(\aaa)$, $i \in [\wt(\aaa, \bb)]$. Hence, 
\[
\Gamma_{\disc, \out}(\aaa) = \bigcup_{b \in \Gamma_{\out}(\aaa)} \{\bb\} \times [\wt(\aaa, \bb)].
\]
Notice that choosing $(\bb,i) \in \Gamma_{\disc, \out}(\aaa)$ uniformly, the marginal distribution of $\bb$ is $\mc{Q}_{\out}(\circ, \aaa, \cdot)$.

The following notation will be useful.
\begin{definition} \label{def: simplifying projection outer}
    For a symbol $\sigma$ for $\aaa \in \OutLeftSide$ and $\bb \in \Gamma_{\out}(\aaa)$, let $\sigma_{\to \bb}$ denote the unique $\Sigma_B$ symbol that satisfies the projection constraint on $(\aaa, \bb)$. For each $t \in \Gamma_{\dec, \out}(\aaa)$, we write $\sigma_{\rightsquigarrow t} = \OutDecoder{t}{}(\aaa, \sigma)$. Here, the vertex $\aaa$ will always be clear from context.
\end{definition} 

\vspace{-1ex}
\paragraph{The Inner dPCPs.} We will also construct an inner dPCP corresponding to each outer vertex $\OutLeftVertex$. For each $\aaa \in \OutLeftSide$ we start by defining a language 
\[
\mc{L}_{\OutLeftVertex} \subseteq \left(\OutRightAlphabet \times \Sigma_0^{q_1}\right)^{\wt(\aaa)},
\]
which is supposed to correspond to the set of valid alphabet symbols for $\aaa$. Set, 
\begin{equation} \label{eq: inner language alphabet size} 
T' = |\OutRightAlphabet| \times |\Sigma_0|^{q_1} = O_{\eps, |\Sigma_0|}(1),
\end{equation}
to be the size of the alphabet which each of the languages $\mc{L}_{\OutLeftVertex}$ is over. In the definition of $\mc{L}_{\aaa}$, we identify $[\wt(\aaa)]$ with the multiset $\Gamma_{\disc, \out}(\aaa)$. Let us now fix an $\aaa$ and define the language $\mc{L}_{\OutLeftVertex}$.

For each $w \in \mc{L}_{\OutLeftVertex}$, think of its $\wt(\aaa)$ indices as being indexed by $\Gamma_{\disc, \out}(\OutLeftVertex)$ and for each $(\bb,i) \in \Gamma_{\disc, \out}(\aaa)$ we write $w_{(\bb,i)}$ as the symbol in $w$ at index $(\bb, i)$. The symbol $w_{(\bb,i)}$ is thought of as consisting of one $\OutRightAlphabet$ symbol corresponding to an implied label for $\bb$, and $q_1$ symbols from $\Sigma_0$, corresponding to one implied label for each $t \in \Gamma_{\out, \dec}(\OutLeftVertex)$. With this setup in mind, the members of $\mc{L}_{\OutLeftVertex}$ are exactly the $w \in \left(\OutRightAlphabet \times \Sigma_0^{q_1}\right)^{\wt(\aaa)}$ such that there exists  $\sigma \in \Sigma_{\OutLeftVertex}$ satisfying the following two conditions:
\begin{itemize}
    \item for every $(\bb,i) \in \Gamma_{\disc, \out}(\OutLeftVertex)$, the $\OutRightAlphabet$ symbol in $w_{(\OutRightVertex, i)}$, which we denote by $\left(w_{(\OutRightVertex, i)}\right)_{\bb}$, is equal to $\sigma_{\to \bb}$, 
    \item for every $t \in \Gamma_{\out, \dec}(\OutLeftVertex)$ and $(\bb, i) \in \Gamma_{\disc, \out}(\OutLeftVertex)$, the symbol corresponding to $t$ in $w_{(\OutRightVertex, i)}$, which we denote by $\left(w_{(\OutRightVertex, i)}\right)_{t}$, is equal to $\sigma_{\rightsquigarrow t}$.
\end{itemize}
In words, the above two conditions mean that all of the symbols in $w$ are consistent with a single $\sigma \in \OutLeftAlphabet$. 

By the projection decision complexity of $\OutPCPDecoder$ and the bound on $\wt(\aaa)$, the language $\mc{L}_{\OutLeftVertex}$ can be decided by a circuit of size $O(\wt(\aaa) + \DC_1(N)) = O(k_1(N,|\Sigma_0|) + \DC_1(N,|\Sigma_0|))$, and hence there exists a uniform bound
\begin{equation}\label{eq:N'}
    N' =  O(k_1(N, |\Sigma_0|) + \DC_1(N, |\Sigma_0|))
\end{equation} 
such that for \emph{every} $\OutLeftVertex \in \OutLeftSide$, membership in the language $\mc{L}_{\aaa}$ can be decided by a circuit of size \textit{at most} $N'$. Henceforth, we fix 
\begin{equation} \label{eq: M and K2} 
M = M(N', T') \quad \text{and} \quad K_2 = K_2(N', T'),
\end{equation}
where $M(\cdot, \cdot)$ is the function related to the left vertices of the inner dPCP given by the theorem statement, and $K_2(\cdot, \cdot)$ is the function related to the complete decoding distribution of the inner dPCP given by the theorem statement.

Now, for each $\OutLeftVertex \in \OutLeftSide$ we take $\InPCPDecoder$ to be the inner dPCP guaranteed by the theorem statement for the language $\mc{L}_{\OutLeftVertex}$ with soundness error set to 
\[
\InPCPError := \eps.
\]
 Note that we can use the same value $M$ and $K_2$ here for every inner dPCP, $\InPCPDecoder$. This is because the inner dPCP assumption in the theorem statement, along with the fact that for all $\aaa\in \OutLeftSide$, the language $\mc{L}_{\aaa}$ is over strings of length at most  $k_1(N, |\Sigma_0|)$ with an alphabet of size at most $T'$, and membership in $\mc{L}_{\aaa}$ can be decided by a circuit of size at most $N'$.
 
 For each $\aaa \in \OutLeftSide$, we denote the parts of $\InPCPDecoder$ as 
\[
\InPCPDecoder = (\InLeftSide \cup \InRightSide, \InEdgeSet, \InLeftAlphabet, \InRightAlphabet, \{\InConstraint{e} \}_{e \in \InEdgeSet}, \{\InDecoder{(\bb,i)}{\OutRightVertex} \}_{(\OutRightVertex,i) \in \Gamma_{\out}(\OutLeftVertex)}, \{ \mc{P}^{\inner, \OutLeftVertex}_{(\OutRightVertex,i)}\}_{(\OutRightVertex,i) \in \Gamma_{\out}(\OutLeftVertex)}).
\]
We have that the alphabet sizes are
\begin{equation} \label{eq: inner alphabet sizes} 
|\InLeftAlphabet| \leq F_2(N', T') \quad \text{and} \quad |\InRightAlphabet| = O_{\eps}(1).
\end{equation} 
For, convenience, we will express $\InLeftSide$ as $\Gamma_{\disc, \out}(\OutLeftVertex) \times [M]$, rather than $[\wt(\aaa)] \times [M]$, where recall that $|\Gamma_{\disc, \out}(\OutLeftVertex)| = \wt(\aaa)$. Also, for simplicity, we refer to members of $\Gamma_{\disc, \out}(\OutLeftVertex) \times [M]$ as $(\bb, i, r)$ (rather than $((\bb, i), r)$.

 We will also use the following simplifying notation, which is similar to that in \cref{def: simplifying projection outer}, but now specified for the inner dPCP.

\begin{definition}
    Given a symbol $\sigma \in \InLeftAlphabet$ for $(\bb,i, r) \in \InLeftSide$, recall that $\InDecoder{(\bb,i)}{}((\bb,i, r), \sigma)$ outputs a $\OutRightAlphabet$ symbol for $\bb$ and one $\Sigma_0$ symbol for each $t \in \Gamma_{\out, \dec}(\aaa)$. We let $\left(\sigma_{\rightsquigarrow (\bb,i)}\right)_{\bb}$ denote the $\OutRightAlphabet$ symbol part of the output from $\InDecoder{(\bb,i)}{}((\bb,i,r), \sigma)$. Likewise, for each $t \in \Gamma_{\dec, \out}(\aaa)$, we write $\left(\sigma_{\rightsquigarrow (\bb, i) }\right)_{t}$ to denote the $\Sigma_0$ symbol for $t$ in $\InDecoder{(\bb,i)}{}((\bb,i, r), \sigma)$. When we use this notation, the identity of $r \in [M]$ will be clear from context.
\end{definition}

By design of the language, the output of the decoder, $\InDecoder{(\bb,i)}{}$ consists of one $\OutRightAlphabet$ symbol for $\OutRightVertex$ and one $\Sigma_0$ symbol for each $t \in \Gamma_{\dec}(\OutLeftVertex)$. The list-decoding soundness of $\InDecoder{(\bb,i)}{}$ translates to the following claim, which morally states that one can view $\InPCPDecoder$ as a decodable PCP for the language of valid $\aaa$-alphabet symbols, with $(\InListSize, \delta_{\inner})$-list decoding soundness.

\begin{claim} \label{cl: translate inner soundness}
For every $\aaa \in \OutLeftSide$ the following holds. For any left assignment $\InLeftProof: \Gamma_{\disc, \out}(\OutLeftVertex) \times [M] \to \InLeftAlphabet$ to $\InPCPDecoder$, there is a list  $\{\sigma^{\OutLeftVertex}_1, \ldots, \sigma^{\OutLeftVertex}_{\InListSize} \} \subseteq \Sigma_{\OutLeftVertex}$ such that for any right assignment $\InRightProof: \InRightSide \to \InRightAlphabet$
    \[
    \Pr_{ \OutRightVertex \sim \mc{Q}_{\out}(\circ, \OutLeftVertex, \cdot), i \in [\wt(\aaa,\bb)], r \in [M], b \sim \mc{P}^{\inner, \OutLeftVertex}_{(\OutRightVertex,i)}(( \OutRightVertex,i, r),\cdot)}[\Event_1 \land \Event_2] \leq \InPCPError.
    \]
Here $\Event_1$ and $\Event_2$ are the following events over the sampled $(\bb,i) \in \Gamma_{\out, \dec}(\aaa), r \in [M], b \in \InRightSide$.
    \begin{itemize}
        \item $\Event_1.$ the constraint on $(\OutRightVertex, i, r)$ and $b$ is satisfied by $\InLeftProof[(\bb, i, r)]$ and $\InRightProof[b]$.
        \item $\Event_2.$ the symbols for $(\OutRightVertex, i)$ 
        and $t \in \Gamma_{\out, \dec}(\OutLeftVertex)$ output by the decoder are not (simultaneously) equal to $\left(\sigma^{\OutLeftVertex}_j \right)_{\to \bb}$ and $\left(\sigma^{\OutLeftVertex}_j \right)_{\rightsquigarrow t}$ for any $j \in [\InListSize]$. That is, for each $j \in [\InListSize]$, either
        \[
        \left(\InDecoder{(\bb, i)}{}((\bb, i, r), \InLeftProof[(\bb,i,r)])\right)_{\bb} \neq \left(\sigma^{\OutLeftVertex}_j \right)_{\to \bb} 
        \]
        or there is $t \in \Gamma_{\out, \dec}(\aaa)$ such that 
        \[
         \left(\InDecoder{(\bb, i)}{}((\bb, i, r), \InLeftProof[(\bb,i,r)])\right)_{t} \neq \left(\sigma^{\OutLeftVertex}_j \right)_{\rightsquigarrow t}.
        \]
    \end{itemize}
\end{claim}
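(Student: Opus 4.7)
\emph{Proof plan.} The claim is essentially a translation of the list-decoding soundness of $\InPCPDecoder$ (which is stated for the language $\mc{L}_{\aaa}\subseteq(\OutRightAlphabet\times\Sigma_0^{q_1})^{\wt(\aaa)}$) into a statement about its ``meaning'' in the outer dPCP alphabet $\Sigma_{\aaa}$. The plan is to apply $(\InListSize,\delta_{\inner})$-list-decoding soundness to get a list $\{y_1,\ldots,y_{\InListSize}\}\subseteq\mc{L}_{\aaa}$, and then lift each $y_j$ to some $\sigma^{\aaa}_j\in\Sigma_{\aaa}$ by invoking the definition of $\mc{L}_{\aaa}$: by construction, $y_j\in\mc{L}_{\aaa}$ iff there exists $\sigma^{\aaa}_j\in\Sigma_{\aaa}$ whose projections $(\sigma^{\aaa}_j)_{\to\bb}$ and decodings $(\sigma^{\aaa}_j)_{\rightsquigarrow t}$ exactly reproduce the coordinates of $y_j$. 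Pick such a $\sigma^{\aaa}_j$ arbitrarily; this gives the desired list of size at most $\InListSize$.

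The main thing to check is that the sampling distribution appearing in the claim coincides with the complete decoding distribution $\mc{Q}_{\inner}$ of $\InPCPDecoder$. First, sampling $\bb\sim\mc{Q}_{\out}(\circ,\aaa,\cdot)$ and then $i\in[\wt(\aaa,\bb)]$ uniformly yields $(\bb,i)$ uniform in $\Gamma_{\disc,\out}(\aaa)$, because
\[
\Pr[(\bb,i)]=\frac{\wt(\aaa,\bb)}{\wt(\aaa)}\cdot\frac{1}{\wt(\aaa,\bb)}=\frac{1}{\wt(\aaa)}.
\]
Since we identified the index set of $\mc{L}_{\aaa}$ with $\Gamma_{\disc,\out}(\aaa)$, this is exactly the ``uniform $t\in[n']$'' step for the inner dPCP, where $n'=\wt(\aaa)$. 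Next, $r\in[M]$ uniform combined with the Left-Uniform property of $\mc{Q}_{\inner}$ gives that the left vertex $(\bb,i,r)\in\Gamma_{\disc,\out}(\aaa)\times[M]$ is distributed exactly as the marginal of $\mc{Q}_{\inner}$ on the left side conditioned on the decoding index $(\bb,i)$. Finally, $b\sim\mc{P}^{\inner,\aaa}_{(\bb,i)}((\bb,i,r),\cdot)$ is the conditional distribution of the right vertex of the inner dPCP given the left vertex and the decoding index. Putting this together, the sampled triple $((\bb,i),(\bb,i,r),b)$ is distributed as $\mc{Q}_{\inner}$.

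With this identification, the list-decoding soundness of $\InPCPDecoder$ applied to the left assignment $\InLeftProof$ and any right assignment $\InRightProof$ yields
\[
\Pr\big[\Event_1\wedge\InDecoder{(\bb,i)}{}((\bb,i,r),\InLeftProof[(\bb,i,r)])\notin\{(y_j)_{(\bb,i)}\}_{j\in[\InListSize]}\big]\leq\InPCPError.
\]
It remains to observe that, by our choice of $\sigma^{\aaa}_j$, the symbol $(y_j)_{(\bb,i)}$ is precisely the tuple consisting of $(\sigma^{\aaa}_j)_{\to\bb}$ (in the $\OutRightAlphabet$-coordinate) together with $\big((\sigma^{\aaa}_j)_{\rightsquigarrow t}\big)_{t\in\Gamma_{\out,\dec}(\aaa)}$ (in the $\Sigma_0^{q_1}$-coordinates). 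Therefore the event that the decoder output does not equal $(y_j)_{(\bb,i)}$ for any $j$ is literally the event $\Event_2$ in the statement, and the claim follows.

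\emph{Expected obstacles.} There is no substantive technical difficulty here, since the statement is engineered to match the soundness guarantee of the inner dPCP. The main care is bookkeeping: keeping track of the two distinct uses of the index $t$ (as an outer decoding index $t\in\Gamma_{\out,\dec}(\aaa)$ versus as an inner decoding index $t=(\bb,i)$ over $\InPCPDecoder$), and verifying that the marginal over $(\bb,i,r)$ really is the one that the inner dPCP's list-decoding bound is stated against -- which is exactly what the Left-Uniform and Discreteness conditions on $\mc{Q}_{\inner}$ were arranged to guarantee.
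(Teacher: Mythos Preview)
Your proposal is correct and follows essentially the same approach as the paper: apply the $(\InListSize,\InPCPError)$-list-decoding soundness of $\InPCPDecoder$ to get a list in $\mc{L}_{\aaa}$, lift each list element to a symbol in $\Sigma_{\aaa}$ via the definition of $\mc{L}_{\aaa}$ (choosing arbitrarily if multiple such symbols exist), and then observe that the sampling in the claim matches the complete decoding distribution of $\InPCPDecoder$ because sampling $\bb\sim\mc{Q}_{\out}(\circ,\aaa,\cdot)$ and $i\in[\wt(\aaa,\bb)]$ uniformly yields $(\bb,i)$ uniform in $\Gamma_{\disc,\out}(\aaa)$. Your write-up is slightly more explicit about the distribution-matching step than the paper's, but the argument is the same.
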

\begin{proof}
Fix $\InLeftProof$. Applying the list-decoding soundness of $\InPCPDecoder$, we get a list of words $\{w_{1},\ldots, w_{\InListSize}\} \subseteq \mc{L}_{\aaa}$ such that for any right assignment $\InRightProof$, the list-decoding error of $\InLeftProof$ and $\InRightProof$ relative to this list is at most $\InPCPError$. We can translate this list, $\{w_{1},\ldots, w_{\InListSize}\}$, into a corresponding list of $\InListSize$ symbols $\sigma^{\aaa}_1, \ldots, \sigma^{\aaa}_{\InListSize} \in \Sigma_{\aaa}$. Specifically, for each $w_\ell \in \{w_{1},\ldots, w_{\InListSize}\}$, by definition of being in  $\mc{L}_{\OutLeftVertex}$, we have that there exists $\sigma \in \Sigma_{\OutLeftVertex}$ such that for any $(\bb,i) \in \Gamma_{\disc, \out}(\aaa)$ and $t \in \Gamma_{\out, \dec}(\aaa)$ 
\[
(w_{\bb, i})_{\bb} = \sigma_{\to \bb} \quad \text{and} \quad (w_{\bb, i})_{t} = \sigma_{\rightsquigarrow t}.
\]

We set $\sigma^{\aaa}_\ell$ to be this symbol and if for some $w \in \{w_{1},\ldots, w_{\InListSize}\}$ there exist multiple such $\sigma \in \Sigma_{\OutLeftVertex}$, we choose one of them arbitrarily to be $\sigma^{\aaa}_\ell$. Now it is straightforward to see that the list-decoding soundness of $\InPCPDecoder$ for the language $\mc{L}_{\aaa}$ relative to the list $\{w_{1},\ldots, w_{\InListSize}\}$ translates exactly into the statement of the claim with the list $\{\sigma^{\aaa}_1, \ldots, \sigma^{\aaa}_{\InListSize}\} \subseteq \Sigma_{\aaa}$. Here, we use the fact that sampling $(\bb,i)$ in the multiset, $\Gamma_{\disc, \out}(\aaa)$, uniformly, we have $\bb \sim \mc{Q}_{\out}(\circ, \aaa, \cdot)$ and $i$ uniform in $[\wt(\aaa,\bb)]$.
\end{proof}
Finally, let us write the complete decoding distribution of each inner dPCP, $\InPCPDecoder$, as $\mc{Q}_{\inner, \aaa}$. Using the discreteness property of the inner dPCP's complete decoding distribution, we have that for every $((\bb,i, r), b) \in \InLeftSide \times \InRightSide$, there is $\wt'_{\aaa}((\bb,i, r), b)$ such that
\begin{equation} \label{eq: inner discrete express}
\mc{Q}_{\inner, \aaa}(\circ, (\bb,i, r), b) = \mc{Q}_{\inner, \aaa}((\bb,i), (\bb, i, r), b) = \frac{\wt'_{\aaa}((\bb,i, r), b)}{\wt(\aaa) \cdot M \cdot K_2}.
\end{equation}

By the third property of the complete decoding distribution of the inner dPCP, for all $\aaa \in \OutLeftSide$ and $(\bb, i, r) \in \InLeftSide$ we have 
\begin{equation} \label{eq: inner left uniform}
\sum_{b \in \Gamma_{\aaa, \inner}((\bb, i,r))} \wt'_{\aaa}((\bb, i, r), b) = K_2,
\end{equation}
where in the above $\Gamma_{\aaa, \inner}$ denotes the neighborhood of a vertex in $\InPCPDecoder$. We emphasize that, by assumption, $K_2$ above is the same for all of the inner dPCPs, $\InPCPDecoder$. We are ready to describe our composed dPCP.

\vspace{-1ex}
\paragraph{The composed dPCP $\OutPCPDecoder \circ \InPCPDecoderSymbol$.}
\begin{itemize}

    \item \textbf{Constraint Graph:} The constraint graph is bipartite with the following left and right sides respectively: 
     \begin{align*}
         &\CompLeftSide\coloneqq\{\CompLeftVertexOther \ :\ \InSpecificRandomness \in [M]\text{ and } \ \OutRightVertex \in \OutRightSide\}, \\
         &\CompRightSide\coloneqq
         \{\CompRightVertex \ : \  \OutLeftVertex\in\OutLeftSide,  b\in\InRightSide\}=\bigcup_{\OutLeftVertex\in\OutLeftSide}\{\aaa\}\times\InRightSide.
     \end{align*} 
      Let us explain how the vertices in the composed PCP should be thought of. First, take the inner constraint graphs, $(\InLeftSide \cup \InRightSide, \InEdgeSet)$, for each $\OutLeftVertex \in \OutLeftSide$. Viewing each left side as $\Gamma_{\disc, \out}(\OutLeftVertex) \times [M]$, we see that for each $\bb \in \OutRightSide$ and $r \in [M]$, there are $d_1$-many vertices of the form $(\bb, i, r)$ which appear in $\CompLeftSide$. This is because of the assumption that
      \begin{equation}   \label{eq: out left regularity}
      \sum_{\aaa \in \OutLeftSide} \wt(\aaa, \bb) = d_1,
    \end{equation}
from the outer dPCP's complete decoding distribution. Henceforth, when we are in the context of the composed dPCP, we will denote the copy of  $(\OutRightVertex, i, r)$ in $\InLeftSide$ by $(\aaa; \bb, i, r)$.
     
     The left side of the composed constraint graph is obtained by identifying, for each $\bb \in \OutRightSide$ and $r \in [M]$, the vertices of $(\aaa; \OutRightVertex, i, r)$ over all $\aaa$ adjacent to $\bb$ and $i \in [\wt(\aaa,\bb)]$. One can think of the vertex $(\bb, r) \in \CompLeftSide$ as a cloud consisting of all $(\aaa; \OutRightVertex, i, r)$ appearing over all copies of the inner dPCPs $\InPCPDecoder$. 

     The right side of the composed constraint graph is the union of the right sides of the inner constraint graph, and the edges are the same after identification. 
     \item \textbf{Left Alphabet:} Fix a vertex $\CompLeftVertexOther\in\CompLeftSide$. For each $\aaa \in \Gamma_{\out}(\bb)$ and $i \in [\wt(\aaa, \bb)]$, there is a vertex $(\aaa; \bb, i, r)$ coming from the inner dPCP of $\mc{D}_{\inner, \aaa}$. Then, the composed vertex $(\bb, r)$ holds a symbol from $\Sigma_{\InLeftSide, \aaa}$ for each $(\aaa; \bb, i, r)$. Overall, using \eqref{eq: out left regularity} see that these are $d_1$ inner left alphabet symbols.
     
     If $\sigma$ is an alphabet symbol for $(\bb, r)$, we use $\sigma \{ \aaa, i \}$ to denote the $\InLeftAlphabet$ symbol corresponding to a label for $(\aaa; \bb, i, r)$. We also constrain the alphabet of $(\bb,r)$ to only contain $\sigma$ as follows.
     \begin{itemize}
        \item \textbf{Hardcoding Outer Constraints.} 
        We require that $\left(\sigma \{ \aaa, i \} \right)_{\rightsquigarrow \bb}$ is the same for all $\aaa\in \Gamma_{\disc}(\bb)$ and $i \in [\wt(\aaa, \bb)]$. Henceforth, we denote this value as $\sigma_{\rightsquigarrow \bb}$ so that 
        \begin{equation}\label{eq:def_squiq_eq}
        \sigma_{\rightsquigarrow \bb}  = \left(D^{\inner, \aaa}_{(\bb,i)}((\aaa; \bb, i, r), \sigma\{\aaa, i\})\right)_{\bb}.
        \end{equation}
        On the right side, we are first running the inner decoder from $\InPCPDecoder$, where the index to decode is $(\bb, i)$, the left vertex is $(\aaa; \bb, i, r) \in \InLeftSide$, and the symbol is $\sigma\{\aaa,i\}$. We then take the symbol for $\bb$ given by the output symbol. We note that whenever we use the notation of \eqref{eq:def_squiq_eq}, the symbol $\sigma$ will satisfy the hardcoding outer constraints, so all the possible $\aaa$ lead to the same value. 
        \item \textbf{Hardcoding Decoding Consistency.} For any $t \in [n]$ and all $\aaa, i$ such that $(\aaa, \bb) \in \supp(\mc{P}_t)$ and $i \in [\wt(\aaa, \bb)]$, we require that the decoded values $\left(D^{\inner, \aaa}_\bb((\aaa; \bb, i, r), \sigma\{\aaa, i\})\right)_{t}$ are the same, and we denote this value by $\sigma_{\rightsquigarrow t}$.
     \end{itemize}
     We denote the left alphabet by $\Alphabet{\CompLeftSide}$.
    \item \textbf{Right Alphabet:} The right alphabet is $\InRightAlphabet$.
    
    \item \textbf{Constraints:} By our construction, $(\CompLeftVertexOther,\CompRightVertex)\in\EdgeSet_{\Comp}$ if and only if there is an $i$ such that $(\bb,i) \in \Gamma_{\disc, \out}(\aaa)$ and, in $\InPCPDecoder$, the vertex $(\aaa; \bb, i, r)$ is adjacent to $(\aaa; b)$. Let us describe the constraint on this edge now.
    
    Given valid labels $\sigma \in \Alphabet{\CompLeftSide}$ and $\sigma' \in \InRightAlphabet$ for $(\bb,r)$ and $(\aaa; b)$ respectively in the composed dPCP, the constraint on $(\CompLeftVertexOther,\CompRightVertex)$  checks that, for all such $\aaa, i$ as above, where $(\bb, i, r)$ is adjacent to $b$ in $\InPCPDecoder$, the inner constraint of $\InPCPDecoder$, between $(\aaa; \bb, i, r)$ and $(\aaa; b)$ is satisfied by the labels $\sigma\{\aaa, i\} \in \InLeftAlphabet$ and $\sigma' \in \InRightAlphabet$.
    That is, 
    \[
\Phi_{(\CompLeftVertexOther,\CompRightVertex)}(\sigma, \sigma') = 1 
    \]
    if and only if 
    \[
    \Phi_{\inner,  ((\aaa;\bb,i, r),(\aaa; b))}(\sigma ,\sigma') = 1
    \]
    over all $\aaa, i$ such that $(\aaa; \bb, i, r)$ is adjacent to $(\aaa; b)$ in $\InPCPDecoder$. In the second equality, the constraint is coming from $\InPCPDecoder$.
    
    \item \textbf{Decoding Distributions:} Let $\mc{Q}_{\out}$ be the complete decoding distribution of the outer dPCP. Then the complete decoding distribution of the composed dPCP, $\mc{Q}_{\comp}$, is generated as follows.
        \begin{itemize}
        \item Choose $(t, \OutLeftVertex,\OutRightVertex)\sim \mc{Q}_{\out}$, and $i \in [\wt(\aaa, \bb)]$ uniformly at random.
        \item Sample $r \in [M]$ uniformly.
         \item Choose $(\aaa; b) \sim \mc{Q}_{\In,\OutLeftVertex}((\OutRightVertex,i),(\aaa; \OutRightVertex,i, r), \cdot )$, 
        \item Output $(t, \CompLeftVertexOther,\CompRightVertex)$.
    \end{itemize}
    It is clear that the marginal of $\mc{Q}_{\comp}$ over $[n]$ is uniform, and for each $t \in [n]$ we take the decoding distribution to be $\CompEdgeDistribution{t} = \mc{Q}_{\comp}(t, \cdot, \cdot)$.

     \item \textbf{Decoder.} For 
    $\CompLeftVertexOther \in\CompLeftSide$ and label $\sigma$, the decoder outputs $
   D^{\Comp}_{t}(\CompLeftVertexOther, \sigma) = \sigma_{\rightsquigarrow t}$.
\end{itemize}
This completes the description of the composed dPCP,  $\OutPCPDecoder \circ \InPCPDecoderSymbol$, and we will now show that it satisfies the requirements of \cref{thm: alph reduction comp}.

\subsection{Basic Properties of the Composed PCP}
In this section we establish all of the properties of $\OutPCPDecoder \circ \InPCPDecoderSymbol$ as promised in the statement of \cref{thm: alph reduction comp}, except for the list-decoding soundness, which is deferred to the next section. In the below, recall that $T'$ is given by \eqref{eq: inner language alphabet size}, $N'$ is given by \eqref{eq:N'}, $K_2$ and $M$ are given by \eqref{eq: M and K2}.
\vspace{-2ex}
\paragraph{Length.} By construction, left side has size at most $|\OutRightSide|\cdot M$ vertices and the right side has size $\sum_{\OutLeftVertex\in\OutLeftSide} |\InRightSide|$, so overall the length is 
\[
S_1(N, |\Sigma_0|) \cdot M(N', T') + S_1(N, |\Sigma_0|) \cdot S_2(N', T').
\]

\vspace{-2ex}
\paragraph{Alphabet Sizes.} The left alphabet size is at most $|\InLeftAlphabet|^{d_1}$.
Indeed, fix a left vertex $(\bb, r)$. Then for every $\OutLeftVertex \in \OutLeftSide$ and $i \in [\wt(\aaa,\bb)]$, $(\bb, r)$ holds one symbol from $\InLeftAlphabet$, corresponding to a label to $(\aaa; \bb,i,r)$ in $\InPCPDecoder$, 
which by~\eqref{eq: out left regularity} is a total of $d_1$ 
symbols from $\InLeftAlphabet$. Hence the alphabet size is $F_2(N', T')^{d_1}$, where we use \eqref{eq: inner alphabet sizes} to bound the size of $|\InLeftAlphabet|$. The right alphabet is $\InRightAlphabet$ and its size is $O_{\eps}(1)$ by \eqref{eq: inner alphabet sizes}.
\vspace{-2ex}
\paragraph{Decoding Degree.} 
The decoding degree is $O_{\eps}(d_1)$. Each $(\OutRightVertex, r)$ is responsible for decoding $t$ if there is $\aaa \in \Gamma_{\out}(\bb)$ such that $t \in \Gamma_{\out, \dec}(\aaa)$. The decoding degree bound follows from the fact that $|\Gamma_{\out, \dec}(\aaa)| \leq O_{\eps}(1)$ and that 
$|\Gamma_{\out}(\bb)| \leq d_1$ by~\eqref{eq: out left regularity}.
\vspace{-2ex}
\paragraph{Projection Constraints.} It is straightforward to check that the composed dPCP has projection constraints because the inner dPCPs are all projection dPCPs.
\vspace{-2ex}
\paragraph{Projection Decision Complexity.} Fix a constraint $((\bb, r), \InRightVertex)$ and let $\sigma$ be the left alphabet symbol. Recall that $\sigma$ contains, as its restrictions, one $\InLeftAlphabet$ symbol for each of $d_1$ vertices corresponding to all $\aaa, i$ such that $\aaa \in \Gamma_{\disc}(\bb), i \in [\wt(\aaa, \bb)]$.

The projection circuit needs to do the following: 
(1) check that the left alphabet symbol satisfies hardcoding of outer constraints, (2) check that the left alphabet symbol satisfies hardcoding decoding consistency, and (3) compute the unique right alphabet symbol that satisfies the constraint.

Each of (1) and (2) can be done by computing a restriction of the given left alphabet symbol and then running the inner decoder $d_1$ times and then checking an equality between $d_1$ symbols of either $\OutRightAlphabet$, in the case of (1), or $\Sigma_0$ in the case of (2). Altogether, the circuit complexity of (1) and (2) is a constant multiple of
\begin{align*}
    &\log(F_1(N, |\Sigma_0|)) + d_1 \cdot (\mathsf{DecodeComp}_2(N', T') + \log|\OutRightAlphabet| + \log |\Sigma_0|) \\
    &= \log(F_1(N, |\Sigma_0|)) + O_{\eps, |\Sigma_0|}(\mathsf{DecodeComp}_2(N', T')).
\end{align*}

To accomplish (3), the projection circuit runs the projection circuit of the inner dPCP, $\InPCPDecoder$, on a restriction of the given left alphabet symbol which requires circuit complexity
\[
\log(F_1(N, |\Sigma_0|)) + \DC_2(N', T').
\]
Overall this gives projection decision complexity that is, 
\[
O(\log(F_1(N, |\Sigma_0|))) + O_{\eps, |\Sigma_0|}( \DC_2(N', T') + \mathsf{DecodeComp}_2(N', T')).
\]

\vspace{-2ex}
\paragraph{Decoding Complexity.} The decoder restricts its input to obtain a symbol for the inner dPCP and runs the decoder of the inner dPCP. Hence, the decoding complexity is $\log(F_1(N, |\Sigma_0|)) + \mathsf{DecodeComp}_2(N', T')$.

\vspace{-2ex}
\paragraph{Complete Decoding Distribution.} Fix $(\bb, r) \in \CompLeftSide$ and $\InRightVertex \in \CompRightSide$. Then
\begin{equation} \label{eq: composed dPCP cdd}
    \begin{split}
    \mc{Q}_{\comp}(\circ, (\bb, r), \InRightVertex) &= \mc{Q}_{\out}(\circ, \aaa, \bb) \cdot \frac{1}{M} \cdot\E_{i\in[\wt(\aaa,\bb)]}\left[\frac{\mc{Q}_{\inner,\aaa}((\bb,i), (\aaa;\bb,i,r),(\aaa; b))}{\mc{Q}_{\inner,\aaa}((\bb,i), (\aaa;\bb,i,r),\circ)}\right]\\
 &= \mc{Q}_{\out}(\circ, \aaa, \bb) \cdot \frac{1}{M} \cdot \E_{i \in [\wt(\aaa, \bb)]} \left[ \frac{\wt'_{\aaa}((\aaa; \bb, i, r), (\aaa;b))}{K_2} \right]\\
    &= \frac{\wt(\aaa, \bb)}{K_1 M K_2} \E_{i \in [\wt(\aaa,\bb)]} \left[ \wt'_{\aaa}((\aaa; \bb,i, r), (\aaa;b))\right] \\
    &= \frac{\sum_{i \in [\wt(\aaa, \bb)]}\wt'_{\aaa}((\aaa;\bb,i,r), (\aaa;b))}{K_1 M K_2} 
\end{split}
\end{equation}

In the first transition, we are using the definition of $\mc{Q}_{\comp}$. In the second transition, we are using \eqref{eq: inner discrete express} to plug in the term inside the expectation, and then~\eqref{eq: inner left uniform}. In the third transition, we are writing out $\mc{Q}_{\out}(\circ, \aaa, \bb)$. In the final transition, we are expanding out the expectation from the second line.

By the above, we get that the first property of the complete decoding distribution holds with discreteness parameter 
\begin{equation} \label{eq: comp discrete}  
J = K_1 \cdot M \cdot K_2 \quad \text{and} \quad \wt''((\bb, r), \InRightVertex) =  \sum_{i \in [\wt(\aaa, \bb)]}\wt'_{\aaa}((\aaa; \bb,i,r), (\aaa;b)).
\end{equation}

We move onto the second property which requires showing that the marginal over $A_{\comp}$ is uniform. From the description, it is clear that
\[
\mc{Q}_{\comp}(\circ, (\bb, r), \circ)  = \mc{Q}_{\out}(\circ, \circ, \bb) \cdot \frac{1}{M} = \frac{d_1}{ K_1(N, |\Sigma_0|) M}. 
\]
In the second equality, we use the fact that  $\mc{Q}_{\out}(\circ, \circ, \bb) = \frac{d_1}{ K_1(N, |\Sigma_0|)}$, by the right uniformity property of the outer dPCP's complete decoding distribution. It follows that for any $(\bb, r) \in \CompLeftSide$
\[
\sum_{(\aaa; b) \in \CompRightSide}\wt''((\bb, r), (\aaa; b)) = \frac{J d_1}{ K_1(N, |\Sigma_0|) M} = d_1K_2,
\]
as desired.

\vspace{-1ex}
\paragraph{Completeness.} Since the outer dPCP $\OutPCPDecoder$ has perfect completeness, for any $w\in \mathsf{SAT}(\varphi)$ there are assignments $\OutLeftProof:\OutLeftSide\to\OutLeftAlphabet$ and $\OutRightProof:\OutRightSide\to\OutRightSide$ such that all of the constraints in $\OutPCPDecoder$ are satisfied and the decoder outputs $w_t$ with probability $1$ for all $t \in [n]$. Now, for each $\OutLeftVertex\in \OutLeftSide$, using the value  $\OutLeftProof[\OutLeftVertex]$, one can construct $w_{\In,\OutLeftVertex}\in \left(\OutRightAlphabet \times \Sigma_0^{q_1}\right)^{\wt(\aaa)}$ such that for every $(\bb,i) \in \Gamma_{\disc, \out}(\OutLeftVertex)$ and $t \in \Gamma_{\out, \dec}(\OutLeftVertex)$, we have that the $\OutRightAlphabet$ part of $(w_{\In,\OutLeftVertex})_{(\OutRightVertex, i)}$ equals $(\OutLeftProof[\OutLeftVertex])_{\to \OutRightVertex}$, and the symbols corresponding to $t \in \Gamma_{\out, \dec}(\OutLeftVertex)$ in $(w_{\In,\OutLeftVertex})_{\bb}$ are consistent $\left(\OutLeftProof[\aaa]\right)_{\rightsquigarrow t}$. For each such  $w_{\In,\OutLeftVertex}$, which is in $\mc{L}_{\aaa}$, one can then use the perfect completeness of the inner dPCP's to construct $\InLeftProof:\InLeftSide\to\InLeftAlphabet$ and $\InRightProof:\InRightSide\to\InRightSide$  which satisfy all of the constraints on the inner dPCP $\InPCPDecoder$ and always output a decoding that is consistent with $(w_{\inner, \aaa})_{\to \bb}$ for all $(\bb, i) \in \Gamma_{\disc, \out}(\aaa)$ and $(w_{\inner, \aaa})_{\rightsquigarrow t}$ for all $t \in \Gamma_{\out, \dec}(\aaa)$.

From these assignments, we construct the composed assignments $\CompLeftProof:\CompLeftSide\to\Alphabet{\CompLeftSide}$ and $\CompRightProof:\CompRightSide\to\Alphabet{\CompRightSide}$ as follows:
\begin{itemize}
    \item For every $(\aaa; b) \in \CompLeftSide$, note that $b$ is a right vertex in $\InPCPDecoder$, and hence we set $\CompRightProof[\CompRightVertex]=\InRightProof[b]$.
\item  For each $(\bb, r) \in \CompLeftSide$, $\CompLeftProof[(\bb,r)]$ consists of the label $\InLeftProof[(\bb, i,r)]$ for each $\aaa \in \Gamma_{\out}(\bb)$ and $i \in [\wt(\aaa,\bb)]$.
\end{itemize}
One can check that this assignment indeed satisfies all of the constraints of the composed dPCP and the decoder outputs $w_t$ with probability $1$ for all $t \in [n]$.

\subsection{Proof of List-Decoding Soundness}
This section is devoted to the proof of the following lemma, establishing the list-decoding soundness of~\cref{thm: alph reduction comp} and thereby completing its proof. 
\begin{lemma}\label{lemma:left-alphabet-reduction-soundness}
The composed dPCP constructed above has decoding error $\delta_{\inner} + \InListSize \cdot \delta_{\out} = \eps + \eps^{10}$ relative to a list of size $\InListSize \cdot \OutListSize = \eps^{C_1 + C_2 + 10}$, and hence satisfies $(\poly(1/\eps), O(\eps))$-list-decoding soundness.
\end{lemma}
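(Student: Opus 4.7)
The plan is to apply the two levels of list-decoding soundness (inner and outer) in sequence to a given composed left assignment $\CompLeftProof$, following the general template of Dinur--Harsha composition but tailored to the dPCP setting so that the decoding structure is preserved. First, I would use $\CompLeftProof$ to extract, for each outer vertex $\OutLeftVertex\in\OutLeftSide$, an induced inner left assignment $\InLeftProof^{\OutLeftVertex}\colon\Gamma_{\disc,\out}(\OutLeftVertex)\times[M]\to\InLeftAlphabet$ defined by $\InLeftProof^{\OutLeftVertex}[(\OutRightVertex,i,r)] = \CompLeftProof[(\OutRightVertex,r)]\{\OutLeftVertex,i\}$. Applying \cref{cl: translate inner soundness} to each $\InLeftProof^{\OutLeftVertex}$ yields a short list $\{\sigma^{\OutLeftVertex}_1,\ldots,\sigma^{\OutLeftVertex}_{\InListSize}\}\subseteq\Sigma_{\OutLeftVertex}$ of valid outer left alphabet symbols. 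I would then define, for each $j\in[\InListSize]$, the outer left assignment $T^{(j)}_A\colon\OutLeftSide\to\OutLeftAlphabet$ by $T^{(j)}_A[\OutLeftVertex] = \sigma^{\OutLeftVertex}_j$, and invoke the outer list-decoding soundness to obtain lists $\{w^{(j)}_1,\ldots,w^{(j)}_{\OutListSize}\}\subseteq\mc{L}$. The final list for the composed dPCP would be $\bigcup_{j\in[\InListSize]}\{w^{(j)}_\ell\}_{\ell\in[\OutListSize]}$, of size at most $\InListSize\cdot\OutListSize=\eps^{-(C_1+C_2)}$.

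For the error analysis, fix any right assignment $\CompRightProof$ and let $\InRightProof^{\OutLeftVertex}[b] = \CompRightProof[(\OutLeftVertex;b)]$. For a sample $(t,(\OutRightVertex,r),(\OutLeftVertex;b))\sim\mc{Q}_{\comp}$, let $\Event_1$ be the event that the composed constraint is satisfied and let $\mathcal{F}_j$ be the event that the inner decoder's two outputs (the $\OutRightVertex$-part and the $t$-part) both agree with $\sigma^{\OutLeftVertex}_j$. I would split the list-decoding error into (A) $\Event_1 \land \bigwedge_j \neg\mathcal{F}_j$, and (B) $\exists j: \mathcal{F}_j$ holds but $D^{\comp}_t \notin \{(w^{(j)}_\ell)_t\}_{j,\ell}$. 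Part (A) is controlled directly by \cref{cl: translate inner soundness}: the composed constraint implies the underlying inner constraint, and since the marginal of $\OutLeftVertex$ under $\mc{Q}_{\comp}$ coincides with its marginal under $\mc{Q}_{\out}$, while the conditional of $(\OutRightVertex,i,r,b)$ given $\OutLeftVertex$ matches the distribution used in \cref{cl: translate inner soundness}, the contribution of (A) is at most $\delta_{\inner}=\eps$.

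For part (B), I would union-bound over $j$ and, for each $j$, note that whenever $\mathcal{F}_j$ occurs one has $D^{\comp}_t((\OutRightVertex,r),\CompLeftProof[(\OutRightVertex,r)]) = (\sigma^{\OutLeftVertex}_j)_{\rightsquigarrow t} = \OutDecoder{t}{}(\OutLeftVertex,T^{(j)}_A[\OutLeftVertex])$, so the error reduces to bounding the probability that $\OutDecoder{t}{}(\OutLeftVertex,T^{(j)}_A[\OutLeftVertex])\notin\{(w^{(j)}_\ell)_t\}_\ell$ under the event $\mathcal{F}_j$. The outer list-decoding soundness applied to $T^{(j)}_A$ gives exactly such a bound, provided I can produce a (possibly randomized) right assignment $T^{(j)}_B\colon\OutRightSide\to\OutRightAlphabet$ for which the event $\mathcal{F}_j$ implies that $\Phi^{\out}_{(\OutLeftVertex,\OutRightVertex)}(T^{(j)}_A[\OutLeftVertex],T^{(j)}_B[\OutRightVertex])=1$. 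A natural choice is the randomized assignment that, for each $\OutRightVertex$, samples $(\OutLeftVertex',i,r,b)$ from the conditional distribution of $\mc{Q}_{\comp}$ given $\OutRightVertex$ and sets $T^{(j)}_B[\OutRightVertex]$ to the $\OutRightVertex$-part of the inner decoder's output. By taking an expectation over this randomization, outer soundness yields a bound of $\delta_{\out}$ per $j$, and hence $\InListSize\cdot\delta_{\out}=\eps^{10}$ in total.

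The main obstacle is the last step: matching $\Pr[\mathcal{F}_j\mid\OutLeftVertex,\OutRightVertex]$ with $\Pr_{T^{(j)}_B[\OutRightVertex]}[T^{(j)}_B[\OutRightVertex]=(\sigma^{\OutLeftVertex}_j)_{\to\OutRightVertex}]$, since the conditional distribution on $\OutLeftVertex$ versus its marginal given $\OutRightVertex$ need not coincide in general. Resolving this is precisely where the structural hypotheses on $\mc{Q}_{\inner}$ (left-uniformity, discreteness, decoding degree one) and on $\mc{Q}_{\out}$ (right-uniformity, discreteness) are used: together with the hardcoded projection-consistency on the composed left alphabet, they allow one to rewrite $\Pr[\mathcal{F}_j\mid\OutLeftVertex,\OutRightVertex]$ as a conditional probability that is agnostic to the choice of $\OutLeftVertex$ in the relevant sense, so that the randomized outer right assignment above satisfies the required coupling and the outer soundness can be invoked cleanly. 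I expect this bookkeeping, together with carefully tracking the $q_1$-fold decoding consistency enforced on $\Sigma_{\CompLeftSide}$, to be the bulk of the proof; all other components (completeness-to-inner-list, list-union, union bound over $j$) are routine.
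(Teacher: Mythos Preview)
Your overall architecture is correct and matches the paper: extract inner left assignments from $\CompLeftProof$, apply \cref{cl: translate inner soundness} to get lists $\{\sigma^{\OutLeftVertex}_j\}$, form outer left assignments $T^{(j)}_A$, apply outer soundness to get the lists $\{w^{(j)}_\ell\}$, and split the error into your parts (A) and (B). Your treatment of part (A) is fine.

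The issue is in part (B), where you are making the construction of the outer right assignment much harder than it needs to be, and the ``obstacle'' you identify is an artifact of that overcomplication. The point you are missing is that the hardcoding on $\Sigma_{\CompLeftSide}$ already forces the $\OutRightVertex$-part of the inner decoder's output to be \emph{independent of $(\OutLeftVertex,i)$}: by definition (see \eqref{eq:def_squiq_eq}), the value $\CompLeftProof[(\OutRightVertex,r)]_{\rightsquigarrow\OutRightVertex}$ is the same regardless of which $\OutLeftVertex\in\Gamma_{\out}(\OutRightVertex)$ and which $i$ one uses to compute it. Consequently, there is no need for a randomized right assignment that samples $(\OutLeftVertex',i,b)$, nor any coupling or agnosticism argument. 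The paper simply sets, for each fixed $r\in[M]$, the deterministic assignment
\[
T^{(r)}_B[\OutRightVertex] \;=\; \CompLeftProof[(\OutRightVertex,r)]_{\rightsquigarrow\OutRightVertex},
\]
and then averages over $r$ (which is uniform and independent of $(t,\OutLeftVertex,\OutRightVertex)$ in $\mc{Q}_{\comp}$). Under this choice, the event $\mathcal{F}_j$ (the $\OutRightVertex$-part of the inner decoding equals $(\sigma^{\OutLeftVertex}_j)_{\to\OutRightVertex}$) is exactly the event that $\Phi^{\out}_{(\OutLeftVertex,\OutRightVertex)}(T^{(j)}_A[\OutLeftVertex],T^{(r)}_B[\OutRightVertex])=1$, and the event ``$D^{\comp}_t\notin\{(w^{(j')}_\ell)_t\}$'' combined with $\mathcal{F}_j$ implies $\OutDecoder{t}{}(\OutLeftVertex,T^{(j)}_A[\OutLeftVertex])\notin\{(w^{(j)}_\ell)_t\}_\ell$. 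So for each fixed $r$ and each $j$, the inner probability is bounded by the outer list-decoding error $\delta_{\out}$ directly, giving $\InListSize\cdot\delta_{\out}$ after the union bound. None of the structural hypotheses on $\mc{Q}_{\inner}$ (left-uniformity, discreteness) or on $\mc{Q}_{\out}$ (right-uniformity) are needed for the soundness argument itself; they are used elsewhere (for the size, degree, and decoding-distribution bookkeeping of the composed dPCP), not here.
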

\begin{proof}
Fix an assignment. $\CompLeftProof:\CompLeftSide\to\Alphabet{\CompLeftSide}$
We will first define a list of assignments for $\OutPCPDecoder$. Specifically, we will have a left outer assignment, $\CompProofOutLeftProof{\CompProofListIndex
}: \OutLeftSide \to \OutLeftAlphabet$, for each $\CompProofListIndex \in [\InListSize]$.

\paragraph{Obtaining assignments to the left side of the Outer PCP.}
For each $\OutLeftVertex \in \OutLeftSide$, we will attempt to obtain candidate assignments to $\OutLeftVertex$ in the outer dPCP by using the list-decoded values of the inner dPCP, $\InPCPDecoder$. To do so, we first use $\CompLeftProof$ to obtain an assignment $\InLeftProof$ for $\InPCPDecoder$, and subsequently apply the list-decoding condition of $\InPCPDecoder$ to these assignments. Set
\[
\InLeftProof[(\aaa; \OutRightVertex, i, r)] := {\CompLeftProof[\CompLeftVertexOther]}\{\OutLeftVertex, i\}, \quad 
\forall (\aaa; \bb, i, r) \in\InLeftSide.
\]
Now that $\InLeftProof$ is an assignment to $\InPCPDecoder$, we can apply the list-decoding guarantee from \cref{cl: translate inner soundness}. We get that there is a list $\InList = \{y^\aaa_1, \ldots, y^\aaa_{\InListSize} \}\subseteq \Sigma_{\OutLeftVertex}$ such that the following holds for any right assignment $T': \InRightSide \to \InRightAlphabet$, 
\begin{equation}
\label{eq: inner list decoding condition other}
\Pr_{(\OutRightVertex,i) \in \Gamma_{\disc, \out}(\OutLeftVertex), r \in [M], (\aaa;b) \sim \mc{P}^{\inner, \OutLeftVertex}_{(\OutRightVertex, i)}((\aaa; \OutRightVertex,i,r), \cdot)} \left[\InEventConstraint  \land \InEventDecoding\right] \leq \InPCPError.
\end{equation}
Here, $\InEventConstraint$ and $\InEventDecoding$ are the following events:
\begin{itemize}
    \item $\InEventConstraint$ (the inner constraint is satisfied): this is the event that $\InLeftProof[(\aaa; \bb, i,r)]$ and $T'[(\aaa;b)]$ satisfy the constraint on $((\aaa; \bb,i,r) , (\aaa,b))$ from $\InPCPDecoder$. 
    \item $\InEventDecoding$ (the inner decoding fails): this is the event that the inner decoder does not output something consistent with the list. That is, the symbols for $\bb$ and $t \in \Gamma_{\out, \dec}(\OutLeftVertex)$ output by the decoder are not (simultaneously) equal to $\left(y^{\OutLeftVertex}_j \right)_{\to \bb}$ and $\left(y^{\OutLeftVertex}_j \right)_{\rightsquigarrow t}$ for any $j \in [\InListSize]$.
\end{itemize}
Using $\InList$ we can now define $\InListSize$ assignments to $\OutLeftSide$. More specifically, for each $\CompProofListIndex \in [\InListSize]$, define the assignment $T^{(j)}_{\OutLeftSide}: \OutLeftSide \to \OutLeftAlphabet$ by
\[
\CompProofOutLeftProof{\CompProofListIndex}[\OutLeftVertex] = {\InListWord_{\CompProofListIndex}}, \quad \forall \OutLeftVertex \in \OutLeftSide.
\]

\paragraph{Outer List decodings for each $\CompProofOutLeftProof{\CompProofListIndex}$.}

Fix a $j \in [\InListSize]$ and consider one of the left outer assignments $\CompProofOutLeftProof{\CompProofListIndex}$. By the list-decoding condition of $\OutPCPDecoder$, there exists a list $\OutList{\CompProofListIndex} = \{\OutListWord{\CompProofListIndex}_1, \ldots, \OutListWord{\CompProofListIndex}_{\OutListSize}\} \subseteq  \CompLanguage$ such that for any assignment to the right side of the outer dPCP, $\OutRightProof:\OutRightSide\to\OutRightAlphabet$, the list-decoding error relative to $\OutList{j}$ is bounded by $\OutPCPError$. Specifically, we have that for each $\CompProofListIndex \in [\InListSize]$ and any right assignment $\OutRightProof$,
\begin{equation} \label{eq: outer list decoding condition other}   
\Pr_{\substack{\OutIndexInDecodingSet\in[n]\\ (\OutLeftVertex, \OutRightVertex)\sim \OutEdgeDistribution{\OutIndexInDecodingSet}}} \left[\OutEventConstraint  \land \OutEventDecoding\right] \leq \OutPCPError,
\end{equation}
where the events above are as follows:
\begin{itemize}
    \item $\OutEventConstraint$ (satisfied outer constraint): this is the event that $\OutConstraint{(\OutLeftVertex,\OutRightVertex)}\left(\CompProofOutLeftProof{\CompProofListIndex}[\OutLeftVertex], \OutRightProof[\OutRightVertex]\right) = 1$, meaning that the outer constraint is satisfied on $(\OutLeftVertex, \OutRightVertex)$.
   
    \item $\OutEventDecoding$ (the outer decoding fails): this is the event that $\OutDecoder{\OutIndexInDecodingSet}(\OutLeftVertex,\CompProofOutLeftProof{\CompProofListIndex}[\OutLeftVertex]) \notin \left\{\Restricted{\left(\OutListWord{\CompProofListIndex}_\ell\right)}{t}\right\}_{\ell\in[\OutListSize]}$, meaning that the decoding does not agree with any member of $\OutList{j}$.
\end{itemize}

\paragraph{Showing List-Decoding Soundness.}

We are now ready to show that the composed dPCP for $\Language$ satisfies $(\OutListSize \cdot \InListSize, \InPCPError + \InListSize\cdot \OutPCPError)$-list-decoding soundness. The list-decoding soundness will be with respect to the following list from $\Language$:
\[
\CompList := \bigcup_{\CompProofListIndex \in [\InListSize]} \OutList{\CompProofListIndex} = \{\OutListWord{\CompProofListIndex}_\ell\}_{\CompProofListIndex \in [\InListSize], \ell \in [\OutListSize]}. 
\]

Fix any right assignment $\CompRightProof: \CompRightSide \to \InRightAlphabet$. For each $\aaa\in\OutLeftSide$ define the following inner right assignment:
\[
\InRightProof[(\aaa; b)] :=  {\CompRightProof[\CompRightVertex]},
\quad \ \forall (\aaa;b) \in \InRightSide.
\]
Let us also define a candidate right assignments for the outer dPCP. For each $\CompLeftVertexOther \in \CompLeftSide$, recall that $\Restricted{\CompLeftProof[\CompLeftVertexOther]}{\OutRightVertex}$ gives an assignment to $\OutRightVertex\in\OutRightSide$. For each $\InSpecificRandomness \in [M]$, recalling the notation from~\eqref{eq:def_squiq_eq}, we define a table $\CompProofOutRightProof{\InSpecificRandomness}$ as
\[
\CompProofOutRightProof{\InSpecificRandomness}[\OutRightVertex] = \Restricted{\CompLeftProof[\CompLeftVertexOther]}{\rightsquigarrow \OutRightVertex}.
\]
Now, we define the following events:
\newcommand{\CompEventConstraint}{\Event_{\Comp,1}}
\newcommand{\CompEventDecoding}{\Event_{\Comp,2}}
\begin{itemize}
    \item $\CompEventConstraint$: this is the event that the constraint on $\CompLeftVertexOther$ and $\CompRightVertex$ is satisfied in the composed dPCP:
\[\CompConstraint{(\CompLeftVertexOther,\CompRightVertex)}\left(\CompLeftProof[\CompLeftVertexOther], \CompRightProof[\CompRightVertex]\right) = 1.
\]
Unpacking the definition of the constraint, this is also the event that the constraint on the edge $(\CompLeftVertexOther,\CompRightVertex)$ of the \emph{inner} dPCP is satisfied by the induced assignments, $\InLeftProof[(\aaa; \bb, i, r)]$ and $\InRightProof[(\aaa; b)]$.

\item $\CompEventDecoding$ (the composed decoding is outside the list): is the event that 
\[
\CompDecoder{t}{\CompLeftVertexOther}(\CompLeftVertexOther,\CompLeftProof[\CompLeftVertexOther]) \notin \left\{\Restricted{\left(\OutListWord{\CompProofListIndex'}_{\ell}\right)}{t}\right\}_{{\CompProofListIndex'\in[\InListSize],\ell \in [\OutListSize]}}
\]
meaning the output of the decoder for the decoding set $t$ is \emph{not} consistent with $w^{j}_\ell$ for any $j \in [\InListSize]$ or $\ell \in [\OutListSize]$.   
\end{itemize}
In the notation of the events above, showing list-decoding soundness of the composed dPCP amounts to showing the following inequality, where the term on the left hand side is precisely the list-decoding error.
\newcommand{\HeavyFraction}{\rho}

\begin{equation} \label{eq: composed list decoding eqn 1 other}
\Pr_{(t, (\bb, r), (\aaa; b)) \sim \mc{Q}_{\comp}} \left[\CompEventConstraint  \land\CompEventDecoding\right] \leq \InPCPError + \InListSize \cdot \OutPCPError.
\end{equation}

It will be convenient now to define a few events. Call the event on the left hand side of \eqref{eq: composed list decoding eqn 1 other} $\Event=\CompEventConstraint  \land\CompEventDecoding$ and define the following events over $t, (\bb, r), (\aaa; b)$ output by the distribution from \eqref{eq: composed list decoding eqn 1 other}:
\begin{itemize}
    \item $\Event_1$ is the event that there does not exist $j \in [\InListSize]$ such that $\CompLeftProof[(\bb, r)]_{\rightsquigarrow \bb} = \left(y^\aaa_j \right)_{\to \bb}$ and $\CompLeftProof[(\bb, r)]_{\rightsquigarrow t} = \left(y^\aaa_j \right)_{\rightsquigarrow t}$ for all $t$ such that $(\aaa, \bb) \in \supp(\mc{P}^{\out}_t)$. 
    
\item $\Event_2$ is the complementary event of the above. That is, that there does exist $j \in [\InListSize]$ such that $\CompLeftProof[(\bb, r)]_{\rightsquigarrow \bb} = \left(y^\aaa_j \right)_{\to \bb}$ and $\CompLeftProof[(\bb, r)]_{\rightsquigarrow t} = \left(y^\aaa_j \right)_{\rightsquigarrow t}$ for all $t$ such that $(\aaa, \bb) \in \supp(\mc{P}^{\out}_t)$.
\end{itemize}
It is clear that $\Event = \left(\Event \land \Event_1\right) \lor \left( \Event \land \Event_2\right)$ so the probability of interest is
\begin{equation}\label{eq: composed list decoding 1 other}
\begin{split}
\Pr_{(t, (\bb, r), (\aaa; b)) \sim \mc{Q}_{\comp}} [\Event] &= \Pr_{(t, (\bb, r), (\aaa; b)) \sim \mc{Q}_{\comp}} [\Event \land \Event_1] + \Pr_{(t, (\bb, r), (\aaa; b)) \sim \mc{Q}_{\comp}}[\Event \; \land \; \Event_2] \\
&\leq\Pr_{(t, (\bb, r), (\aaa; b)) \sim \mc{Q}_{\comp}} [\Event \land \Event_1] + \Pr_{(t, (\bb, r), (\aaa; b)) \sim \mc{Q}_{\comp}}[\Event_{\comp,2} \; \land \; \Event_2].
\end{split}
\end{equation}
We start by bounding $\Pr[\Event \land \Event_1]$. We think of $(t, (\bb, r), (\aaa; b)) \sim \mc{Q}_{\comp}$ as being sampled in the following manner. First, $\aaa \sim \mc{Q}_{\out}(\circ,\cdot,\circ)$ is sampled, then $(\bb,i) \in \Gamma_{\disc, \out}(\aaa), r\in[M]$ are chosen uniformly at random. Finally, $t$ is chosen according to the marginal $\mc{Q}_{\out}(\cdot, \aaa, \bb)$, and  $(t, (\bb, r), (\aaa; b))$ is output. Rewriting the distribution in this manner, one can check that we have
\begin{equation}\label{eq:composed_list_dec_1}
 \Pr_{(t, (\bb, r), (\aaa; b)) \sim \mc{Q}_{\comp}} [\Event \land \Event_1] \leq \E_{\aaa \sim \mc{Q}_{\out}(\circ,\cdot,\circ)}\left[\Pr_{(\bb,i) \in \Gamma_{\disc, \out}(\aaa), r \in [M]}[\mc{F}_{\aaa, \bb,i, r} \land \mc{E}_1] \right],
\end{equation}
where $\mc{F}_{\aaa, \bb,i, r}$ is the event that the assignments, $\InLeftProof[(\aaa; \bb, i, r)] := \CompLeftProof[(\bb,r)]\{\aaa, i\}$ and $ \InRightProof[(\aaa;b)] := \CompRightProof[(\aaa; b)]$, satisfy the constraint on $((\aaa; \bb, i, r), (\aaa;b))$ in $\InPCPDecoder$. We note that the event $\mc{E}_1$ contains the event that the inner decoding on $\InPCPDecoder$ for $(\bb,i)$ is not consistent with any member of $\InList$. That is, that for every $j \in [\InListSize]$, either,
\[
\left(\InDecoder{(\bb, i)}{}((\aaa;\bb, i, r), \InLeftProof[(\aaa; \bb, i, r)])\right)_{\bb} = \CompLeftProof[(\bb, r)]_{\rightsquigarrow \bb} \neq \left(y^\aaa_j\right)_{\to \bb}
\]
or for some $t \in \Gamma_{\dec, \out}(\aaa)$,
\[
\left(\InDecoder{(\bb, i)}{}((\aaa;\bb, i, r), \InLeftProof[(\aaa;\bb, i, r)])\right)_{t} = \CompLeftProof[(\bb, r)]_{\rightsquigarrow t} \neq \left(y^\aaa_j\right)_{\rightsquigarrow t}.
\]
In the above two equations we are using the hardcoding of the alphabet of $(\bb, r)$. Specifically, the equality in the first equation is by the ``hardcoding outer constraints'' and  the equality in the second equation is by the ``hardcoding decoding consistency''.

Hence for each $\aaa$, the inner probability on the right hand side of~\eqref{eq:composed_list_dec_1} is at most the list-decoding error of $\InPCPDecoder$ applied to left and right assignments $\InRightProof, \InLeftProof$. This probability is at most $\InPCPError$ by \cref{cl: translate inner soundness}, so 
\[
 \Pr_{(t, (\bb, r), (\aaa; b)) \sim \mc{Q}_{\comp}} [\Event \land \Event_1] \leq \InPCPError,
 \]
and going back to \eqref{eq: composed list decoding 1 other}, we get
\begin{equation} \label{eq: comp list decoding inter} 
\Pr_{(t, (\bb, r), (\aaa; b)) \sim \mc{Q}_{\comp}} [\Event]  \leq \InPCPError + \Pr_{(t, (\bb, r), (\aaa; b)) \sim \mc{Q}_{\comp}} [\Event_2 \; \land \; \CompEventDecoding].
\end{equation}

It remains to bound the probability on the right hand side of~\eqref{eq: comp list decoding inter}. To this end, for each $\CompProofListIndex \in [\InListSize], \InSpecificRandomness\in [M]$ let $\Event_{\CompProofListIndex,\InSpecificRandomness}$ be the event that $\CompProofOutLeftProof{\CompProofListIndex}[\OutLeftVertex]$ and $\CompProofOutRightProof{\InSpecificRandomness}[\OutRightVertex]$ satisfy the outer constraint, $\OutConstraint{(\OutLeftVertex,\OutRightVertex)}$, or equivalently, the event that $(\CompProofOutLeftProof{\CompProofListIndex}[\OutLeftVertex])_{\to \bb} = \Restricted{(\InListWord_{j})}{\to \bb}$. Also define the event $\mc{F}_{j}$ to be the event that $D^{\out}_t(\aaa, T^{(j)}_{\OutLeftSide}[\aaa]) \notin \{\left(y^{\aaa}_1\right)_{\rightsquigarrow t}, \ldots, \left(y^{\aaa}_{\InListSize}\right)_{\rightsquigarrow t} \}.$
The probability on the right hand side of ~\eqref{eq: comp list decoding inter} can be bounded as
\begin{equation} \label{eq: composed list decoding 2 other}
\begin{split}
\Pr_{(t, (\bb, r), (\aaa; b)) \sim \mc{Q}_{\comp}} [\Event_2 \; \land \; \CompEventDecoding]&
\leq\E_{\InSpecificRandomness \in [M]} \left[\Pr_{(t, (\bb, r), (\aaa; b)) \sim \mc{Q}_{\comp}}\left[\bigvee_{\CompProofListIndex \in [\InListSize]} \Event_{\CompProofListIndex,\InSpecificRandomness} \land \mc{F}_j ~\Bigg|~ r\right]\right]
\\ &\leq \sum_{\CompProofListIndex \in [\InListSize]}\E_{\substack{\InSpecificRandomness \in [M]}}\left[ \Pr_{\substack{ t \in [n]\\(\OutLeftVertex, \OutRightVertex)\sim \OutEdgeDistribution{\CompIndexInDecodingSet}}}\left[\Event_{\CompProofListIndex,\InSpecificRandomness} \land \mc{F}_j \right] \right].
\end{split}
\end{equation}
In the first transition, we use the observation that $r \in [M]$ is chosen uniformly and independently of $t, \bb, \aaa, b$ in the distribution of the probability on the left. In addition, we use the fact that if $\mc{E}_2$ occurs, it must be the case that for some $j \in [\InListSize]$, $T^{(j)}_{\OutLeftSide}[\aaa]$ is consistent with $\CompLeftProof[(\bb, r)]$ on $\bb$ and all $t$ such that $(\aaa, \bb) \in \supp(\mc{P}^{\out}_t)$. Hence, if $\mc{E}_2 \land \mc{E}_{\comp,2}$ occurs, then there must exist $j \in [\InListSize]$ such that  
\[
(\CompProofOutLeftProof{\CompProofListIndex}[\OutLeftVertex])_{\to \bb} = \Restricted{(\InListWord_{j})}{\to \OutRightVertex} \quad \text{and} \quad D^{\comp}_{t}((\bb, r), \CompLeftProof[(\bb,r)]) = D^{\out}_{t}(\aaa, T^{(j)}_{\out}[\aaa]) \notin \left\{\Restricted{\left(\OutListWord{\CompProofListIndex'}_\ell \right)}{t}\right\}_{{\CompProofListIndex'\in[\InListSize],\ell\in [\OutListSize]}}.
\]
This takes care of the first transition. The second transition is by a union bound. Now fix a $j \in [\InListSize]$ and notice that the probability, in the summand of the last line is precisely the list-decoding error of $T^{(j)}_{\OutLeftSide}, T^{(r)}_{\OutRightSide}$, and hence for any $j \in [\InListSize]$ and $r \in [M]$,
\[
\Pr_{\substack{ t \in [n]\\(\OutLeftVertex, \OutRightVertex)\sim \OutEdgeDistribution{\CompIndexInDecodingSet}}}\left[\Event_{\CompProofListIndex,\InSpecificRandomness} \land \mc{F}_j \right] \leq \delta_{\out}.
\]
Plugging this into \eqref{eq: composed list decoding 2 other} and going back to \eqref{eq: comp list decoding inter} we get that the list-decoding error is 
\[
\Pr_{(t, (\bb, r), (\aaa; b)) \sim \mc{Q}_{\comp}} [\Event]\le \InPCPError+\InListSize\cdot \OutPCPError.
\qedhere
\]
\end{proof}

\section{The Final dPCP} \label{sec: final dpcp}
We are now ready to construct our final dPCP, using the constructions and tools developed up to this point. For the remainder of the section, fix a size $N$ circuit $\varphi\colon\Sigma_0^n\to\{0,1\}$ and let $\mc{L} := \sat(\varphi)$.

\subsection{An Intermediate dPCP with Nearly Constant Alphabet Size}\label{subsec:comp-to-polynomial-alph}
Our starting point is the dPCP for $\mc{L}$ given by \cref{thm: final dpcp before comp} which we denote by $\PCPDecoder_1$. By composing it with the inner dPCP from \cref{thm: inner dpcp}, we get the following result.

\begin{theorem}[Intermediate dPCP] \label{thm: inter-pcp}
    For every $\eps > 0$ there exist $L = {\sf poly}(1/\eps)$ and
    $M = O_{|\Sigma_0|}(\poly_{\eps}(\log N))$, $d = O_{\eps}(1)$, $k = O_{|\Sigma_0|}(\poly_{\eps}(\log\log\log N))$ such that the following holds. The language $\mc{L}$ has a projection dPCP of the form
    \[
    \mc{D} = \left(A\cup B, E, \Sigma_A, \Sigma_B, \{\Phi_{e}\}_{e\in E}, \{D_t\}_{t \in [n]}, \{\mc{P}_t \}_{t\in[n]} \right)
    \]
    which satisfies the following
     \begin{itemize}
        \item \textbf{Length.} $O_{|\Sigma_0|}\left( N \cdot \poly_{\eps}(\log N) \right)$.
        \item \textbf{Alphabet Sizes.} The left and right alphabet sizes are $2^{O_{|\Sigma_0|}\left(\poly_{\eps}(\log\log\log N)\right)}$ and $O_{\eps}(1)$ respectively.
        \item \textbf{Decoding Degree.} $O_{\eps}(1)$.
        \item \textbf{Projection Decision Complexity.} $O_{|\Sigma_0|}\left(\poly_{\eps}(\log \log \log N) \right)$.
        \item \textbf{Decoding Complexity.} $O_{\eps, |\Sigma_0|}(1)$.
        \item \textbf{Complete Decoding Distribution.} Call the complete decoding distribution $\mc{Q}$. Then the following holds:
        \begin{itemize}
            \item For all $a \in A, b \in B$, we can write $\mc{Q}(\circ, a, b) = \frac{\wt(a,b)}{M}$ for some integer $\wt(a,b)$.
            \item For every $b \in B$, $\sum_{a \in A} \wt(a,b) = d$.
            \item For every $a \in A$, $\sum_{b \in B} \wt(a,b) = k$.
        \end{itemize}
        \item \textbf{Completeness.} The dPCP has perfect completeness.
        \item \textbf{Soundness.} The dPCP has $(\poly(1/\eps), \eps)$-list-decoding soundness.
    \end{itemize}
\end{theorem}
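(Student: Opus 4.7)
The strategy is a two-round application of the composition theorem (\cref{thm: alph reduction comp}), starting from $\PCPDecoder_1$ of \cref{thm: final dpcp before comp} as the initial outer dPCP and composing against the inner dPCP of \cref{thm: inner dpcp} each time. Since $\PCPDecoder_1$ already satisfies the outer hypotheses---its complete decoding distribution is $K_1$-discrete, right-uniform with $d_1 = O_\eps(1)$, and left-bounded with $k_1(N,|\Sigma_0|) = O_{|\Sigma_0|}(\poly_\eps(\log N))$, and its projection decision complexity is $O_{|\Sigma_0|}(\poly_\eps(\log N))$---the first composition can be invoked directly. It uses an inner dPCP at scale $N_1' = O(k_1 + \DC_1) = O_{|\Sigma_0|}(\poly_\eps(\log N))$ over a constant-size alphabet $T_1'$, producing a dPCP $\PCPDecoder_2$ whose left alphabet has size $2^{O_{|\Sigma_0|}(\poly_\eps(\log\log N))}$, length $\tilde{O}(N)$, decoding degree $O_\eps(1)$, and a complete decoding distribution that is discrete and \emph{left}-uniform.

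To apply the composition theorem a second time with $\PCPDecoder_2$ as outer, I must convert its left-uniform decoding distribution into a right-uniform one. I do this by swapping the two sides of $\PCPDecoder_2$ and re-orienting the projection constraints (in the same spirit as the side-flip used in the decoding-degree reduction of \cref{sec:dec_deg_red}). If necessary, I then apply right-alphabet reduction (\cref{lm: right alphabet reduction}) and generalized right-degree reduction (\cref{lm: gen right degree reduction}) to shape the distribution into the form required by \cref{thm: alph reduction comp}: discrete with a suitable denominator, right-uniform with $d_1 = O_\eps(1)$, and left-bounded with $k_1 = O_{|\Sigma_0|}(\poly_\eps(\log\log N))$. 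Call the result $\PCPDecoder_2'$. These transformations preserve the left alphabet size up to constants and degrade the list-decoding soundness by at most $\poly(\eps)$, so by setting the initial soundness parameters sufficiently small in $\eps$ the error budget is easily maintained.

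The second composition then takes $\PCPDecoder_2'$ as outer and an inner dPCP at scale $N_2' = O_{|\Sigma_0|}(\poly_\eps(\log\log N))$, yielding a dPCP $\PCPDecoder_3$ with left alphabet $2^{O_{|\Sigma_0|}(\poly_\eps(\log\log\log N))}$, right alphabet $O_\eps(1)$, length $\tilde{O}(N)$, decoding degree $O_\eps(1)$, projection decision complexity $O_{|\Sigma_0|}(\poly_\eps(\log\log\log N))$, and a discrete left-uniform complete decoding distribution. To achieve the required $O_{\eps,|\Sigma_0|}(1)$ decoding complexity, I augment each left alphabet symbol with an explicit $\Sigma_0$-value for every index in its (constant-size) decoding neighborhood, hardcoded in the alphabet to equal what the inner decoder would output. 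This inflates the alphabet by a factor of $|\Sigma_0|^{O_\eps(1)}$ (absorbed into the big-$O$), adds only $O_{\eps,|\Sigma_0|}(1)$ to the decision complexity, and makes the decoder a pure lookup. A final application of \cref{lm: gen right degree reduction} casts the complete decoding distribution into the claimed form, with denominator $M = O_{|\Sigma_0|}(\poly_\eps(\log N))$, $\sum_a \wt(a,b) = d$, and $\sum_b \wt(a,b) = k = O_{|\Sigma_0|}(\poly_\eps(\log\log\log N))$.

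The main obstacle is the bookkeeping of the complete decoding distribution across the side-swap and reduction steps between the two compositions: we must verify that the composite transformation yields a distribution satisfying all three requirements (discreteness, right-uniformity, and the $k_1$ left-bound) needed to re-invoke \cref{thm: alph reduction comp}, without damaging the left alphabet size or the decoding/decision complexity bounds. The soundness cascade is routine---each composition contributes an additive $\delta_{\mathsf{in}} + L_{\mathsf{in}}\cdot \delta_{\mathsf{out}}$ to the list-decoding error and multiplies the list size by $L_{\mathsf{in}} = \poly(1/\eps)$---so initializing with outer soundness $\eps^{O(1)}$ and inner soundness a small polynomial in $\eps$ at each step yields a final list-decoding error of $\eps$ with list size $\poly(1/\eps)$.
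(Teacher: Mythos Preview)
Your overall architecture---two rounds of composition via \cref{thm: alph reduction comp}, each time using \cref{thm: inner dpcp} as the inner dPCP and starting from $\PCPDecoder_1$ of \cref{thm: final dpcp before comp}---matches the paper exactly, and your parameter tracking is on target.

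The divergence is in the intermediate step between compositions. After the first composition, the complete decoding distribution is left-uniform but not right-uniform, and you propose to fix this by a side-swap in the style of \cref{sec:dec_deg_red}, followed by \cref{lm: right alphabet reduction} and \cref{lm: gen right degree reduction}. This is both more complicated than necessary and has gaps: the side-flip of \cref{lm: decoding degree reduction} requires an agnostic complete decoding distribution, which the output of \cref{thm: alph reduction comp} is not shown to have; and \cref{lm: gen right degree reduction} requires decoding degree~$1$, whereas the composition output has decoding degree $O_\eps(d_1)$. The paper instead applies plain right degree reduction (\cref{lm: right degree reduction}) directly: since the marginal over edges is already discrete by the discreteness clause of \cref{thm: alph reduction comp}, one simply scales the weights by $d'$ so each edge weight is at least $d'$, and then invokes \cref{lm: right degree reduction}. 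By that lemma the right marginal becomes uniform while the left marginal $\mc{Q}(\circ,a,\circ)$ is preserved exactly, so left-uniformity persists and you now have both uniformities simultaneously. No side-swap, no agnosticism, no decoding-degree-$1$ hypothesis is needed.

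Your augmentation trick to pull decoding complexity down to $O_{\eps,|\Sigma_0|}(1)$ is a nice touch; the paper's table for $\PCPDecoder_4$ in fact records $O_{|\Sigma_0|}(\poly_\eps(\log\log\log N))$ for this parameter, so you are being more careful here than the paper's proof.
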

\begin{proof}
 Let us start by recalling the parameters of our Outer dPCP and inner dPCP in \cref{tab:outer-dpcp} and \cref{tab:inner-dpcp}. Ultimately we will compose two times, where both times the inner dPCP is taken from \cref{tab:inner-dpcp} (although instantiated on different languages and hence different language-related parameters $N', n', T'$). Throughout all tables in this section, we abbreviate complete decoding distribution as CDD. Also, once we specify that a complete decoding distribution is $M$-discrete and can be written as $\mc{Q}(\circ, a, b) = \wt(a,b) / M$, we refer to the left weighted degrees as the sum $\sum_{b \in B} \wt(a,b)$ for each left vertex $a$, and the right weighted degree as the sum $\sum_{a \in A} \wt(a,b)$ for each right vertex $b$. Viewing the bipartite constraint graph as having $\wt(a,b)$ edges between $a$ and $b$ for each pair of vertices $a$ and $b$, these quantities indeed correspond to degrees in this graph. Finally, all of the dPCPs in this section have perfect completeness, and all of the transformations we use preserve perfect completeness, so we do not include perfect completeness in any of the tables.

 Henceforth, fix the circuit $\varphi: \Sigma_0^n  \to \{0,1\}$ from the theorem statement. We will construct a dPCP for $\sat(\varphi)$.
\renewcommand{\arraystretch}{1.4}
\setlength{\tabcolsep}{8pt}

\begin{table}[H]
\centering
\begin{tabular}{@{} l l @{}}
\toprule
\textbf{Parameter} & \textbf{Outer dPCP: $\mc{D}_1$} \\
\midrule
Length & $O_{|\Sigma_0|}( N \cdot \poly_\eps(\log N))$ \\
Left alphabet size & $2^{O_{|\Sigma_0|}(\poly_\eps(\log N))}$ \\
Right alphabet size & $O_\eps(1)$ \\
Decoding Degree & $O_\eps(1)$ \\
Projection Decision Complexity & $O_{|\Sigma_0|}(\poly_{\eps}(\log N))$ \\
Decoding Complexity & $O_{|\Sigma_0|}(\poly_{\eps} (\log N))$ \\
List-Decoding Soundness & $(\poly(1/\eps), \eps)$ \\
CDD Marginal over Edges & 
  $O_{|\Sigma_0|}(N\poly_{\eps} (\log N))$-discrete \\
Left Weighted Degree & all at most $ O_{|\Sigma_0|}(\poly_{\eps} (\log N))$ \\
Right Weighted Degree & all equal and $O_{\eps}(1)$ \\
\bottomrule
\end{tabular}
\caption{The parameters of the dPCP from \cref{thm: final dpcp before comp} applied to the language $\sat(\varphi)$ 
with list-decoding soundness error $\eps$.}
\label{tab:outer-dpcp}
\end{table}

\begin{table}[H]
\centering
\begin{tabular}{@{} l l @{}}
\toprule
\textbf{Parameter} & \textbf{Inner dPCP: $\PCPDecoder_2$} \\
\midrule
Length & $O_{T'}( N'^2 \cdot \poly_\eps(\log N'))$ \\
Left Vertices & $[n'] \cdot [M]$ for $M = O_{T'}(N'\poly_{\eps}(\log N'))$ \\
Left alphabet size & $2^{O_{T'}(\poly_\eps(\log N'))}$ \\
Right alphabet size & $O_\eps(1)$ \\
Decoding Degree & $O_\eps(1)$ \\
Projection Decision Complexity & $O_{T'}(\poly_{\eps}(\log N'))$ \\
Decoding Complexity & $O_{T'}(\poly_{\eps}(\log N'))$ \\
List-Decoding Soundness & $(\poly(1/\eps), \eps)$ \\
CDD Marginal over Edges & $n' \cdot M \cdot O_{T'}(\poly_{\eps} (\log N'))$-discrete\\
&(so that $K(T', N')$ from~\cref{thm: inner dpcp} is $O_{T'}(\poly_{\eps}(\log N'))$) \\
Left Weighted Degree & all equal and $O_{T'}(\poly_{\eps}(\log N'))$\\
\bottomrule
\end{tabular}
\caption{The parameters of the dPCP from \cref{thm: inner dpcp} applied to the language $\sat(\varphi')$ for some circuit $\varphi: (\Sigma'_0)^{n'} \to \{0,1\}$ of size at most $N'$ over an alphabet of size $|\Sigma'_0| \le T'$ with list-decoding soundness error $\eps$.}
\label{tab:inner-dpcp}
\end{table}

\paragraph{First Composition.} We apply the composition of \cref{thm: alph reduction comp} with outer dPCP $\mc{D}_1$ and with inner dPCPs given \cref{thm: inner dpcp}. Referring to the parameters from \cref{tab:outer-dpcp} and \cref{tab:inner-dpcp}, one can check that we get a dPCP, called $\mc{D}_3$, with parameters as described in \cref{tab:composed-dpcp-parameters-1}. In particular, the parameters $N', T'$ arising in the composed dPCP guarantee of \cref{thm: alph reduction comp} are $O_{|\Sigma_0|}(\poly_{\eps}(\log N))$ and $O_{\eps, |\Sigma_0|}(1)$ respectively.

\begin{table}[h!]
\centering
\renewcommand{\arraystretch}{1.4}
\setlength{\tabcolsep}{8pt}
\begin{tabular}{@{} l l @{}}
\toprule
\textbf{Parameter} & \textbf{Composed dPCP: $\PCPDecoder_3$} \\
\midrule
Length &
$O_{|\Sigma_0|}\left( N \cdot \poly_{\eps}(\log N) \right)$ \\
Left alphabet size &
$2^{O_{|\Sigma_0|}\left(\poly_{\eps}(\log\log N)\right)}$ \\

Right alphabet size &
$O_{\eps}(1)$ \\

Decoding Degree& $O_{\eps}(1)$\\

Projection Decision Complexity &
 $O_{|\Sigma_0|}\left(\poly_{\eps}(\log\log N)\right)$ \\

Decoding Complexity & $O_{|\Sigma_0|}\left(\poly_{\eps}(\log\log N)\right)$\\ 
List-Decoding Soundness & $(\poly(1/\eps), \eps)$\\
CDD Marginal over Edges & $O_{|\Sigma_0|}(N \cdot \poly_{\eps}(\log N))$-discrete\\
Left Weighted Degree & all equal and $O_{|\Sigma_0|}(\poly_{\eps}(\log \log N))$\\
\bottomrule
\end{tabular}
\caption{Composing $\PCPDecoder_1$ and $\PCPDecoder_2$. In $\mc{D}_2$, the parameters $N'$ and $T'$ are  $O_{|\Sigma_0|}(\poly_{\eps}(\log N))$ and $O_{\eps, |\Sigma_0|}(1)$ respectively.}
\label{tab:composed-dpcp-parameters-1}
\end{table}
Let $A_3 \cup B_3$ be the sides of the constraint graph, and call $\mc{Q}_3$ the complete decoding  distribution of $\mc{D}_3$. By the discreteness of the marginal of $\mc{Q}_3$ over the edges, we can thus write, for every $(a,b) \in A_3 \times B_3$, $\mc{Q}_3(\circ, a, b) = \frac{\wt(a,b)}{J}$ for some  $\wt(a,b) \in \mathbb{N}$, where $J = O_{|\Sigma_0|}(N \cdot \poly_{\eps}(\log N))$.

The alphabet size is now significantly smaller but is still super-constant. We would like to compose again to further reduce the alphabet size, but we note that now $\mc{D}_3$ does not satisfy all of the requirements of the outer dPCP in \cref{thm: alph reduction comp}. In particular, \cref{thm: alph reduction comp} requires the outer dPCP's complete decoding distribution to have uniform marginal over the right vertices --- a property that we do not have in \cref{tab:composed-dpcp-parameters-1}.

To fix this, we set $d' = O(1/\eps^{100})$, multiply all of the weights by $d'$, and then apply \cref{lm: right degree reduction} with degree $d'$. We let the resulting dPCP be $\mc{D}'_3$, and let its constraint graph have sides $A_3 \cup B'_3$ with complete decoding distribution $\mc{Q}'_{3}$. It is straightforward to check that $\mc{D}'_3$ has all of the same parameters as $\mc{D}_3$ in \cref{tab:composed-dpcp-parameters-1}, except for the following changes
\begin{itemize}
    \item The length becomes $d' J$, where $J = O_{|\Sigma_0|}(N\poly_{\eps}(\log  N))$ is the discreteness parameter of the complete decoding distribution's marginal over the edges.
    \item The discreteness parameter of the CDD Marginal over edges is multiplied by $d'$ and is still $O_{|\Sigma_0|}(N \cdot \poly_{\eps}(\log N))$.
    \item The right weighted degrees are all $d'= O_{\eps}(1)$.
    \item The left weighted degrees are all multiplied by $d'$ and are still all equal to some $O_{|\Sigma_0|}( \poly_{\eps}(\log \log N))$.
\end{itemize}

Now we can compose $\mc{D}'_3$ with the inner dPCP from \cref{thm: inner dpcp} one more time using \cref{thm: alph reduction comp} and again apply  \cref{lm: right degree reduction} with degree $d'$ on the resulting dPCP. This time, the inner dPCPs are applied with parameters $N'$ and $T'$ that are $O_{|\Sigma_0|}(\poly_{\eps}(\log \log N))$ and $O_{\eps, |\Sigma_0|}(1)$. Altogether, we obtain the dPCP, $\mc{D}_4$, whose parameters are listed in \cref{tab:composed-dpcp-parameters-2}. We note that, similarly to in $\mc{D}'_3$, all of the parameters except for weighted right degrees come from \cref{thm: alph reduction comp}. The weighted right degrees from \cref{tab:composed-dpcp-parameters-2} are obtained by applying \cref{lm: right degree reduction} which changes the length to $O_{\eps}(1)$ times the discreteness parameter of the complete decoding distribution's marginal over the edges, and increases all of the other parameters by an $O_{\eps}(1)$ factor which is absorbed. 
\end{proof}

\begin{table}[h!]
\centering
\renewcommand{\arraystretch}{1.4}
\setlength{\tabcolsep}{8pt}
\begin{tabular}{@{} l l @{}}
\toprule
\textbf{Parameter} & \textbf{Composed dPCP: $\PCPDecoder_4$} \\
\midrule

Length &
$O_{|\Sigma_0|}\left( N \cdot \poly_{\eps}(\log N) \right)$ \\
Left alphabet size &
$2^{O_{|\Sigma_0|}\left(\poly_{\eps}(\log \log\log N)\right)}$ \\

Right alphabet size &
$O_{\eps}(1)$ \\

Decoding Degree& $O_{\eps}(1)$ \\

Projection Decision complexity &
 $O_{|\Sigma_0|}(\poly_{\eps}(\log\log \log N))$ \\
 
Decoding Complexity &   $O_{|\Sigma_0|}(\poly_{\eps}(\log\log \log N))$ \\ 

List-Decoding Soundness & $(\poly(1/\eps), \eps)$\\
CDD Marginal over Edges & $O_{|\Sigma_0|}(N \cdot \poly_{\eps}(\log N))$-discrete\\
Left Weighted Degree & all equal and $O_{|\Sigma_0|}(\poly_{\eps}(\log \log \log N))$\\
Right Weighted Degree & all equal and $O_{\eps}(1)$\\
\bottomrule
\end{tabular}
\caption{Composing $\PCPDecoder'_3$ and $\mc{D}_2$. In $\mc{D}_2$, the parameters $N'$ and $T'$ are $O_{|\Sigma_0|}(\poly_{\eps}(\log \log N))$ and $O_{\eps, |\Sigma_0|}(1)$ respectively.}
\label{tab:composed-dpcp-parameters-2}
\end{table}
\subsection{The Final dPCP with Constant Alphabet Size}

Now we are finally ready to prove our main dPCP theorem, \cref{thm:dPCP main}, which we restate below for convenience.

\begin{theorem}\label{thm: main dpcp restate}
    For all $\eps>0$ there exists $C \in \mathbb{N}$ and $L = \poly(1/\eps)$ such that the following holds for all size $N$ circuits $\varphi\colon \Sigma_0^n\to\{0,1\}$ over some alphabet $\Sigma_0$. The language ${\sf SAT}(\varphi)$ has a $2$-query, poly-time constructible dPCP with the following properties
    \begin{itemize}
        \item \textbf{Length.} $O_{|\Sigma_0|}\left(N\log^C N\right)$,
        \item \textbf{Alphabet Size.} $O_{\eps, |\Sigma_0|}(1)$,
        \item \textbf{Complete Decoding Distribution.} The marginal of the complete decoding distribution over the right vertices is uniform and the marginal of the complete decoding distribution over the left vertices is uniform.
        \item \textbf{Completeness.} The dPCP has perfect completeness,
        \item \textbf{Soundness.}  $(L,2\eps^{1/5})$-list-decoding soundness.
    \end{itemize}
\end{theorem}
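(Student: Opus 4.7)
The plan is to apply one final composition step to the intermediate dPCP produced in \cref{thm: inter-pcp}, this time using as the inner dPCP a Hadamard-code based construction that has constant alphabet size. Since the intermediate dPCP has projection decision complexity only $\mathsf{poly}_\eps(\log\log\log N)$, the circuits $\varphi_{\aaa}$ whose alphabets we replace in the composition have size $N' = \mathsf{poly}_\eps(\log\log\log N)$. Therefore, even an inner dPCP of length exponential in $N'$ has length $2^{\mathsf{poly}_\eps(\log\log\log N)} = o(\log N)$, which is absorbed into the $\mathsf{polylog}(N)$ factor of the outer length. In this sub-exponential inner regime, standard Hadamard/long-code based constructions give a constant-alphabet dPCP with perfect completeness and $(\mathsf{poly}(1/\eps), \eps)$-list-decoding soundness.

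To feed such a dPCP into \cref{thm: alph reduction comp}, I first need to massage it to meet the inner-dPCP hypotheses: constant right alphabet, decoding degree $1$, discreteness of the complete decoding distribution with the prescribed denominator structure, and the left-uniform form $[n']\times[M]$. This is accomplished by applying in sequence the exact same bookkeeping transformations used to produce the inner dPCP in \cref{thm: inner dpcp}: right-alphabet reduction (\cref{lm: right alphabet reduction}), decoding-degree reduction (\cref{lm: decoding degree reduction}) to push the decoding degree down to $1$, the left-uniformization lemma (\cref{lm: left uniformity}), and finally the discretization step from \cref{lm: discretize} to adjust the marginals. The soundness loss from $\eps$ to $2\eps^{1/5}$ in the final theorem statement comes precisely from the application of \cref{lm: decoding degree reduction} in this preparation step, and the parameters $L$ and alphabet size degrade only by polynomial factors in $1/\eps$ and by constant factors, respectively.

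Once the Hadamard-based inner dPCP has been prepared, I apply \cref{thm: alph reduction comp} with the intermediate dPCP of \cref{thm: inter-pcp} as the outer. By the parameter guarantees of \cref{thm: alph reduction comp}, the composed dPCP has length $O_{|\Sigma_0|}(N\log^C N)$, constant alphabet (inherited from the inner dPCP via $F_2(N',T')^{d_1} = O_{\eps,|\Sigma_0|}(1)$), perfect completeness, and soundness $O(\eps)$ (which we rescale to $2\eps^{1/5}$ after the decoding-degree reduction loss is accounted for). Crucially, \cref{thm: alph reduction comp} guarantees that the marginal of the composed complete decoding distribution over the left vertices is already uniform, so one of the two uniformity requirements of \cref{thm: main dpcp restate} comes for free. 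To achieve uniform marginal over the right vertices, I apply the right-degree reduction of \cref{lm: right degree reduction} with a sufficiently large constant degree $d''$; by the guarantees of that lemma this preserves constant alphabet, $O_{|\Sigma_0|}(N\log^C N)$ length, and left-uniformity, while forcing the right-marginal to be uniform and worsening the soundness by only a negligible additive $O(d''^{-1/2})$ which we absorb.

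The main obstacle, and the place where the most care is needed, is the construction of the Hadamard-type inner dPCP that simultaneously satisfies the discreteness and left-uniformity requirements of \cref{thm: alph reduction comp}. Standard presentations of Hadamard-based decodable PCPs do not track these properties, and na\"ively running the transformations of Section~5 can disturb the constant-alphabet property or the small-soundness guarantee. Verifying that the transformations compose in the correct order, and that the denominators line up so that the composition theorem's precondition ``$\mc{Q}_{\inner}(\circ,a,b) = \frac{w'(a,b)}{n'\cdot M\cdot K_2}$ with $\sum_b w'(a,b) = K_2$'' is satisfied, is technical but routine given the tools developed in \cref{sec:dPCP_transfor,sec:dec_deg_red}.
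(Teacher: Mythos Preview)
Your high-level plan is exactly what the paper does: apply \cref{thm: alph reduction comp} with the intermediate dPCP of \cref{thm: inter-pcp} as the outer and a Hadamard-based dPCP as the inner, then apply \cref{lm: right degree reduction} to obtain uniformity of the right marginal (left-uniformity already comes from \cref{thm: alph reduction comp}).

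The one point of divergence is how the Hadamard inner is prepared. You propose to start from a generic Hadamard dPCP and run the Section~6--7 pipeline (\cref{lm: right alphabet reduction}, \cref{lm: decoding degree reduction}, \cref{lm: left uniformity}, \cref{lm: discretize}) to obtain the $[n']\times[M]$ left-vertex structure and the discreteness conditions. But \cref{lm: decoding degree reduction} (and hence \cref{lm: left uniformity}) requires the complete decoding distribution to be \emph{agnostic}, and you do not verify that a Hadamard-based dPCP has this property; this is not obviously ``routine.'' The paper sidesteps the issue entirely: it packages the Hadamard construction as a \emph{robust} dPCP (\cref{thm: hadamard-pcp}, proved in the appendix via the robust-to-dPCP transformation of \cref{lemma:old-new-dPCP-transformation}), which directly yields left vertices of the form $[n']\times[2^r]$ with uniform left marginal and the required discreteness, without ever invoking agnosticity. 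The $2\eps^{1/5}$ loss in the paper comes from the right-to-left soundness conversion of \cref{lm: right to left} inside that transformation, not from \cref{lm: decoding degree reduction} as you claim.
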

\cref{thm: main dpcp restate} is obtained from~\cref{thm: inter-pcp} by performing yet 
another composition step. This time, though, 
we will compose with a Hadamard-code based 
dPCP construction, which has a constant size 
alphabet (but an exponential length). Starting from $\mc{D}_4$, however, this exponential size dPCP is applied to a circuit of size $\poly(\log \log \log N)$, hence the overall length increase is only a factor of $\polylogn$, which is affordable to us.
Below we state the properties of the Hadamard dPCPs, which can be found in slightly different forms in \cite{dh, bmv}.

\begin{theorem}[Hadamard dPCP] \label{thm: hadamard-pcp}
    For every $\eps > 0$ there exist $L = {\sf poly}(1/\eps)$ and functions $M, K: \mathbb{N} \times \mathbb{N} \to \mathbb{N}$ satisfying
    \[
    M(T, N) = 2^{O_{\eps,T}(N^2)} \quad \text{and} \quad K(T, N) = O_{\eps}(1)
    \]
    such that the following holds. For every alphabet $\Sigma_0$ of size at most $T$ and circuit $\varphi: \Sigma_0^n \to \{0,1\}$ of size at most $N$, the language $\sat(\varphi)$ has a projection dPCP of the form
    \[
    \mc{D} = \left(\left([n] \times [M]\right)\cup B, E, \Sigma_A, \Sigma_B, \{\Phi_{e}\}_{e\in E}, \{D_t\}_{t \in [n]}, \{\mc{P}_t \}_{t\in[n]} \right)
    \]
    which satisfies the following with $M:= M(T, N)$ and $K := K(T,N)$.
     \begin{itemize}
        \item \textbf{Length.} $O(nM)$.
        \item \textbf{Alphabet Sizes.} The left  alphabet size is $O_{\eps, T}(1)$ and the right alphabet size is $O_{\eps}(1)$. 
        \item \textbf{Degrees.} $\mc{D}$ has decoding degree $1$.
        \item \textbf{Projection Decision Complexity.} $O_{\eps, T}(1)$.
        \item \textbf{Decoding Complexity.} $O_{\eps, T}(1)$.
        \item \textbf{Decoding Distribution.} Let $\mc{Q}$ be the complete decoding distribution. Then, $\mc{Q}$ satisfies the following:
        \begin{itemize}
            \item For each $(t,r) \in [n] \times [M]$ and $b \in B$ we can write $\mc{Q}(\circ, (t,r), b) = \frac{\wt((t,r),b)}{n M K}$ for some  $\wt((t,r),b) \in \mathbb{N}$. 
            \item For each $t \in [n]$, the distribution $\mc{Q}(t, \cdot, \circ)$ is uniform over $\{t\} \times [M]$ and hence for every $(t,r) \in [n] \times [M]$, 
            \[
            \sum_{b \in B} w((t,r),b) = K.
            \]
        \end{itemize}
    
        \item \textbf{Completeness.} The dPCP has perfect completeness.
        \item \textbf{Soundness.} The dPCP has $(L, 2\eps^{1/5})$-list-decoding soundness.
    \end{itemize}
\end{theorem}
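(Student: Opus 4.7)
The proof would follow the same three-step template as that of \cref{thm: inner dpcp}, replacing the HDX-based starting PCP with a Hadamard-code-based one. First, I would arithmetize $\varphi$ to a system of quadratic equations over $\mathbb{F}_2$ with $m = O_T(N)$ variables and $O_T(N)$ equations. The Hadamard PCP then has as its proof the linear encoding $f_x(r) = \langle x, r\rangle$ for $r \in \mathbb{F}_2^m$ together with the quadratic encoding $g_x(R) = x^T R x$ for $R \in \mathbb{F}_2^{m \times m}$, where $x$ is an alleged satisfying assignment. Using the usual linearity tests, a consistency test between $f$ and $g$, and an evaluation test ensuring $g$ satisfies the quadratic equations encoding $\varphi$, then applying a standard bundling step to enforce the $2$-query projection structure, one obtains a projection dPCP of length $2^{O_{\eps,T}(N^2)}$, constant alphabets on both sides, and $(\poly(1/\eps), \eps)$-list-decoding soundness for the desired soundness error. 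The decoders $D_t$ simply read off the $t$-th coordinate of the alleged assignment from the linear Hadamard part of the proof, and the list-decoding analysis is via standard Fourier-analytic arguments essentially identical to those in \cite{dh}.

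Second, I would apply the same suite of transformations used in the proof of \cref{thm: inner dpcp} to shape the dPCP into the form required by the statement. Namely, I would use \cref{lm: decoding degree reduction} with all $C_{t,b}$ set to $1$ to collapse the decoding degree to $1$, then use \cref{lm: left uniformity} to put the left side in the form $[n] \times [M]$ with uniform marginal (choosing the free parameter large enough to dominate the number of right vertices of the underlying Hadamard dPCP, which is of order $2^{O_{\eps,T}(N^2)}$), and then use \cref{lm: right alphabet reduction} to reduce the right alphabet to $O_{\eps}(1)$. Finally, to enforce the discreteness and left-uniform structure of the complete decoding distribution, I would apply \cref{lm: discretize} to each $\mc{Q}(t, \cdot, \cdot)$ with a constant denominator $K = O_{\eps}(1)$; this is affordable because each $\mc{Q}(t,\cdot,\cdot)$ has support of size $O_{\eps, T}(1)$ after the preceding reductions, since the left and right degrees of the Hadamard dPCP are $O_{\eps, T}(1)$.

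The main obstacle will be bookkeeping the quantitative bounds through the chain of transformations. In particular, the decoding-degree reduction step multiplies the left alphabet by a factor exponential in the right degree of the preceding dPCP, so I have to ensure that the base Hadamard dPCP has right degree $O_{\eps, T}(1)$ after bundling; this holds since every test reads a constant number of bits from the tables. Similarly, since each transformation may inflate the left side, the parameter $M$ in \cref{lm: left uniformity} must be chosen to absorb these blow-ups while staying within $2^{O_{\eps,T}(N^2)}$. The bounds $O_{\eps,T}(1)$ on projection decision complexity and decoding complexity likewise follow because all relevant quantities (decoding degree, right degree, alphabet sizes at every intermediate stage) are bounded by constants depending only on $\eps$ and $T$, with no dependence on $N$ beyond what is already absorbed into the exponential length.
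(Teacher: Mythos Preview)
Your approach is workable in spirit but takes a more circuitous route than the paper, and there is one step you should be careful with.

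The paper does not rebuild the Hadamard construction and then push it through the Section~7--8 machinery. Instead, it imports the \emph{robust} Hadamard dPCP of \cite{bmv} (stated as \cref{lemma:bmv hadamard}) as a black box and applies a single, direct transformation (\cref{lemma:old-new-dPCP-transformation}) from robust dPCPs to the paper's dPCP format. That transformation simply takes each pair $(t,\ell)\in [n]\times [2^r]$ of an index and a random string as a left vertex, and each proof location as a right vertex; the constraint on $((t,\ell),b)$ is ``the label of $(t,\ell)$ agrees with the label of $b$ on the coordinate for $b$''. This immediately yields the left side $[n]\times[M]$ with $M=2^r=2^{O_{\eps,T}(N^2)}$, decoding degree $1$, and the uniform-over-$\{t\}\times[M]$ property for $\mc{Q}(t,\cdot,\circ)$, with $K$ equal to the query complexity $d_1$. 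Robust soundness translates to right-assignment list-decoding soundness, which is then converted to the paper's (left-assignment) list-decoding soundness via \cref{lm: right to left}; this is where the $2\eps^{1/5}$ and the extra $\eps^{-1/5}$ in the list size come from. In short, the paper obtains every structural requirement in one step rather than by composing decoding-degree reduction, left-uniformity, right-alphabet reduction, and discretization.

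The gap in your plan is that both \cref{lm: decoding degree reduction} and \cref{lm: left uniformity} require the input dPCP to have an \emph{agnostic} complete decoding distribution, and you never verify that your bundled Hadamard projection dPCP has this property. For the HDX-based dPCP of \cref{thm: dp dpcp} agnosticism is argued explicitly; for a hand-built Hadamard dPCP it is not automatic and depends on how you set up the decoding distributions after bundling. The paper's route via robust dPCPs sidesteps this entirely: the robust-to-dPCP conversion gives the $[n]\times[M]$ structure and decoding degree $1$ directly, with no need for agnosticism. Your approach can likely be repaired by arranging the Hadamard tests so that the conditional distribution of the left query given the right query is independent of $t$, but the paper's path is both shorter and avoids this verification.
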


The proof appears in \cref{apx:Hadamard}. On a high-level, the construction is identical to the one in~\cite[Theorem 7.7]{bmv}, with minor additional considerations in the analysis.

We can now complete the proof of \cref{thm: main dpcp restate} by composing $\mc{D}_4$ with the Hadamard dPCP from \cref{thm: hadamard-pcp} which we denote by $\mc{D}_{\mathsf{Had}}$.

\vspace{0.2cm}

\begin{proof}[Proof of \cref{thm: main dpcp restate}]
    We use \cref{thm: alph reduction comp} with $\PCPDecoder_4$ as the outer dPCP, and the Hadamard based dPCP $\PCPDecoder_{\sf Had}$ from~\cref{thm: hadamard-pcp} as the inner dPCP. When applying $\PCPDecoder_{\sf Had}$, the parameters $N'$ and $T'$ are $\poly_{\eps}(\log \log \log N)$ and $O_{\eps, |\Sigma_0|}(1)$ respectively. We also have to apply \cref{lm: right degree reduction} again on the resulting dPCP to obtain regularity for the right weighted degrees (and hence uniformity of the marginal of the complete decoding distribution over the right vertices).  We call the resulting dPCP $\mc{D}_5$ and list its parameters in \cref{tab:composed-dpcp-parameters3}. Here, we remark that it is important that the dPCP given by \cref{thm: alph reduction comp} has complete decoding distribution whose marginal over the edges is $O_{|\Sigma_0|}\left(N\cdot \poly_{\eps} (\log N)\right)$-discrete, so the length blowup from applying \cref{lm: right degree reduction} is controlled. Similar to in the previous two compositions, all of the parameters in \cref{tab:composed-dpcp-parameters3} are given by \cref{thm: alph reduction comp}, except for the weighted right degrees. These are obtained by applying \cref{lm: right degree reduction} which changes the length to $O_{\eps}(1)$ times the discreteness parameter of the complete decoding distribution's marginal over the edges, and increases all of the other parameters by an $O_{\eps}(1)$ factor which is absorbed. 
\end{proof}

\begin{table}[h!]
\centering
\renewcommand{\arraystretch}{1.4}
\setlength{\tabcolsep}{8pt}
\begin{tabular}{@{} l l @{}}
\toprule
\textbf{Parameter} & \textbf{Composed dPCP: $\PCPDecoder_5$} \\
\midrule

Length &
$
O_{|\Sigma_0|}\left(N\cdot \poly_{\eps} (\log N)\right) $ \\

Left alphabet size &
$O_{\eps, |\Sigma_0|}(1)$ \\

Right alphabet size &
$O_{\eps}(1)$ \\

List-decoding soundness &
$(\poly(1/\eps), \eps)$ \\

CDD Marginal over Edges & $O_{|\Sigma_0|}\left(N\cdot \poly_{\eps} (\log N)\right)$-discrete \\

Left Weighted Degree & all equal \\
Right Weighted Degree & all equal \\
\bottomrule
\end{tabular}
\caption{Composing $\PCPDecoder_4$ and $\mc{D}_{\mathsf{Had}}$. In $\mc{D}_{\mathsf{Had}}$, the parameters $N'$ and $T'$ are $O_{|\Sigma_0|}(\poly_{\eps}(\log \log \log N))$ and $O_{\eps, |\Sigma_0|}(1)$ respectively.}
\label{tab:composed-dpcp-parameters3}
\end{table}

\section{Applications}
In this section we describe applications of our main dPCP result (\cref{thm:dPCP main}) to PCPPs, RLDCs, and CSP-reconfiguration.

\subsection{Proof of~\cref{thm:PCPP_intro}}
Our 3-query PCPP construction is based on the dPCP from~\cref{thm:dPCP main}, along with an additional consistency check. Due to the latter, we have to take into account the list size of the dPCP, which makes our PCPP effective only for words which are very far from the language. 
\begin{theorem}\label{thm:near-linear-pcpp}
For any $\eps>0$ the following holds for any alphabet $\Sigma_0$. For all circuits $\varphi:\Sigma_0^n\to \{0,1\}$ of size $N$, there exists a 3-query PCPP $\Psi=(X\cup A\cup B, E,\mc{P}, \Sigma_0\cup \Sigma_A\cup \Sigma_B, \{\Phi_{e}\}_{e\in E})$ for the language $\mathsf{SAT}(\varphi)$ with the following parameters:
\begin{itemize}
    \item \textbf{Length.} $|A|,|B|=O_{|\Sigma_0|}( N \cdot \poly_{\eps} \log N)$, $|X|=n$ and the edge set $E\subseteq X\times A\times B$ consists of $O_{|\Sigma_0|}( N \cdot \poly_{\eps} \log N)$ hyperedges, each has size $3$ and includes one vertex from each one of $A,B$ and $X$.
    \item \textbf{Distribution.} The marginal distribution of $\mc{P}$ on $B$ is uniform.  
    \item \textbf{Alphabet Size. } Vertices in $A,B$ have alphabets $\Sigma_A, \Sigma_B$ of size $O_{|\Sigma_0|,\eps}(1)$ each, the alphabet of $X$ is $\Sigma_0$.
    \item \textbf{Soundness.} There exists a universal constant $C>0$ such that for any $\eta\ge 1-\eps^{C+1}$, if $w:X\to \Sigma_0$ is $\eta$-far from $\mathsf{SAT}(\varphi)$ then for any $T_A:A\to \Sigma_A, T_B:B\to \Sigma_B$, 
    \[
    \Pr_{(x,a,b)\sim \mc{P}}[\Phi_{(x,a,b)}(w_t,T_A[a],T_B[b])=1]\le 2\eps.
    \]
\end{itemize}
\end{theorem}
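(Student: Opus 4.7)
The plan is to build the PCPP directly on top of the dPCP $\mc{D} = (A \cup B, E_{\mc{D}}, \Sigma_A, \Sigma_B, \{\Phi^{\mc{D}}_e\}, \{D_t\}, \{\mc{P}_t\})$ from \cref{thm: main dpcp restate} instantiated with soundness parameter $\eps$ and list size $L = \poly(1/\eps)$, ``spending'' the additional third query on reading the input word. Take $X = [n]$ with alphabet $\Sigma_0$, and let the hyperedge distribution $\mc{P}$ on $X \times A \times B$ be exactly the complete decoding distribution $\mc{Q}$ of $\mc{D}$ (i.e., sample $t \in [n]$ uniformly, then $(a,b) \sim \mc{P}_t$, and output $(t,a,b)$). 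The hyperedge set $E$ is the support of $\mc{Q}$, which has size $O_{|\Sigma_0|}(N \cdot \poly_{\eps} \log N)$ since the dPCP has that many edges and constant decoding degree. The constraint $\Phi_{(t,a,b)}(\sigma_0, \sigma_A, \sigma_B)$ accepts iff both $\Phi^{\mc{D}}_{(a,b)}(\sigma_A, \sigma_B) = 1$ and $D_t(a, \sigma_A) = \sigma_0$. The marginal of $\mc{P}$ on $B$ is the marginal of $\mc{Q}$ on $B$, which is uniform by the corresponding property in \cref{thm: main dpcp restate}. Perfect completeness is immediate from perfect completeness of $\mc{D}$: any $w \in \mathsf{SAT}(\varphi)$ extends via the completeness assignments of $\mc{D}$ to an assignment satisfying every constraint with probability one.

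For soundness, fix $\eta \geq 1 - \eps^{C+1}$ with $C$ chosen so that $L \leq \eps^{-C}$, and suppose $w \colon X \to \Sigma_0$ is $\eta$-far from $\mathsf{SAT}(\varphi)$. Let $T_A, T_B$ be arbitrary assignments. Applying the $(L,\eps)$-list-decoding soundness of $\mc{D}$ to $T_A$, we obtain $w_1,\ldots,w_L \in \mathsf{SAT}(\varphi)$ with
\[
\Pr_{(t,a,b)\sim \mc{Q}}\bigl[\Phi^{\mc{D}}_{(a,b)}(T_A[a],T_B[b])=1 \land D_t(a,T_A[a]) \notin \{(w_i)_t\}_{i\in[L]}\bigr] \leq \eps.
\]
The probability that the PCPP constraint accepts is then bounded by $\eps$ plus the probability that $D_t(a,T_A[a]) = w_t$ and $D_t(a,T_A[a]) \in \{(w_i)_t\}_{i\in[L]}$; the latter is at most $\Pr_{(t,a,b)\sim \mc{Q}}[\exists i: w_t = (w_i)_t] \leq \sum_{i=1}^L \Pr_t[w_t = (w_i)_t]$, where the marginal of $t$ under $\mc{Q}$ is uniform on $[n]$. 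Since each $w_i$ lies in $\mathsf{SAT}(\varphi)$ and $w$ is $\eta$-far from that language, $\Pr_t[w_t = (w_i)_t] \leq 1-\eta \leq \eps^{C+1}$, so by the union bound this term is at most $L \cdot \eps^{C+1} \leq \eps$. Combining yields a total acceptance probability of at most $2\eps$, as required.

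There is no real obstacle here; the construction is essentially the standard ``dPCP plus one consistency query'' recipe, and the only subtlety is that the list-decoding soundness of $\mc{D}$ forces us to work in the very-far regime $\eta \geq 1 - \eps^{C+1}$ so that the $L$-term union bound over the list remains below $\eps$. The parameters (length, alphabet, uniformity of the marginal on $B$) all transfer directly from \cref{thm: main dpcp restate}, and the hyperedge structure $E \subseteq X \times A \times B$ is forced by reading exactly one symbol from each side.
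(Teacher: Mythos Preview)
Your proposal is correct and essentially identical to the paper's own proof: both take the dPCP from \cref{thm:dPCP main}, set the PCPP's hyperedge distribution to be the complete decoding distribution $\mc{Q}$, define the constraint as ``dPCP constraint holds and the decoder output matches the input symbol,'' and bound the acceptance probability by $\eps + L\cdot(1-\eta)\le 2\eps$ via the list-decoding guarantee plus a union bound over the list.
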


\begin{proof}Fix $\eps>0$. We use \cref{thm:dPCP main}
to construct a dPCP for $\mathsf{SAT}(\varphi)$: $$    \PCPDecoder = \left(\LeftSide\cup \RightSide,\EdgeSet,\LeftAlphabet,\RightAlphabet,\{\Constraint{e} \}_{e\in \EdgeSet},\{\EdgeDistribution{\IndexInDecodingSet}\}_{\IndexInDecodingSet \in \SideDecodings}, \{\DecoderSymbol_{\IndexInDecodingSet} \}_{\IndexInDecodingSet \in \SideDecodings}\right)$$ 
with $(\mathsf{L}, \eps)$ list-decoding. Then, we construct a PCPP $\Psi = (V', E',\mc{P}', \Sigma', \{\Phi'_e\}_{e \in E'})$ with $X\subseteq V'$ as follows: $V'=\LeftSide\cup \RightSide\cup X$ where $|X|=n$, $E'=\{(t,\LeftVertex,\RightVertex)\ : \ t\in X, (a,b)\in\supp(\EdgeDistribution{t})\}$, $\Sigma'=\LeftAlphabet\cup\RightAlphabet\cup \Sigma_0$. For an edge $(t,a,b)\in E'$, the satisfying assignments of the constraint $\Phi_{(t,a,b)}$ are tuples $(\sigma,\alpha,\beta)$ such that $(\alpha,\beta)$ satisfies the constraint $\Phi_{(a,b)}$ of $\mc{D}$, and $D_t(a,\alpha)) = \sigma$.  Finally, the distribution ${\mc P'}$ is the complete distribution of $\mc{D}$, which we denote by $\mc{Q}$.

The completeness, query complexity, number of vertices, edges and alphabet size of $\Psi$ follow directly from the construction.  The marginal distribution over $B$ is uniform as $\mc{Q}(\circ,\circ,\cdot)=\mc{P}'(\circ,\circ,\cdot)$ is uniform by~\cref{thm:dPCP main}. 

We prove the soundness condition. Suppose $w\in \Alphabet{0}^n$ is $\eta$-far from $\mathsf{SAT}(\varphi)$, and fix assignments $\LeftProof:\LeftSide\to \LeftAlphabet$ and $\RightProof:\RightSide\to \RightAlphabet$ to the dPCP $\PCPDecoder$. By the list-decoding property of the dPCP, there exists a list $\{u_1, \ldots, u_\mathsf{L} \} \subseteq \mathsf{SAT}(\varphi)$ such that 
\begin{equation}\label{eq:pcpp_deri_list}
    \Pr_{(t,a,b)\sim \mc{Q}}[\Phi_{(\LeftVertex,\RightVertex)}(\Assignment_{\LeftSide}[\LeftVertex], \Assignment_{\RightSide}[\RightVertex])=1 \land \DecoderSymbol_{\IndexInDecodingSet}(\LeftVertex,\Assignment_{\LeftSide}[\LeftVertex]) \notin \{\Restricted{(u_i)}{\IndexInDecodingSet} \}_{i \in [\ListSize]}] \leq \eps.
\end{equation}
We conclude that
\begin{align*}
&\Pr_{(t,a,b)\sim \mc{P}'}[\Phi_{(t,a,b)}(w_t,\LeftProof[a],\RightProof[b]) = 1]\\ 
&= \Pr_{(t,a,b)\sim \AllDistribution}[\Phi_{(a,b)}(\LeftProof[a],\RightProof[b])=1\land (w_t=\DecoderSymbol_{t}(a,\LeftProof[a]))]\\
&= \Pr_{(t,a,b)\sim \AllDistribution}[\Phi_{(a,b)}(\LeftProof[a],\RightProof[b])=1\land (w_t=\DecoderSymbol_{t}(a,\LeftProof[a]))\land\DecoderSymbol_{t}(\LeftVertex,\Assignment_{\LeftSide}[\LeftVertex]) \notin \{\Restricted{(u_i)}{t} \}_{i \in [\ListSize]}]\\&+ \Pr_{(t,a,b)\sim \AllDistribution}[\Phi_{(a,b)}(\LeftProof[a],\RightProof[b])=1\land (w_t=\DecoderSymbol_{t}(a,\LeftProof[a]))\land\DecoderSymbol_{t}(\LeftVertex,\Assignment_{\LeftSide}[\LeftVertex]) \in \{\Restricted{(u_i)}{t} \}_{i \in [\ListSize]}]\\
&\le \eps + \Pr_{(t,a,b)\sim \AllDistribution}[ w_t \in \{\Restricted{(u_i)}{t} \}_{i \in [\ListSize]}]\\
&\le \eps + \sum_{i\in[\ListSize]}\Pr_{(t,a,b)\sim \AllDistribution}[ w_t =\Restricted{(u_i)}{t} ]\\
&\le \eps+\ListSize\cdot (1-\eta),
\end{align*}
where in the third transition we us~\eqref{eq:pcpp_deri_list}, in the fourth transition we use the union bound, and in the last transition we use the fact that $u_i\in\sat(\varphi)$, that $w$ is $\eta$-far from $\mathsf{SAT}(\varphi)$ and that the marginal distribution of $\mc{Q}$ over $t$ is uniform. Since $L=1/\eps^C$ for some absolute constant $C>0$, given that $\eta\ge 1-\eps^{C+1}$ we conclude the the soundness error is at most $2\eps$. 
\end{proof}

\subsection{Proof of~\cref{thm:4PCPP_intro}}
Our 4-query PCPP with optimal proximity/soundness guarantees is constructed via a similar transformation as in~\cref{thm:near-linear-pcpp} with an additional encoding idea.
   
\begin{theorem}\label{thm:polynomial-size-pcpp}
For any $\eps>0$ the following holds for any alphabet $\Sigma_0$. For all circuits $\varphi:\Sigma_0^n\to \{0,1\}$ of size $N$ over $\Sigma_0$, there exists a 4-query PCPP $\Psi=(X\cup Y\cup A\cup B, E, \mc{P},\Sigma_0\cup\Sigma\cup \Sigma_A\cup \Sigma_B, \{\Phi_{e}\}_{e\in E})$ for the language $\mathsf{SAT}(\varphi)$ with the following parameters:
\begin{itemize}
    \item \textbf{Length.} $|A|,|B|=O_{|\Sigma_0|}(N\cdot \poly_{\eps}\log(N))$, $|X|=n, |Y|=O_{\eps}(n)$ and the edge set 
    $E\subseteq X\times Y\times  A\times B$ 
    consists of $O_{|\Sigma_0|}(N\cdot \poly_{\eps}\log(N))$ hyperedges, each has size $4$ and includes one vertex from each one of $A,B,X,Y$.
    \item \textbf{Alphabet Size. } Vertices in $A,B$ have alphabets $\Sigma_A, \Sigma_B$ of size $O_{|\Sigma_0|,\eps}(1)$ each, the alphabet of $X$ is $\Sigma_0$ and the alphabet of $Y$, denoted by $\Sigma_Y$, has size $O_{\eps}(1)$.
    \item \textbf{Soundness.} For any $\eta>0$, if $w:X\to \Sigma_0$ is $\eta$-far from $\mathsf{SAT}(\varphi)$ then for any $T_A:A\to \Sigma_A, T_B:B\to \Sigma_B, T_Y:Y\to \Sigma_Y$, 
    \[
    \Pr_{(x,y,a,b)\sim \mc{P}}[\Phi_{(x,y,a,b)}(w_x,T_Y[y],T_A[a],T_B[b])=1]\le 1-\eta+\eps.
    \]
\end{itemize}
\end{theorem}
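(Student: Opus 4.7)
My plan is to combine the 2-query dPCP of Theorem~\ref{thm:dPCP main} with an error-correcting code commitment, using the fourth query to verify decoding consistency against \emph{both} the input and the commitment at once, which is what collapses the list-decoding error into a tight $(\eta,\eta+\eps)$-soundness bound.

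I would fix a constant-alphabet code $\mc{C}\colon \Sigma_0^n \to \Sigma_Y^{n'}$ of relative distance $1 - \gamma$ and blocklength $n' = O_{\eps}(n)$ from Lemma~\ref{lm: code GI}, with $\gamma$ chosen so that $L\gamma \ll \eps$, where $L=\poly(1/\eps)$ is the list-size constant of Theorem~\ref{thm:dPCP main}. Let $\varphi^*\colon \Sigma_0^n \times \Sigma_Y^{n'} \to \{0,1\}$ be the circuit (of size $O_{\eps}(N)$) that verifies $\varphi(u) = 1 \land z = \mc{C}(u)$, and apply Theorem~\ref{thm:dPCP main} to $\varphi^*$ with soundness $\eps/100$ to obtain a 2-query dPCP $\mc{D}^*$ with decoders $\{D^*_t\}_{t\in [n+n']}$ and complete decoding distribution $\mc{Q}^*$ whose marginal on $[n+n']$ is uniform. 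The PCPP $\Psi$ has vertex set $X\cup Y\cup A\cup B$ with $|X|=n$ and $|Y|=n'$, and every edge consists of one vertex from each side, as required.

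The verifier draws $(a,b)$ by the marginal of $\mc{Q}^*$ on $A\times B$, then independently selects $x\in[n]$ and $y\in[n']$ from the conditional distributions on $a$'s $X$- and $Y$-decoding neighborhoods. It reads $w_x, T_Y[y], T_A[a], T_B[b]$ and checks (i) $\Phi_{(a,b)}(T_A[a], T_B[b]) = 1$, (ii) $D^*_x(a, T_A[a]) = w_x$, and (iii) $D^*_{y+n}(a, T_A[a]) = T_Y[y]$. For the soundness, fix $T_A, T_B, T_Y$: by the list-decoding guarantee of $\mc{D}^*$ there is a list of at most $L$ candidates $(u_i,\mc{C}(u_i))\in\mathsf{SAT}(\varphi^*)$ accounting for all but an $\eps/100$ fraction of the accepting constraints. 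Since $x$ and $y$ are sampled independently (modulo the restriction to $a$'s neighborhoods, which factorizes by symmetry of the construction), the acceptance probability against the $i$-th candidate factorizes as $\Pr_x[w_x=(u_i)_x]\cdot\Pr_y[T_Y[y]=\mc{C}(u_i)_y]$. Because $w$ is $\eta$-far from $\mathsf{SAT}(\varphi)$ and $u_i\in\mathsf{SAT}(\varphi)$, the first factor is at most $1-\eta$. By the unique-decoding property of $\mc{C}$ (distance $1-\gamma$), at most one candidate $u_{i^*}$ satisfies $\Pr_y[T_Y[y]=\mc{C}(u_{i^*})_y] > 1/2$, while for all other candidates the triangle inequality forces $\Pr_y[T_Y[y]=\mc{C}(u_i)_y] \le 3\gamma/2$. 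Summing yields total acceptance at most $\eps/100 + (1-\eta)(1 + 3L\gamma/2) \le 1 - \eta + \eps$.

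The main obstacle will be engineering the required ``joint decoding'' structure in $\mc{D}^*$: namely that each left vertex $a$ admits both an $X$-decoding neighbor and a $Y$-decoding neighbor, with independent marginals on the two sides, so that checks (ii) and (iii) can be executed from a single queried label $T_A[a]$. This is not guaranteed by Theorem~\ref{thm:dPCP main} as stated, and I expect it will require either (a) a minor modification to the decoding-degree reduction of Section~\ref{sec:dec_deg_red} to ensure both sides are represented, or (b) taking the tensor product of $\mc{D}^*$ with a trivial index-pairing gadget that pairs each $t \in [n]$ with a random $t'\in [n+1,n+n']$ into the decoding distribution. Given this structural fix, everything else (length $O_{|\Sigma_0|}(N\poly_{\eps}(\log N))$, constant alphabets, perfect completeness, and the $O_{\eps}(n)$ bound on $|Y|$) follows from Theorem~\ref{thm:dPCP main} applied to $\varphi^*$ together with the parameters of the code $\mc{C}$.
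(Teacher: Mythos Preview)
Your high-level strategy—committing to $w$ via a high-distance code $\mc{C}$, applying the dPCP to the augmented circuit, and spending the fourth query on consistency with the code—matches the paper's. But the obstacle you flag is the whole difficulty, and neither proposed fix resolves it. Fix (a), even if it equips each left vertex $a$ with both an $X$- and a $Y$-decoding neighbor, does not yield your factorization: conditioned on $a$ the events $D^*_x(a,T_A[a])=w_x$ and $D^*_{y+n}(a,T_A[a])=T_Y[y]$ may be independent, but once you average over $a$ they are not, so the product of \emph{global} marginals $\Pr_x[\cdot]\cdot\Pr_y[\cdot]$ is unjustified. Fix (b), if it means re-indexing by all of $[n]\times[n']$, forces the circuit to have size $\Omega(nn')$ just for the coordinate-consistency checks, killing quasi-linear length when $n=\Theta(N)$. (Also, your unique-decoding step is off: the triangle inequality from one close codeword only gives the others $\Pr_y\le 1/2+\gamma$, not $\le 3\gamma/2$.)

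The paper's resolution is to re-index by the edges of a sparse \emph{bipartite expander} $\mc{S}\subseteq[n]\times[n/\rho]$ with $|\mc{S}|=O_\eps(n)$: define $\varphi''$ over $(\Sigma_0\times\Sigma)^{|\mc{S}|}$, where coordinate $(x,y)$ is supposed to hold the pair $(w(x),u(y))$, and apply the dPCP of Theorem~\ref{thm:dPCP main} directly to $\varphi''$. Now a \emph{single} decoding $D_{(x,y)}(a,\cdot)$ already outputs both an $x$-value and a $y$-value—no modification to the dPCP machinery is needed—and the circuit stays size $\tilde O(N)$. For soundness, after peeling off the collision event $\mc{E}_y=\{\exists i\ne i':\mc{C}(u_i)_y=\mc{C}(u_{i'})_y\}$ of probability $\le\binom{L}{2}\mu$, the sets $A_i=\{y:\bar{\mc{E}_y}\land T_Y[y]=\mc{C}(u_i)_y\}$ are pairwise disjoint, and the expander mixing lemma on $\mc{S}$ gives $\sum_i\Pr_{(x,y)\in\mc{S}}[x\in B_i,\,y\in A_i]\le(1-\eta)\sum_i\mu(A_i)+L/\sqrt{d}\le 1-\eta+L/\sqrt{d}$, with $B_i=\{x:w_x=(u_i)_x\}$. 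This mixing-lemma step is what replaces your factorization claim; choosing $d$ and $\mu$ as $\poly(1/\eps)$ absorbs the error terms into $\eps$.
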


\begin{proof}
Let $\mc{C}:\Sigma_0^{n}\to\Sigma^{n/\rho}$ be a 
linear time encodable code with distances $1-\mu$ as in~\cref{lm: code GI}; we will choose the parameter $\mu$ later and then $\rho = \rho(\mu)>0$ is picked. Define the circuit $\varphi':\Sigma_0^n\times \Sigma^{n/\rho}\to \{0,1\}$ which on input $(m,m')\in \Sigma_0^n\times \Sigma^{n/\rho}$ 
outputs $1$ only if $\mc{C}(m)=m'$ and $\varphi(m)=1$.  Note that since $\varphi$ is of size $N$ and the encoding circuit is linear in $n$, the size of $\varphi'$ is $\widetilde{O}(N+n/\rho)$. Also, we have 
\[
\sat(\varphi') = 
\{(m,m')\in \Sigma_0^n\times \Sigma^{n/\rho}~|~m\in \sat(\varphi), m' = \mc{C}(m)\}.
\]

Let $d\in\mathbb{N}$ be a parameter to be chosen later, and let $\mc{S}\subseteq[n]\times [n/\rho]$ 
be the edges of a bipartite expander with sides $[n]\times [n/\rho]$ and right degree $d$, left degree $d/\rho$  and second singular value at most $O(1/\sqrt{d})$. We note that $\mc{S}$ can be constructed in polynomial time by~\cref{lm: poly time bip expander}.

Consider the circuit $\varphi'':(\Sigma\times \Sigma_0)^{|\mc{S}|}\to \{0,1\}$ defined as follows. Given an input $w'':\mc{S}\to \Sigma\times \Sigma_0$, define $w:[n]\to \Sigma_0$ by picking for each $i\in [n]$ an arbitrary $j$ that $(i,j)\in\mc{S}$ and
then defining $w(i) = w''(i,j)|_{i}$, and similarly
define $u:[n/\rho]\to \Sigma$. 
The output $\varphi''(w'')$ is $1$ if:
\begin{enumerate}
    \item For all $(i,j)\in\mc{S}$, $w''(i,j) = (w(i),u(j))$.
    \item $\phi'(w,u) = 1$.
\end{enumerate} 
We note that the size of $\varphi''$ is
$\widetilde{O}(|\mc{S}|)$ plus the size of
$\varphi'$, which is $\widetilde{O}_{d,\rho}(N)$.
Using~\cref{thm:dPCP main} we get a dPCP for the language $\mathsf{SAT}(\varphi'')$ with $(\mathsf{L}, \eps)$-list decoding, which we denote as
$$\PCPDecoder = \left(\LeftSide\cup \RightSide,\EdgeSet,\LeftAlphabet,\RightAlphabet,\{\Constraint{(a,b)} \}_{(a,b)\in \EdgeSet},\{\EdgeDistribution{(x,y)}\}_{(x,y)\in\mc{S}}, \{\DecoderSymbol_{(x,y)} \}_{(x,y)\in\mc{S}}\right).
$$  
We denote the complete decoding distribution of $\mc{D}$ by $\mc{Q}$. We think of the decoder on $(x,y)\in \mc{S}$ as outputting two elements, which we call $D_{(x,y)}(\cdot,\cdot)_x$ and $D_{(x,y)}(\cdot,\cdot)_y$.

We construct the PCPP for $\mathsf{SAT}(\varphi)$ as $\Psi=(X\cup Y\cup A\cup B, E', \mc{P}',\Sigma_0\cup \Sigma_A\cup \Sigma_B,\{\Phi_{e'}\}_{e\in E'})$ where $|X|=n,|Y|=n/\rho$, the edge set is
\[
E' = \{((x,y),a,b)~|~(x,y)\in\mc{S}, (a,b)\in\supp(\mc{P}_{(x,y)})\}
\]
and the distribution $\mc{P}'$ is taken to be $\mc{Q}$.
The alphabet of $A$ is $\Sigma_A$, the alphabet of $B$ is $\Sigma_B$, the alphabet of $X$ is $\Sigma_0$ and the alphabet of $Y$ is $\Sigma$. For $((x,y),a,b)\in E'$, a tuple $((\sigma,\tau),\alpha,\beta)$ satisfies the constraint $\Phi_{((x,y),a,b)}$ if 
$(\alpha,\beta)$ satisfies $\Phi_{(a,b)}$, $D_{(x,y)}(a,\alpha)_x = \sigma$, and $D_{(x,y)}(a,\alpha)_y=\tau$. 

The size and the other parameters of the PCPP follow from the parameters of the dPCP from \cref{thm:dPCP main} where the circuit size is $|\varphi''|$, and in the rest of the argument we argue the soundness of $\Psi$. Let $w:X\to\Sigma_0$ be $\eta$-far from $\sat(\varphi)$, and fix $T_A:A\to\Sigma_A$, $T_B:B\to\Sigma_B$ and  $T_Y:Y\to\Sigma$. By the list-decoding soundness property of the $\mc{D}$ there exists a list $\{w''_i\}_{i\in \mathsf{L}}\subseteq\sat(\varphi'')$ such that
\begin{equation}\label{eq:4_q_pcpp_list}
\Pr_{((x,y),a,b)\sim \AllDistribution}[\Phi_{(a,b)}(T[a],T[b]) = 1\; \land \; \DecoderSymbol_{(x,y)}(a,T[a])\notin \{w_i''(x,y)\}_{i\in [\mathsf{L}] }]\le \eps.
\end{equation}
As $w''_i\in \sat(\varphi'')$ we conclude that there exists $u_i\in\sat(\varphi)$, such that we have $w''(x,y) = u_i(x),\mc{C}(u_i)_y$ for all $(x,y)\in \mc{S}$, and we fix such $u_i$ henceforth.
Let $\mc{E}$ be the event that $\Phi_{(a,b)}(T_A[a],T_B[b]) = 1\; \land \; \DecoderSymbol_{(x,y)}(a,T_A[a])_x=w_x\; \land \; \DecoderSymbol_{(x,y)}(a,T_A[a])_y=T_Y[y]$. 
The soundness of the PCPP is thus $\Pr_{((x,y),a,b)\sim \mc{P}'}[\mc{E}]$, 
which can be written as
\begin{align*}
\Pr_{((x,y),a,b)\sim \AllDistribution}[\mc{E}]&= \Pr_{((x,y),a,b)\sim \AllDistribution}[\mc{E}\; \land \; \DecoderSymbol_{(x,y)}(a,T_A[a])\notin \{w_i''(x,y)\}_{i\in [\mathsf{L}] }]\\&+\Pr_{((x,y),a,b)\sim \AllDistribution}[\mc{E} \land \; \DecoderSymbol_{(x,y)}(a,T_A[a])\in \{((u_i)_{x},(\mc{C}(u_i))_{y})\}_{i\in [\mathsf{L}] }].
\end{align*}
The first term is at most $\eps$ by~\eqref{eq:4_q_pcpp_list}, and we now focus on the second term. Let $\mc{E}_y$ be the event that there are $i\neq i''$ such that $\mc{C}(u_i)_y = \mc{C}(u_{i'})_y$. We get that
\begin{align*}
&\Pr_{((x,y),a,b)\sim \AllDistribution}[\mc{E}\; \land \; \DecoderSymbol_{(x,y)}(a,T_A[a])\in \{((u_i)_{x},(\mc{C}(u_i))_{y})\}_{i\in [\mathsf{L}] }]\\
&\le \Pr_{((x,y),a,b)\sim \AllDistribution}[\mc{E}\; \land \; \DecoderSymbol_{(x,y)}(a,T_A[a])\in \{((u_i)_{x},(\mc{C}(u_i))_{y})\}_{i\in [\mathsf{L}] }\; \land \; \bar{\mc{E}}_y]+\Pr_{((x,y),a,b)\sim \AllDistribution}[ \mc{E}_y].
\end{align*}
Note that for any $i\neq i'$, the words $\mc{C}(u_i)$, $\mc{C}(u_{i'})$ agree on at most $\mu$ fraction of coordinates, so by the union bound 
$\Pr_{((x,y),a,b)\sim \AllDistribution}[ \mc{E}_y]\leq \binom{L}{2}\mu$. As for the other term on the right hand side, note that as $\mc{E}$ requires that $\DecoderSymbol_{(x,y)}(a,T_A[a])=(w_x,T_Y[y])$ we get
\begin{align*}
&\Pr_{((x,y),a,b)\sim \AllDistribution}[\mc{E}\; \land \; \DecoderSymbol_{(x,y)}(a,T_A[a])\in \{((u_i)_{x},(\mc{C}(u_i))_{y})\}_{i\in [\mathsf{L}] }\; \land \; \bar{\mc{E}}_y]\\&\le \Pr_{(x,y)\in \mc{S}}[(w_x,T_Y[y])\in\{((u_i)_{x},(\mc{C}(u_i))_{y})\}_{i\in [\mathsf{L}] }\; \land \; \bar{\mc{E}}_y].  
\end{align*}
If the event above holds, then the values $\{(\mc{C}(u_i))_y\}$ are all different, so at most one of them is equal to $T_Y[y]$.   
Defining $A_I=\{y\ : \ \bar{\mc{E}_y}\; \land \; (\mc{C}(u_I))_y=T_Y[y]\}$ and $B_I=\{x\ : \ w_x= (u_I)_x\}$ for $I\in [\mathsf{L}]$, we get that
\begin{align*}
\Pr_{(x,y)\in \mc{S}}[(w_x,T_Y[y])\in\{((u_i)_{x},(\mc{C}(u_i))_{y})\}_{i\in [\mathsf{L}] }\; \land \; \bar{\mc{E}_y}]&= \sum_{I\in [\mathsf{L}]}\mathbb{E}_{(x,y)\in \mathcal{S}}[1_{y\in A_I}\cdot 1_{x\in B_I}]\\
&\le \sum_{I\in [\mathsf{L}]}\mathbb{E}_{y\in [n/\rho]}[1_{y\in A_I}]\cdot \mathbb{E}_{x\in[n]}[1_{x\in B_I}]+\frac{1}{\sqrt{d}}\cdot \sqrt{\frac{|A_I|}{n/\rho}\cdot \frac{|B_I|}{n}}\\
&\le 1-\eta+\frac{\mathsf{L}}{\sqrt{d}}.
\end{align*}
In the second transition we used~\cref{lm: expander mixing}, and in the last transition we used the fact that $\mathbb{E}_{x\in[n]}[1_{x\in B_I}]\leq 1-\eta$ for all $I$ (as $w$ is $\eta$-far from $\sat(\varphi)$)  
and that $\sum_{I\in [\mathsf{L}]}\mathbb{E}_{y\in [n/\rho]}[1_{y\in A_I}]=\Pr_{((x,y),a,b)\sim \AllDistribution}[ \bar{\mc{E}}_y]\leq 1$.

Combining all of the above, we get that
\begin{align*}
\Pr_{((x,y),a,b)\sim \AllDistribution}[\Phi_{(a,b)}(T_A[a],T_B[b]) = 1\; \land \; \DecoderSymbol_{(x,y)}(a,T_A[a])_x=w_x]\le 1-\eta+\binom{\mathsf{L}}{2}\cdot \mu
+\frac{\mathsf{L}}{\sqrt{d}}.
\end{align*}
As $\mathsf{L}=1/\eps^C$ for some absolute constant $C$ by \cref{thm:dPCP main}, choosing $\mu=\eps^{2C}$ and $d=4\eps^{-2C}$, 
we get that the error is at most $1-\eta+\eps$.  
\end{proof}

\subsection{A Query-Preserving PCPP to RLDC Transformation}
To prove~\cref{thm:RLDC_intro} we describe a generic transformation that converts a sufficiently good $3$-query PCPP into a $3$-query RLDC. 

\begin{theorem}\label{thm:transformation_rldc}
Suppose for some absolute constant $C>0$ we have the following two ingredients:
\begin{itemize}
    \item For any alphabet $\Sigma_0$, constant $\delta>0$, and circuit $\varphi\colon \Sigma_0^n\to\{0,1\}$ of size $N$ there is a 3-query PCPP $\Psi = (X\cup A\cup B, E,\mc{P}, \Sigma_0\cup \Sigma, \{\Phi_e\}_{e \in E})$ satisfying:
\begin{itemize}
    \item \textbf{Length.} $O_{|\Sigma_0|}( N \cdot \poly_{\eps}(\log N))$.
    \item \textbf{Queries.} Each hyperedge $e = (e_1, e_2, e_3)\in E$ satisfies $e_1 \in X, e_2 \in A, e_3 \in B$, and choosing $e$ uniformly, $e_1$ is uniformly random in $X$, $e_2$ is uniformly random in $A$, and $e_3$ is uniformly in $B$.
    \item \textbf{Soundness.} For all $\eta \geq 1 - \delta^{C}$, if $w:X\to \Sigma_0$ is $\eta$-far from $\mathsf{SAT}(\varphi)$, then for any $T_A:A\to \Sigma_A, T_B:B\to \Sigma_B$ 
    \[
    \Pr_{(t,a,b)\sim \mc{P}}[\Phi_{(t,a,b)}(w_t,T_A[a],T_B[b])=1]\le 2\delta.
    \]
\end{itemize}
\item For any $\delta\in (0,1)$ there is $\rho>0$, an alphabet $\Sigma_0$ of size $O_{\delta}(1)$, and a code $\mathcal{C}_0:\{0,1\}^k\to\Sigma_0^{k/\rho}$ with relative distance $\delta$, that is both quasi-linear time encodable and decodable.
\end{itemize}
Then for all $\mu \in (0,1)$ sufficiently small, and sufficiently large $k$, there is a $3$-query RLDC with message length $k$, blocklength $k^2\cdot \poly_{\mu}(\log k)$, alphabet size $O_{\mu}(1)$, distance $\frac{1-\mu}{2}$, and decoding radius $\mu$. Furthermore, for all $\delta>0$, the RLDC has success-rate $1-16\delta-o(1)$ on inputs that are $\delta$-close to a valid codeword.
\end{theorem}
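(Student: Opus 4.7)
The plan is to realize the construction sketched in \cref{sec:techniques_RLDC}. Fix $\mu>0$ sufficiently small and a small absolute constant $\delta_1>0$ (chosen so that $\mu\leq \delta_1^{C+1}/100$, where $C$ is the constant from the PCPP hypothesis). Apply the second ingredient with distance $1-\mu/100$ to obtain the base code $\mc{C}_0:\{0,1\}^k\to\Sigma_0^{k/\rho}$ with $|\Sigma_0|=O_\mu(1)$. For each $(i,b)\in[k]\times\{0,1\}$, the language $\mc{L}_{i,b}:=\{\mc{C}_0(x):x_i=b\}$ is decidable by a circuit $\varphi_{i,b}$ of size $\tilde{O}(k)$ that invokes the quasi-linear decoder of $\mc{C}_0$ on $w$ to recover $x$ and then checks both $\mc{C}_0(x)=w$ and $x_i=b$. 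Applying the PCPP ingredient to $\varphi_{i,b}$ with parameter $\delta_1$ yields a $3$-query PCPP $\Psi_{i,b}$ of length $k\cdot\poly_\mu(\log k)$ and alphabet $\Sigma$ of size $O_\mu(1)$. Since $\varphi_{i,0}$ and $\varphi_{i,1}$ differ in only a single internal gate, one can arrange that $\Psi_{i,0}$ and $\Psi_{i,1}$ share the same underlying constraint hypergraph $(X\cup A_i\cup B_i, E_i,\mc{P}_i)$ (with $X$ identified with $[k/\rho]$), differing only in the predicates $\{\Phi_e^{(i,b)}\}$. Let $\pi_i'(x)\in(\Sigma\times\{0,1\})^{|A_i|+|B_i|}$ be the honest PCPP proof of $\mc{C}_0(x)\in\mc{L}_{i,x_i}$ with $x_i$ appended to every coordinate.

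\textbf{Encoding and decoder.} Choose $t$ so that $n_1:=t\cdot(k/\rho)$ matches $n_2:=\sum_i|\pi_i'(x)|$ up to rounding (so $t = k\cdot\poly_\mu(\log k)$), and define
\[
\mc{C}(x):=(\mc{C}_0(x))^t \circ \pi_1'(x)\circ\cdots\circ\pi_k'(x),
\]
merged into a single alphabet of size $O_\mu(1)$. The total length is $k^2\cdot\poly_\mu(\log k)$, and the distance $(1-\mu)/2$ follows since for $x\neq y$ the first block alone contributes a fraction $(1-\mu/100)\cdot(n_1/n)\geq(1-\mu)/2$ of differing positions. The decoder $\Dec(w,i)$ samples $j\in[t]$ uniformly and $e=(e_X,e_A,e_B)\sim\mc{P}_i$, reads $W_j[e_X]$ (from the $j$th copy block in $w$) and $(\alpha_A,b_A)=\tilde\pi_i[e_A]$, $(\alpha_B,b_B)=\tilde\pi_i[e_B]$ (from the $i$th proof block in $w$), and outputs $b:=b_A$ if $b_A=b_B$ and $\Phi_e^{(i,b)}(W_j[e_X],\alpha_A,\alpha_B)=1$, and $\bot$ otherwise. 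This uses exactly $3$ queries, and perfect completeness is immediate from PCPP completeness.

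\textbf{Soundness and success-rate.} Fix $w$ that is $\delta$-close to $\mc{C}(x^*)$ with $\delta\leq\mu$, and let $\alpha_j,\tau_i$ denote the corruption fractions of the $j$th copy block and $i$th proof block respectively, so that $\E_j[\alpha_j]+\E_i[\tau_i]\leq 2\delta$. For the success-rate bound, a union bound over the three queries yields $\Pr[\Dec(w,i)=x_i^*]\geq 1-\E_j[\alpha_j]-2\tau_i$ (using that the marginals on $X,A_i,B_i$ are uniform, and that perfect completeness returns $x_i^*$ whenever no query is corrupted); Markov's inequality on $\E_i[\tau_i]\leq 2\delta$ then yields that at most a $16\delta+o(1)$ fraction of $i$ fail the sufficient condition $\E_j[\alpha_j]+2\tau_i\leq 1/3$. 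The main challenge is the soundness claim $\Pr[\Dec(w,i)=\overline{x_i^*}]\leq 1/3$ for every $i$, which requires ruling out that a corrupted proof is both consistent on its $b$-bits and satisfies some PCPP constraint with value $\overline{x_i^*}$. Call $j$ good if $\alpha_j\leq\delta_1^C/2$; Markov gives $\Pr_j[j\text{ bad}]\leq 4\delta/\delta_1^C\leq \delta_1$ by the choice of $\mu$. On the good event, $W_j$ lies at distance at least $(1-\mu/100)-\alpha_j\geq 1-\delta_1^C$ from $\mc{L}_{i,\overline{x_i^*}}$ (since $\mc{C}_0(x^*)$ itself is $(1-\mu/100)$-far from this language by the distance of $\mc{C}_0$), so the PCPP soundness of $\Psi_{i,\overline{x_i^*}}$ applies for \emph{any} proof assignment and gives acceptance probability at most $2\delta_1$. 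Hence $\Pr[\Dec(w,i)=\overline{x_i^*}]\leq \delta_1+2\delta_1\leq 1/3$ by choosing $\delta_1$ a small absolute constant, completing the proof. The crux of the argument is thus the interplay between the distance of $\mc{C}_0$ and the PCPP proximity threshold $1-\delta_1^C$: making $\mu$ sufficiently small relative to $\delta_1^C$ is what ensures that a mildly corrupted copy of $\mc{C}_0(x^*)$ remains far enough from the ``wrong'' language $\mc{L}_{i,\overline{x_i^*}}$ to trigger PCPP soundness.
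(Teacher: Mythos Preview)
Your high-level construction and the soundness/success-rate analyses are essentially the same as the paper's, but there is a real gap in how you synchronize the two PCPPs $\Psi_{i,0}$ and $\Psi_{i,1}$. You assert that ``since $\varphi_{i,0}$ and $\varphi_{i,1}$ differ in only a single internal gate, one can arrange that $\Psi_{i,0}$ and $\Psi_{i,1}$ share the same underlying constraint hypergraph $(X\cup A_i\cup B_i,E_i,\mc{P}_i)$.'' But the theorem hypothesis hands you the PCPP as a \emph{black box}: it says only that for any circuit there exists a PCPP with the stated parameters, and gives no relationship between the hypergraphs produced for two nearby circuits. In particular, $|A_{i,0}|$, $|B_{i,0}|$ need not equal $|A_{i,1}|$, $|B_{i,1}|$, let alone share the same edge distribution. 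Consequently your decoder's step ``sample $e\sim\mc{P}_i$'' before knowing $b$ is ill-defined, and the whole analysis that follows rests on this unjustified assumption.

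The paper resolves this without any such assumption by a sequential-query trick. It pads every proof (for every $i,b$) to a common length $K$ using a fresh symbol $\nu$, so the proof block has a uniform $B$-part of size $K$ regardless of which $\mc{L}_{i,b}$ was encoded. The decoder first samples a uniformly random index in this $K$-sized $B$-part (outputting $\bot$ if it hits $\nu$); since the $B$-marginal of each $\Psi_{i,b}$ is uniform, this is $O(1/\log^{10}k)$-close in total variation to a correct $B$-query for \emph{whichever} PCPP the honest prover used. Reading the appended bit $x$ at that location then tells the decoder which of $\Psi_{i,0},\Psi_{i,1}$ is in play, and only \emph{after} that does it compute $|A_{i,x}|$ and sample the $A$-query conditioned on the already-drawn $B$-vertex. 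This ordering is exactly what lets the argument go through with the black-box hypothesis; your proposal would need either this mechanism or an added structural assumption on the PCPP.
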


\begin{proof}
Fix $\mu > 0$ sufficiently small and take $\rho>0$ and a code $\mc{C}_0$ as per the second item with distance $1-\mu$, rate $\rho$, and appropriate alphabet $\Sigma_0$. For an index $i \in [k]$ and a bit $b \in \{0,1\}$, let $\varphi_{i,b}:\Sigma_0^{k/\rho_0}\to\{0,1\}$ be a circuit that given a word $w$ of length $k/\rho$, outputs $1$ if $w$ is a member of $\mc{C}_0$, and furthermore the $i$'th bit of the encoded message is equal to $b$. More precisely on an input $w\in \Sigma_0^{k/\rho}$, run the decoding map to get $m=\mathsf{Dec_{\mc{C}_0}}(w)$, then output $1$ if and only if  $\mathsf{\mc{C}_0}(m)=w$ and $m_i = b$. Since the code $\mc{C}_0$ is quasi-linear time encodable and decodable, the circuit size of $\varphi_{i,b}$ is $\widetilde{O}_{\mu}(k)$. Define  $\mc{L}_{i,b}\coloneqq\mathsf{SAT}(\varphi_{i,b})$.  

\paragraph{Unifying the proof length for PCPP invocations.} For $i \in [k], b \in \{0,1\}$, take 
\[
\Psi_{i,b} = ([k/\rho]\cup A_{i,b}\cup B_{i,b}, E_{i,b}, \Sigma, \{\Phi_e\}_{e \in E},\mc{P}_{i,b})
\]
to be the PCPP from the premise of the statement for the language $\mc{L}_{i,b}$. Without loss of generality, we assume that all the PCPPs have the same alphabet which has size $O_{\eps}(1)$. 
The code we construct will contains multiple encodings of the message and relevant proofs, but we first address a small technical issue. The PCPPs are guaranteed to have length that is bounded by $k\cdot \log^C k$ (as the circuit size is $\widetilde{O}_{\mu}(k)$), but $A_{i,b}$ (as well as $B_{i,b}$) may differ in size across $i$ and across $b$. Since each proof should be contained in a predefined part of the codeword we construct below, we must ensure that the right and left sizes of the PCPP have the same size across all invocations. Towards that, we set $C'=C+10$ and let $K=k\cdot \log ^{C'}k$. Given assignments to the PCPP  $T_{A_{i,b}}:A_{i,b}\to\Sigma$ and $T_{B_{i,b}}:B_{i,b}\to\Sigma$ construct $\pi_{i,b}=(\pi_{i,b,A},\pi_{i,b,B})$ as follows:
\begin{itemize}
    \item $\pi_{i,b,A}=\left(T_{A_{i,b}}^{\lfloor K/|A_{i,b}|\rfloor},\nu^{K-\lfloor K/|A_{i,b}|\rfloor \cdot |A_{i,b}|}\right)$, that is, repeat $T_{A_{i,b}}$ in full $\lfloor K/|A_{i,b}|\rfloor$-times, and put $\nu$ in the rest to make the overall size exactly $K$. Here, $\nu$ is an arbitrary symbol outside of $\Sigma$.
    \item $\pi_{i,b,B}=\left(T_{B_{i,b}}^{\lfloor K/|B_{i,b}|\rfloor}, \nu^{K-\lfloor K/|B_{i,b}|\rfloor\cdot |B_{i,b}|}\right)$, that is, repeat $T_{B_{i,b}}$ in full $\lfloor K/|B_{i,b}|\rfloor$-times, and put $\nu$ in the rest to make the overall size exactly $K$.
\end{itemize}
Note that the uniform distribution over $[K]$ and the distribution over $[K]$ which is uniform over the subset $[\lfloor K/|A_{i,b}|\rfloor\cdot |A_{i,b}|]$, are $1/\log^{10}k$-close in statistical distance; similarly for $B_{i,b}$.
We think of the PCPP as the string $\pi_{i,b}=(\pi_{i,b,A},\pi_{i,b,B})$ of length $2K$ over alphabet symbols from $\Sigma$. We denote by $\pi_{i,b}(w)$ an arbitrary assignment to the PCPP in the completeness case for $w$. By the perfect completeness of the PCPP, it is guaranteed that such an assignment exists which satisfies all of the constraints with $w$.
\vspace{-1ex}
\paragraph{The construction.} 
Let $\Sigma' = \Sigma \times \{0,1\}$. We define $\mc{C}\colon \{0, 1\}^k \to \Sigma_0^{r \cdot k/\rho}\times (\Sigma')^{2K \cdot k}$ as follows:
\[
\mc{C}(m) = \mc{C}_0(m)^{r} \pi'_{1, m_1}(\mc{C}_0(m)) \cdots \pi'_{k, m_k}(\mc{C}_0(m))
\]
for $r = 2\rho K$, where the proof $\pi_{i,m_i}'$ are constructed as follows. We first construct $\pi_{i,m_i}(\mc{C}_0(m))$, and to get $\pi'_{i, m_i}(\mc{C}_0(m))$ we append the bit $m_i$ to each entry of $\pi_{i,m_i}(\mc{C}_0(m))$. This results in doubling the alphabet size of the proof. Overall, we get that a codeword consists of two equal length blocks:
\begin{itemize}
    \item The first block consists of $r$ repetitions of $\mc{C}_0(m)$. 
    \item The second block consists of the concatenation of $\pi'_{i, m_i}(\mc{C}_0(m))$ for $i = 1, \ldots, k$, where each entry of the proof has the bit $m_i$ ``appended''.
\end{itemize}
Moreover, the second block can be further partitioned into two equal sized blocks consisting of the ``$A$''-vertices and ``$B$''-vertices of each PCPP.

\vspace{-1ex}
\paragraph{The decoder.} 
Given an index $i \in [k]$ and $w \in \Sigma_0^{r\cdot k/\rho}\times \Sigma^{k\cdot r}$, the decoder $\Dec(w, i)$ proceeds in the following way: 
\begin{enumerate}
    \item Choose a random repetition of out of the $r$ repetitions in the first block of $w$. Call this $w_0$.
    \item Sample a uniform index $b'$ in $[K]$, and read $\pi_{i,m_i,B}'$ at index $b'$. 
    If that symbol is $\nu$ output $\bot$, and else proceed.

    Recall that $\pi_{i,m_i,B}'$ contains multiple copies of assignments to the right-side $B_{i,m_i}$. The index $b'$ corresponds to a vertex in $b\in B_{i,m_i}$, and the distribution we chose is $(1/\log^{10}k)$-close to being uniform over $B_{i,m_i}$. 
    We note that as the supposed value of $m_i$ is
    not known to the decoder yet, it does not know the correct distribution over the locations that need to be read in the PCPP (as they may be different for each one of the languages $\mathcal{L}_{i,m_i}$). However, since the distribution over the query on the $B$-side is uniform for each of $\mathcal{L}_{i,m_i}$, the query we made in this step is sampled according to a distribution which is statistically close to the correct distribution.
        
    \item Let $x$ be the bit appended, so that supposedly $x = m_i$. 

    \item The decoder now knows which language $\mc{L}_{i,x}$ was supposedly used for the PCPP encoding, and in particular the decoder can calculate the length of $A_{i,x}$. Thus, the decoder also knows which locations in the $A$-part of $\pi_{i,x}(\mc{C}_0(m))$ are the non-$\nu$ ones. The decoder samples a uniformly random index $j \in [\lfloor K / |A_{i,x}\rfloor]$ and looks at the $j$th copy of $\pi_{i,x}(\mc{C}_0(m))$ from the $A$-part.
    The verifier samples $(t,a,{\bf b})\sim \mc{P}_{i,x}$ conditioned on ${\bf b}=b$, and then queries the locations corresponding to $a$ in the $j$th copy of the $A$-part of $\pi_{i,x}(\mc{C}_0(m))$ and $t$ in $w_0$. 
    
    \item If the read assignment to $a$ does not have the bit $x$ appended, output $\perp$. 
    Else, remove the appended bit from each of the read symbols and proceed.
    \item If the read assignment satisfies the constraint on $(t,a,b)$ of $\Psi_{i,x}$, output $x$. Otherwise, output $\perp$.
\end{enumerate}

\paragraph{Analysis of the construction.}We now analyze the construction. 
It is easy to see that the code has distance at least $\frac{1-\mu}{2}$ due to the repetitions in the first block. The length and completeness for relaxed local decoding are clear from the corresponding properties of the PCPP.

For the success rate, suppose the codeword, $w$, is $\delta$-close to some valid codeword $w'$. Let $I_1, I_2,$ and $I_3$ be the $\mc{C}_0(m)^r$ part, the $A$-parts of the proofs, and the $B$-parts of the proofs respectively. The queries of the decoder are distributed as follows. Over a uniformly random index to decode, the marginal distribution of the queries of the decoder is: (1) a uniformly random query in $I_1$, 
(2) a query according to a distribution which is $o(1)$-close to uniform over $I_2$, and (3) a uniformly random query in $I_3$.
Let $\delta_s$ be the fraction of locations in $I_s$ on which $w$ and $w'$ differ, so that $\frac{1}{2}|I_1|\delta_1+\frac{1}{4}|I_2|\delta_2+\frac{1}{4}|I_3|\delta_3$ is the Hamming-distance between $w$ and $w'$ and hence at most $\delta \cdot 4Kk$. As $|I_1| = 2Kk$, $|I_2|=Kk$ and $|I_3| = Kk$ we get that 
$\delta_1 + \frac{1}{4}\delta_2+\frac{1}{4}\delta_3\leq 4\delta$.
By the union bound, the probability that any of the queries of the decoder is a location in which $w$ and $w'$ differ is at most
\[
\delta_1 + \delta_2 + \delta_3 + o(1)
\leq 4(\delta_1 + \frac{1}{4}\delta_2+\frac{1}{4}\delta_3) + o(1)
\leq 16\delta + o(1).
\]
In this case, the decoder acts on $w$ as it would act on $w'$, hence it outputs the correct bit by the completeness. We conclude that the success rate is at least $1-16\delta - o(1)$.

We now analyze the soundness of the decoder. Suppose that there is a message $m \in \{0,1\}^k$ such that $$\mathsf{dist}(w, \mc{C}(m)) \leq \mu,$$
and fix an $i\in [k]$ to decode. Choosing the copy $w_0$ of the supposed encoding of $m$ under $\mc{C}_0$ from the first block uniformly, we get that, $$\E_{w_0}[\mathsf{dist}(w_0, \mc{C}_0(m))] \leq 2\mu.$$ Thus, by Markov's inequality with probability at least $9/10$ we have that the $w_0$ chosen satisfies 
\begin{equation} \label{eq: close to code} 
\mathsf{dist}(w_0, \mc{C}_0(m)) \leq 20\mu.
\end{equation}
Now the decoder, as described above, looks at the proof $\pi'$ from the second block which is supposedly the proof $\pi'_{i, m_i}(w_0)$. If one of the proof queries has the bit $m_i$ appended, then the decoder will output either $m_i$ or $\perp$, and we are done. Otherwise, they both have the bit $m_i\oplus 1$ appended. In that case, the verifier of the PCPP of $\mc{L}_{i, m_i\oplus 1}$ is run, and we lower bound its rejection probability.
By \eqref{eq: close to code} and the fact $\mc{C}_0$ has distance $1-\mu$, we get that $\mathsf{dist}(w_i, \mc{L}_{i, m_i\oplus 1}) \geq (1-\mu)-20\mu =  1-21\mu$. In this case, the decoder will output $\perp$ with probability at least $1-2 \cdot (21\mu)^{1/C}-\frac{1}{\log^{10} k}$ for some fixed constant $C > 0$, by the soundness of the PCPP.\footnote{The $\frac{1}{\log^{10} k}$ term is because the query we made to the $B$-side is only $\frac{1}{\log^{10} k}$-close to uniform in total variation distance.} In particular, for sufficiently large $k$ and sufficiently small $\mu$, the decoder outputs either the correct bit or $\perp$ with probability at least
\[
\frac{9}{10} \cdot  \left(1-2 \cdot (21\mu)^{1/C}-\frac{1}{\log^{10} k}\right) \geq \frac{2}{3},
\]
where the inequality follows by setting $\mu$ to be sufficiently small.
\end{proof}

\begin{proof}[Proof of~\cref{thm:RLDC_intro}]
   Invoke~\cref{thm:transformation_rldc} with the PCPP from~\cref{thm:near-linear-pcpp} and the codes from~\cref{lm: code GI}.
\end{proof}

\subsection{Improved PSPACE Inapproximability of CSP-Reconfiguration}
The reconfiguration inapproximability problem~\cite{HiraharaOhsaka2024,KarthikManurangsi2024,guruswami_reconfiguration} studies the complexity of moving between two given solutions to a given CSP instance, by modifying the value of only a single variable at each step, and only going through assignments that satisfy a large fraction of the constraints. More precisely:

\begin{definition}[Gap$_{1,\varepsilon}$ $q$-CSP$_\Sigma$ Reconfiguration]
Let $q \ge 1$ be an integer, $\varepsilon \in (0,1]$, and $\Sigma$ be a finite
alphabet. An instance of $\mathrm{Gap}_{1,\varepsilon}\, q$-\textrm{CSP}$_\Sigma$
Reconfiguration is specified by a $q$-CSP$_\Sigma$ instance $\Phi$ and two
solutions $\sigma,\sigma' \in \Sigma^n$. The task is to distinguish the
following two cases:
\begin{itemize}
  \item \textbf{Yes Case.} There exist assignments
  $\sigma^{(0)},\ldots,\sigma^{(t)} \in \Sigma^n$ such that
  $\sigma^{(0)}=\sigma$, $\sigma^{(t)}=\sigma'$, each
  $\sigma^{(i)},\sigma^{(i-1)}$ differ on at most one coordinate, and each
  $\sigma^{(i)}$ satisfies all constraints in $\Phi$.
  \item \textbf{No Case.} There exist no assignments
  $\sigma^{(0)},\ldots,\sigma^{(t)} \in \Sigma^n$ such that
  $\sigma^{(0)}=\sigma$, $\sigma^{(t)}=\sigma'$, each
  $\sigma^{(i)},\sigma^{(i-1)}$ differ on at most one coordinate, and each
  $\sigma^{(i)}$ satisfies at least $\varepsilon$ fraction of the constraints
  in $\Phi$.
\end{itemize}
\end{definition}
The recent work of \cite{guruswami_reconfiguration} establishes a formal connection between PCPPs and hardness of CSP reconfiguration problems, where the better the parameters of the PCPPs are, the better of a gap one obtains for the reconfiguration problem.

\begin{theorem}[Theorem 2,  \cite{guruswami_reconfiguration}]\label{thm:reconf_ven}
Suppose for some $\eta \in (0,1)$, there is an alphabet $\Sigma_0$ such that we have the following two ingredients:
\begin{itemize}
    \item Any error-correcting code $\mc{C}: \{0,1\}^{n} \to \Sigma_0^{k}$ \footnote{In \cite{guruswami_reconfiguration}, a binary code (rather than a code over a larger alphabet) is used. However, an inspection of their argument shows that they only use the fact the \emph{messages} are binary. Thus, the alphabet size of the encoding could be taken as a large constant.} with relative distance $\eta\in (0,1)$ that is encodable in space $\poly(n)$.
    \item For any constant $\delta < \eta/2$ and every circuit $\varphi: \Sigma_0^n \to \{0,1\}$ of size $N$ there is a $q$-query PCPP which is constructible in space polynomial in $|\Sigma_0|, N$, $(\delta, \varepsilon(\delta))$-soundness, randomness $O(\log N)$, alphabet size $O_{\delta, |\Sigma_0|}(1)$.
\end{itemize}
Then for all $\delta < \eta/2$, there is an alphabet $\Sigma$ of size $O_{\delta, |\Sigma_0|}(1)$, $\mathrm{Gap}_{1,1-\varepsilon(\delta)}\,(q+1)$-\textrm{CSP}$_\Sigma$ Reconfiguration is $\mathsf{PSPACE}$-hard.
\end{theorem}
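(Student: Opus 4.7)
The plan is to reduce from the $\mathsf{PSPACE}$-complete \emph{Canonical Reconfiguration Problem}: given a polynomial-space computable circuit $\varphi\colon\Sigma_0^n\to\{0,1\}$ together with two satisfying assignments $x_0,x_T\in\sat(\varphi)$, decide whether there is a sequence $x_0,x_1,\ldots,x_T$ of satisfying assignments with consecutive ones differing on a single coordinate. PSPACE-hardness of this canonical problem is standard via simulating a space-bounded Turing machine whose local transitions correspond to bit flips.

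From such a reconfiguration instance I would build the $(q{+}1)$-CSP instance $\Phi$ whose variables are the coordinates of the enhanced encoding $\mc{C}(x)\circ\pi(x)$, where $\pi(x)$ is a PCPP proof for membership of $\mc{C}(x)$ in the language $\{\mc{C}(x) : x\in\sat(\varphi)\}$. The constraints of $\Phi$ are exactly those of the PCPP verifier; each reads $q$ positions of the proof and, via one additional oracle query, one coordinate of the alleged codeword, giving arity $q{+}1$. The initial and final CSP assignments are taken to be the canonical encodings $\mc{C}(x_0)\pi(x_0)$ and $\mc{C}(x_T)\pi(x_T)$ produced by the perfect completeness of the PCPP. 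The assumed polynomial-space constructibility of both $\mc{C}$ and the PCPP makes this a $\mathsf{PSPACE}$ reduction.

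For the Yes case, given a valid reconfiguration $x_0,\ldots,x_T$, I would lift each $x_i$ to its canonical CSP encoding and produce a bit-by-bit path between consecutive canonical encodings that remains fully satisfying. For the No case, suppose for contradiction a CSP reconfiguration exists whose intermediates all satisfy more than a $(1-\varepsilon(\delta))$-fraction of constraints. Then by the $(\delta,\varepsilon(\delta))$-soundness of the PCPP, the $\mc{C}$-part of every intermediate is within relative distance $\delta$ of the codeword of some $x\in\sat(\varphi)$; since $\delta<\eta/2$ and $\mc{C}$ has relative distance $\eta$, this ``closest valid $x$'' is unique and can change only when two consecutive intermediates are each within $\delta$ of two distinct codewords, forcing them to differ on at least $\eta-2\delta$ coordinates. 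Because consecutive CSP assignments differ on at most one coordinate, the induced sequence of closest $x\in\sat(\varphi)$ is itself a valid (one-bit-at-a-time) reconfiguration from $x_0$ to $x_T$ in $\sat(\varphi)$, contradicting the assumption that we are in the No case.

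The main obstacle is the completeness step: moving from one canonical encoding $\mc{C}(x_i)\pi(x_i)$ to the next $\mc{C}(x_{i+1})\pi(x_{i+1})$ one bit at a time through \emph{fully} satisfying CSP assignments. A naive bit-by-bit interpolation violates PCPP constraints almost immediately. The standard fix is to engineer the CSP so that each valid $x$ corresponds to a ``thick'' set of fully satisfying encodings and that the thick sets of $x_i$ and $x_{i+1}$ share fully satisfying assignments that can be reached by local moves; this is typically arranged by adding redundancy (e.g., many repeated copies of $\mc{C}(x)$ and an auxiliary slack region in the proof) together with equality constraints so that one can (i) walk from the canonical $\mc{C}(x_i)\pi(x_i)$ into the intersection by updating one copy at a time, (ii) cross into the thick set of $x_{i+1}$, and (iii) walk to its canonical encoding, all while keeping every constraint satisfied. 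Verifying that this engineering preserves both the $q{+}1$ arity and the PCPP soundness argument above is the delicate part of the proof.
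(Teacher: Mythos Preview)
The paper does not prove this theorem; it is quoted verbatim as Theorem 2 of \cite{guruswami_reconfiguration} and used as a black box to derive \cref{cor:recon1,cor:recon2}. So there is no ``paper's own proof'' to compare against.

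On the substance of your sketch: the soundness direction is essentially the right idea, though your phrasing is slightly off. What the unique-decoding argument actually gives is that the decoded $x$ \emph{cannot change at all} under a single-symbol flip (a change would force the codeword parts of two consecutive CSP assignments to differ on at least $\eta-2\delta>0$ fraction of coordinates, but they differ on at most one). So the decoded sequence is constant, contradicting $x_0\neq x_T$; it is not that the decoded sequence is itself a one-bit-at-a-time reconfiguration.

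The real content of the theorem, as you correctly flag, is the completeness direction, and here your proposal stops short of an argument. The naive encoding $\mc{C}(x)\circ\pi(x)$ cannot work precisely because of the soundness argument you just gave: single-symbol moves can never change the decoded $x$, so there is no fully-satisfying path from the encoding of $x_i$ to that of $x_{i+1}$. The fix in \cite{guruswami_reconfiguration} is not just ``redundancy and slack'' in a vague sense; one needs a specific gadget (two codeword slots plus a PCPP that accepts if \emph{either} slot is a valid encoding, together with a switching mechanism) so that one can overwrite one slot to $\mc{C}(x_{i+1})$ while the other still certifies $x_i$, then switch. Designing this so that it preserves arity $q{+}1$, keeps the soundness argument intact, and stays in polynomial space is exactly the construction you would need to supply, and it is not obvious from what you have written.
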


Since our PCPP from \cref{thm:polynomial-size-pcpp} achieves strong relation between the proximity and rejection probability we can use it to instantiate the theorems above with it.

We can now prove ~\cref{cor:recon1,cor:recon2}. 
\begin{proof}[Proof of~\cref{cor:recon1}]
Take the code from \cref{lm: code GI} with distance greater than $1-\delta$ for some arbitrarily small $\delta$ and take the $4$-query PCPP of ~\cref{thm:polynomial-size-pcpp}, which rejects $\frac{1-\delta}{2}$-far inputs with probability $\frac{1-\delta}{2} + \eps$ for some $\eps = \eps(\delta)$ arbitrarily small relative to $\delta$. Then, the result follows by plugging these two pieces into \cref{thm:reconf_ven}.
\end{proof}

\begin{proof}[Proof of \cref{cor:recon2}]
The result follows by a simple reduction from $\mathrm{Gap}_{1,1-\eps}\,(q+1)$-\textrm{CSP}$_\Sigma$ to $\mathrm{Gap}_{1,1-\frac{\varepsilon}{q+1}}\,2$-\textrm{CSP}$_\Sigma$,
\end{proof}

\section*{Acknowledgments}
T.G.~thanks Oded Goldreich for numerous discussions about the LDC vs RLDC problem over the years. G.W.~would like to thank Alessandro Chiesa for his support. 
\bibliographystyle{alpha}
\bibliography{references}
\appendix
\section{The Final Inner dPCP (Hadamard dPCP)}\label{apx:Hadamard}

In this subsection we prove \cref{thm: hadamard-pcp}. We do so in three steps:
\begin{itemize}
    \item We present the notion of right-assignment decoding soundness, which most dPCP constructions in the literature satisfy. We show that it implies our notion of list-decoding soundness.  
    \item We recall the definition of robust dPCPs from \cite{dh}. A construction of such dPCPs yields the desired structure of the left-vertices in the inner dPCPs for composition (see \cref{thm: inner dpcp}).   
    \item Then we show that the folklore Hadamard dPCP (taken from \cite[Lemma 7.7]{bmv}) satisfies all of our requirement for the inner dPCP. 
\end{itemize}

\subsection{Right-Assignment Decoding to Left-Assignment Decoding}\label{sec:switch_side}
Below is the more commonly used list-decoding soundness for dPCPs. 
In it the list is constructed from a fixed assignment to the right side of the dPCP, as opposed to ~\cref{def:list-decoding-soundness} wherein it is constructed from a fixed assignment to the left side of the dPCP. 

\begin{definition}\label{def:right-decodinglist-decoding-soundness}
Let $\PCPDecoder$ be a decodable PCP for $\Language$. Given $\eps \in [0,1]$, $L \in \mathbb{N}$ we say that $\mc{D}$ satisfies \emph{right-assignment $(L, \eps)$-list-decoding soundness} if the following holds. For any right assignment $\Assignment_{\RightSide}: \RightSide \to \Alphabet{\RightSide}$ there exists a list $\{\Word_1, \ldots, \Word_L \} \subseteq \Language$ such that for any $\Assignment_{\LeftSide}: \LeftSide \to \Alphabet{\LeftSide}$, 
\[
\Pr_{\substack{\IndexInDecodingSet \in [n]\\ (\LeftVertex,\RightVertex)\sim \EdgeDistribution{\IndexInDecodingSet}}}[\Phi_{(\LeftVertex,\RightVertex)}(\Assignment_{\LeftSide}[\LeftVertex], \Assignment_{\RightSide}[\RightVertex])=1 \land \DecoderSymbol_{\IndexInDecodingSet}(\LeftVertex,\Assignment_{\LeftSide}[\LeftVertex]) \notin \{\Restricted{(\Word_i)}{\IndexInDecodingSet} \}_{i \in [L]}] \leq \eps.
\]
\end{definition}
In the definition above, the right-assignment $T_{B}$ is fixed and induces a list of candidate words in the language, whereas in our definition of list-decoding soundness it is the left-assignment $T_A$ that is fixed. 
In the following we show a simple transformation between the two notations: 

\begin{lemma} \label{lm: right to left}
Suppose a language $\mc{L}$ has a projection dPCP $\mc{D} = \left(A \cup B, E, \Sigma_A, \Sigma_B , \{\mc{P}_t \}_{t \in [n]},\{D_t \}_{t \in [n]} \right)$
with right-assignment-decoding $(L,\eps)$-list-decoding soundness. Then $\mc{D}$ also satisfies $(L\cdot \eps^{-1/5},2\eps^{1/5})$-list-decoding soundness for the language $\mc{L}$.
\end{lemma}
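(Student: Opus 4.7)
The plan is to start with an arbitrary left assignment $T_A$ and build a list from it by first identifying, for each right vertex $b$, the ``heavy'' candidate labels that $T_A$ is trying to project onto, and then invoking right-assignment list-decoding soundness on a small family of right-assignments that covers all heavy candidates. This is essentially the heavy-projection trick already used in the soundness analysis of \cref{lm: decoding degree reduction}.

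More precisely, fix $T_A \colon A \to \Sigma_A$. Since $\mc{D}$ is a projection dPCP, for each edge $(a,b) \in E$ there is a unique symbol $(T_A[a])_{\to b} \in \Sigma_B$ with $\Phi_{(a,b)}(T_A[a], (T_A[a])_{\to b}) = 1$. For each $b \in B$ define the set of $b$-heavy candidates $\mathsf{cand}(b) = \{\sigma \in \Sigma_B : \Pr_{a \sim \mc{Q}(\circ, \cdot, b)}[(T_A[a])_{\to b} = \sigma] \geq \eps^{1/5}\}$. By Markov $|\mathsf{cand}(b)| \leq J := \lceil \eps^{-1/5}\rceil$, so one can choose right assignments $T_{B,1}, \ldots, T_{B,J}$ such that for every $b$ and every $\sigma \in \mathsf{cand}(b)$ there exists $j$ with $T_{B,j}[b] = \sigma$. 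Applying right-assignment $(L,\eps)$-list-decoding soundness to each $T_{B,j}$ yields a list $\mathsf{List}_j = \{w_{j,1}, \ldots, w_{j,L}\} \subseteq \mc{L}$, and I take $\mathsf{List} = \bigcup_j \mathsf{List}_j$, of size at most $L \cdot \eps^{-1/5}$, as the left-assignment list for $T_A$.

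To bound the error, for an arbitrary right assignment $T_B$ I split the event ``constraint satisfied and decoding outside $\mathsf{List}$'' according to whether $T_B[b]$ is a heavy candidate at $b$. In the non-candidate case, the definition of $\mathsf{cand}(b)$ directly gives $\Pr_{(t,a,b) \sim \mc{Q}}[\Phi_{(a,b)}(T_A[a], T_B[b]) = 1 \land T_B[b] \notin \mathsf{cand}(b)] \leq \eps^{1/5}$ after averaging over $b$. In the candidate case, the event implies that for some $j$ we have $T_{B,j}[b] = T_B[b]$, and decoding outside $\mathsf{List}$ implies decoding outside $\mathsf{List}_j$, so a union bound over $j \in [J]$ combined with the right-assignment soundness guarantee for each $T_{B,j}$ yields a contribution of at most $J \cdot \eps \leq \eps^{4/5}$. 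Adding the two cases gives $\eps^{1/5} + \eps^{4/5} \leq 2\eps^{1/5}$, as desired.

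The argument is almost entirely bookkeeping; the only subtleties are using the projection structure to define $(T_A[a])_{\to b}$ unambiguously and taking a union bound (rather than double-counting) over $j$'s in the candidate case when multiple $T_{B,j}$ agree with $T_B$ at some $b$. Thus I do not anticipate a genuine obstacle, only choosing the threshold $\eps^{1/5}$ to balance the two error terms.
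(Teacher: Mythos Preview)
Your proposal is correct and follows essentially the same approach as the paper's proof: define heavy candidates via the threshold $\eps^{1/5}$, cover them with $J\le \eps^{-1/5}$ right-assignments, apply right-assignment soundness to each to build the combined list, and split the error according to whether the projected label is a heavy candidate. The only cosmetic difference is that you phrase the split in terms of $T_B[b]\in\mathsf{cand}(b)$ while the paper phrases it in terms of $(T_A[a])_{\to b}\in\mathsf{cand}(b)$; since the constraint being satisfied forces $(T_A[a])_{\to b}=T_B[b]$, the two are equivalent on the event in question.
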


\begin{proof}
First, note that since $\mc{D}$ is a projection dPCP, for each $a\in A$ and an assignment to it $\sigma$, each neighbour $b\in\Neighbourhood{a}$ has a unique assignment that would satisfy the constraint $\Phi_{(a,b)}$, and we denote it by $(\sigma)_b$ (the vertex $a$ will always be clear from context). Let $\mc{Q}$ be the complete decoding distribution of $\mc{D}$.

Fix an assignment $T_A:A\to \Sigma_A$, and consider the following candidate list: for each $b \in B$, let ${\sf cand}(b)$ be the set of symbols $\sigma \in \mathbb{F}_q$ such that $\Pr_{a\sim \mc{Q}(\circ,\cdot,b)}[T_A[a]_b = \sigma]  \geq \eps^{1/5}$. One can find $J=1/\eps^{1/5}$ assignments  $T_1',...T_J':B\to \Sigma_B$, such that for every $b\in B$ and every $\sigma\in  \mathsf{cand}(b)$ there is some $j \in [J]$ such that $T_j' [b] = \sigma$. We use the assignments $T_1',...,T_J'$ to invoke the right-assignment list-decoding property of the dPCP with respect to fixed right assignments (and with $T_A$ being the left assignment). Thus, for each $j$ we get a list $L_j'=\{\Word_1^{j}, \ldots, \Word_L^j \} \subseteq \Language$ such that
\[
\Pr_{\substack{\IndexInDecodingSet \sim [n],\\ (\LeftVertex,\RightVertex)\sim \EdgeDistribution{\IndexInDecodingSet}}}[\Phi_{(\LeftVertex,\RightVertex)}(\Assignment_{\LeftSide}[\LeftVertex], \Assignment_{j}'[\RightVertex])=1 \land \DecoderSymbol_{\IndexInDecodingSet}(\LeftVertex,\Assignment_{\LeftSide}[\LeftVertex]) \notin \{\Restricted{(\Word_i^j)}{\IndexInDecodingSet} \}_{i \in [L]}] \leq \eps.
\]
Define the list list $L'=\cup_{j\in [J]}L_j'$. To complete the proof, we will show that for any $T_B:B\to \Sigma_B$ we have 
\[
\Pr_{\substack{\IndexInDecodingSet \sim [n],\\ (\LeftVertex,\RightVertex)\sim \EdgeDistribution{\IndexInDecodingSet}}}[\Phi_{(\LeftVertex,\RightVertex)}(\Assignment_{\LeftSide}[\LeftVertex], \Assignment_{B}[\RightVertex])=1 \land \DecoderSymbol_{\IndexInDecodingSet}(\LeftVertex,\Assignment_{\LeftSide}[\LeftVertex]) \notin \{\Word_i^j[\IndexInDecodingSet] \}_{i \in [L],j\in [J]}] \leq J\cdot \eps+\eps^{1/5}\le 2\eps^{1/5}.
\]
First, note that the probability on the left hand side can be written as
\begin{align*}
&\Pr_{\substack{\IndexInDecodingSet \sim [n],\\ (\LeftVertex,\RightVertex)\sim \EdgeDistribution{\IndexInDecodingSet}}}[\Phi_{(\LeftVertex,\RightVertex)}(\Assignment_{\LeftSide}[\LeftVertex], \Assignment_{B}[\RightVertex])=1 \land \DecoderSymbol_{\IndexInDecodingSet}(\LeftVertex,\Assignment_{\LeftSide}[\LeftVertex]) \notin \{\Word_i^j[\IndexInDecodingSet] \}_{i \in [L],j\in [J]}]\\
&=\Pr_{\substack{\IndexInDecodingSet \sim [n],\\ (\LeftVertex,\RightVertex)\sim \EdgeDistribution{\IndexInDecodingSet}}}[\Phi_{(\LeftVertex,\RightVertex)}(\Assignment_{\LeftSide}[\LeftVertex], \Assignment_{B}[\RightVertex])=1 \land \DecoderSymbol_{\IndexInDecodingSet}(\LeftVertex,\Assignment_{\LeftSide}[\LeftVertex]) \notin \{\Word_i^j[\IndexInDecodingSet] \}_{i \in [L],j\in [J]}\land T_A[a]_b\in \mathsf{cand}(b)]\\
&+\Pr_{\substack{\IndexInDecodingSet \sim [n],\\ (\LeftVertex,\RightVertex)\sim \EdgeDistribution{\IndexInDecodingSet}}}[\Phi_{(\LeftVertex,\RightVertex)}(\Assignment_{\LeftSide}[\LeftVertex], \Assignment_{B}[\RightVertex])=1 \land \DecoderSymbol_{\IndexInDecodingSet}(\LeftVertex,\Assignment_{\LeftSide}[\LeftVertex]) \notin \{\Word_i^j[\IndexInDecodingSet] \}_{i \in [L],j\in [J]}\land T_A[a]\notin \mathsf{cand}(b)].
\end{align*}
The second term is bounded by $\eps^{1/5}$ by the definition of the $\mathsf{cand}(b)$ list. For the first term, we use union bound:
\begin{align*}
&\Pr_{\substack{\IndexInDecodingSet \sim [n],\\ (\LeftVertex,\RightVertex)\sim \EdgeDistribution{\IndexInDecodingSet}}}[\Phi_{(\LeftVertex,\RightVertex)}(\Assignment_{\LeftSide}[\LeftVertex], \Assignment_{B}[\RightVertex])=1 \land \DecoderSymbol_{\IndexInDecodingSet}(\LeftVertex,\Assignment_{\LeftSide}[\LeftVertex]) \notin \{\Word_i^j[\IndexInDecodingSet] \}_{i \in [L],j\in [J]}\land T_A[a]_b\in \mathsf{cand}(b)]\\
&=\Pr_{\substack{\IndexInDecodingSet \sim [n],\\ (\LeftVertex,\RightVertex)\sim \EdgeDistribution{\IndexInDecodingSet}}}[\Phi_{(\LeftVertex,\RightVertex)}(\Assignment_{\LeftSide}[\LeftVertex], \Assignment_{B}[\RightVertex])=1 \land \DecoderSymbol_{\IndexInDecodingSet}(\LeftVertex,\Assignment_{\LeftSide}[\LeftVertex]) \notin \{\Word_i^j[\IndexInDecodingSet] \}_{i \in [L],j\in [J]}\land (\cup_{j\in [J]}T_A[a]_b=T_j'[b])]\\
&\le \sum_{j\in [J]}\Pr_{\substack{\IndexInDecodingSet \sim [n],\\ (\LeftVertex,\RightVertex)\sim \EdgeDistribution{\IndexInDecodingSet}}}[\Phi_{(\LeftVertex,\RightVertex)}(\Assignment_{\LeftSide}[\LeftVertex], \Assignment_{B}[\RightVertex])=1 \land \DecoderSymbol_{\IndexInDecodingSet}(\LeftVertex,\Assignment_{\LeftSide}[\LeftVertex]) \notin \{\Word_i^j[\IndexInDecodingSet] \}_{i \in [L],j\in [J]} \land T_A[a]_b=T_j'[b])]\\
&= \sum_{j\in [J]}\Pr_{\substack{\IndexInDecodingSet \sim [n],\\ (\LeftVertex,\RightVertex)\sim \EdgeDistribution{\IndexInDecodingSet}}}[\Phi_{(\LeftVertex,\RightVertex)}(\Assignment_{\LeftSide}[\LeftVertex], T_j'[b])=1 \land \DecoderSymbol_{\IndexInDecodingSet}(\LeftVertex,\Assignment_{\LeftSide}[\LeftVertex]) \notin \{\Word_i^j[\IndexInDecodingSet] \}_{i \in [L],j\in [J]})]\\
&\le J\cdot \eps.
\end{align*}
In this third transition we used the fact that if $\Phi_{(\LeftVertex,\RightVertex)}(\Assignment_{\LeftSide}[\LeftVertex], \Assignment_{B}[\RightVertex])=1$ and $ T_A[a]_b=T_j'[b]$, then we must have $ \Assignment_{B}[\RightVertex]=T_j'[b]$. In the last transition we used 
the right-assignment $(L,\eps)$-list-decoding soundness of $\mc{D}$.
\end{proof}

\subsection{Going from Robust dPCPs to dPCPs}
In order to present the Hadamard dPCP from prior work, we need to first present an alternative definition of dPCPs, called robust dPCPs, introduced by \cite{dh}. The main difference between robust dPCPs and our notion of dPCP is that robust dPCPs have a different notion of list-decoding soundness. 

\begin{definition}[Robust Decodable PCPs]\label{def:previous-dpcp-def}
Let $\varphi\colon \Sigma_0^n\to\{0,1\}$ be a circuit of size $N$. A PCP decoder for the language ${\sf SAT}(\varphi)$ over a proof alphabet $\Sigma$
is a probabilistic polynomial-time algorithm $\mathcal{D}$ that on an input circuit
$\varphi : \Sigma_0^n \to \{0,1\}$ of size $N$ and an index $j \in [n]$,
tosses $r = r(N)$ random coins to sample from a distribution $\mathcal{P}_j$ supported over tuples containing $(I,f)$ which are: 
\begin{itemize}
    \item[(1)] A sequence of $q = q(N)$ locations $I = (i_1,\ldots,i_q)$ in a proof of length $m(N)$ over the alphabet $\Sigma$; the parameter $q$ is referred to as the query complexity of $\mc{D}$.
    \item[(2)] A (local decoding) function $f : \Sigma^q \to \Sigma_0 \cup \{\bot\}$
    whose corresponding circuit has size at most $s(N)$, referred to henceforth as the \emph{decision complexity} of the decoder.
\end{itemize}
 For functions $\delta : \mathbb{Z}^+ \to [0,1]$ and $L : \mathbb{Z}^+ \to \mathbb{Z}^+$, we say that a PCP
decoder $\mathcal{D}$ is a dPCP for
$\mathsf{SAT}(\varphi)$ with soundness error $\delta$ and list size $L$ if the following
completeness and soundness properties hold for every circuit
$\varphi : \Sigma_0^n \to \{0,1\}$ of size $N$:
\begin{itemize}
    \item \textbf{Completeness.}
    For any $y \in \Sigma_0^n$ such that $\varphi(y) = 1$, there exists a proof $\pi \in \Sigma^m$, also called
    a decodable PCP, such that
    \[
        \Pr_{j,I,f}[f(\pi_I) = y_j] = 1,
    \]
    where $j \in [n]$ is chosen uniformly at random and $I,f$ are distributed according to $\mathcal{P}_j$ and
    the verifier's random coins.

    \item \textbf{Soundness.}
    For any $\pi \in \Sigma^m$, there is a list of $0 \le \ell \le L$ strings
    $y^1,\ldots,y^\ell$ satisfying $\varphi(y^i)=1$ for all $i$, and furthermore
    \[
        \Pr_{j\in[n],(I,f)\sim\mc{P}_j}\big[f(\pi_I) \notin \{\bot, y^1_j,\ldots,y^\ell_j\} \,\big] \le \delta.
    \]

    \item \textbf{Robust Soundness.}
    We say that $\mc{D}$ is a robust dPCP system for $\mathsf{SAT}(\varphi)$ with robust soundness
    error $\delta$ if the soundness criterion can be strengthened to the following robust condition:
    \[
        \mathbb{E}_{j\in[n],(I,f)\sim\mc{P}_j}\left[ \mathrm{agr}(\pi_I, \mathrm{BAD}(f)) \right] \le \delta,
    \]
    where $\mathrm{BAD}(f) := \{w \in \Sigma^q \mid f(w) \notin \{\bot, y^1_j,\ldots,y^\ell_j\} \}.$
\end{itemize}
\end{definition}

We have the following correspondence 
between robust dPCPs as in \cref{def:previous-dpcp-def} and our more usual notion of dPCPs.

\begin{lemma}\label{lemma:old-new-dPCP-transformation}
Suppose a language $\mc{L} \subseteq \Sigma_0^n$ has a robust dPCP with the following parameters
\begin{itemize}
        \item \textbf{Length.} $m$.
        \item \textbf{Query complexity.} $d_1$.
        \item \textbf{Proof alphabet.} $\Sigma$.
        \item \textbf{Randomness Complexity.} $r$.
        \item \textbf{Completeness.} Perfect completeness.
        \item \textbf{Soundness.} Robust soundness error $\eps$ with list size $L$.
\end{itemize}
Then $\mc{L}$ has a dPCP with the following parameters
\begin{itemize}
    \item \textbf{Length.} The left vertex set has the form $[n] \times [2^r]$ and the length is $m + n \cdot 2^r$.
    \item \textbf{Alphabets.} The left alphabet is a subset of $\Sigma^{d_1}$ and the right alphabet is $\Sigma$.
    \item \textbf{Degree.} Decoding degree is $1$.
    \item \textbf{Complete Decoding distribution.} Let $\mc{Q}$ be the complete decoding distribution and let the left and right sides be $A$ and $B$ respectively. Then the complete decoding distribution satisfies the following.
    \begin{itemize}
        \item The marginal over $A$ is uniform.
        \item For each $a \in A$, the distribution $\mc{Q}(\circ, a, \cdot)$ is uniform over $d_1$ vertices in $B$.
    \end{itemize}
    \item \textbf{Completeness.} Perfect completeness.
    \item \textbf{Soundness.} $(L \cdot \eps^{-1/5}, 2\eps^{1/5})$-list-decoding soundness.
\end{itemize}
\end{lemma}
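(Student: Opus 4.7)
The plan is to carry out the standard translation from a robust dPCP (in the sense of~\cref{def:previous-dpcp-def}) into a bipartite, projection dPCP whose left vertices encode \emph{(decoding index, random seed)} pairs, and whose right vertices are the proof locations. Concretely, set $A = [n]\times [2^r]$ and $B = [m]$. For each $(j,\rho) \in A$, let $(I_{j,\rho}, f_{j,\rho})$ be the query set and local decoding map produced by the robust dPCP when run on index $j$ with random coins $\rho$, and write $I_{j,\rho} = (i_1,\ldots,i_{d_1})$. Add edges (with multiplicities) from $(j,\rho)$ to each $i_k$. Define the alphabet of $(j,\rho)$ to be
\[
\Sigma_{j,\rho} \;=\; \{\sigma \in \Sigma^{d_1} : f_{j,\rho}(\sigma) \ne \bot\},
\]
and the constraint on the edge $((j,\rho),i_k)$ to be $\sigma_k = T_B[i_k]$. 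The decoder sets $D_j((j,\rho),\sigma) = f_{j,\rho}(\sigma)$, and the decoding distribution $\mc{P}_j$ samples $\rho \in [2^r]$ and $k \in [d_1]$ uniformly and outputs $((j,\rho),i_k)$.

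The structural parameters are immediate from the construction. The length is $n\cdot 2^r + m$. The right alphabet is $\Sigma$. Each left vertex has only one decoding neighbor, so decoding degree is $1$. By construction, choosing $(t, a, b)\sim \mc{Q}$ samples $t$ uniformly, then $\rho$ uniformly, then $k\in[d_1]$ uniformly, so the marginal over $A$ is uniform and for each $a=(j,\rho)$ the conditional distribution $\mc{Q}(\circ,a,\cdot)$ is uniform over the $d_1$ neighbors of $a$. The constraints are projection constraints since the unique right label satisfying $((j,\rho),i_k)$ given $\sigma$ is $\sigma_k$. For perfect completeness, given $y \in \mc{L}$ and a satisfying proof $\pi$ for the robust dPCP, let $T_B = \pi$ and $T_A[(j,\rho)] = \pi_{I_{j,\rho}}$; then $f_{j,\rho}(\pi_{I_{j,\rho}}) = y_j \ne \bot$ so the label is valid, every constraint is satisfied, and the decoder always outputs $y_j$.

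For soundness, I would first establish right-assignment $(L,\eps)$-list-decoding soundness (\cref{def:right-decodinglist-decoding-soundness}). Fix $T_B:B\to\Sigma$ and view it as a proof $\pi$. Applying the robust soundness yields a list $y^1,\ldots,y^\ell\in\mc{L}$ with $\ell\le L$ such that
\[
\mathbb{E}_{j,\rho}\!\left[\mathrm{agr}\!\left(\pi_{I_{j,\rho}},\mathrm{BAD}(f_{j,\rho})\right)\right] \;\le\; \eps.
\]
For any $T_A$, write $\sigma = T_A[(j,\rho)]$ (necessarily with $f_{j,\rho}(\sigma)\ne\bot$ by alphabet validity). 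The list-decoding error event requires that $\sigma_k=\pi_{(I_{j,\rho})_k}$ for the sampled $k$, and that $f_{j,\rho}(\sigma)\notin\{y^i_j\}_{i\in[\ell]}$; the latter together with the alphabet restriction shows $\sigma\in\mathrm{BAD}(f_{j,\rho})$. Hence
\[
\Pr[\mathrm{error}] \;\le\; \mathbb{E}_{j,\rho}\!\left[\mathrm{agr}(\pi_{I_{j,\rho}},\sigma)\cdot \ind_{\sigma\in\mathrm{BAD}(f_{j,\rho})}\right] \;\le\; \mathbb{E}_{j,\rho}\!\left[\mathrm{agr}(\pi_{I_{j,\rho}},\mathrm{BAD}(f_{j,\rho}))\right] \;\le\; \eps,
\]
establishing right-assignment $(L,\eps)$-list-decoding soundness. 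The claimed $(L\cdot\eps^{-1/5},2\eps^{1/5})$-list-decoding soundness then follows by applying \cref{lm: right to left}, which is precisely the projection-dPCP conversion between the two notions and which we have already verified applies here.

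I do not anticipate a serious obstacle. The main subtlety to handle carefully is alphabet restriction for $(j,\rho)$, which must be chosen so that the new decoder always outputs in $\Sigma_0$ while still admitting the honest proof in the completeness case; once this is in place, the inequality $\ind_{\sigma\in\mathrm{BAD}(f)}\cdot\mathrm{agr}(\pi_I,\sigma)\le \mathrm{agr}(\pi_I,\mathrm{BAD}(f))$ (used to apply robust soundness) is immediate, and the remaining bookkeeping on length, degrees, and the complete decoding distribution reduces to reading off the construction.
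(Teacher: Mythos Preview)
Your proposal is correct and follows essentially the same route as the paper: build the bipartite dPCP with $A=[n]\times[2^r]$, $B=[m]$, the natural projection constraints and alphabet restriction $f_{j,\rho}(\sigma)\neq\bot$, observe that robust soundness translates directly to right-assignment $(L,\eps)$-list-decoding soundness, and then invoke \cref{lm: right to left}. Your write-up in fact supplies more detail than the paper's proof, which simply asserts the robust-to-right-assignment correspondence as ``straightforward''; the inequality $\ind_{\sigma\in\mathrm{BAD}(f)}\cdot\mathrm{agr}(\pi_I,\sigma)\le \mathrm{agr}(\pi_I,\mathrm{BAD}(f))$ you isolate is exactly the right step.
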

\begin{proof} 
    Fix a robust dPCP for $\mc{L}$ as in the premise.
    We construct a dPCP $\mc{D}$ for $\mc{L}$, whose parts are 
    \[
    \mc{D} = (A \cup B, E, \Sigma_A, \Sigma_B, \{\Phi_{e}\}_{e\in E}, \{D_t \}_{t\in [n]} ,\{\mc{P}_t \}_{t \in [n]} ),
    \]
    as follows.
    The right vertices correspond to the proof locations of the robust dPCP. The left vertices are labeled as $[n] \times [2^r]$ and they correspond to ever possible pair of index $t \in [n]$ and randomness $\ell \in [2^r]$ when choosing the queries given index $t$. By the randomness complexity of the robust dPCP, each query set corresponds to one of $2^r$ random strings. The edges $e$ correspond to all pair $(t, \ell) \in A$ and $b \in B$ such that the decoder of the robust dPCP given index $t$ and randomness $\ell$ queries the location corresponding to $b$ in the proof. The left alphabet is $\Sigma^{d_1}$. For the vertex $(t, \ell) \in A$, we think of an alphabet symbol as corresponding to answer to the $d_1$ queries made by the robust dPCP given index $t$ and randomness $\ell$. It also corresponds to a labeling of its $d_1$ neighbors on the right side.
    We constrain the alphabet to only consist of values which would have made the robust dPCP verifier accept. The right alphabet is $\Sigma$, and each constraint between $(t, \ell) \in A$ and $b \in B$ checks that both assignments agree on the value assigned to $b$. The decoder is the same as in the robust dPCP. That is, it thinks of the symbol assigned to $(t, \ell) \in A$ as the $d_1$ answers to its queries in the robust dPCP and decodes accordingly. Finally, for each $t \in [n]$, the decoding distribution $\mc{P}_t$ chooses $\ell \in [n]$ randomly, chooses one of $d_1$ many $b \in B$ adjacent to $(t, \ell)$ uniformly at random, and outputs $(t, (t, \ell), b)$.

    It is straightforward to check that $\mc{D}$ has the desired length, alphabets, decoding degree, and properties for the complete decoding distribution. For perfect completeness, for any $w \in \mc{L}$, using the perfect completeness of the robust dPCP we get that there is an assignment to $B$ such that the robust dPCP always accepts and decodes to $w_t$. Then, we assign the left side according to the assignment to $B$. That is, for each $(t, \ell) \in A$, the assignment to $(t, \ell)$ consists of the $d_1$ symbols from $\Sigma$ assigned to its $d_1$ neighbors. It is straightforward to check that all of the constraints of $\mc{D}$ are satisfied and the decoder for $\mc{D}$ decodes $w_t$ with probability $1$ for all $t \in [n]$. Finally, it is straightforward to see that the definition of robust soundness corresponds directly to right-assignment $(L, \eps)$-list-decoding soundness. Then, the desired list decoding soundness follows by applying \cref{lm: right to left}.
\end{proof}

\subsection{The Hadamard dPCP}

We are now ready to present the Hadamard dPCP and prove \cref{thm: hadamard-pcp}. The result follows by plugging the robust (Hadamard) dPCP presented in \cite{bmv} into the transformation of \cref{lemma:old-new-dPCP-transformation}.

\begin{lemma}[Lemma 7.7 from \cite{bmv}]\label{lemma:bmv hadamard} For every $T,N \in \mathbb{N}$ and $\eps > 0$ the following holds. Let $\varphi\colon \Sigma_0^n\to\{0,1\}$ be a circuit of size at most $N$ over an alphabet of size at most $T$. There exists $q = 1/\eps^{O(1)}$ such that the language $\mathsf{SAT}(\varphi)$ has a robust dPCP (according to \cref{def:previous-dpcp-def}) with the following parameters:

\begin{enumerate}
    \item \textbf{Length.} $m(N) = q^{O(N^2)}$.
    \item \textbf{Alphabet Size.} $q$.
    \item \textbf{Randomness complexity.} $r(N) = O(N^2 \log(q))$.
    \item \textbf{Query Complexity.} $q^{O(\log |\Sigma_0|)}$.
    \item \textbf{Soundness.} Robust soundness $\eps$ with list size $1/\eps^{O(1)}$.
\end{enumerate}
\end{lemma}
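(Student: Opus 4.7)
The plan is to follow the folklore Hadamard-based robust dPCP construction, essentially as in \cite[Lemma 7.7]{bmv}, and then verify that the claimed parameters hold. First, I would fix a prime power $q = \Theta(1/\eps^c)$ with $q \ge |\Sigma_0|$ for a constant $c$ to be chosen later, embed $\Sigma_0$ into $\mathbb{F}_q$, and arithmetize $\varphi$ by introducing one auxiliary variable for every internal gate. This yields a system $\mathcal{S}$ of $M = O(N)$ quadratic equations over $\mathbb{F}_q$ in $M = O(N)$ variables whose satisfying assignments are in bijection with the inputs $y \in \Sigma_0^n$ with $\varphi(y)=1$, and such that the coordinates of $y$ can be read off from a fixed set of $O(\log|\Sigma_0|)$ coordinates (per index $j \in [n]$) of any satisfying assignment to $\mathcal{S}$.

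Next, I would define the proof as a pair of tables $(L,Q)$ with $L\colon \mathbb{F}_q^M \to \mathbb{F}_q$ intended to be the Hadamard encoding of a satisfying assignment $x\in \mathbb{F}_q^M$ to $\mathcal{S}$ (i.e.\ $L(z)=\langle z,x\rangle$) and $Q\colon \mathbb{F}_q^{M\times M}\to \mathbb{F}_q$ intended to be $Q(Z)=\langle Z, xx^\top\rangle$. This gives proof length $q^M + q^{M^2} = q^{O(N^2)}$ over alphabet $\mathbb{F}_q$, matching the required bounds. The decoder, on input $j\in [n]$, picks from $O(\log(1/\eps))$ independent random seeds and performs, in parallel (to achieve soundness $\eps$), (i) BLR linearity tests on $L$ and $Q$, (ii) a consistency test $L(u)L(v)=Q(uv^\top)$ with $u,v$ random in $\mathbb{F}_q^M$ (using self-correction of $L$ and $Q$), (iii) a batched check of a random subset of the constraints in $\mathcal{S}$ via a single evaluation of $L$ and $Q$ at the corresponding linear combination (this contributes a Schwartz–Zippel error $O(1/q)$), and (iv) a decoding step that self-corrects $L$ at the basis vectors for the coordinates determining $y_j$, outputting the corresponding $\Sigma_0$ symbol or $\bot$. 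Each basic test has $O(1)$ queries to $\mathbb{F}_q$, and with $O(\log(1/\eps))$ repetitions and the $O(\log|\Sigma_0|)$ decoding queries the total query complexity is $q^{O(\log|\Sigma_0|)}$ as claimed; the randomness is $O(M^2\log q)=O(N^2\log q)$ since each test samples a bounded number of points in $\mathbb{F}_q^{M^2}$.

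The main step is robust soundness, and this is where I would spend essentially all of the effort. Completeness is immediate from the honest proof. For soundness, a proof $(L,Q)$ whose tests accept with probability at least $\eps$ must pass linearity by at least $\eps/2$ on each of $L$ and $Q$; then the standard BLR list-decoding theorem yields a list $\{x^1,\dots,x^\ell\}$ with $\ell = 1/\eps^{O(1)}$ of candidate linear structures, and self-correction makes the decoder always output either $\bot$, the correct $y^i_j$ corresponding to one of the $x^i$, or else witnesses the rejection. The consistency test (ii) together with (iii), analyzed via Schwartz–Zippel over $\mathbb{F}_q$, guarantees that any $x^i$ not corresponding to an actual satisfying assignment of $\mathcal{S}$ contributes at most $O(1/q)\le \eps$ to the acceptance probability, pruning the list to those $x^i$ that genuinely certify some $y^i\in\mathsf{SAT}(\varphi)$. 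To upgrade ordinary soundness to \emph{robust} soundness in the sense of \cref{def:previous-dpcp-def}, I would note that each local view queried by the decoder has size $q^{O(\log|\Sigma_0|)}$, and that for any such local view $\pi_I$ not in $\mathrm{BAD}(f)$ the contribution to $\E[\mathrm{agr}(\pi_I,\mathrm{BAD}(f))]$ is zero, while for $\pi_I\in\mathrm{BAD}(f)$ one can show via a simple averaging/plurality-correction argument that the "distance to a non-bad view" is $\Omega(1)$ on average (since the linearity/self-correction structure permits recovering agreement by modifying $O(1)$ coordinates). Combining these bounds with a further $O(\log(1/\eps))$ parallel repetition to drive the base soundness below $\eps^2$ gives the desired robust soundness error $\eps$ with list size $1/\eps^{O(1)}$. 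The part I expect to be most delicate is tracking the robust (rather than plain) soundness through the composition of the linearity list-decoding, the Schwartz–Zippel-based consistency check, and the self-corrected decoding, ensuring all three together yield $\E[\mathrm{agr}(\pi_I,\mathrm{BAD}(f))]\le\eps$ without a blow-up in the list size beyond $1/\eps^{O(1)}$.
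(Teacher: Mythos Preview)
The paper does not prove this lemma at all: it is stated verbatim as ``Lemma 7.7 from \cite{bmv}'' and used as a black box, with the only proof in \cref{apx:Hadamard} being that of \cref{thm: hadamard-pcp}, which simply plugs this cited lemma into the transformation of \cref{lemma:old-new-dPCP-transformation}. So there is no ``paper's own proof'' to compare against; your sketch is an attempt to reconstruct the argument that lives in \cite{bmv}.

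That said, your outline is the standard Hadamard-based robust dPCP construction and is in the right spirit, but the robust-soundness paragraph is where the argument is genuinely incomplete. Your claim that ``the distance to a non-bad view is $\Omega(1)$ on average'' is not what needs to be shown: robust soundness in \cref{def:previous-dpcp-def} asks that $\E_{j,I,f}[\mathrm{agr}(\pi_I,\mathrm{BAD}(f))]\le \eps$, i.e.\ that on average over the verifier's randomness the actual local view $\pi_I$ is \emph{far} (in Hamming distance) from every local view that would cause the decoder to output something outside $\{\bot,y^1_j,\dots,y^\ell_j\}$. The usual way to obtain this is not via ``plurality correction'' on individual views, but by designing the query pattern so that the local view itself is (close to) a codeword of a good error-correcting code (e.g.\ by bundling queries so that the answers lie on a random affine line, making the local view a Reed--Solomon codeword), and then arguing that any bad local view must differ from $\pi_I$ on a constant fraction of coordinates. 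Your sketch of repeating $O(\log(1/\eps))$ times and then invoking an $\Omega(1)$-distance fact does not by itself give robustness, since parallel repetition of a non-robust test does not automatically yield a robust one. If you revise, this is the step to make precise.
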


\begin{proof}[Proof of \cref{thm: hadamard-pcp}]
The result follows by plugging  the robust dPCP of \cref{lemma:bmv hadamard} into the transformation of \cref{lemma:old-new-dPCP-transformation}. The projection decision complexity and decoding complexity follow from the fact that each can be implemented as a circuit which outputs $0$ or $1$ and has an input consisting of at most two symbols from alphabets of size at most $O_{\eps, T}(1)$. 
Finally, the complete decoding distribution condition follows from the complete decoding distribution guarantee of \cref{lemma:old-new-dPCP-transformation}.
\end{proof}

\end{document}